\DeclareMathAlphabet\mathbfcal{OMS}{cmsy}{b}{n}
\def\HyPsd@CatcodeWarning#1{}
\declaretheorem[style=plain, sibling=theorem]{lemma}
\declaretheorem[style=plain, sibling=theorem]{proposition}
\declaretheorem[style=plain, sibling=theorem]{observation}
\newcommand{\DeclareMathActive}[2]{%
  \expandafter\edef\csname keep@#1@code\endcsname{\mathchar\the\mathcode`#1 }
  \begingroup\lccode`~=`#1\relax
  \lowercase{\endgroup\def~}{#2}%
  \AtBeginDocument{\mathcode`#1="8000 }%
}
\newcommand{\std}[1]{\csname keep@#1@code\endcsname}
\patchcmd{\newmcodes@}{\mathcode`\-\relax}{\std@minuscode\relax}{}{\ddt}
\newcommand*{\BigO}[1]{\ensuremath{\mathchoice{\LDAUOmicron{#1}}{\smash{\LDAUOmicron{#1}}}{\LDAUOmicron{#1}}{\LDAUOmicron{#1}}}}
\newcommand*{\BigOmega}[1]{\ensuremath{\mathchoice{\LDAUOmega{#1}}{\smash{\LDAUOmega{#1}}}{\LDAUOmega{#1}}{\LDAUOmega{#1}}}}
\newcommand*{\BigTheta}[1]{\ensuremath{\LDAUTheta{#1}}}
\newcommand*{\LittleO}[1]{\ensuremath{\LDAUomicron{#1}}}
\def\Vardef#1{%
	\expandafter\newcommand\csname #1\endcsname[1]{%
		\def\first{##1}%
		\def\second{*}%
		\def\third{}%
		\ensuremath{\mathsf{\MakeLowercase #1}\ifx\first\second\else\ifx\first\third\else[{##1}]\fi\fi}%
	}%
}
\def\Vardefxx#1#2{%
	\expandafter\newcommand\csname #1\endcsname[1]{%
		\def\first{##1}%
		\def\second{*}%
		\def\third{}%
		\ensuremath{\mathsf{#2}\ifx\first\second\else\ifx\first\third\else[{##1}]\fi\fi}%
	}%
}
\def\Vardefx#1#2{%
	\expandafter\newcommand\csname #1\endcsname[1]{%
		\ensuremath{\mathsf{#2}[{##1}]}%
	}%
}
\def\T#1{\ensuremath{\tau_{\text{\textsc{\MakeLowercase{#1}}}}}}
\def\ttrue{\ensuremath{\text{{\small\textsc{True}}}}\xspace}
\def\tfalse{\ensuremath{\text{{\small\textsc{False}}}}\xspace}
\let\epsilon\varepsilon
\let\phi\varphi
\def\Boosting{boosting\xspace}
\def\Decision{decision\xspace}
\let\Decided\psi
\newcommand*{\Sigset}[1]{\ensuremath{\mathcal{S}{\left(#1\right)}}}
\newcommand*{\concConstEps}{\ensuremath{\varepsilon^{*}}}
\newcommand*{\polyaConstSmall}{\ensuremath{\varepsilon_p}}
\newcommand*{\polyaConstLarge}{\ensuremath{c_{p}}}
\def\protocol#1{\textsc{#1}\xspace}
\def\sDelta{\xi\xspace}
\def\2whp{with probability at least $1 - \BigO{n^{-2}}$\xspace}
\def\dcmnumberstyle{}
\def\dcmbasicstyle{\small}
\newif\ifdcmlinenumbers
    \crefname{lst@lineno}{Line}{Lines}
    \crefname{algoflt}{Algorithm}{Algorithms}
\renewcommand*{\@maketitle}{%
  \null
  \vskip 2em%
  \global\@topnum=\z@ 
  \setparsizes{\z@}{\z@}{\z@\@plus 1fil}\par@updaterelative 
  \begin{center}%
    {\usekomafont{title}{\huge \@title \par}}%
    \vskip .5em 
    {\ifx\@subtitle\@empty\else\usekomafont{subtitle}\@subtitle\par\fi}%
    \vskip 1em 
    {%
      \usekomafont{author}{%
        \lineskip .5em%
        \begin{tabular}[t]{c} 
          \@author 
        \end{tabular}\par 
    	\vskip .5em 
        \begin{tabular}[t]{cc} 
          \@affiliation
        \end{tabular}\par 
      }%
    }%
  \end{center}%
  \par 
  \vskip 2em 
}%
\def\@affiliation{}
\def\affiliation#1{\def\@affiliation{#1}}
\renewenvironment{abstract}{\quotation\small\textbf{\abstractname.} }{\endquotation}
\def\?#1{}
\def\whp{w.h.p\@ifnextchar.{.\?}{\@ifnextchar,{.}{\@ifnextchar){.}{\@ifnextchar:{.:\?}{.\ }}}}}
\def\Whp{W.h.p\@ifnextchar.{.\?}{\@ifnextchar,{.}{.\ }}}
\newcommand\fullciteauthor[1]{\def\blx@maxcitenames{99}\citeauthor{#1}\def\blx@maxcitenames{2}}
\renewcommand*{\multicitedelim}{\addcomma\space}
\space\usebibmacro{date}%
\DeclareCiteCommand{\footnotecite}[\iffootnote\mkbibparens\mkbibfootnote]
  {}
  {\usedriver
     {\defcounter{minnames}{99}%
\renewbibmacro{in:}{. In }%
\defcounter{maxnames}{99}}%
{\thefield{entrytype}}}
  {\multicitedelim}
  {\usebibmacro{postnote}}
\def\cL{\ensuremath{\mathcal{L}}\xspace}
\def\cX{\ensuremath{\mathcal{X}}\xspace}
\def\x{\ensuremath{\mathbf{x}}\xspace}
\def\y{\ensuremath{\mathbf{y}}\xspace}
\def\X{\ensuremath{\mathbf{X}}\xspace}
\def\Y{\ensuremath{\mathbf{Y}}\xspace}
\newcommand{\config}[1]{\mathbf{#1}}
\let\E\Ex
\def\Bin{\operatorname{Bin}}
\def\PE{\operatorname{PE}}
\title{Fast Consensus via the Unconstrained Undecided State Dynamics}
\author{
Gregor Bankhamer\footnotemark[1] \and
Petra Berenbrink\footnotemark[2] \and 
Felix Biermeier\footnotemark[2] \and
Robert Elsässer\footnotemark[1] \and
Hamed Hosseinpour\footnotemark[2] \and
Dominik Kaaser\footnotemark[2] \and
Peter Kling\footnotemark[2] \\
}
\affiliation{
\footnotemark[1] University of Salzburg, Austria & 
\footnotemark[2] Universität Hamburg, Germany\\
\textit{\small\{ gbank, elsa \}@cs.sbg.ac.at} &
\textit{\small firstname.lastname@uni-hamburg.de}
}
\begin{document}
\crefname{enumi}{Statement}{Statements}
\setcounter{page}{0}
\maketitle 

\renewcommand{\thefootnote}{\fnsymbol{footnote}}
\footnotetext[1]{ Supported in part by the European Union’s Horizon 2020 research and innovation programme under Grant Agreement no. 824115 (HiDALGO).}
\renewcommand{\thefootnote}{\arabic{footnote}}

\thispagestyle{empty}
\begin{abstract}
We consider the \emph{plurality consensus} problem for $n$ agents.
Initially, each agent has one of $k$ opinions.
Agents choose random interaction partners and revise their state according to a fixed transition function, depending on their own state and the state of the interaction partners.
The goal is to reach a configuration in which all agents agree on the same opinion.
If there is initially a sufficiently large bias towards some opinions one of them should prevail.

In this paper we consider a synchronized variant of the undecided state dynamics where the agents use so-called phase clocks.
The phase clocks divide the time in overlapping phases.
Each phase consists of a decision and a boosting part.
In the decision part, any agent that encounters an agent with a different opinion becomes undecided.
In the boosting part, undecided agents adopt the first opinion they encounter.
We consider this dynamics both in the sequential \emph{population model} and the parallel \emph{gossip model}.

In the population model agents interact in randomly chosen pairs, one pair per time step.
The runtime is measured in \emph{parallel time} (number of interactions divided by $n$).
We show that our protocol reaches consensus (w.h.p.) in $O(\log^2 n)$ parallel time, providing the first polylogarithmic result for $k > 2$ (w.h.p.) in this model.
If  there is an initial bias of $\Omega(\sqrt{n \log n})$, then (w.h.p.) that opinion wins.

The gossip model assumes parallel rounds.
During each round every agent is allowed  to communicate with one randomly chosen agent.  
Here it is known that consensus can be reached fast (in polylogarithmic time) if there is a bias of order $\Omega(\sqrt{n\log n})$ towards one opinion [Ghaffari and Parter, PODC'16; Berenbrink et al., ICALP'16].
Without any assumption on the bias, fast consensus has only been shown for $k = 2$ for the unsynchronized version of the undecided state dynamics [Clementi et al., MFCS'18]. 
To account for the yet unsolved general case, we show that the synchronized variant of the undecided state dynamics reaches consensus (w.h.p.) in time $O(\log^2 n)$ for every initial configuration.
Again, we guarantee that if there is an initial bias of $\Omega(\sqrt{n \log n})$, then (w.h.p.) that opinion wins.

A simple extension of our protocol in the gossip model yields a dynamics that does not depend on $n$ or $k$, is anonymous, and has (w.h.p.) runtime $O(\log^2 n)$.
This solves an open problem formulated by Becchetti et al.~[Distributed Computing,~2017].

\end{abstract}
\clearpage

\section{Introduction}

We consider \emph{plurality consensus} in a distributed system consisting of $n$ \emph{agents}.
Initially each agent is assigned one of $k$ \emph{opinions}.
Agents interact in pairs and update their opinions based on other opinions they observe.
Eventually, all agents must agree on the same opinion.
Moreover, if there is a sufficiently large \emph{bias} -- i.e., the difference between the number of agents initially assigned to the most common and second most common opinion -- the most common opinion should prevail.

Such consensus problems represent a fundamental primitive in distributed computing.
Practical applications include fault tolerant sensor arrays (where a trustworthy result must be confirmed by a majority of the sensors) or majority-based conflict resolution (e.g., for CRCW PRAMs).
They are also used in physics and biology to model huge dynamic systems of particles or bacteria, or in social sciences to study how opinions form and spread through social interactions.
See~\cite{DBLP:journals/sigact/BecchettiCN20} for references and further applications.

Our focus lies on a specific family of consensus protocols, the \emph{undecided state dynamics (USD)}~\cite{DBLP:journals/dc/AngluinAE08, DBLP:conf/soda/BecchettiCNPS15}.
Here, interaction pairs are chosen randomly.
Any agent that encounters an agent with a different opinion becomes \emph{undecided} (loses its opinion).
Subsequently, such undecided agents adopt the first opinion they observe.
Variants of this simple idea have been studied in different models and have proven to be surprisingly efficient and robust.
We consider a variant of the USD in two models, namely in
(i) the \emph{population model}, where each time step \emph{one pair} of randomly chosen agents interact; and
(ii) the \emph{gossip model}, where \emph{every} agent \emph{simultaneously} interacts in discrete rounds with another randomly chosen agent.
\Cref{sec:model} provides a formal description of both models.

\Textcite{DBLP:conf/icalp/BerenbrinkFGK16, DBLP:conf/podc/GhaffariP16a} introduce a \emph{synchronized} version of the USD in the gossip model.
Here, a synchronization mechanism known as \emph{phase clocks} is used to make the agents jointly progress through \emph{phases} of length $\BigTheta{\log n}$, alternating between \emph{\Decision parts} (where agents become undecided if they encounter a different opinion) and \emph{\Boosting parts} (where undecided agents adopt one of the remaining opinions).
The results from~\cite{DBLP:conf/icalp/BerenbrinkFGK16, DBLP:conf/podc/GhaffariP16a} showed that such a synchronized USD reaches consensus in $\ldauOmicron{\log k \cdot \log n}$ rounds \whp, a significant speedup compared to the original USD in the gossip model (for which there are initial configurations where it requires $\ldauOmega{k}$ rounds~\cite{DBLP:conf/soda/BecchettiCNPS15}).
However, the results of~\cite{DBLP:conf/icalp/BerenbrinkFGK16, DBLP:conf/podc/GhaffariP16a}~hold only if there is an initial bias of $\softOmega{\sqrt{n}}$ towards one opinion.
In fact, we do not know of any simple dynamics in the gossip model that achieves consensus for $k > 2$ in polylogarithmic time and requires no bias.

\paragraph{Results in a Nutshell}
We consider the synchronized USD protocol for both the population and the gossip model.
Our results hold for the general case of up to $n$ opinions and independently of any bias.
Our protocols reach, \whp, consensus in $\ldauOmicron{n \log^2 n}$ interactions in the population model and in $\ldauOmicron{(\log k + \log\log n) \cdot \log n}$ rounds in the gossip model.
Both protocols have the following property: if there is a \emph{plurality} opinion (support by an additive term of at least $\BigOmega{\sqrt{n \log n}}$ larger compared to any other opinion), the agents agree on that opinion.
Otherwise, they agree on what we call a \emph{significant} opinion (an opinion whose support is at most $\ldauOmicron{\sqrt{n \log n}}$ smaller than the maximum support).
The population protocol requires only $k \cdot \ldauOmicron{\log n}$ states per agent ($\log k + \BigO{\log\log n}$ bits), while in the gossip model we require $\log k + \ldauOmicron{\log\log k + \log\log\log n}$ bits per agent.
To the best of our knowledge, our protocols are the first protocols that solve the consensus problem in these two models without restrictions on the number of opinions or the initial bias in polylogarithmic time.
If the initial configuration has an additive bias of order $\BigOmega{\sqrt{n \log n}}$, our results essentially match the known results from~\cite{DBLP:conf/podc/GhaffariP16a, DBLP:conf/icalp/BerenbrinkFGK16}.

Our protocols use simple phase clock implementations for the synchronization, which allows us to focus on our main contribution – completely unconditional bounds for the synchronized USD.
However, by using more sophisticated phase clock implementations, both of our protocols can be improved.
In particular, in the case of the gossip model we can avoid any knowledge of $n$, such that the protocol becomes uniform.
This allows us to answer an open question from~\cite{DBLP:journals/dc/BecchettiCNPST17}, who asked for a simple dynamics (which should specifically be anonymous and uniform) that achieves plurality consensus in polylogarithmic time for any $k$ and any initial configuration.
Details for this uniform protocol can be found in \cref{sec:uniform-protocol}.

For both protocols the  main challenge is to handle the case without a clear bias, i.e., there is almost no
difference between the number of agents initially assigned to the most common and second most common opinion, and where the number of opinions $k$ lies in $\ldauOmega{\sqrt{n} / \log n}$ and $\ldauOmicron{\sqrt{n}}$.
In these cases the support of the opinions is too small to be tracked via concentration bounds but at the same time the support of all opinions is large enough to prevent them from vanishing fast.
The only results for the unbiased case and large $k$
known from the literature rely on a coupling with the so-called \protocol{Voter} 
process~\cite{DBLP:conf/podc/GhaffariL18, DBLP:conf/podc/BerenbrinkCEKMN17}. This approach results in a convergence time of 
$\tilde{O}(n^{2/3})$ instead of the polylogarithmic convergence time shown here.

The analysis for our population protocols is divided into three cases, depending on the number of opinions $k$ (see  \Cref{sec:analysis-case1, sec:analysis-case2, sec:analysis-case3}).
\Cref{sec:concentration-population-model} gathers several results which we use in all three cases.
At the beginning of each of these sections we give a detailed overview about the analysis methods  and we compare our approach with existing results.
The analysis of the Gossip protocol is based on the analysis of the population protocol and  can be found in \cref{sec:analysis-gossip-model}.

\subsection*{Related Work.}%
\label{sec:related-work}

In this overview we focus on \emph{plurality consensus}, a term that typically refers to comparatively simple consensus protocols under random, pairwise interactions that aim to agree on one of possibly many opinions.
A more detailed discussion of distributed consensus problems can be found in~\cite{DBLP:journals/sigact/BecchettiCN20}.

\paragraph{Undecided State Dynamics}
\Textcite{DBLP:journals/dc/AngluinAE08} introduced the undecided state dynamics (USD) in the population model and consider two opinions.
They show that their $3$-state protocol reaches consensus \whp in $\BigO{n \cdot \log{n}}$ interactions. If the bias is of order $\ldauomega{\sqrt n \cdot \log n}$ it converges towards the initial majority \whp. \Textcite{DBLP:journals/nc/CondonHKM20} reduced this required bias to $\Omega(\sqrt{n \log n})$.
It is worth mentioning that they also consider a variant of plurality consensus with $k$ opinions in a communication model in which three randomly chosen agents interact in a step. In this variant, if an agent interacts with two other agents of the same opinion then it adopts this opinion.
They show that the system converges to the initial majority within $\BigO{k n \log n}$ interactions \whp, provided the initial bias is large enough.
\Textcite{DBLP:conf/mfcs/ClementiGGNPS18} study the USD in the gossip model. They also consider the unbiased case and $k = 2$ opinions 
and show that, \whp, the protocol reaches consensus in $\BigO{\log n}$ rounds.
Moreover, they show that the plurality opinion prevails if the initial bias is $\BigOmega{\sqrt{n \log n}}$.
Their analysis partitions the configuration space into a total of seven cases, depending on the magnitude of a possible bias and on the number of undecided agents, which makes it hard to apply the approach to arbitrary values of $k$.
Another recent work by \Textcite{DBLP:conf/sirocco/DAmoreCN20}
also considers the USD with $k=2$. While they require the usual initial bias of $\Omega(\sqrt{n \log n})$ they introduce \emph{noise} in their model which may modify sent messages with a certain probability $p$. Within $\BigO{\log n}$ time their protocol reaches a bias of $\Theta(n)$ towards the initial majority.
\Textcite{DBLP:conf/soda/BecchettiCNPS15} adopt the USD to the gossip model and generalize it to $k = \ldauOmicron{{(n/\log n)}^{1/3}}$ opinions.
Given an initial multiplicative bias, their protocol requires, \whp, $\ldauOmicron{k \cdot \log n}$ rounds to achieve consensus on the plurality opinion.
Deviating slightly from the original USD definition, \textcite{DBLP:conf/podc/GhaffariP16a, DBLP:conf/icalp/BerenbrinkFGK16} consider a \emph{synchronized} version of the USD\@.
For the synchronization, both suggest basically the same protocol, which uses counters to partition time into phases of length $\BigTheta{\log k}$.
Agents can become undecided only at the start of such a phase and use the rest of the phase to obtain a new opinion.
Both protocols achieve consensus in $\ldauOmicron{\log k \log n}$ rounds \whp, using $\log k + \ldauOmicron{\log\log k}$ bits per agent.
The runtime can be slightly reduced to $\ldauOmicron{\log k \cdot \log\log_{\alpha} n + \log\log n}$, where $\alpha$ denotes the initial multiplicative bias~\cite{DBLP:conf/icalp/BerenbrinkFGK16}.
Both \cite{DBLP:conf/podc/GhaffariP16a, DBLP:conf/icalp/BerenbrinkFGK16} proceed to use (different) more sophisticated synchronization mechanisms to design protocols that require only $\log k + \ldauOmicron{1}$ bits and maintain (essentially) the same runtime bounds (the refined runtime of~\cite{DBLP:conf/icalp/BerenbrinkFGK16} becomes $\ldauOmicron{\log(n) \cdot \log\log_{\alpha} n}$). Note that neither~\cite{DBLP:conf/podc/GhaffariP16a} nor~\cite{DBLP:conf/icalp/BerenbrinkFGK16} extend to the case without bias or very large $k$: their techniques are based on chains of concentration bounds, which are no longer applicable in the general setting.

\paragraph{Consensus in the Population Model}
With respect to plurality consensus there are two lines of related work.
One considers the (exact) \emph{majority} problem, where one seeks to achieve (guaranteed) majority consensus among $k = 2$ opinions, even if the additive bias is as small as one~\cite{DBLP:journals/siamco/DraiefV12, DBLP:conf/icalp/MertziosNRS14, DBLP:conf/podc/AlistarhGV15, DBLP:conf/nca/MocquardAABS15, DBLP:journals/dc/DotyS18, DBLP:conf/soda/AlistarhAEGR17, DBLP:conf/soda/AlistarhAG18, DBLP:conf/podc/BilkeCER17, DBLP:conf/podc/KosowskiU18, DBLP:conf/wdag/BerenbrinkEFKKR18, DBLP:journals/corr/abs-1805-04586, BKKP20}.
The currently best protocol by \Textcite{doty2021time} solves exact majority with $\ldauOmicron{\log n}$ states and $\ldauOmicron{\log n}$ stabilization time, both in expectation and \whp.
\Textcite{DBLP:conf/soda/AlistarhAEGR17} show that any stable majority protocol using $\log\log{n} / 2$ states requires $\BigOmega{n^2 / {\log\log n}}$ interactions in expectation.
Assuming some natural properties, \textcite{DBLP:conf/soda/AlistarhAG18} show that any majority protocol which stabilizes in expected
${n^{2-\BigOmega{1}}}$ interactions requires $\BigOmega{\log{n}}$ states.

The other line of research~\cite{DBLP:journals/tsipn/SalehkaleybarSG15, DBLP:journals/jstsp/BenezitTV11, DBLP:conf/ciac/NataleR19} is related to signal processing and studies voting on graphs.
On the complete graph, the model becomes equivalent to the population model.
\Textcite{DBLP:journals/jstsp/BenezitTV11} provide plurality consensus protocols for $k \in \set{3, 4}$ with $15$ and $100$ states, respectively.
\Textcite{DBLP:journals/tsipn/SalehkaleybarSG15} provide a protocol for arbitrary $k$ that uses $\ldauOmicron{k \cdot 2^k}$ states.
\Textcite{DBLP:conf/ciac/NataleR19} improve the number of states to $\ldauOmicron{k^{11}}$ and establish a lower bound of $\BigOmega{k^2}$ states to solve plurality consensus with probability $1$.
In a recent result \cite{DBLP:conf/podc/BankhamerEKK20}, the authors consider a variant of the population model where agents are activated by random clocks.
The authors show that consensus is reached by all but a $1/\poly\log n$ fraction of agents in $\BigO{\log\log_\alpha k \log k + \log\log n}$ time, provided a sufficiently large bias is present.

\paragraph{Consensus in the Gossip Model}
Different types of consensus processes on graphs have been considered in the gossip model.
In the \protocol{Voter} process (see~\cite{DBLP:journals/iandc/HassinP01, DBLP:journals/networks/NakataIY00, DBLP:conf/podc/CooperEOR12, DBLP:conf/icalp/BerenbrinkGKM16, DBLP:conf/soda/KanadeMS19}), in each round every agent adopts the opinion of a single, randomly chosen neighbor.
In the \protocol{TwoChoices} process, every agent samples two random neighbors and, if their opinions coincide, adopts their opinion.
In the \protocol{3Majority} dynamics, each agent samples three random neighbors and adopts the majority opinion among the samples, breaking ties uniformly at random.

\Textcite{DBLP:conf/podc/ElsasserFKMT17} consider the \protocol{TwoChoices} process on the complete graph.
For an additive bias of $\BigOmega{\sqrt{n \log n}}$ and $k = \BigO{n^\epsilon}$ (for a constant $\epsilon > 0$), they show that the initially largest opinion wins in $\BigO{k\cdot \log n}$ rounds.
The authors of~\cite{DBLP:conf/podc/GhaffariL18} consider the consensus problem for arbitrary initial configurations in the gossip model.
They show that both \protocol{TwoChoices} with $k = \ldauOmicron{\sqrt{n / {\log n}}}$ and \protocol{3Majority} with $k = \ldauOmicron{n^{1/3} / {\log n}}$ reach consensus in $\BigO{k\cdot \log n}$ rounds.
The latter result improves a result by~\textcite{DBLP:journals/dc/BecchettiCNPST17}.
Another result from~\cite{DBLP:conf/podc/GhaffariL18} shows that, for arbitrary $k$, \protocol{3Majority} reaches consensus \whp~in $\ldauOmicron{n^{2/3} \log^{3/2} n}$ rounds, improving upon a bound by \textcite{DBLP:conf/podc/BerenbrinkCEKMN17}.
The analysis is based on a coupling with the slow \protocol{Voter} process.
and cannot be applied to yield polylogarithmic runtime bounds~\cite{DBLP:conf/podc/BerenbrinkCEKMN17}.

\Textcite{DBLP:conf/soda/SchoenebeckY18} consider a general model for multi-sample consensus on graphs for $k = 2$ opinions.
Another related process is the \protocol{MedianRule}~\cite{DBLP:conf/spaa/DoerrGMSS11}, which requires a total order on the $k$ opinions.
Here, in each round every agent adopts the median of its own and two sampled value.
This protocol reaches consensus \whp in $\BigO{\log k \log\log n + \log n}$ rounds.
Note that a total order on $k$ is a comparatively strong assumption and not required by any of the other protocols (including ours).

\section{Models and Results}%
\label{sec:model}

We consider a system of $n$ identical, anonymous agents.
Initially, each agent has one of $k$ possible opinions, which we represent as numbers from the set $\set{1, 2, \dots, k}$. We do not assume an order among the opinions. 
A \emph{configuration} describes the current state of the system and can be represented as an (unsorted) vector $\bm{x} = \intoo{x_i}_{i=1}^{k} \in \set{0, 1, \dots, n}^k$, where $x_i$ is the \emph{support} of opinion $i$, defined as the number of agents with opinion $i$.

The (additive) \emph{bias} of a configuration $\bm{x}$ is $x_{(1)} - x_{(2)}$, where $x_{(i)}$ denotes the \emph{support} of the $i$-th largest opinion (ties broken arbitrarily but consistently).
The \emph{multiplicative bias} is defined as $x_{(1)} / x_{(2)}$.
In the analysis we also use $x_{max} = x_{(1)}$ to denote the support of the largest opinion.
For any opinion $i$ with $x_i = x_{max}$ we say opinion $i$ \emph{provides} $x_{max}$.
Additionally, we call an opinion $i$ \emph{significant} if $x_{(i)} \geq x_{(1)} - \sDelta \cdot \sqrt{n \log n}$ (the constant $\sDelta$ is specified in \cref{def:Constant} and originates from our analysis in \cref{sec:analysis-population-model}).
An opinion that is not significant is called \emph{insignificant}.
We use $\Sigset{\bm{x}}$ to denote the set of opinions in configuration $\bm{x}$ that are significant.

\noindent\begin{minipage}[t]{0.5\textwidth-1em}
\begin{lstalgo}[H]{Population model.\label{alg:consensus}}
Actions performed when agents $(u, v)$ interact:

/* Decision Part: $\Clock u < 2\tau\log{n}$ */
if $\Clock u < 2\tau\log{n}$ and not $\DecisionFlag u$ then
	if $\Opinion u \neq \Opinion v$ then
		$\Undecided u \gets \ttrue$
	else
	    $\Undecided u \gets \tfalse$
    $\DecisionFlag u \gets \ttrue$

/* Boosting Part: {\normalfont $\Clock u \geq 2\tau\log{n}$} */
if $\Clock u \geq 2\tau\log{n}$ and $\Undecided u$ then
	if not $\Undecided v$ then
		$\Undecided u \gets \tfalse$
		$\Opinion u \gets \Opinion v$
	$\DecisionFlag u \gets \tfalse$

/* Leaderless Phase Clock \cite{DBLP:conf/soda/AlistarhAG18} */
if $\Clock u \leq_{(6\tau\log{n})} \Clock v$ then
    $\Clock u \gets (\Clock u + 1) \bmod 6\tau\log{n}$
else
    $\Clock v \gets (\Clock v + 1) \bmod 6\tau\log{n}$
\end{lstalgo}
\end{minipage}\hfill\begin{minipage}[t]{0.5\textwidth-1em}
\dcmlinenumbersfalse
\begin{lstalgo}[H]{Gossip model.\label{alg:consensus-gossip-model}}
Actions performed when agents $(u, v)$ interact:

/* Decision Part: {\normalfont $\Round u = 0$} */
if $\Round u = 0$ then
	if $\Opinion u \neq \Opinion v$ then
		$\Undecided u \gets \ttrue$
	else
	    $\Undecided u \gets \tfalse$

/* Boosting Part: {\normalfont $\Round u > 0$} */
if $\Round u > 0$ and $\Undecided u$ then
	if not $\Undecided v$ then
		$\Undecided u \gets \tfalse$
		$\Opinion u \gets \Opinion v$

/* Synchronization by counting modulo $(\T{BC} + 1)$ */
$\Round u \gets (\Round u + 1) \bmod (\T{BC} + 1)$

\end{lstalgo}
\end{minipage}

\subsection{Population Model}
\label{sec:protocol-population-model}
In the population model~\cite{DBLP:journals/dc/AngluinADFP06}, agents are finite state machines.
While the original model assumed a constant number of states per agent, more recent results allow a state space whose size depends on $n$.
In every time step, one pair of agents $(u, v)$ is chosen independently and uniformly at random to interact. 
During such an interaction, both agents update their states according to a common transition function $\delta\colon Q \times Q \rightarrow Q \times Q$, where $Q$ is the state space.
Note that we allow interactions of the form $(u,u)$, i.e., we allow agents to interact with themselves.
A population protocol is \emph{uniform} if the transition function does not depend on the number of agents $n$, although the number of states used during an execution may still be a function of $n$.
The (parallel) runtime of a consensus population protocol is the number of interactions until all agents agree on the same opinion, divided by the number of agents $n$.
Note that any population protocol for plurality consensus requires at least $k$ states per agent (to store the agent's current opinion).

The \emph{state} of an agent $u$ is a tuple 
$\intoo{\Clock u, \Opinion u, \DecisionFlag u, \Undecided u}$ (see \cref{alg:consensus}).
$\Opinion u \in \set{1, 2, \dots, k}$ stores the current opinion of agent $u$.
The Boolean variable $\Undecided u$ indicates whether agent $u$ is currently undecided and $\DecisionFlag u$ indicates whether agent $u$ has already performed an interaction in the decision part. Both flags are initialized to \tfalse.

Our protocol uses the leaderless phase clock from \cite{DBLP:conf/soda/AlistarhAG18} (which runs on every agent) to divide the interactions into \emph{phases}, each consisting of $\BigO{n \log n}$ interactions.
The first part of a phase is called \emph{\Decision part} and the second part is called \emph{\Boosting part}.
In the \Decision part, every agent becomes undecided if and only if its first interaction partner has a different opinion.
In that case it sets its $\Undecided{}$ bit.
In the \Boosting part, every undecided agent adopts the opinion of a randomly sampled agent that is not undecided. 
This agent propagates its opinion to other undecided agents in subsequent steps.\footnote{For $k = \omega(n / \log n)$ it might happen with non-negligible probability that there is not a single  decided agent after the \Decision part. See \cref{sec:analysis-case3} for more details on how we resolve this case.}
The clock of agent $u$ uses the variable $\Clock u$ (initially 0) which can take values in $\set{0,\ldots, 6\tau \log n-1}$ for a suitably chosen constant $\tau$.
In each interaction, the smaller\footnote{
  Smaller w.r.t.\ the circular order modulo $m = 6\tau\log n$, defined as $a \leq_{(m)} b \equiv (a \leq b~ \text{\textsc{xor}} ~ \abs{a - b} > m/2)$.
} of the two values $\Clock{u}$ and $\Clock{v}$ is increased by one modulo $6\tau\log{n}$.
For a polynomial number of interactions it guarantees \cite[][see Section~4]{DBLP:conf/soda/AlistarhAG18} that for any pair of agents $u$ and $v$ the distance\footnote{
Distance w.r.t.\ the circular order modulo $m = 6\tau\log{n}$, defined as $\abs{a-b}_{(m)} = \min\set{\abs{a-b}, m - \abs{a-b}}$.
}
between $\Clock u$ and $\Clock v$ is at most $\tau \log{n}$, and every agent participates in $\BigOmega{\log n}$ interactions in every phase. Hence, it cleanly separate the decision and boosting parts.
Our main result for the population model follows.
\begin{theorem}%
\label{thm:main-result-population-model}
Consider \cref{alg:consensus} on an initial configuration $\bm{x}$ with $k \leq n$ opinions.
The algorithm uses $k \cdot \ldauTheta{\log n}$ states per agent and has the following properties:
\begin{enumerate}[nosep]
\item All agents agree on a significant opinion in
$\BigO{\log^2{n}}$ parallel time, \whp.
\item Assume $\bm{x}$ has additive bias of $\sDelta \cdot \sqrt{n \cdot \log{n}}$ and multiplicative bias of $\alpha$.
\Whp the algorithm reaches a configuration in which all agents agree on the initial plurality opinion in time
  \begin{itemize}[nosep]
  \item $\BigO{\log{n} \cdot \log\log_{\alpha}{n}}$ if $k \leq \sqrt{n} / \log{n}$ and
  \item $\BigO{\log{n} \cdot (\log\log_{\alpha}{n} + \log\log{n})}$ if $k > \sqrt{n} / \log{n}$.
  \end{itemize}
\end{enumerate}
\end{theorem}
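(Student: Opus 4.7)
\textbf{Phase-clock reduction.} I would first invoke the analysis of the leaderless phase clock of~\cite{DBLP:conf/soda/AlistarhAG18}: for any polynomial number of interactions, with probability $1 - \BigO{n^{-c}}$ the values $\Clock u$ of all agents stay within $\tau \log n$ of one another in the circular order modulo $6\tau\log n$, and every agent participates in $\BigTheta{\log n}$ interactions per phase. Choosing $\tau$ sufficiently large, each agent then experiences, in order, a \Decision part of $\BigTheta{n\log n}$ interactions followed by a \Boosting part of $\BigTheta{n\log n}$ interactions, cleanly separated. This reduces a single phase to an essentially memoryless Markov transition on the configuration $\bm{x} = (x_1,\dots,x_k)$: in the \Decision part every agent keeps its opinion iff its first partner has the same opinion, and in the \Boosting part every undecided agent eventually adopts the opinion of a uniformly random decided agent.

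\textbf{One-phase drift.} Under this reduction the expected post-phase support of opinion $i$ satisfies
\[
\E[y_i \mid \bm{x}] \;\approx\; n \cdot \frac{x_i^2}{\sum_j x_j^2},
\]
because the \Decision part contracts each $x_i$ to roughly $x_i^2/n$ decided agents and the \Boosting part redistributes the $n - \sum_j x_j^2/n$ undecided agents proportionally to the surviving decided supports. Consequently one phase \emph{squares} every ratio $x_i/x_j$. This is the engine of Part~(2): starting from multiplicative bias $\alpha$, after $t$ phases the bias grows to $\alpha^{2^{t}}$, so $t = \BigO{\log\log_\alpha n}$ phases bring the plurality opinion to support $n - o(n)$, after which one additional $\BigO{\log n}$-time absorption stage completes consensus. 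Concentration around the deterministic recursion is provided by the tools collected in \cref{sec:concentration-population-model}, and is sharp whenever every significant support already exceeds $\sDelta\sqrt{n\log n}$.

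\textbf{Case split by $k$ for Part~(1).} The unbiased setting is the hard one. Following the three-way split of \cref{sec:analysis-case1,sec:analysis-case2,sec:analysis-case3} I would analyze according to the relation of $k$ to $\sqrt n/\log n$ and $\sqrt n$. For $k \leq \sqrt n/\log n$ every significant opinion has support $\BigOmega{\sqrt{n\log n}}$, so the squaring recursion is tracked coordinate-wise by Chernoff bounds and an additive bias of $\BigOmega{\sqrt{n\log n}}$ emerges within $\BigO{\log n}$ phases. For $k > \sqrt n$ most opinions have support $o(\sqrt{n\log n})$; after the first one or two phases only $\BigO{\sqrt n}$ opinions retain substantial support, reducing to a smaller-$k$ regime (with a separate recovery argument when $k = \omega(n/\log n)$ and essentially no agent survives the first \Decision part, as foreshadowed by the footnote in the algorithm description). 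The intermediate range $k \in [\sqrt n/\log n,\ \sqrt n]$ is the \emph{main obstacle}: supports are too small for coordinate-wise concentration yet too large to vanish quickly. Here I would use the global potential $\Phi(\bm{x}) = \sum_i x_i^2$, which both governs the expected drift of the squaring recursion and concentrates well (being essentially the pairwise-collision probability scaled by $n$), and track $\Phi$ over $\BigO{\log n}$ phases until either a single opinion dominates it or the additive-bias threshold $\sDelta\sqrt{n\log n}$ is crossed and Part~(2) takes over.

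\textbf{Assembly and state count.} Combining the three cases, Part~(1) yields that after $\BigO{\log n}$ phases, i.e., $\BigO{\log^2 n}$ parallel time, all agents agree on an opinion whose initial support is within $\sDelta\sqrt{n\log n}$ of the maximum, hence on a significant one. For Part~(2) the additive and multiplicative biases activate the squaring dynamics from the start, giving $\BigO{\log\log_\alpha n}$ phases in the small-$k$ regime and an extra $\BigO{\log\log n}$ phases in the large-$k$ regime (needed to first amplify the plurality's lead past $\sqrt{n\log n}$ before squaring concentrates). Each phase consumes $\BigO{n\log n}$ interactions, i.e., $\BigO{\log n}$ parallel time, yielding the stated bounds. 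Finally, each agent stores an opinion ($\log k$ bits), two Boolean flags, and a clock value in $\{0,\dots,6\tau\log n - 1\}$, for a total of $k \cdot \BigTheta{\log n}$ states. The most delicate step throughout is the middle-$k$ analysis, where coordinate-wise concentration is unavailable and the argument must rely on the quadratic potential $\Phi$ together with a careful anti-concentration bound to rule out simultaneous thinning of all surviving opinions.
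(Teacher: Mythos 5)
Your high-level architecture matches the paper's: phase-clock separation into \Decision/\Boosting parts, the one-phase map $x_i \mapsto n x_i^2/\sum_j x_j^2$ driving ratio amplification for Part~(2), and a three-way case split on $k$ for Part~(1). However, there are two genuine gaps, both concerning the same missing idea: how a bias \emph{emerges} from a (near-)tied configuration. For $k \leq \sqrt{n}/\log n$ you claim that "the squaring recursion is tracked coordinate-wise by Chernoff bounds and an additive bias of $\BigOmega{\sqrt{n\log n}}$ emerges within $\BigO{\log n}$ phases." Chernoff bounds cannot do this: if two opinions have identical support, the squaring recursion preserves the tie in expectation, and concentration only tells you the tie persists up to fluctuations. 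What is needed is an \emph{anti-concentration} statement plus a drift/gambler's-ruin argument showing the pairwise difference $|X_i - X_j|$ first reaches $\BigOmega{\sqrt n}$ with constant probability and then grows geometrically with probability $1-\exp(-\BigOmega{(x_i-x_j)^2/n})$; the paper does exactly this (its \cref{lemma:drift_first_part}, \cref{thm:one-weak}) using a reverse Chernoff bound and the drift theorem of \cite{DBLP:conf/spaa/DoerrGMSS11}, applied to all $\BigO{k^2} = \LittleO{n}$ pairs.

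For the middle range $\sqrt n/\log n < k < \sqrt n/c_k$, your proposed potential $\Phi(\bm{x}) = \sum_i x_i^2$ does not obviously work: in the fully symmetric configuration $x_i = n/k$ the deterministic recursion fixes $\Phi$, so any growth must come from fluctuations, and you defer precisely this point to "a careful anti-concentration bound" without supplying it. The paper's actual mechanism here is its main technical novelty and is different from a potential-function argument: it shows (see \cref{lem:k-to-sqrtlog} and \cref{lem:case-b}) that when at least $\sqrt n/\log^4 n$ opinions have support close to $x_{max}$, at least one of them overshoots its binomial expectation in the \Decision part by $\tfrac12\sqrt{\log n\cdot\Var{Y_i}}$ (via a reverse Chernoff bound applied to $\BigOmega{\sqrt n/\log^4 n}$ independent coordinates), and the \Boosting part converts this into $X_{max}(t+1) \geq (1+1/60)\,x_{max}(t)$ with probability $1-\exp(-\BigOmega{x_{max}^2/n})$; the drift theorem then yields a run of $\BigO{\log\log n}$ consecutive successful phases within $\BigO{\log n}$ phases, pushing $X_{max}$ past $\sqrt n\log^{3/2} n$, after which a counting argument kills the small opinions. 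Without this (or an equivalent) quantitative anti-concentration lemma, your middle-$k$ case does not close, and it is exactly the case the theorem's novelty rests on.
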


An alternative implementation of the protocol uses the junta-driven phase clock \textcite{DBLP:journals/dc/AngluinAE08,DBLP:conf/soda/GasieniecS18,DBLP:journals/corr/abs-1805-04586}. 
This reduces the number of required states to $k \cdot \ldauTheta{\log\log n}$ but results in a more complicated protocol.
Thus, we opted here for the slightly less efficient but more simple phase clock.
Using the alternative, more efficient implementation, we also get a stable protocol (i.e., eventually all agents agree on the same opinion). The analysis of  \cref{thm:main-result-population-model} is presented in \cref{sec:analysis-population-model}.

\subsection{Gossip Model}
\label{sec:gossip-model}

In the gossip model~\cite{DBLP:conf/stoc/Censor-HillelHKM12, DBLP:conf/soda/BecchettiCNPS15, DBLP:journals/sigact/BecchettiCN20} the agents interact simultaneously in synchronous rounds.
In each round every agent $u$ opens a communication channel to one other agent $v$ chosen independently and uniformly at random.
Agents have local memory (measured in bits) and may perform an arbitrary amount of local computations during each round.
Time is measured in parallel rounds.
Note that any gossip protocol for plurality consensus requires at least $\log k$ bits of local memory per agent (to store the agent's current opinion).

Our protocol is specified in \cref{alg:consensus-gossip-model}.
Because of the synchronous rounds, it is possible to synchronize agents into phases by simply counting the rounds.
We use the variable $\Rounds u$ (initially 0) for this counter and count modulo $\T{BC} + 1$.
Here, $\T{BC} = \BigO{\log{k} + \log\log {n}}$ denotes the time required for a broadcast process to succeed \whp when starting with $n/k$ many informed agents (see \cite{DBLP:conf/focs/KarpSSV00}).
Similar to our protocol from \cref{sec:protocol-population-model}, our gossip protocol alternates between a \emph{\Decision part} (a single round) and a \emph{\Boosting part} (multiple rounds).
In the \Decision part ($\Rounds u = 0$), every agent samples one other agent and becomes undecided if and only if the sample has a different opinion.
In the \Boosting part ($\Round u > 0 $) every undecided agent keeps randomly sampling other agents and adopts the first opinion it sees.
The variables $\Opinion{u}$ and $\Undecided{u}$ store the opinion of agent $u$ and whether it is undecided.
Our main result for the gossip model is as follows.
\begin{theorem}%
\label{thm:main-result-gossip-model}
Consider \cref{alg:consensus-gossip-model} on an initial configuration $\bm{x}$ with $k \leq n$ opinions.
The algorithm uses $\log k + \ldauTheta{\log\log{k} + \log\log\log{n}}$ bits of memory per agent and has the following properties:
\begin{enumerate}[nosep]
\item All agents agree on a significant opinion after $\BigO{(\log k+\log\log n) \cdot \log {n}}$ many rounds, \whp.
\item Assume $\bm{x}$ has additive bias of $\sDelta \cdot \sqrt{n \cdot \log{n}}$ and multiplicative bias of $\alpha$.
\Whp the algorithm reaches a configuration in which all agents agree on the initial plurality opinion in time
\begin{itemize}[nosep]
\item $\BigO{(\log{k} + \log \log n) \cdot \log\log_{\alpha}{n}}$ if $k \leq \sqrt{n} / \log{n}$ and
\item $\BigO{\log{k} \cdot (\log\log_{\alpha}{n} + \log\log{n})}$ if $k > \sqrt{n} / \log{n}$.
\end{itemize}
\end{enumerate}
\end{theorem}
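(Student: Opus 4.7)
The plan is to reduce the analysis of \cref{alg:consensus-gossip-model} to the already-proven bounds of \cref{thm:main-result-population-model}, exploiting the fact that a single gossip phase performs essentially the same stochastic transformation on the configuration as one phase of the synchronized population protocol, only more cleanly because synchronization is via exact modular counting rather than via a drifting phase clock. The state-space count is immediate: an agent stores its opinion ($\log k$ bits), one $\Undecided{}$ flag, and the round counter modulo $\T{BC}+1 = \Theta(\log k + \log\log n)$, contributing $\Theta(\log\log k + \log\log\log n)$ bits.

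\paragraph{From one gossip phase to one population phase.}
I would first show that a gossip phase decomposes into a one-round \Decision sub-phase and a $\T{BC}$-round \Boosting sub-phase whose outputs match the distributions produced by the corresponding sub-phases in the population model up to events of probability $n^{-\Omega(1)}$. In the \Decision round every agent independently samples one partner, so the probability that an agent holding opinion $i$ becomes undecided is exactly $1-x_i/n$, matching the per-step probability driving the population-model decision part. A Chernoff bound applied to each of the $k$ opinions (a union bound is harmless since $k\le n$) shows that the support of every opinion after the \Decision round equals $x_i^2/n$ up to an additive $O(\sqrt{n\log n})$ error, and that the number of undecideds is $n - \sum_i x_i^2/n$ within the same error. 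These are exactly the post-decision estimates used in \cref{sec:analysis-population-model}. In the \Boosting part, every undecided agent repeatedly samples partners and copies the first decided opinion it encounters. By the standard analysis of push-style rumour spreading \cite{DBLP:conf/focs/KarpSSV00}, $\T{BC} = \Theta(\log k + \log\log n)$ rounds suffice, \whp, to turn all undecideds into decideds provided the initial fraction of decideds is at least $1/k$, which is exactly the regime guaranteed by the \Decision sub-phase analysis for $k\le \sqrt{n}/\log n$; the larger-$k$ regime requires the additional safeguard used in Case~3 of the population analysis (see the footnote on \cref{alg:consensus}), and I would reuse that construction verbatim. Moreover, the opinion an undecided agent ends up with is, conditioned on adoption, distributed according to the decided agents' opinion fractions, which again matches the population-model boosting dynamics.

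\paragraph{Invoking the population-model bounds.}
Given this one-phase coupling, the evolution of $(\Opinion{}, \Undecided{})$ at the phase boundaries in the gossip model stochastically dominates (and is dominated by) the corresponding evolution in the population model up to a loss of $n^{-\Omega(1)}$ per phase. Since both bounds of \cref{thm:main-result-population-model} hold \whp after $O(\log n)$ many phases (the $O(\log^2 n)$ parallel-time bound divided by $\Theta(\log n)$ interactions per phase), a union bound over the at most $O(\log n)$ phases yields the claimed results with one round of each gossip phase replaced by $\T{BC}+1$ rounds. This yields $O((\log k + \log\log n)\cdot \log n)$ rounds for part~1 and, under the bias assumption of part~2, $O((\log k + \log\log n) \cdot \log\log_\alpha n)$ phases for the biased regime and the respective $k$-dependent refinements when $k > \sqrt{n}/\log n$, exactly mirroring the case split of \cref{thm:main-result-population-model}.

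\paragraph{Main obstacle.}
I expect the delicate step to be the tightness of the \Boosting coupling when $k$ is close to $\sqrt{n}$ and many opinions have small support, because the broadcast process runs for only $\T{BC} = \Theta(\log k + \log\log n)$ rounds and the residual fraction of undecided agents must be small enough not to distort the $O(\sqrt{n\log n})$ error budget carried from phase to phase. Handling this correctly requires redoing the failure-probability accounting from Case~3 of the population analysis inside the broadcast bound, and ensuring that the events ``decision round succeeds'' and ``boosting round cleans up all undecideds'' remain independent across phases modulo conditioning on the configuration at phase boundaries. Once this book-keeping is in place, the remaining arguments follow from \cref{thm:main-result-population-model} and standard Chernoff/union-bound machinery.
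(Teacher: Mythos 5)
Your overall strategy---carry over the phase-by-phase population-model analysis and re-derive only the single-phase concentration bounds---is exactly the route the paper takes, and your accounting of the memory ($\log k$ for the opinion plus $\Theta(\log\log k+\log\log\log n)$ for the counter modulo $\T{BC}+1$) and of the runtime conversion (phases of length $\T{BC}+1=\Theta(\log k+\log\log n)$ rounds instead of $\Theta(\log n)$ parallel time) is correct. The \Decision part indeed transfers verbatim, since $Y_i(t)\sim\Bin(x_i,x_i/n)$ in both models, and your reuse of the Case-3 safeguard for $\norm{\Y(t)}_1=0$ is fine.

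The genuine gap is in the \Boosting part. You claim that the adopted opinions are "distributed according to the decided agents' opinion fractions, which again matches the population-model boosting dynamics," and you then upgrade this to a two-sided stochastic domination "up to a loss of $n^{-\Omega(1)}$ per phase." Neither claim holds as stated. In the population model the boosting part is a sequential urn process, so $X_i(t+1)$ has a P\'olya--Eggenberger law (\cref{obs:polya}); in the gossip model all undecided agents sample \emph{simultaneously} from the pool of agents that were already decided in the \emph{previous} round, so $X_i(t+1)$ is an iterated composition of round-wise binomials, $Y_i(t,r+1)\sim y_i(t,r)+\Bin\bigl(n-\norm{\Y(t,r)}_1,\,y_i(t,r)/n\bigr)$, which is a genuinely different (and not independent across opinions within a round) distribution. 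There is no off-the-shelf coupling giving domination within $n^{-\Omega(1)}$; what is actually needed is that the gossip boosting satisfies the \emph{same quality} of two-sided tail bounds---deviation $\sqrt{y_i}\,(n/\norm{\y}_1)\,\delta$ with failure probability $\exp(-\Theta(\delta^2))$, plus the coarse bound for small $y_i$---so that they can substitute for \cref{thm:polya_bound_total_balls,thm:polya_bound_small_support} inside \cref{lemma:one-phase-conc,lem:k-to-sqrtlog}. Establishing this is the technical heart of the paper's gossip proof (\cref{lem:sync-boosting-concentration}): it tracks the ratio $Y_i(t,r)/\norm{\Y(t,r)}_1$ round by round through four regimes of the undecided-agent count and telescopes the per-round Chernoff errors so that they sum to $O(\delta/\sqrt{y_i})$. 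Your "main obstacle" paragraph points at the residual undecideds for large $k$, but the missing piece is this multi-round concentration argument; without it, the invocation of \cref{thm:main-result-population-model} is unsupported. (The anti-concentration step in \cref{lem:case-b} lives in the binomial decision step and survives unchanged, but the subsequent boosting step there also needs the lower-tail bound you have not established.)
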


Our bounds (essentially) match those from~\cite{DBLP:conf/icalp/BerenbrinkFGK16} but -- in contrast to~\cite{DBLP:conf/icalp/BerenbrinkFGK16} -- require no assumptions on $k$ or on the bias.
If $k$ is not known in advance, one can initialize $\T{BC} = \BigTheta{\log{n}}$~\cite{DBLP:conf/focs/KarpSSV00}.
In that case the algorithm uses $\log k + \ldauTheta{\log{n}}$ bits of local memory and follows the runtime bounds stated in  \cref{thm:main-result-population-model}.
It is also possible to replace the simple counter modulo $(\T{BC}+1)$ by a uniform clock.
The main idea is to use the approach described in \cite{DBLP:conf/soda/AlistarhAEGR17} to sample an approximation of $\Theta(\log{n})$ for $\T{BC}$.
This solves an open question from~\cite{DBLP:journals/dc/BecchettiCNPST17}. See \cref{sec:uniform-protocol} for the details.
Note that the definition of this uniform protocol does not depend on $n$ or $k$, but the number of states used by the agents do depend on $n$.
It is a more general open question whether plurality consensus can be solved in polylogarithmic time using only constantly many additional states.
The proof of  \cref{thm:main-result-gossip-model} 
follows along the same lines as the proof for the population model and can also be found in the appendix.

\section{Analysis for the Population Model}
\label{sec:analysis-population-model}

In this section we analyze our consensus protocol for the population model.
In \cref{sec:concentration-population-model} we start with fundamental concentration results that describe the evolution of opinions throughout one fixed phase.
Then, in \cref{sec:analysis-case1,sec:analysis-case2,sec:analysis-case3} we consider three different cases, depending on the initial number of opinions $k$.
The proof of the first part of \cref{thm:main-result-population-model} follows immediately from three propositions, \cref{pro:case-1} in \cref{sec:analysis-case1}, \cref{pro:case-2} in \cref{sec:analysis-case2}, and \cref{pro:case-3} in \cref{sec:analysis-case3}.
The proof of the second part of the theorem closely resembles the proofs in \cite{DBLP:conf/podc/GhaffariP16a,DBLP:conf/icalp/BerenbrinkFGK16}.
For completeness, a detailed proof of  \cref{thm:main-result-population-model} is given in \cref{sec:details-proof-of-main-result}.

In our analysis, we assume that the phase clocks properly separate the boosting and decision parts of the considered phases.
This follows from \cite{DBLP:conf/soda/AlistarhAG18,DBLP:journals/rsa/PeresTW15}, where it is shown that for a polynomial number of phases and for any pair of agents $u$ and $v$ the distance between $\Clock u$ and $\Clock v$ w.r.t.\ the circular order modulo $6\tau\log{n}$ is less than $\tau \log{n}$, \whp.
The choice of $\tau$ also ensures that every undecided agent is able to adopt an opinion in the boosting part of a phase, \whp.

The strict phase synchronization allows us to define a series of random vectors $\mathbfcal{X}=\intoo{\X(t)}_{t \in \N}$ that describe the configurations at the beginning of phase $t$ where the $i$-th entry $X_i(t)$ is the number of agents with opinion $i$.
For the analysis we also define a series of random vectors $\mathbfcal{Y} = \intoo{\Y(t)}_{t \in \N}$ where $Y_i(t)$ is the number of decided agents with opinion $i$ at the beginning of the \emph{boosting} part of phase $t$.
Finally, the series $\cX_{max} = \intoo{X_{max}(t)}_{t \in \N}$ describes the size of the support of the largest opinion.
In general, we use bold font to denote vectors, non-bold font to denote vector components, and capital letters for random variables.
When we fix the value of a random variable at the beginning of a phase $t$, we use lowercase letters, e.g., $\X(t) = \x(t)$.
When it is clear from the context, we omit the parameter $t$ in the proofs.

The following observation shows that the opinion distribution after the decision part can be described by a binomial distribution. Note that $\norm{\Y(t)}_1$ denotes the number of decided agents at the beginning of the $t$-th \Boosting part.

\begin{observation}[Decision Part]
\label{obs:bin}
\label{obs:decided}
Assume $\X(t) = \x(t)$ is fixed and let $\Y(t)$ be the configuration at the beginning of the \Boosting part of phase $t$.
Then, for $1 \leq i \leq k$, the $Y_i(t)$ have an independent binomial distribution with $Y_i(t) \sim \Bin(x_i(t), x_i(t) / n)$. Additionally, for $\Decided(t): = \Ex{\norm{\Y(t)}_1} = \sum_{i=1}^{k} x_i(t)^2 / n$ we have $n/k \leq \Decided(t) \leq x_{max}(t)$.
\end{observation}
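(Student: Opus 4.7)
The plan is to split the claim into three parts: (i) the marginal distribution of each $Y_i(t)$, (ii) mutual independence of the $Y_i(t)$ across $i$, and (iii) the two bounds on $\Decided(t) = \Ex{\norm{\Y(t)}_1}$.

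For (i) and (ii), I would start by invoking the phase-clock separation cited in the preamble of this section, which guarantees that within the decision part of phase $t$ every agent $u$ takes part in at least one interaction; the flag $\DecisionFlag u$ ensures that only the first such interaction alters the $\Undecided{}$ status of $u$. Conditioning on $\X(t) = \x(t)$, the partner selected for $u$ in that first decision interaction is uniform over all $n$ agents (self-interactions are allowed) and, crucially, the partners chosen by distinct agents are mutually independent. Both facts follow directly from the population-model definition, in which successive steps sample an ordered pair uniformly at random. An agent currently holding opinion $i$ therefore ends the decision part decided iff its partner also holds opinion $i$, an event of probability $x_i(t)/n$. Summing $x_i(t)$ independent Bernoulli$(x_i(t)/n)$ indicators gives $Y_i(t) \sim \Bin(x_i(t), x_i(t)/n)$; and since agents with different opinions form disjoint sets whose partner samples are independent, $Y_1(t),\dots,Y_k(t)$ are jointly independent.

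For (iii), linearity of expectation immediately yields $\Decided(t) = \sum_{i=1}^{k} x_i(t)^2/n$. The upper bound uses $x_i(t) \leq x_{max}(t)$, giving $\sum_i x_i(t)^2 \leq x_{max}(t)\sum_i x_i(t) = n \cdot x_{max}(t)$, so $\Decided(t) \leq x_{max}(t)$. For the lower bound I would apply Cauchy--Schwarz (equivalently, Jensen's inequality on the convex map $z \mapsto z^2$), yielding $\sum_i x_i(t)^2 \geq \bigl(\sum_i x_i(t)\bigr)^2/k = n^2/k$; dividing by $n$ concludes.

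The only step requiring genuine care is the justification that, conditional on the clock-induced decomposition of the interaction sequence into phases, each agent's first decision-part partner is uniformly distributed over the $n$ agents and that these partners are jointly independent across agents. Since the statement is labeled as an observation, the protocol explicitly allows self-interactions, and the relevant phase-clock guarantees have already been invoked in the section preamble, I expect this step to follow cleanly from standard arguments about the population model; the remaining computations are routine.
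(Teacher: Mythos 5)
Your proposal is correct; the paper actually states this observation without any proof, and your argument is exactly the standard one the authors implicitly rely on (clean phase separation so each agent has exactly one decision-part interaction, uniform and mutually independent partner choices giving the independent binomials, and Cauchy--Schwarz plus $x_i \le x_{max}$ for the bounds on $\Decided(t)$). Nothing is missing.
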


The opinion distribution after the boosting part can be modeled by a so-called \emph{Pólya-Eggenberger distribution}.
The Pólya-Eggenberger process is a simple urn process that runs in multiple steps.
Initially the urn contains $a$ red and $b$ blue balls, where $a,b \in \mathbb{N}_0$.
In each step of the process, one ball is drawn from the urn uniformly at random, its color is observed, and the ball is then returned together with one additional ball of the same color.
The corresponding \emph{Pólya-Eggenberger distribution} $\PE(a,b,m)$ describes the number of \emph{total} red balls that are contained in the urn after $m$ steps.
In order to bound $X_i(t+1)$ we use tail inequalities (Theorem 1 and Theorem 47) shown in the full version of \cite{DBLP:conf/podc/BankhamerEKK20}. For convenience, we state these bounds in \cref{sec:polya-eggenberger}.

\begin{observation}[Boosting Part]
\label{obs:polya}
Assume $\Y(t) = \y(t)$ is fixed and $\norm{\y(t)}_{1} \geq 1$. Let $\X(t+1)$ be the configuration at the beginning of the \Decision part of phase $t+1$.
Then, for $1 \leq i \leq k$, $X_{i}(t+1)$ has Pólya-Eggenberger distribution $X_{i}(t+1) \sim \PE(y_{i}(t), \norm{\y(t)}_{1} - y_{i}(t), n - \norm{\y(t)}_{1})$ with $\Ex{X_i(t+1)} = y_i(t) \cdot (n  / \norm{\y(t)}_{1})$.
\end{observation}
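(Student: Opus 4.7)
The plan is to couple the boosting part of phase $t$ with a standard Pólya-Eggenberger urn process. Fix $\Y(t) = \y(t)$ and assume, as ensured by the phase clock and the choice of $\tau$, that every undecided agent becomes decided by the end of the boosting part. Let $m = n - \norm{\y(t)}_1$ denote the number of initially undecided agents. Consider the subsequence of interactions within the boosting part in which an undecided agent $u$ samples a decided agent $v$; call these the \emph{effective} interactions. Since every undecided agent adopts an opinion exactly once during the boosting part, there are precisely $m$ effective interactions, and each one increases $\norm{\Y}_1$ by one.

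The key step is to verify that the effective interactions drive the counts $\intoo{X_i(t+1)}$ according to the Pólya urn transition rule. Because each interaction samples an ordered pair $(u,v)$ uniformly from all $n^2$ pairs, the event "interaction is effective" conditions the pair to be uniform over (currently undecided) $\times$ (currently decided). Hence, conditional on the $j$-th effective interaction, the undecided agent involved adopts opinion $i$ with probability equal to the current fraction of decided agents holding opinion $i$, independently of the identity of the newly decided agent. Starting with $y_i(t)$ balls of color $i$ for each opinion (so $\norm{\y(t)}_1$ balls in total) and running for $m$ draws, this is exactly the Pólya-Eggenberger process. Collapsing opinion $i$ versus the rest yields $X_i(t+1) \sim \PE(y_i(t), \norm{\y(t)}_1 - y_i(t), n - \norm{\y(t)}_1)$.

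For the expectation, one can use the martingale property of the Pólya urn (the fraction of color-$i$ balls is a martingale), which immediately gives $\Ex{X_i(t+1)} = (y_i(t) + \norm{\y(t)}_1 - y_i(t) + m) \cdot y_i(t) / \norm{\y(t)}_1 = y_i(t) \cdot n / \norm{\y(t)}_1$. The main obstacle is the rigorous exchangeability argument needed for the key step: showing that conditioning on the occurrence of the $j$-th effective interaction genuinely yields a uniform pick over the current decided pool, and that this holds jointly across the entire sequence of effective interactions (so that the coupling is distributional, not just marginal). This follows from the symmetry of the underlying uniform pair sampling together with the observation that non-effective interactions (both agents decided, both undecided, or an undecided sampler hitting another undecided) do not modify any opinion counts and hence can be safely discarded when viewing the induced process on the effective subsequence.
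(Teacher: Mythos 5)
Your proposal is correct and follows essentially the same route as the paper: the paper also proves this by coupling the boosting part with the Pólya-Eggenberger urn, discarding all interactions that do not change the number of decided agents, so that each remaining step is a uniform draw from the current decided pool. Your additional martingale computation of the expectation is a valid (and slightly more explicit) way to obtain $\Ex{X_i(t+1)} = y_i(t)\cdot n/\norm{\y(t)}_1$.
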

\begin{proof}
This follows from an easy coupling of the boosting part with the Pólya-Eggenberger process defined as follows.
Let $\ell_0 \geq 1$ be the number of decided agents at the beginning of the boosting part.
For $\ell_0< i\le n$ the process picks an arbitrary one of the undecided agents.
This agent chooses one of the $\ell_{i-1}$ decided agents uniformly at random and adopts its opinion, resulting in $\ell_i:=\ell_{i-1}+1$.
The coupling of our process with this process is now straight-forward by discarding all interactions which do not change the number of decided agents.
\end{proof}

Finally we introduce some important constants that we use throughout our analysis. 
\vspace{-0.5ex}
\begin{definition}
\label{def:Constant}
We define $\sDelta := (160\cdot c_w)^2+(148\cdot \polyaConstLarge)^2$, $c_w := 8\sqrt{1+2/\concConstEps}$, $\concConstEps := \polyaConstSmall / 192$ and $c_k := 4 \cdot (2625 + c_p)^2$. The constants $1>\polyaConstSmall>0$ and $\polyaConstLarge>1$ originate from the Pólya-Eggenberger concentration results of \cite{DBLP:conf/podc/BankhamerEKK20}. 
\end{definition}
\begin{definition}
Opinion $i$ in configuration $\x$ is called \emph{super-weak} iff $x_i \leq c_w \cdot \sqrt{n \log n}$, \emph{weak} iff $c_w \cdot \sqrt{n \log n} < x_i < 0.9 \cdot x_{max}$, and \emph{strong} iff $x_i \geq 0.9 \cdot x_{max}$.
\end{definition}

\subsection{Analysis of a Single Phase}
\label{sec:concentration-population-model}
\label{sec:analysis-maximum}
In this section we analyze the evolution of opinions throughout some fixed phase $t$.
The following lemma gives Chernoff-like guarantees for a large range of deviations and opinion sizes. 
\begin{restatable}{lemma}{lemOnePhaseConc}
\label{lemma:one-phase-conc}
Fix $\X(t) = \x(t)$ and an opinion $i$ with support $x_i(t)$. Furthermore, let $\Decided = \sum_{j=1}^{k} x_{j}(t)^2/n$.
Then, for any $0 < \delta  < {x_i(t)}/{\sqrt{n}}$ and a suitable small constant $\concConstEps>0$
\[
\Pr\left[ X_i(t+1) < \frac{x_i(t)^2}{\Decided} - \frac{x_i(t)}{\Decided}\sqrt{n} \delta\right] \le 7e^{-\concConstEps \delta^2} \text{ and } \Pr\left[ X_i(t+1) > \frac{x_i(t)^2}{\Decided} + \frac{x_i(t)}{\Decided}\sqrt{n} \delta \right] \le 7e^{-\concConstEps \delta^2}.
\]
\end{restatable}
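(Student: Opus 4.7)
\emph{Proof plan.} Using the two-step decomposition of one phase into the decision and boosting subroutines (\cref{obs:bin,obs:polya}), I would isolate three independent sources of randomness in $X_i(t+1)$ and bound them separately. Writing $\Y$ for $\Y(t)$, the starting point is the algebraic identity
\begin{equation*}
X_i(t+1) - \frac{x_i^2}{\Decided}
\;=\;
\underbrace{\left(X_i(t+1) - \frac{Y_i\, n}{\norm{\Y}_1}\right)}_{A}
\;+\;
\underbrace{\frac{n}{\norm{\Y}_1}\left(Y_i - \frac{x_i^2}{n}\right)}_{B}
\;+\;
\underbrace{\frac{x_i^2\,(\Decided - \norm{\Y}_1)}{\norm{\Y}_1 \cdot \Decided}}_{C},
\end{equation*}
whose three summands are controlled, respectively, by the Pólya--Eggenberger noise conditional on $\Y$, by the binomial noise in $Y_i$, and by the binomial noise in $\norm{\Y}_1$.

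For $B$, note that $Y_i \sim \Bin(x_i, x_i/n)$ has mean $x_i^2/n$ and variance at most $x_i^2/n$; the hypothesis $\delta < x_i/\sqrt n$ ensures that the required deviation $x_i\delta/\sqrt n$ lies below this mean, so a Chernoff bound yields $|Y_i - x_i^2/n|\le x_i\delta/\sqrt n$ with probability at least $1-2 e^{-c'\delta^2}$. For $C$, the independence of the $Y_j$ guaranteed by \cref{obs:bin} makes $\norm{\Y}_1$ a sum of independent binomials with mean $\Decided$ and variance at most $\Decided$, and a second Chernoff bound gives $|\norm{\Y}_1 - \Decided| \le \sqrt{\Decided}\,\delta$ (in particular $\norm{\Y}_1 \ge \Decided/2$) with probability at least $1-2 e^{-c'\delta^2}$. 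Using the elementary inequality $x_i \le \sqrt{n\Decided}$, which follows from $\Decided \ge x_i^2/n$, on this good event both $|B|$ and $|C|$ are then bounded by a constant times the target scale $x_i \sqrt{n}\,\delta/\Decided$.

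For $A$, conditional on $\Y$, \cref{obs:polya} identifies $X_i(t+1)$ with $\PE(Y_i,\,\norm{\Y}_1-Y_i,\,n-\norm{\Y}_1)$ of mean $Y_i n/\norm{\Y}_1$. Applying the Pólya--Eggenberger tail bounds of \cite{DBLP:conf/podc/BankhamerEKK20} recalled in \cref{sec:polya-eggenberger} on the good event of Steps $B$ and $C$ yields $|A|\le x_i\sqrt n\,\delta/\Decided$ with probability at least $1 - 3 e^{-\concConstEps \delta^2}$; the constant $\concConstEps = \polyaConstSmall/192$ is designed precisely to absorb the scaling losses incurred in these rescalings. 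Triangle inequality plus a union bound over the $2+2+3=7$ tail events then produces each of the two one-sided inequalities in the lemma with total failure probability $7 e^{-\concConstEps \delta^2}$.

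\textbf{Main obstacle.} The only non-routine point is the application of the PE bound to $A$: because the parameters of the Pólya--Eggenberger distribution depend on the random vector $\Y$, one must invoke the concentration result of \cite{DBLP:conf/podc/BankhamerEKK20} on the restricted event produced by Steps $B$ and $C$, and verify that the variance-like prefactor of the PE tail matches the target scale $x_i\sqrt n/\Decided$ through the inequalities $\Decided \ge x_i^2/n$ and $\norm{\Y}_1 \ge \Decided/2$. Calibrating $\concConstEps$ small enough to accommodate all the constants lost in the decomposition is delicate but not conceptually difficult; everything else is routine Chernoff.
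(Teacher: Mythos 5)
Your proposal is correct and follows essentially the same route as the paper's proof: condition on a good event where $Y_i$ and $\norm{\Y(t)}_1$ are Chernoff-concentrated around $x_i^2/n$ and $\Decided$ respectively, then apply the Pólya--Eggenberger tail bound of \cite{DBLP:conf/podc/BankhamerEKK20} conditionally and combine via the law of total probability. The only difference is presentational (your additive triangle-inequality decomposition versus the paper's direct manipulation of the ratio $y_i/d$), plus minor constant calibration — e.g.\ your stated deviation $\sqrt{\Decided}\,\delta$ for $\norm{\Y}_1$ only yields $\norm{\Y}_1>0$ rather than $\norm{\Y}_1\ge\Decided/2$, so one must shrink it by a constant factor exactly as you anticipate in your final remark.
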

\begin{proof}[Proof Sketch]
First we bound $Y_i(t)$, the support of opinion $i$ at the end of the decision part of phase $t$. By \cref{obs:decided} $Y_j(t)$ follows a binomial distribution for every opinion. This allows us to bound $Y_i(t)$ as well as $\norm{\Y(t)}_1$ with Chernoff bounds.
Then we continue to analyze the boosting part of phase $t$. Following \cref{obs:polya} we can model  $X_i(t+1)$ via a Pólya-Eggenberger distribution as a function of $Y_i(t)$ and $\norm{\Y(t)}_1$. Conditioned on the outcome of the decision part we apply a concentration bound for the Pólya-Eggenberger distribution to bound $X_i(t+1)$ (see \cref{thm:polya_bound_total_balls}).
\end{proof}

\Cref{lemma:one-phase-conc} does not give high probability for opinions with small support. In \cref{sec:details-concentration-population-model}  we provide a coarse bound for this regime (\cref{lemma:weak-phase-conc}).
 
Recall that opinion $j$ is insignificant in configuration $\x$ if $x_j < x_{max} - \sDelta \sqrt{n \log n}$, and $\Sigset{\x}$ is the set of significant opinions in configuration $\x$.
Note that in our setting any significant opinion can win, and the largest opinion (which provides $\cX_{max}$) can change over time.
The next lemma shows that if an opinion becomes insignificant it cannot become significant again \whp.

\begin{restatable}{lemma}{lemInitialBias}\label{lem:initial-bias}
Fix $\X(t) = \x(t)$. Then, $\Sigset{\X(t+1)} \subseteq \Sigset{\x(t)}$ \whp.
\end{restatable}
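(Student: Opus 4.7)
The plan is to establish the contrapositive: for every opinion $j$ with $x_j(t) \le x_{max}(t) - \sDelta\sqrt{n\log n}$, we will show $X_j(t+1) \le X_{max}(t+1) - \sDelta\sqrt{n\log n}$ w.h.p., and a final union bound over the at most $k \le n$ insignificant opinions yields the claim. Fix such a $j$ and let $i^{\ast}$ be an opinion that provides $x_{max}(t)$. Since $X_{max}(t+1) \ge X_{i^{\ast}}(t+1)$, it suffices to establish the per-opinion gap
\[
X_{i^{\ast}}(t+1) - X_j(t+1) \;\ge\; \sDelta\sqrt{n\log n} \quad\text{w.h.p.}
\]

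To lower-bound $X_{i^{\ast}}(t+1)$ we apply \cref{lemma:one-phase-conc} to $i^{\ast}$ with a deviation parameter $\delta = \Theta(\sqrt{\log n})$ chosen large enough to drive the failure probability below any fixed inverse polynomial; this is admissible because $x_{i^{\ast}}(t) \ge \sDelta\sqrt{n\log n}$ exceeds $\sqrt{n}\,\delta$ by a large margin. The lemma yields
\[
X_{i^{\ast}}(t+1) \;\ge\; \frac{x_{i^{\ast}}}{D}\bigl(x_{i^{\ast}} - \sqrt{n}\,\delta\bigr) \quad\text{w.h.p.,}
\]
where $D := \sum_{\ell} x_{\ell}(t)^{2}/n$ is the quantity appearing in \cref{obs:decided}. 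The upper bound on $X_j(t+1)$ will be handled in two sub-cases, depending on whether $j$ is super-weak.

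In the non-super-weak sub-case $x_j \ge c_w\sqrt{n\log n}$, \cref{lemma:one-phase-conc} also applies to $j$ with the same $\delta$ and gives $X_j(t+1) \le (x_j/D)(x_j + \sqrt{n}\,\delta)$ w.h.p. Subtracting the two bounds and factoring,
\[
X_{i^{\ast}}(t+1) - X_j(t+1) \;\ge\; \frac{x_{i^{\ast}}+x_j}{D}\bigl((x_{i^{\ast}}-x_j) - \sqrt{n}\,\delta\bigr).
\]
Because $D \le x_{max} = x_{i^{\ast}}$ (each $x_\ell \le x_{max}$ while $\sum_\ell x_\ell = n$), the prefactor is at least $1$; combined with $x_{i^{\ast}} - x_j \ge \sDelta\sqrt{n\log n}$ and the fact that \cref{def:Constant} makes $\sDelta$ much larger than the constant hidden in $\sqrt{n}\,\delta$, this produces the required gap. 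In the super-weak sub-case $x_j < c_w\sqrt{n\log n}$, we instead invoke the coarse concentration \cref{lemma:weak-phase-conc} (announced immediately after \cref{lemma:one-phase-conc} and proved in \cref{sec:details-concentration-population-model}) to bound $X_j(t+1) = O(\sqrt{n\log n})$ w.h.p.; combined with $X_{i^{\ast}}(t+1) \ge x_{i^{\ast}} - O(\sqrt{n\log n})$ and $x_{i^{\ast}} \ge \sDelta\sqrt{n\log n}$, the desired gap again follows, the $(160\,c_w)^2$ term in $\sDelta$ being tuned precisely to absorb the $O(1)$ slack.

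We expect the main obstacle to be the boundary regime in which $x_{i^{\ast}}$ is very close to $n$ (so the prefactor $(x_{i^{\ast}}+x_j)/D$ is barely above $1$) while $x_j$ sits near the super-weak threshold: here there is essentially no multiplicative advantage and the argument must rely entirely on the absolute slack built into $\sDelta$, which is why \cref{def:Constant} specifies such a generous value. Once the per-opinion claim is established, a straightforward union bound over all at most $n$ insignificant opinions completes the proof.
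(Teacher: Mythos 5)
Your overall decomposition — track the opinion $i^{\ast}$ providing the maximum, split the insignificant opinions according to whether they are super-weak, use \cref{lemma:one-phase-conc} for the large ones and \cref{lemma:weak-phase-conc} for the small ones — matches the paper's. But both sub-cases as you argue them have holes, and the second one is fatal.

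In the non-super-weak sub-case, "the prefactor is at least $1$" does not suffice. With prefactor exactly $1$ you get $X_{i^\ast}(t+1)-X_j(t+1)\ge (x_{i^\ast}-x_j)-\Theta(\sqrt{n\log n})$, which drops strictly below $\sDelta\sqrt{n\log n}$ whenever the deficit at time $t$ is close to the threshold; no choice of $\sDelta$ repairs this, because the requirement is the exact threshold $\sDelta\sqrt{n\log n}$, not the threshold minus a constant times $\sqrt{n\log n}$. What actually closes the argument (and what the paper proves as an inline observation) is that the prefactor equals $1+x_j/x_{max}$, and the resulting extra additive gain $\tfrac{x_j}{x_{max}}(x_{max}-x_j)\ge \tfrac12\min\{x_j,\,x_{max}-x_j\}\ge \tfrac{c_w}{2}\sqrt{n\log n}$ dominates the concentration loss. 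That computation is the content of the step; you need to carry it out rather than assert the conclusion.

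The super-weak sub-case is where the real difficulty of this lemma sits, and your argument fails there twice over. First, \cref{lemma:weak-phase-conc} gives $X_j(t+1)=O(n\log n/\Decided)$, not $O(\sqrt{n\log n})$: for a configuration with $n/\log n$ opinions of support $\log n$ plus one opinion of support $\sDelta\sqrt{n\log n}$ we have $\Decided=\Theta(\log n)$, the bound is vacuous, and indeed a super-weak opinion that happens to keep one decided agent after the decision part genuinely grows to $\Theta(n/\log n)\gg\sqrt{n\log n}$. Second, even granting $X_j(t+1)=O(\sqrt{n\log n})$, the difference $x_{i^\ast}-O(\sqrt{n\log n})-O(\sqrt{n\log n})=(\sDelta-C)\sqrt{n\log n}$ is below the threshold whenever $x_{max}$ is barely above it; a loss subtracted from $\sDelta$ cannot be absorbed by enlarging $\sDelta$. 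The paper needs a genuinely different ingredient here: when $x_{max}<\sqrt{n}\log^{3/2}n$ it proves a constant \emph{ratio} bound $X_{max}(t+1)/X_j(t+1)\ge \sDelta^2/O(1)$ and combines it with the second statement of \cref{lem:k-to-sqrtlog}, which guarantees that the maximum \emph{grows}, $X_{max}(t+1)\ge x_{max}+\tfrac{1}{60}\sqrt{n\log n}$; only the product of the growth factor $\bigl(1+\tfrac{1}{60\sDelta}\bigr)$ with the ratio-induced factor $\bigl(1-\tfrac{1}{62\sDelta}\bigr)$ recovers the threshold. (When $x_{max}\ge\sqrt{n}\log^{3/2}n$ the slack is free and your style of argument does work.) This use of the anti-concentration growth of the maximum is the key idea missing from your proposal.
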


The following lemma is key to the analysis of the most challenging case where $\sqrt{n}/\log{n} \leq k \leq \sqrt{n}/c_k$.
The proof is structured into two cases.
If almost all opinions have a support close to that of the largest opinion, then none of them grow much \emph{in expectation}.
Interestingly, the dynamic of the process allows that at least one of these opinions increases its support significantly, and at the end will win the battle between these large opinions in polylogarithmic time.
To prove this, we develop a novel anti-concentration result which models exactly this phenomenon. 

\begin{restatable}{lemma}{lemKtoSqrtlog}
\label{lem:k-to-sqrtlog}
Fix $\X(t) = \x(t)$. Let $c=\concConstEps / 625 > 0$ be a small constant.
\begin{enumerate}[nosep]
\item If $\sqrt{n} \leq x_{max}(t) \leq \sqrt{n \log n}$ then
\[ \Pr[ X_{max}(t+1) >  (1 + 1/60) \cdot  x_{max}(t)]\ge 1 - 7\exp\left(- c\cdot (x_{max}(t))^2 / n \right) \].
\item
If $\sqrt{n \log n} <  x_{max}(t) < \sqrt{n}\log^{3/2} n$ then
\[
\Pr[ X_{max}(t+1) > x_{max}(t) + (1/60) \cdot \sqrt{n \log n}] \ge 1 - 7\exp\left({- c\cdot \log n}\right)
\].
\end{enumerate}
\end{restatable}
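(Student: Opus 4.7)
The plan is to analyze $X_{max}(t+1)$ via a case split on the magnitude of $D = \sum_j x_j(t)^2/n$ relative to $x_{max}(t)$. The inequality $D \le x_{max}$ (with equality only when every nonzero opinion has support exactly $x_{max}$) naturally separates the two regimes that must be handled. In both cases I focus on the opinion $i^*$ providing $x_{max}$, for which the predicted mean after one phase equals $x_{i^*}^2/D = x_{max}^2/D$.

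In the \emph{unbalanced} regime $D \le (1-\gamma)\,x_{max}$ for a suitably small absolute constant $\gamma$, the mean $x_{max}^2/D$ already exceeds the required target $(1+1/60)\,x_{max}$ in Case~(1) (respectively $x_{max} + \sqrt{n\log n}/60$ in Case~(2)) by a constant-factor buffer. Here I would apply \cref{lemma:one-phase-conc} to opinion $i^*$ with $\delta$ chosen just large enough to absorb that buffer. For Case~(1) this yields $\delta = \Theta(x_{max}/\sqrt n)$ and hence failure probability $7e^{-\Omega(x_{max}^2/n)}$; for Case~(2) it yields $\delta = \Theta(\sqrt{\log n})$ and failure probability $7e^{-\Omega(\log n)}$. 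Tuning the hidden constants so that they match the explicit $c = \concConstEps/625$ from \cref{def:Constant} is tedious but mechanical.

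In the \emph{balanced} regime $D > (1-\gamma)\,x_{max}$, the mean is essentially flat and standard concentration is useless; this is where the anti-concentration comes in. The identity
\[\sum_j x_j\,(x_{max}-x_j) \;=\; n\,(x_{max}-D)\]
shows that $D$ being close to $x_{max}$ forces $\Omega(n/x_{max})$ opinions to have support within a constant factor of $x_{max}$. The Pólya–Eggenberger boosting step then distributes $\Theta(n)$ undecided agents among these near-maximum opinions. I would then invoke the Pólya-Eggenberger anti-concentration bound (\cref{thm:polya_bound_total_balls}) to lower-bound the probability of an upward fluctuation of size $x_{max}/60$ in Case~(1), respectively $\sqrt{n\log n}/60$ in Case~(2), above the mean. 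The relevant standard deviation in this regime is $\sigma \approx \sqrt n \cdot x_{max}/D = \Theta(\sqrt n)$, so Case~(2) needs a $\Theta(\sqrt{\log n})$-standard-deviation deviation which the exponential tail directly delivers as $1 - 7e^{-c\log n}$. In Case~(1) the required deviation is only a constant number of standard deviations, so a single opinion $i^*$ succeeds only with constant probability; the win comes from aggregating across the $\Omega(n/x_{max})$ nearly-independent near-maximum opinions, using the negative correlations of the Pólya distribution to turn per-opinion constant success into the target failure $7e^{-c\,x_{max}^2/n}$.

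The main obstacle is the balanced regime of Case~(1). There the required deviation above the mean is only $O(1)$ standard deviations, so no single opinion yields an exponential tail, and the entire decay $e^{-\Omega(x_{max}^2/n)}$ must be extracted from the number $\Omega(n/x_{max})$ of near-maximum opinions through an anti-concentration statement that is compatible with the Pólya dependence structure. This is precisely the ``novel anti-concentration result'' referenced in the lemma sketch, and \cref{thm:polya_bound_total_balls} is what does the heavy lifting there. The rest of the proof is then careful bookkeeping to track the constants defined in \cref{def:Constant} through the two regimes and both cases.
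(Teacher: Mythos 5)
Your unbalanced regime ($D\le(1-\gamma)x_{max}$) is sound and matches the paper's first sub-case (\cref{lem:case-a}, statement 1): there the mean $x_{max}^2/\Decided$ already clears the target by a buffer and \cref{lemma:one-phase-conc} with $\delta=\Theta(\min\{x_{max}/\sqrt n,\sqrt{\log n}\})$ finishes. The gap is in the balanced regime, which is the heart of the lemma. You propose to extract the upward fluctuation from the \emph{boosting} step by invoking "\cref{thm:polya_bound_total_balls}" as a Pólya--Eggenberger \emph{anti}-concentration bound and then aggregating constant-probability successes over the $\Omega(n/x_{max})$ near-maximum opinions "using the negative correlations of the Pólya distribution." Two problems: (i) \cref{thm:polya_bound_total_balls} is a two-sided tail \emph{upper} bound; it gives no lower bound on the probability of an upward deviation, and no Pólya anti-concentration result exists in the paper. (ii) Even granting a per-opinion anti-concentration estimate, turning "each of many opinions fluctuates up with probability $p$" into "at least one does with probability $1-(1-p)^{m}$" requires independence or an explicit negative-dependence argument for the joint law of the urn components; this is exactly the hard step, and you leave it as an assertion.

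The paper avoids both issues by placing the anti-concentration in the \emph{decision} part, where by \cref{obs:bin} the variables $Y_i(t)\sim\Bin(x_i,x_i/n)$ are genuinely independent across opinions. Its case split is on the \emph{number} of opinions with support above a threshold $\tau$ (at most vs.\ at least $\sqrt n/\log^4 n$), not on $\Decided$ vs.\ $x_{max}$. In the many-large-opinions case, the binomial anti-concentration bound (\cref{lem:reverse-chernoff-simple}) shows each large opinion independently achieves $Y_i(t)>\tau^2/n+\tfrac12(\tau/\sqrt n)\sqrt{\log n}$ with probability $>n^{-1/3}$, so with $\ge\sqrt n/\log^4 n$ independent trials some opinion succeeds with probability $1-n^{-\omega(1)}$; the Pólya bound is then used only in its legitimate \emph{concentration} direction, on that single lucky opinion, to show the decision-part gain survives (amplified by $n/\norm{\Y(t)}_1$) through the boosting part — and that concentration step, not the aggregation, is what produces the $\exp(-\Theta(x_{max}^2/n))$ failure probability. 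To repair your argument you would need to either relocate the anti-concentration to the decision part as the paper does, or prove a maximal anti-concentration inequality for the dependent components of the Pólya urn, which is not available here.
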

\begin{proof}[Proof Sketch]
We only consider the first statement of \cref{lem:k-to-sqrtlog}.
The proof for the second statement is similar.
First consider the easy case where we have a small number of large opinions that battle to win the majority.
Let $i$ be the opinion with the largest support at the beginning of some phase $t$. In this case we track this opinion throughout the phase and use \cref{lemma:one-phase-conc} to show that its support grows by a factor of at least $1+1/60$.
Note that at the end of the phase another opinion may provide $X_{max}(t+1)$. Still, in both cases $X_{max}(t+1)\ge (1+1/60) \cdot x_{max}(t)$.

\medskip

In the second case we have a set $\cL$ of more than $\sqrt{n} / \log^4 n$ \emph{large} opinions that have support at least $0.9 \cdot x_{max}(t)$.
First, we consider the decision part of phase $t$.
Recall that $Y_i(t)$ is defined as the number of decided agents with opinion $i$ at the beginning of the boosting part of phase $t$, and $Y_i(t) \sim \Bin(x_i(t),x_i(t) / n)$ (\cref{obs:decided}). Note that for all $i \in \cL$ we have $\Ex{Y_i(t)} > 0.81 \cdot x_{max}^2(t) / n$ and $\sqrt{\Var{Y_i(t)}} \approx \sqrt{\Ex{Y_i(t)}}$.
First we will show that \whp there exists an opinion $i \in \cL$  $Y_i(t) > \Ex{Y_i(t)} + (1/2) \cdot \sqrt{\log n \cdot  \Var{Y_i(t)}}$.
To prove this we use an anti-concentration result for the binomial distribution (stated in \cref{lem:reverse-chernoff-simple} in \cref{sec:auxiliary-results}).

Next we  track opinion $i$ through the boosting part of phase $t$.
Recall that $X_j(t)$ is defined as the number of decided agents with opinion $j$ at the beginning of phase $t$. $X_j(t+1)$ follows the P\'olya-Eggenberger distribution with $\Ex{X_j(t+1)} = Y_j(t) \cdot (n / \norm{\Y(t)}_1)$ (\cref{obs:polya}). We estimate
\[
    \Ex{\norm{\Y(t)}_1} := \sum_{i=1}^{k} \frac{x_i^2(t)}{n} \leq \sum_{i=1}^{k} \frac{x_i(t) \cdot x_{max} (t)}{n} = x_{max} (t) \cdot \sum_{i=1}^{k} \frac{x_i (t)}{n} = x_{max} (t).
\]
Using Chernoff bounds, we show that $\norm{\Y(t)}_1 < x_{max} (1+ o(1))$ \whp.
Since $Y_i(t)\ge \Ex{Y_i(t)} + (1/2) \cdot \sqrt{\log n \cdot  \Var{Y_i(t)}}$ we have
\begin{align*}
    \Ex{X_i(t+1)} & \ge \left( \Ex{Y_i(t)} + \frac{1}{2} \cdot \sqrt{\log n \cdot  \Var{Y_i(t)}} \right) \cdot  \frac{n}{\norm{\Y(t)}_1} \\
    &\geq \left( 0.81 \cdot \frac{x_{max}^2(t)}{n} + \frac{1}{2}  \sqrt{ \log n \cdot 0.81 \cdot \frac{x_{max}^2(t)}{n}} \right) \cdot \frac{n}{x_{max}(t) (1 + o(1))} \\
   &\geq \left( 0.81 \cdot x_{max} (t) + \frac{1}{2} \sqrt{0.81 n \log n} \right) \frac{1}{1 + o(1)}  
    \geq 1.2 \cdot x_{max}(t)
\end{align*}
The last step follows from $x_{max}(t) \le \sqrt{n \log n}$. While this is just an expected value, we employ a Pólya-Eggenberger concentration result (see Theorem 1.1 of \cite{DBLP:conf/podc/BankhamerEKK20}) in our detailed analysis and show that a similar bound indeed holds for $X_j(t+1)$ with probability $1 - \exp(-\Omega(x_{max}(t)^2 / n))$.
\end{proof}

\subsection{Consensus for $k \leq \sqrt{n} / \log n$}%
\label{sec:analysis-case1}
In this case the analysis uses the general approach from \cite{DBLP:conf/podc/GhaffariL18} where the authors analyze the majority process for $k$ opinions.
Opinions are classified as \emph{strong}, \emph{weak} or \emph{super-weak}, depending on their support.
The authors of \cite{DBLP:conf/podc/GhaffariL18} divide time into epochs of length $\BigO{ (\frac{5}{6})^i \cdot k \log n}$ and show that at the end of the $i$-th epoch the support of the largest opinion grows by a constant factor and the fraction of \emph{non}-super-weak opinions decreases by a constant factor.
As super-weak opinions remain super-weak this implies that eventually consensus is reached in time $\BigO{k \log n}$.

Our approach is different and exploits the properties of the \emph{undecided state dynamics (USD)} which, for example, allows us to avoid both epochs of different length as well as a total runtime that is linear in $k$.
Throughout our analysis we consider all pairs of opinions.
We show that during $\BigO{\log n}$ phases at least one opinion in each pair becomes weak and, eventually, super-weak.
If both opinions in a pair are initially strong we apply (similar to \cite{DBLP:conf/podc/GhaffariL18}) the drift result of \cite{DBLP:conf/spaa/DoerrGMSS11} to show that their support drifts apart.
Hence, only one of the strong opinion prevails, which will be adopted by every agent within a constant number of additional phases. 

\enlargethispage{\baselineskip}
\begin{restatable}{proposition}{proCaseOne}
\label{pro:case-1}
Assume $\X(t)$ is a configuration with $k < \sqrt{n} / \log n $ opinions. 
Then, after $\BigO{\log{n}}$ phases, all agents agree on some opinion $i \in \Sigset{X(t)}$,  \whp.
\end{restatable}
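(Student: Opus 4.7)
The plan is to classify opinions as \emph{strong}, \emph{weak}, or \emph{super-weak} following \cref{def:Constant}, and to exploit the ``squared ratio'' dynamic implicit in \cref{lemma:one-phase-conc}: since $\Ex{X_i(t+1)} \approx x_i(t)^2 / D(t)$, we have $\Ex{X_i(t+1)}/\Ex{X_j(t+1)} \approx (x_i(t)/x_j(t))^2$, so support ratios are roughly \emph{squared} each phase. The assumption $k \leq \sqrt{n}/\log n$ forces $x_{max}(t) \geq n/k \geq \sqrt{n}\log n$, and since \cref{lem:initial-bias} preserves significance across phases, the concentration of \cref{lemma:one-phase-conc} applies \whp throughout.

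First I would handle any opinion $j$ with $x_j \leq 0.9\,x_{max}$. The squared-ratio dynamic makes $x_j/x_{max}$ shrink so fast (doubly-exponentially) that after $O(\log\log n)$ phases $x_j$ falls below $c_w\sqrt{n\log n}$ and $j$ becomes super-weak; by \cref{lem:initial-bias} it stays super-weak. This disposes of every pair in which one of the two opinions is not strong.

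The harder case is a pair of strong opinions $i, j$ (both with support at least $0.9\,x_{max}$): their ratio is essentially $1$, so the multiplicative dynamic provides no expected drift. Here I would follow the approach of \cite{DBLP:conf/podc/GhaffariL18} and apply the drift theorem of \cite{DBLP:conf/spaa/DoerrGMSS11} to the difference $X_i - X_j$. The per-phase variance of this difference is of order $x_{max}$ (both the binomial decision step of \cref{obs:bin} and the Pólya boosting step of \cref{obs:polya} contribute $\Theta(x_{max})$ variance), so within $O(\log n)$ phases the bias $|X_i - X_j|$ exceeds $\sDelta\sqrt{n\log n}$ \whp; at that moment the smaller of the two drops out of the strong class and is eliminated by the first step.

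Combining the two steps, after $O(\log n)$ phases a unique strong opinion $i^{*}$ survives while all other opinions are super-weak. Since $k \cdot c_w\sqrt{n\log n} = o(n)$, this already forces $x_{i^{*}} \geq (1-o(1))n$, and $O(\log\log n)$ additional phases drive the super-weak contributions to zero — the decision part becomes essentially trivial and the boosting part converts the remaining agents to $i^{*}$. Finally $i^{*} \in \Sigset{\X(t)}$ by \cref{lem:initial-bias}, since an insignificant opinion can never become significant, so the survivor must have been significant at time $t$. The main obstacle is the strong--strong case, where the expected dynamics are symmetric and random fluctuations must do all the work; a secondary bookkeeping point is the union bound over $O(\log n)$ phases and $O(k^2) = O(n/\log^2 n)$ pairs, which is easily absorbed by the exponential tails of \cref{lemma:one-phase-conc}.
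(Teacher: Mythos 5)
Your overall architecture matches the paper's: pairwise analysis, the strong/weak/super-weak trichotomy, ratio amplification for unbalanced pairs, the Doerr et al.\ drift theorem for balanced strong pairs, a union bound over $k^2 = o(n)$ pairs, and \cref{lem:initial-bias} for significance. However, there are two genuine gaps, both in the strong--strong case that you yourself flag as the main obstacle.

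First, your claim that once $|X_i - X_j|$ exceeds $\sDelta\sqrt{n\log n}$ ``the smaller of the two drops out of the strong class and is eliminated by the first step'' is false. Two opinions can both have support $\geq 0.9\,x_{max}$ while differing by $\sDelta\sqrt{n\log n}$: take $k=2$ with $x_i = n/2 + \sDelta\sqrt{n\log n}$ and $x_j = n/2 - \sDelta\sqrt{n\log n}$. An additive gap of $\Theta(\sqrt{n\log n})$ only forces $x_j < 0.9\,x_{max}$ when $x_{max} = O(\sqrt{n\log n})$, which is far below the $x_{max}\geq \sqrt{n}\log n$ guaranteed in this regime. The paper bridges this with a second stage (inside \cref{thm:one-weak}): once the additive gap is $\geq \sDelta\sqrt{n\log n}$, \cref{lemma:ratio} gives $X_i(t+1)/X_j(t+1) \geq (x_i/x_j)^{1.5}$ per phase while simultaneously showing the additive gap is maintained, and iterating this for $O(\log n)$ further phases (starting from a ratio as small as $1+\Theta(\sqrt{\log n/n})$) is what finally pushes $j$ below $0.9\,x_{max}$. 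This stage costs $\Theta(\log n)$ phases, not $O(\log\log n)$, and must be argued while both opinions are still strong.

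Second, your justification for the drift step --- ``the per-phase variance of the difference is of order $x_{max}$, so within $O(\log n)$ phases the bias exceeds $\sDelta\sqrt{n\log n}$'' --- does not work as stated. A random walk with per-phase standard deviation $\sqrt{x_{max}}$ needs $\Theta(n\log n / x_{max})$ phases to accumulate a deviation of $\sqrt{n\log n}$, which is polynomial when $x_{max}$ is small (e.g.\ $x_{max}=\sqrt{n}\log n$). The drift theorem requires two separate inputs, and variance supplies only the first: (i) from gap zero, anti-concentration gives a gap of $c_a\sqrt{n}$ in one phase with constant probability (\cref{lemma:drift_first_part}, first part); (ii) once the gap is in $[c_a\sqrt{n}, \sDelta\sqrt{n\log n}]$, it grows \emph{multiplicatively} by a factor $5/4$ with failure probability $\exp(-\Omega(\mathrm{gap}^2/n))$, because for two strong opinions $\Ex{X_i(t+1)-X_j(t+1)} \approx \frac{x_i+x_j}{\Decided}(x_i-x_j) \geq \frac{9}{5}(x_i-x_j)$. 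It is this deterministic amplification of the gap, not continued variance accumulation, that yields the $O(\log n)$ bound; your writeup omits it entirely.
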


\subsection{Consensus for $\sqrt{n}/\log{n} <k \leq \sqrt{n}/c_k$}
\label{sec:analysis-case2}
This is the most interesting part of our analysis. 
In \cite{DBLP:conf/podc/GhaffariL18} this range of $k$ was analyzed via a coupling with the \protocol{Voter} process, which results in a runtime polynomial in $n$.
This case contains many configurations which are hard to handle, such as the following one: $\sqrt{n} / {\log\log n}$ opinions have a support of $\sqrt{n} \log\log{n}$ each.
According to our definition of strong, weak and super-weak, all the opinions are strong and super-weak at the same time. 
The support of the opinions is too small to be tracked via concentration bounds (which typically requires a support of $\ldauOmega{\sqrt{n \log n}}$ -- see the constraint on $\delta$ in \cref{lemma:one-phase-conc}) but at the same time the support of all opinions is large enough to prevent them from vanishing fast (which typically happens for opinions with support $\ldauOmicron{\sqrt{n}}$).

Another problem we have to deal with in our analysis is that the opinions which provide the maximum support $\cX_{max}$ can change over time.
This prevents us from using the approach from \cite{DBLP:conf/icalp/BerenbrinkFGK16,DBLP:conf/podc/GhaffariP16a} who show that the support of the (unique, unchanging) maximum opinion (or the bias) grows over time.
In our case, there might be many opinions with a support close to $X_{max}(t)$ in a certain phase~$t$, preventing us from tracking the growth of one fixed opinion.
Moreover, it is not clear that in such situations $\cX_{max}$ increases sufficiently in every phase with high probability.

Our novel proof strategy is to show that, due to the variance of the process, there is (at least) one opinion which gains a support of size $\ldauOmega{\sqrt{n} \cdot \log^{3/2} n}$ within $\BigO{\log n}$ phases (\cref{lem:Drift-second-case-k}). 
Since we cannot track one fixed opinion (as any one of them might fall behind or even die out), we analyze how $\cX_{max}$ changes over time via a drift result from~\cite{DBLP:conf/spaa/DoerrGMSS11}.
We show that in a certain period of time a sequence of \emph{successful} phases exists, in each of which $\cX_{max}$ grows by a constant factor, and one opinion eventually wins this \emph{battle of candidates}.

As soon as $X_{max}(t) = \BigOmega{\sqrt{n} \log^{3/2} n}$ for some phase $t$ we distinguish two cases (\cref{lem:log32-to-end}):
Either we have only a couple of opinions whose support is close to $X_{max}(t)$.
Then these opinions will increase their support by a constant factor in every phase.
Otherwise, if sufficiently many opinions have support close to $X_{max}(t)$, we use a counting argument to show that most opinions are small (have support of less than $\sqrt{n} \log n$).
In this case the large opinions will take over and cause the small opinions to vanish in a constant number of phases.
In both cases this brings us to the situation considered in \cref{sec:analysis-case1}.

\begin{proposition}
\label{pro:case-2}
Assume $\X(t)$ is a configuration with $ \sqrt{n} / \log n < k < \sqrt{n} / c_k$ opinions. 
Then, after $t'=\BigO{\log n}$ phases, the process reaches a configuration $\X(t+t')$
such that the following holds \whp: $\X(t+t')$ has at most $\sqrt{n}/\log{n}$ opinions,
$\vert \Sigset{\X(t+t')} \vert >0$, and
$\Sigset{\X(t+t')} \subseteq \Sigset{\X(t)}$.
\end{proposition}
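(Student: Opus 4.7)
The plan is to drive $X_{max}$ upward through two growth regimes until \cref{lem:log32-to-end} applies, and then use \cref{lem:initial-bias} phase-by-phase for the preservation claim. Define the stopping times $\tau_1 := \inf\{s \geq t : X_{max}(s) \geq \sqrt{n \log n}\}$ and $\tau_2 := \inf\{s \geq t : X_{max}(s) \geq \sqrt{n}\log^{3/2} n\}$. I aim to show $\tau_2 - t = O(\log n)$ w.h.p., then invoke \cref{lem:log32-to-end} for another $O(\log n)$ phases to thin the configuration down to at most $\sqrt n/\log n$ surviving opinions. Throughout, $x_{max}(s) \geq n/k \geq c_k\sqrt n$ by pigeonhole, so the hypotheses of \cref{lem:k-to-sqrtlog} are always satisfied.

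For the interval $[t,\tau_1]$, \cref{lem:k-to-sqrtlog}(1) gives $X_{max}(s+1) \geq (1+1/60)\,x_{max}(s)$ with probability $1 - 7\exp(-c\,x_{max}(s)^2/n)$. Near the lower end $x_{max}\approx c_k\sqrt n$ this is only a constant bounded away from $1$, but it strengthens rapidly as $X_{max}$ grows. I would analyze $\log X_{max}(s)$ as a submartingale with constant positive drift and bounded-variance increments, and apply the hitting-time drift result from \cite{DBLP:conf/spaa/DoerrGMSS11} (used in analogous fashion in \cref{sec:analysis-case1}) to conclude $\tau_1 - t = O(\log\log n)$ w.h.p. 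The coarse one-phase bound \cref{lemma:weak-phase-conc} is used to control the downward jumps, ensuring the submartingale cannot retreat too far in the constant-probability failure regime.

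For the interval $[\tau_1,\tau_2]$, \cref{lem:k-to-sqrtlog}(2) gives an additive growth of $(1/60)\sqrt{n\log n}$ per phase w.h.p. Summing, within $O(\log n)$ further phases $X_{max}$ crosses $\sqrt n\log^{3/2} n$, with a union bound over the phases keeping the joint success probability polynomially close to $1$. From $\tau_2$ onward I invoke \cref{lem:log32-to-end}: either the set $\cL := \{i : x_i(s) \geq 0.9\,X_{max}(s)\}$ of large opinions is small --- bounded by $n/(0.9\,X_{max}) = O(\sqrt n/\log^{3/2} n)$ by total-support accounting --- and the remaining opinions (each with support $o(\sqrt{n\log n})$) are driven to $0$ within a constant number of phases by the boosting mechanism; or $\cL$ is large, and the counting argument of \cref{lem:log32-to-end} again forces the insignificant small opinions to vanish in $O(1)$ phases. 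In either case we reach at most $\sqrt n/\log n$ surviving opinions within $O(\log n)$ total phases. The containment $\Sigset{\X(t+t')} \subseteq \Sigset{\X(t)}$ then follows by a union bound of \cref{lem:initial-bias} over the $O(\log n)$ phases, and nonemptiness $|\Sigset{\X(t+t')}|>0$ is automatic since the opinion currently providing $X_{max}$ is trivially significant.

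The main obstacle is the first regime near $x_{max}\approx\sqrt n$, where \cref{lem:k-to-sqrtlog}(1)'s success probability is only a constant $1 - O(e^{-c})$, so a naive union bound collapses. This is precisely where the ``battle of candidates'' plays out: many opinions may simultaneously be competitive, the opinion providing $X_{max}$ may change from phase to phase, and the only available handle is the drift of $\log X_{max}$ as a random walk. Making the hitting-time/martingale argument rigorous --- in particular bounding the magnitude of downward steps via \cref{lemma:weak-phase-conc} and verifying that the constant drift dominates --- is the technical heart of the proof; once $X_{max}$ escapes the critical threshold $\sqrt{n\log n}$, the remainder of the argument amounts to stacking w.h.p. guarantees.
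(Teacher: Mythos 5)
Your proposal follows essentially the same route as the paper: \cref{lem:k-to-sqrtlog} (first statement) combined with the drift theorem of \cite{DBLP:conf/spaa/DoerrGMSS11} to push $X_{max}$ up to $\sqrt{n\log n}$, the second statement of \cref{lem:k-to-sqrtlog} for the additive regime up to $\sqrt{n}\log^{3/2}n$, \cref{lem:log32-to-end} for the endgame, and a phase-by-phase union bound over \cref{lem:initial-bias} for the containment of significant opinions. The one quantitative slip is your claim that $\tau_1-t=O(\log\log n)$ w.h.p.: near $x_{max}\approx c_k\sqrt{n}$ each phase fails with constant probability, so one needs $\Theta(\log n)$ attempts to witness a streak of $\Theta(\log\log n)$ consecutive successful phases (this is exactly what \cref{lemma_drift_markov_chain} delivers), giving $\tau_1-t=O(\log n)$ — which still fits within your overall $O(\log n)$ budget.
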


\begin{proof}
We only show that the number of opinions is reduced to $\sqrt{n}/\log n$. The statement about the set of significant opinions follows the proof of \cref{pro:case-1} in \cref{sec:details-analysis-case1}.
In \cref{lem:Drift-second-case-k} we show that \whp there is a phase $\tilde{t}\in [t, t+O(\log n)]$ such that $\X(t + \tilde{t})$ contains an opinion with support at least $\sqrt{n} \cdot \log^{3/2} n$.
In \cref{lem:log32-to-end} we show that at time $t+\tilde{t}+ 3i$, for all $1\le i \le O(\log n)$, one of the following statements holds \whp:
(1) $X_{max}(t + \tilde{t}+3i) > (5/4)^i \cdot X_{max}(t + \tilde{t})$, or
(2) the number of opinions in $\X(t + \tilde{t}+3i)$ is smaller than $\sqrt{n} / \log n$.
From this it follows that there exists a $t'\in \BigO{\log n}$ such that $X_{max}(t+t')\ge n$ (in which case we are done) or $\X(t+t')$ has less than $\sqrt{n}/\log{n}$ opinions. 
Note that as soon as the second condition holds, we are in the situation analyzed in \cref{sec:analysis-case1}. 
\end{proof}

\begin{restatable}{lemma}{lemDriftSecondCaseK}
\label{lem:Drift-second-case-k}
Assume $\X(t)$ is a configuration with $ \sqrt{n} / \log n < k < \sqrt{n} / c_k$ opinions.
\Whp $X_{max}(t+t') \geq \sqrt{n}  \cdot \log^{3/2} n$ for some $t' = \BigO{\log n}$.
\end{restatable}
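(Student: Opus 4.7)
The plan is to track $X_{max}(t')$ over $O(\log n)$ phases and invoke \cref{lem:k-to-sqrtlog} in each to establish growth. A crucial preliminary observation is that, since $k \leq \sqrt{n}/c_k$ and $n$ agents are distributed among at most $k$ opinions, pigeonhole gives $X_{max}(t') \geq n/k \geq c_k \sqrt{n}$ for \emph{every} phase $t' \geq t$, not only the initial one. This permanent floor ensures that at every phase $X_{max}$ lies in the regime where \cref{lem:k-to-sqrtlog} applies, and it prevents the process from collapsing below $c_k \sqrt{n}$ no matter what happens in any single phase.

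I split the trajectory into two regimes. In the \emph{upper regime} $\sqrt{n \log n} \leq X_{max}(t') \leq \sqrt{n}\log^{3/2} n$, \cref{lem:k-to-sqrtlog}(2) yields additive growth of at least $(1/60)\sqrt{n \log n}$ per phase with probability $1 - 7\exp(-c \log n) = 1 - n^{-\Omega(1)}$. A union bound over the $O(\log n)$ phases spent in this regime shows that every such phase succeeds whp, and $O(\log n)$ additive increments of $\Theta(\sqrt{n \log n})$ are sufficient to cover the gap up to $\sqrt{n}\log^{3/2} n$.

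In the \emph{lower regime} $c_k \sqrt{n} \leq X_{max}(t') \leq \sqrt{n \log n}$, \cref{lem:k-to-sqrtlog}(1) gives multiplicative growth by a factor $1 + 1/60$ per phase with probability at least $1 - 7\exp(-c\, X_{max}(t')^2/n) \geq 1 - 7\exp(-c\, c_k^2)$. Because $c_k$ is chosen to be a large constant in \cref{def:Constant}, the per-phase failure probability is bounded by a very small constant $p^*$; only $\Theta(\log \log n)$ successful growth phases are needed to push $X_{max}$ from $c_k \sqrt{n}$ past $\sqrt{n \log n}$.

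The main obstacle is amplifying the \emph{constant} per-phase success probability in the lower regime into a high-probability guarantee over the full $O(\log n)$ phase window, since a naive union bound fails. I would close this gap as follows. The pigeonhole floor $X_{max}(t') \geq c_k \sqrt{n}$ caps how far a failed phase can set the process back, so the lower-regime dynamics reduce to a biased random walk on $\log X_{max}$ with positive drift $\Omega(\log(1+1/60))$ per phase and bounded negative excursions. A Chernoff bound on the number of successful phases (each succeeding independently with probability $\geq 1 - p^*$ for arbitrarily small constant $p^*$) gives $\Omega(\log n)$ successes in $O(\log n)$ phases whp, which is far more than the $\Theta(\log \log n)$ needed; equivalently, a runs-of-successes argument guarantees the existence of a block of $\Theta(\log \log n)$ consecutive successful phases within the window. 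Combining with \cref{lem:initial-bias} to control the evolution of the significant opinions during failed phases, this yields the claimed bound of $t' = O(\log n)$ phases whp.
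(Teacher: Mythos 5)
Your overall decomposition matches the paper's: the pigeonhole floor $X_{max}(t')\geq n/k\geq c_k\sqrt{n}$, a lower regime handled by the first statement of \cref{lem:k-to-sqrtlog} and an upper regime handled by its second statement (where a per-phase union bound indeed suffices). The gap is in how you amplify the constant per-phase success probability in the lower regime. First, a Chernoff bound on the \emph{number} of successful phases does not help: $\BigOmega{\log n}$ successes interleaved with $\Theta(p^{*}\log n)$ failures only guarantee a longest run of length $\BigO{1/p^{*}}=\BigO{1}$, whereas you need $\Theta(\log\log n)$ \emph{consecutive} successes, because a single failed phase may send $X_{max}$ all the way back to the floor $c_k\sqrt{n}$. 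For the same reason the walk on $\log X_{max}$ does not have positive drift with ``bounded negative excursions'': an up-step is $\log(61/60)=\Theta(1)$ but a down-step can be $\Theta(\log\log n)$, so with a fixed constant failure probability the worst-case drift is negative. Second, the runs-of-successes argument with a \emph{uniform} success probability $1-p^{*}$ gives only $(1-p^{*})^{\Theta(\log\log n)}=(\log n)^{-\Theta(p^{*})}$ for a full block, and over the $\BigO{\log n/\log\log n}$ available blocks this yields a failure probability of order $\exp\bigl(-(\log n)^{1-\Theta(p^{*})}\bigr)$, which is $\LittleO{1}$ but not $n^{-\BigOmega{1}}$.

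The ingredient you discard when you replace $1-7\exp(-c\,x_{max}(t)^2/n)$ by its worst-case value $1-p^{*}$ is precisely what makes the argument work: the failure probability decays doubly exponentially along a climbing trajectory. The paper feeds this into the drift theorem (\cref{lemma_drift_markov_chain}, after Doerr et al.) with $W(t)=\lfloor X_{max}(t)/\sqrt{n}\rfloor$. Because $\prod_{j\ge1}\bigl(1-7e^{-c\,c_k^2(61/60)^{2j}}\bigr)$ converges to a positive constant, a \emph{single} attempt starting from the floor reaches $\sqrt{n\log n}$ with constant probability, and $\Theta(\log n)$ attempts (each of constant expected length when it fails) then give the \whp guarantee. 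If you retain the state-dependence of the failure probability, your runs argument becomes exactly this and the gap closes; the upper-regime part of your proof is fine as written.
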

\begin{proof}[Proof Sketch]
We call a phase $t$ successful if and only if $X_{max}(t+1) \geq (61/60) \cdot X_{max}(t)$.
From \cref{lem:k-to-sqrtlog} it follows that, as long as $X_{max}(t) \leq \sqrt{n \log n}$, the probability that a phase is not successful is exponentially small in $X_{max}^2(t) / n$.
This enables us to use the drift result of \cite{DBLP:conf/spaa/DoerrGMSS11}, and we obtain that in $\BigO{\log n}$ many phases there exists \whp a sequence of $\BigO{\log\log n}$ consecutive successful phases.
Since after any unsuccessful phase $X_{max}(t)$ is still larger than $n/k = \BigOmega{\sqrt{n}}$, the maximum support reaches a size of $\sqrt{n \log n}$ within $\BigO{\log n}$ phases.
From there on we can repeatedly apply the second statement of \cref{lem:k-to-sqrtlog}, showing that the support of the largest opinion increases additively by $\BigOmega{\sqrt{n\log n}}$ in every phase \whp until it reaches $\sqrt{n} \log^{3/2} n$.
\end{proof}

\begin{restatable}{lemma}{lemLogToEnd}
\label{lem:log32-to-end}
	Fix $\X(t) =\x(t)$. Assume that $x_{max}(t) \geq \sqrt{n} \log^{3/2} n$ and in configuration $\x(t)$ we have $\sqrt{n} / \log n < k < \sqrt{n} / c_k$ opinions.
	At the beginning of the \Decision part of phase $t+3$ \whp at least one of the following statements holds:
	\begin{itemize}[nosep]
	    \item $X_{max}(t+3) > (5/4) \cdot x_{max}(t)$ 
	    \item at most $\sqrt{n}/ \log n$ opinions have non-zero support.
	\end{itemize}
\end{restatable}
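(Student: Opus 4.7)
The plan is a case split on the structure of $\x(t)$, parametrized by the size of the set $\cL := \{i : x_i(t) \geq 0.9 \cdot x_{max}(t)\}$ of opinions close to the maximum. By pigeonhole $|\cL| \leq n/(0.9 \cdot x_{max}(t)) \leq \sqrt{n}/(0.9\log^{3/2} n)$ using the hypothesis $x_{max}(t) \geq \sqrt{n}\log^{3/2} n$, which already gives $|\cL| < \sqrt{n}/\log n$. The same pigeonhole bound shows that at most $\sqrt{n}/\log n$ opinions can have support at least $\sqrt{n}\log n$, so only the ``small'' opinions with $x_i(t) < \sqrt{n}\log n$ are potentially problematic for the second alternative.

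\emph{Case A (few opinions near the maximum):} Suppose $|\cL| \leq c_A \cdot n/x_{max}(t)$ for a sufficiently small constant $c_A$ (say $c_A = 1/50$). Splitting the sum in $\Decided(t)$ along $\cL$ yields
\[
\Decided(t) = \sum_{i} x_i(t)^2/n \;\leq\; |\cL| \cdot x_{max}(t)^2/n + 0.9\, x_{max}(t) \cdot \sum_{i \notin \cL} x_i(t)/n \;\leq\; 0.92\, x_{max}(t).
\]
Applying \cref{lemma:one-phase-conc} to any opinion $j$ providing $x_{max}(t)$, with $\delta$ of order $\sqrt{\log n}$ (valid since $x_j \geq \sqrt{n}\log^{3/2} n$), I obtain $X_j(t+1) \geq (1-o(1)) \cdot x_{max}(t)^2/\Decided(t) > (5/4)^{1/3} x_{max}(t)$ \whp, and hence the same lower bound on $X_{max}(t+1)$. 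The Case A condition only becomes easier to satisfy as $x_{max}$ grows, so I iterate across the three phases to obtain $X_{max}(t+3) \geq (5/4) \cdot x_{max}(t)$, matching the first alternative.

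\emph{Case B (many opinions near the maximum):} Now $|\cL| > c_A \cdot n/x_{max}(t)$. The contribution of $\cL$ alone to $\Decided(t)$ is at least $|\cL| \cdot (0.9\, x_{max})^2/n \geq 0.81\, c_A \cdot x_{max}(t)$, so $\Decided(t) = \BigOmega{x_{max}(t)}$. For any small opinion $i$ with $x_i(t) < \sqrt{n}\log n$, this yields $\Ex{X_i(t+1)} = x_i(t)^2/\Decided(t) \leq \BigO{x_i(t) \cdot \sqrt{n}\log n/x_{max}(t)} \leq \BigO{x_i(t)/\sqrt{\log n}}$, i.e.\ a shrinkage factor of $\Theta(\sqrt{\log n})$ per phase. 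Iterating the shrinkage over the three phases drives the expected support of every small opinion down far enough that $\Pr[Y_i(t+2) \geq 1] \leq x_i(t+2)^2/n$ becomes tiny; a union bound over the $\leq k \leq \sqrt{n}/c_k$ opinions then gives extinction of every such opinion \whp. What remains at $t+3$ can only be opinions that were at some point already large ($\geq \sqrt{n}\log n$), and there are at most $\sqrt{n}/\log n$ of those.

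\emph{Main obstacle.} The hard part is Case B: controlling the shrinkage of small opinions sharply enough to conclude extinction \whp within only three phases, while simultaneously ensuring $\Decided(t')$ remains $\BigOmega{x_{max}(t')}$ at the intermediate times $t' \in \{t+1, t+2\}$ so the shrinkage rate persists. This requires combining (i) the concentration of \cref{lemma:one-phase-conc} for opinions in $\cL$ to argue that $x_{max}$ and $\Decided$ do not collapse between phases, (ii) the Pólya--Eggenberger tail bounds from \cref{sec:polya-eggenberger} to convert the expected shrinkage into a high-probability one for opinions whose support lies in the awkward range where \cref{lemma:one-phase-conc} alone does not yet give \whp guarantees, and (iii) a coarse bound in the style of \cref{lemma:weak-phase-conc} to handle opinions of very small support.
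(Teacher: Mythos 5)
There is a genuine gap in your Case B, and it is precisely the point the lemma's second alternative is designed around. After the shrinkage you describe, a small opinion $i$ ends up with support $\BigO{\sqrt{n/\log n}}$, so its probability of surviving the next \Decision part is $1-(1-x_i/n)^{x_i}\approx x_i^2/n=\BigO{1/\log n}$ — polylogarithmically small, not polynomially small. A union bound over up to $k\leq\sqrt{n}/c_k$ such opinions therefore does \emph{not} give "extinction of every such opinion \whp"; on the contrary, in expectation a $\Theta(1/\log n)$ fraction of them, i.e.\ $\Theta(\sqrt{n}/\log n)$ opinions, survive. No amount of extra shrinkage within the three available phases fixes this. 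The paper's proof accepts that many small opinions survive: it defines independent survival indicators (the $Y_i$ are independent binomials given the configuration), applies a Chernoff bound to their \emph{sum} to show that \whp at most $\frac{3}{2}\max\{(c')^2|L(t)|/\log n,\log n\}$ small opinions remain, and then adds the at most $\sqrt{n}/(2\log n)$ opinions of support above $2\sqrt{n}\log n$; the constant $c_k$ is calibrated exactly so that the total stays below $\sqrt{n}/\log n$. Your version, which counts all opinions ever exceeding $\sqrt{n}\log n$ as the survivors, already uses up the full budget $\sqrt{n}/\log n$ and leaves no room for the (unavoidably) surviving small opinions.

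A secondary structural issue is how you iterate across the three phases. In Case A you assert the case "only becomes easier to satisfy", but $|\cL|$ at time $t+1$ is not controlled by $|\cL|$ at time $t$; moreover your threshold only yields a per-phase growth factor of about $1.087$, so if Case A fails at $t+1$ you cannot recover $(5/4)$ from one growth phase plus non-shrinkage. In Case B you propose to "ensure $\Decided(t')$ remains $\BigOmega{x_{max}(t')}$", but this cannot be ensured — it may simply fail. The paper's resolution is to exploit the disjunctive conclusion: it re-runs the case split at $t+1$ (if $\Decided(t+1)$ has dropped below $x_{max}(t+1)/2$, the maximum grows by a factor $\geq 3/2$ in that single phase and \cref{lem:max-not-shrinking} carries this through the remaining phases to give the first alternative; otherwise the shrinkage argument continues). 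You should restructure both cases along these lines rather than trying to propagate the initial case forward.
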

\begin{proof}[Proof Sketch]
Recall that $ \Decided(t) = \Ex{\norm{\Y(t)}_1}$ is the expected number of decided agents at the beginning of the boosting part of phase $t$. Let us consider two cases:

The first case is $\Decided(t) < x_{max}(t) /2$, which covers the situation that there are only a couple of opinions whose support is close to $X_{max}(t)$.
In this case we track the currently largest opinion.
By \cref{lemma:one-phase-conc} we have that $X_{max}(t+1)$ is tightly concentrated around $x_{max}(t)^2 / \Decided(t) \geq 2x_{max}(t)$, implying that the opinion with maximum support grows by a constant factor.

The second case considers $\Decided(t) \geq x_{max}(t) /2$, which covers the situation that sufficiently many opinions have support close to $X_{max}(t)$.
Note that, by a simple counting argument, at most $\sqrt{n} / \log n$ opinions have a support of $\geq \sqrt{n} \log n$.
That means there must be many small opinions with support $\leq \sqrt{n} \log n$.
Using the assumption $\Decided(t) \geq x_{max}(t) /2$ we show that (\whp) all these many small opinions vanish in a constant number of phases.
To this end fix such a small opinion $j$ and note that $\Ex{X_j(t+1)} \approx x_j(t)^2 / \Decided$ and $x_j(t)^2 / \Decided(t) \leq x_j(t)\cdot (2x_j(t) / x_{max}(t)) \leq \sqrt{n} \log n \cdot (2/\sqrt{\log n})$.
We show that $X_j(t+1) = \BigO{\sqrt{n \log n}}$ \whp, implying that $j$ shrinks by a $\sqrt{\log n}$ factor. Applying a very similar argument two more times, opinion $j$ is removed with probability $1-o(1)$.
This, in turn, implies that \whp most of these small opinions disappear.
Tightening the arguments above and adapting the analysis accordingly, we show that finally at most $\sqrt{n}/\log n$ opinions remain \whp.  
\end{proof}

\subsection{Consensus for $k > \sqrt{n}/c_k$}
\label{sec:analysis-case3}

In this remaining part of our analysis we first consider a special case:
for $k = \omega(n / \log n)$ it might happen that $|\Y(t)| = 0$, i.e., there is not a single decided agent after the \Decision part of phase $t$. 
In this case, however, no agent changes its opinion throughout the \Boosting part of this phase (see \cref{alg:consensus}). Hence, the opinion distribution does not change throughout phase $t$. 
It is easy to see that the probability for $\Y(t) = 0$ is maximized if $k=n$.
However, since we allow agents to interact with themselves, we can bound this probability by $(1-1/n)^{n} < 1/e$.
Therefore, in a sequence of $c \cdot \log n$ phases for some large enough constant $c$ there will be at least one phase in which at least one agent (and at most $\BigO{\log n}$ agents) remain decided in the \Decision part, \whp.
The remaining analysis of this case is now as follows. As long as $k\geq \sqrt{n}/c_k$ at least $k - \sqrt{n}/c_k$ opinions must have support smaller than $c_k \sqrt{n}$.
Each of these opinions vanishes with constant probability throughout a single phase (provided $|\Y(t)| \geq 1$).
Hence, after a total of $\BigO{\log n}$ phases we are back to the case $k \leq \sqrt{n}/c_k$.

\begin{restatable}{proposition}{proCaseThree}
\label{pro:case-3}
Assume $\X(t)$ is a configuration with $k > \sqrt{n} / c_k$ opinions. Then,
after $t'=\BigO{\log n}$ phases, the process reaches  a configuration $\X(t+t')$
such that the following holds \whp: $\X(t+t')$ has at most $\sqrt{n}/c_k$ opinions,
$\vert \Sigset{X(t+t')} \vert >0$, and
$\Sigset{X(t+t')} \subseteq \Sigset{X(t)} $.
\end{restatable}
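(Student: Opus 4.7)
My plan follows the roadmap given in the paragraph preceding the proposition.

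\smallskip
\noindent\emph{Phases mostly productive.} Call a phase \emph{good} if $\norm{\Y(t)}_1 \geq 1$. Since partners are drawn uniformly (with self-interactions), $\Pr[\norm{\Y(t)}_1 = 0 \mid \X(t)] \leq (1 - 1/n)^n < 1/e$ holds uniformly in $\x(t)$; the worst case is $k = n$, where the event requires that no agent self-interacts in its first interaction. Wasted phases are no-ops on the $\Opinion{}$ fields (no decided agent exists, so the boosting part cannot update any opinion), hence they leave $K(t)$ unchanged and all progress is driven by the good phases.

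\smallskip
\noindent\emph{Small opinions vanish.} Call opinion $j$ \emph{small} at phase $t$ if $x_j(t) \leq c_k\sqrt n$. By \cref{obs:decided}, $Y_j(t)\sim\Bin(x_j(t),x_j(t)/n)$, so
\[
\Pr[Y_j(t) = 0 \mid \X(t) = \x(t)] \;=\; (1 - x_j(t)/n)^{x_j(t)} \;\geq\; e^{-c_k^2}(1 - o(1)).
\]
Whenever $Y_j(t) = 0$ in a good phase, the Pólya-Eggenberger boosting part starts without any color-$j$ ball and every undecided agent adopts one of the remaining opinions (by the choice of $\tau$, \whp), forcing $X_j(t+1) = 0$. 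A short case analysis -- comparing $\Pr[Y_j(t)=0]$ against $\Pr[\norm{\Y(t)}_1=0]$ and using that the latter is much more restrictive whenever several comparably sized opinions coexist -- yields an absolute constant $p > 0$ such that in any configuration with $x_j(t) \leq c_k\sqrt n$, opinion $j$ vanishes in a single phase with probability at least $p$. The identity $\sum_i x_i(t) = n$ caps the number of opinions with $x_i(t) > c_k \sqrt n$ at $\lfloor\sqrt n/c_k\rfloor$.

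\smallskip
\noindent\emph{Iteration via a one-dimensional potential.} Let $K(t)$ count the opinions with positive support. Then $K$ is non-increasing, and the preceding step yields
\[
\Ex{K(t+1) \mid \X(t)} \;\leq\; (1-p)\,K(t) \;+\; p\lfloor\sqrt n/c_k\rfloor.
\]
So $W(t) := \max\{0,\, K(t) - \lfloor\sqrt n/c_k\rfloor\}$ is a supermartingale with $\Ex{W(t+1) \mid \X(t)} \leq (1-p)\,W(t)$. Iterating over $t' = O(\log n)$ phases shrinks $\Ex{W(t+t')}$ below $n^{-2}$, and Markov's inequality yields $W(t+t') = 0$ \whp, i.e., $K(t+t') \leq \sqrt n/c_k$. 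The inclusion $\Sigset{\X(t+t')} \subseteq \Sigset{\x(t)}$ follows by applying \cref{lem:initial-bias} to each of the $O(\log n)$ intermediate phases and union-bounding the $O(n^{-2})$ failure probabilities. Finally, $\Sigset{\X(t+t')} \neq \emptyset$ is immediate, since the opinion attaining $x_{max}$ in any configuration is trivially significant.

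\smallskip
\noindent\emph{Main obstacle.} The technical heart lies in the constant $p$ of the second step: within a single phase the events $\{Y_j(t) = 0\}$ for distinct small opinions share partner randomness and are not independent, so one cannot directly take a product-form bound. Working with the scalar supermartingale $W(t)$ sidesteps this, because only the marginal per-opinion vanishing probability enters the drift inequality; correlations within a phase affect only the variance of $W(t+1)$, and the freshness of randomness across phases is enough to carry the argument through.
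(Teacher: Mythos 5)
Your overall roadmap matches the paper's: small opinions (support $O(\sqrt{n})$) lose all decided agents with constant probability, at least one decided agent of some other opinion survives with constant probability, so the opinion count decays geometrically over $\BigO{\log n}$ phases, and the significance claims follow from \cref{lem:initial-bias} plus a union bound. Where you diverge is the concentration mechanism: the paper invokes the independence of the $Y_i(t)$ asserted in \cref{obs:decided} to apply a Chernoff bound to $Z=\sum_i \mathbf{1}\{Y_i(t)=0\}$, concluding that each phase is "successful" (shrinks $k$ by a $(1-p/3)$ factor) with probability at least $1/2$, and then counts successful phases among $\BigTheta{\log n}$ attempts. Your supermartingale $W(t)$ with $\Ex{W(t+1)\mid \X(t)}\le(1-p)W(t)$ plus Markov is a clean, more elementary alternative that only needs the marginal vanishing probability, and it is correct (with the small care that one bounds $\min\{V,W(t)\}$ from below by the number of vanishers among a fixed subset of $W(t)$ small opinions).

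The one soft spot is your constant $p$ itself. You need the \emph{joint} event $\{Y_j(t)=0\}\cap\{\norm{\Y(t)}_1\ge 1\}$ to have probability $\ge p$, and your sketch ("compare $\Pr[Y_j(t)=0]$ against $\Pr[\norm{\Y(t)}_1=0]$, the latter is much more restrictive") does not close this: the only uniform bounds available are $\Pr[Y_j(t)=0]\ge e^{-c_k^2}(1-o(1))$ and $\Pr[\norm{\Y(t)}_1=0]<1/e$, and since $c_k$ is a huge constant, $e^{-c_k^2}\ll 1/e$, so a subtraction bound $\Pr[A\cap B]\ge\Pr[A]-\Pr[B^c]$ gives nothing. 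Moreover, your stated obstacle is a red herring: by \cref{obs:decided} the $Y_i(t)$ \emph{are} modeled as independent (each agent's first decision-part partner is a fresh uniform sample), which is exactly what the paper's proof uses. With that independence, $\Pr[Y_j(t)=0 \wedge \sum_{i\ne j}Y_i(t)\ge 1]=\Pr[Y_j(t)=0]\cdot\bigl(1-\prod_{i\ne j}(1-x_i/n)^{x_i}\bigr)$, and Cauchy–Schwarz gives $\sum_{i\ne j}x_i^2/n\ge (n-c_k\sqrt{n})^2/n^2=1-o(1)$, so the second factor is at least $1-e^{-1+o(1)}$; this yields the constant $p$. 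So either accept the paper's independence model (in which case your dependence worry evaporates and you could equally well use the paper's Chernoff route), or, if you insist on not using it, you must supply a genuinely different argument for the joint event — the "short case analysis" as described would fail.
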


\section{Conclusion}\label{sec:concl}
In this paper we analyze a synchronized version of the undecided state dynamics, both for the population and the gossip model. Our result holds for  up to $n$ initial opinions. For both models we show a consensus time of $\BigO{\log^2 n}$. Furthermore we show that the agents agree on the majority opinion if such an opinion exists; otherwise they agree on a significant opinion having a relative large support. 
 One open question is to see if our results are tight. The main reason for a running time 
of $\BigO{\log^2 n}$ is that our algorithm needs $\BigO{\log n}$ phases of length $\BigO{\log n}$ for breaking the ties in the case of several opinions with roughly the same support.
It might be possible to work with a phase length as a function of $k$ resulting in a refined running time of $\BigO{\log k \log n}$.
Moreover, it may be possible to interleave consecutive phases in order to reduce the running time even further.
For the gossip model it is known that USD does solve plurality consensus but is much slower than the synchronized version.
It would be interesting to show a similar result for the population model.
Another open question is to bound the expected running time of the USD.

\printbibliography

\clearpage
\allowdisplaybreaks
\def\zerodisplayskips{}
\normalsize
\section*{Appendix}
\appendix
\section{Technical Details for the Population Model}
\label{sec:details-population-model}
In this appendix we give the omitted proofs from \cref{sec:analysis-population-model}.
Note that we omit the parameter $t$ in the proofs when we fix a variable such that $\x(t) = \x$.
\subsection{Analysis of a Single Phase}
\label{sec:details-concentration-population-model}

\lemOnePhaseConc*

\begin{proof}[Proof]
We focus on the upper bound and consider some fixed opinion $i$ throughout phase $t$.
At first we analyze the evolution of $i$ throughout \Decision part $t$ and consider $Y_i(t)$ and $\norm{\config{Y}(t)}$.
Recall (see \cref{obs:polya}) that $Y_i(t) \sim \Bin(x_i, x_i /n)$ and $\norm{\config{Y}(t)}_1 = \sum_{j=1}^k Y_j(t)$.
We define the event $\mathcal{E}_i$ as follows
	\[
	    \mathcal{E}_i = \left\{ Y_i(t) < \frac{x_i^2}{n}\cdot \left(1 + \frac{\delta \sqrt{n}}{8\cdot x_i}\right) \mbox{ and } \norm{\config{Y}(t)}_1 > \Decided\cdot \left(1 - \frac{\delta \sqrt{n}}{8 \cdot x_i} \right) \right\}.
	\]
	First we bound the probability of complement of $\mathcal{E}_i$. To do so, we apply Chernoff Bounds (\cref{lemma:chernoff_poisson_trials}) to $Y_i(t)$ and $\norm{\config{Y}(t)}_1$. Hence, for $\delta'= \frac{\delta \cdot\sqrt{n}}{8 \cdot x_i}<1$ and using $\Decided \ge \frac{x_i^2}{n}$
\[
\Pr\left[Y_i(t)\ge \frac{x_i^2}{n}\cdot\left(1+\delta'\right)\right]\le
\exp\left(-\frac{x_i^2\cdot {\delta'}^2}{3\cdot n}\right) \le \exp\left(-\frac{x_i^2\cdot n \cdot \delta^2}{192\cdot x_i^2\cdot n}\right) \le \exp\left(-\frac{{\delta}^2}{192}\right)
\]

\[
\Pr\left[\norm{\config{Y}(t)}_1 \le \Decided\cdot\left(1-\delta'\right)\right] \le \exp\left(-\frac{{\delta'}^2 \cdot \Decided}{2}\right) \le \exp\left(-\frac{x_i^2\cdot n \cdot \delta^2}{128\cdot x_i^2\cdot n}\right) \le \exp\left(-\frac{\delta^2}{192}\right)
\]
 An application of the union bound yields
	\begin{equation}\label{eq:3}
	\Pr\left[\bar{\mathcal{E}_i}\right]\le  2\exp\left(-\frac{{\delta}^2}{192} \right).
	\end{equation}

Now we deal with the outcome of the \Boosting part conditioned on the event $\mathcal{E}_i$.
We fix $Y_i(t) = y_i$ and define $d:=\norm{\config{Y}(t)}_1 = \sum_{j=1}^k y_j$.
As mentioned in \cref{obs:polya} we model $X_i(t+1) \sim \PE(y_i,d - y_i , n - d)$. Applying the  tail bound for the Pólya Eggenberger distribution from
\cref{thm:polya_bound_total_balls}  we get for $0 < \frac{\delta}{8} < \sqrt{y_i}$ and some constant $1>\polyaConstSmall>0$ that
\begin{equation}
\label{eq:4}
	\Pr \left[ X_i(t+1) > \frac{y_i}{d}\cdot n + \frac{\sqrt{y_i}}{d} \cdot n \cdot \frac{\delta}{8} ~\Big|~ \mathcal{E}_i \right] < 4 \exp\left(-\polyaConstSmall \cdot \frac{\delta^2}{64}\right).
\end{equation}
Since $y_i < \frac{x_i^2}{n}\cdot \left(1 + \frac{\delta \sqrt{n}}{8 x_i}\right) \mbox{ and } d > \Decided\cdot \left(1 - \frac{\delta\sqrt{n}}{8 x_i}\right)$ we get
\begin{align*}\label{eq:5}
     \frac{y_i}{d}\cdot n + \frac{\sqrt{y_i}}{d} \cdot n \cdot \frac{\delta}{8}
     &< \frac{x_i^2}{\Decided} \cdot \frac{\left(1 + \frac{\delta \cdot\sqrt{n}}{8 \cdot x_i} \right)}{ \left(1 - \frac{\delta \cdot \sqrt{n}}{8 \cdot x_i}\right)} +  \frac{\sqrt{\frac{x_i^2}{n}\cdot \left(1 + \frac{\delta \cdot\sqrt{n}}{8 \cdot x_i} \right)}}{\Decided\cdot \left(1 - \frac{\delta \cdot \sqrt{n}}{8 \cdot x_i}\right)}\cdot n \cdot \frac{\delta}{8}\\
     &=  \frac{x_i^2}{\Decided} \cdot \frac{\left(1 + \frac{\delta \cdot \sqrt{n}}{8 \cdot x_i} \right)}{ \left(1 - \frac{\delta \cdot \sqrt{n}}{8 \cdot x_i}\right)} +  \frac{ x_i}{\Decided}\cdot \frac{\sqrt{ \left(1 + \frac{\delta\cdot\sqrt{n}}{8 \cdot x_i} \right)}}{ \left(1 - \frac{\delta \cdot \sqrt{n}}{8 \cdot x_i}\right)}\cdot \sqrt{n} \cdot \frac{\delta}{8}\\
     &\overset{(*)}{<} \frac{x_i^2}{\Decided}\cdot \left(1 +3\cdot \frac{\delta\cdot\sqrt{n}}{8 \cdot x_i} \right) +  \frac{x_i\cdot \sqrt{n}\cdot \delta}{8 \Decided} \cdot\left(1 +3\cdot \frac{\delta\cdot\sqrt{n}}{8 \cdot x_i} \right) \\
     &< \frac{x_i^2}{\Decided}\cdot \left(1 +3\cdot \frac{\delta\cdot\sqrt{n}}{8 \cdot x_i} \right) +  2\cdot \frac{x_i\cdot \sqrt{n}\cdot \delta}{8 \cdot \Decided} \\
     &= \frac{x_i^2}{\Decided} +  5\cdot \frac{x_i\cdot \sqrt{n}\cdot \delta}{8 \cdot \Decided} <  \frac{x_i^2}{\Decided} + \frac{x_i\cdot \sqrt{n}\cdot \delta}{\Decided}.
     \end{align*}
     In (*) we apply the inequality $(1+a)/(1-a)\le 1+3a$ which holds for all $ a\le 1/3$.
Hence, a combination of this and Inequality (\ref{eq:4}) results in
\begin{align*}
\Pr \left[ X_i(t+1) >\frac{x_i^2}{\Decided} + \frac{x_i}{\Decided}\cdot \sqrt{n}\delta ~\Big|~ \mathcal{E}_i \right] &<
      \Pr \left[ X_i(t+1) > \frac{y_i}{d} n + \frac{\sqrt{y_i}}{d}\cdot n \cdot \frac{\delta}{8} ~\Big|~ \mathcal{E}_i \right]\\
      &<
      4 \exp\left(-\polyaConstSmall \cdot\frac{\delta^2}{64}\right).
     \end{align*}
At last we combine this with Inequality (\ref{eq:3}) via an application of the law of total probability.
Then we get that
\begin{align*}
    \Pr \left[X_i(t+1) > \frac{x_i^2}{\Decided} + \frac{x_i^2}{\Decided}\cdot \sqrt{n} \cdot \delta\right]
    &=
    \Pr \left[ X_i(t+1) >\frac{x_i^2}{\Decided} + \frac{x_i}{\Decided}\cdot \sqrt{n}\cdot \delta ~\Big|~ \mathcal{E}_i
    \right]\cdot \Pr\left[\mathcal{E}_i\right]
    \\&\phantom{{}={}}+
     \Pr \left[ X_i(t+1) >\frac{x_i^2}{\Decided} + \frac{x_i}{\Decided}\cdot \sqrt{n}\cdot\delta ~\Big|~ \bar{\mathcal{E}_i}
    \right]\cdot \Pr\left[\bar{\mathcal{E}_i}\right]
    \\
    &< 4\exp\left(-\polyaConstSmall \cdot \frac{\delta^2}{64}\right) + 2\exp\left(-\frac{\delta^2}{192}\right)  < 7 \exp\left(-\concConstEps \cdot \delta^2\right)
\end{align*}
for some suitably chosen constant $\concConstEps=\polyaConstSmall/192>0$. As a matching lower bound can be developed by a symmetric approach, we omit the detailed proof.
\end{proof}

\begin{lemma}
\label{lemma:weak-phase-conc}
Fix $\X(t) = \x(t)$ and an opinion $i$ with  $x_i(t) \leq c \sqrt{n \log n}$. For any constant $c>0$ 
and $\Decided = \sum_{j=1}^{k} x_{j}(t)^2/n$, it holds that 
$\Pr [ X_i(t+1)  >  \left(12c^2 + 74\polyaConstLarge\right)\cdot n \cdot \log n/\Decided] < 4 n^{-2}$.
\end{lemma}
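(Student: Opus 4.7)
The plan is to track opinion $i$ through the \Decision and \Boosting parts of phase $t$ separately, combining a small-mean Chernoff bound on the decision part with an absolute-deviation Pólya-Eggenberger estimate on the boosting part. The form of \cref{lemma:one-phase-conc} is too weak in this regime because its deviation parameter $\delta$ is constrained by $x_i/\sqrt{n}$, which cannot reach the polylogarithmic scale we need. Before any probabilistic work I would first dispose of the trivial range $\Decided \le (12c^2+74\polyaConstLarge)\log n$: here the claimed threshold $(12c^2+74\polyaConstLarge)\cdot n\log n/\Decided$ already exceeds $n$, so the bound holds deterministically since $X_i(t+1)\le n$. In the remaining regime $\Decided$ is at least a constant multiple of $\log n$, which is enough for the multiplicative Chernoff bound on $\|\Y(t)\|_1$ used below.

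For the \Decision part, \cref{obs:decided} tells us that $Y_i(t)\sim\Bin(x_i,x_i/n)$ has mean $x_i^2/n\le c^2\log n$ and that $\|\Y(t)\|_1$ is a sum of independent binomials with mean $\Decided$. A small-mean Chernoff tail of the form $\Pr[Y\ge s]\le(e\Ex{Y}/s)^{s}$ (valid for $s\ge e\Ex{Y}$) gives $Y_i(t)\le 6c^2\log n$ with probability at least $1-n^{-2}$, and the standard multiplicative Chernoff gives $\|\Y(t)\|_1\ge \Decided/(1+\epsilon)$ for any small constant $\epsilon>0$ with probability at least $1-n^{-2}$.

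For the \Boosting part I would condition on realizations $Y_i(t)=y_i$ and $\|\Y(t)\|_1=d$ satisfying both of the preceding inequalities. By \cref{obs:polya}, $X_i(t+1)\sim\PE(y_i,d-y_i,n-d)$ with mean $y_i\cdot n/d$. Since $y_i$ can be as small as $O(\log n)$ (or even zero), \cref{thm:polya_bound_total_balls} is not applicable at the scale we need, because its constraint $\delta/8<\sqrt{y_i}$ caps the reachable tails. Instead I would invoke the absolute-deviation Pólya-Eggenberger tail bound of \cite{DBLP:conf/podc/BankhamerEKK20} (their Theorem~47), which for $\lambda\ge\polyaConstLarge$ supplies
\[
\Pr\!\left[X_i(t+1) > (y_i+\lambda)\cdot \frac{n}{d}\right] \le 2\exp\!\left(-\lambda/\polyaConstLarge\right),
\]
so that $\lambda=C\polyaConstLarge\log n$ makes this probability at most $n^{-2}$. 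A union bound over the three failure events then gives, with probability at least $1-4n^{-2}$,
\[
X_i(t+1) \;\le\; (y_i + C\polyaConstLarge\log n)\cdot \frac{n}{d} \;\le\; \bigl(6c^2 + C\polyaConstLarge\bigr)(1+\epsilon)\cdot \frac{n\log n}{\Decided},
\]
and choosing $\epsilon$ small and $C$ so that the prefactors are dominated by $12c^2$ and $74\polyaConstLarge$ yields the statement.

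The hard part will be selecting and correctly invoking the right variant of the Pólya-Eggenberger tail bound in the near-empty-urn regime, where \cref{thm:polya_bound_total_balls} is the wrong tool. I expect to lean on the absolute-deviation estimate of \cite{DBLP:conf/podc/BankhamerEKK20} whose scale factor is precisely $n/d$; it is exactly this scale factor that, when combined with the $\|\Y(t)\|_1\ge\Decided/(1+\epsilon)$ bound, produces the $n\log n/\Decided$ form of the claimed bound, and it is in the final bookkeeping of this step that the specific prefactors $12$ and $74\polyaConstLarge$ must be tracked.
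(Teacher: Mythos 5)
Your architecture is the same as the paper's: dispose of the regime $\Decided \le (12c^2+74\polyaConstLarge)\log n$ deterministically, control $Y_i(t)$ and $\norm{\Y(t)}_1$ by Chernoff bounds in the decision part, apply the absolute-deviation Pólya--Eggenberger tail bound conditioned on those events in the boosting part, and combine via the law of total probability. Two of your quantitative steps, however, do not hold as written. First, the claim that $Y_i(t)\le 6c^2\log n$ with probability $1-n^{-2}$ fails for small constants $c$: with $\Ex{Y_i(t)}\le c^2\log n$, the tail $(e\mu/s)^s$ at $s=6c^2\log n$ is only $\exp(-\Theta(c^2\log n))$, which is far from $n^{-2}$ when, say, $c=1/2$. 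The paper sidesteps this by working with $c'=\max\{c,\sqrt6\}$ and thresholding $Y_i(t)$ at $2c'^2\log n$ (via \cref{General_Upper_Chernoff_bound} with $\delta'=1$), the extra additive constant being absorbed by the generous $74\polyaConstLarge$ coefficient.

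Second, you invoke the Pólya--Eggenberger bound in the form $\Pr[X_i(t+1)>(y_i+\lambda)\,n/d]\le 2\exp(-\lambda/\polyaConstLarge)$, i.e., with coefficient $1$ on $y_i$. The version actually available (\cref{thm:polya_bound_small_support}) reads $\Pr[X_i(t+1)>(n/d)(3y_i+\polyaConstLarge\log n)]<2n^{-2}$, with coefficient $3$. Your bookkeeping depends on the coefficient being $1$: with the factor $3$ and your threshold $y_i\le 6c^2\log n$ you would get $18c^2(1+\epsilon)$ in front of $n\log n/\Decided$, which exceeds the $12c^2$ budget for large $c$ and cannot be hidden in the $\polyaConstLarge$ term. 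Both defects are repaired simultaneously by the paper's tighter threshold $y_i<2c'^2\log n$ together with $d>\Decided/2$, which yields the coefficient $3\cdot2\cdot2=12$ on $c'^2$ and $2\polyaConstLarge$ on the additive term, and then $12c'^2+2\polyaConstLarge<12c^2+74\polyaConstLarge$.
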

\begin{proof}[Proof]
	We track opinion $i$ with support $x_i(t) \leq c \sqrt{n \log n}$ throughout the \Decision and \Boosting part $t$.
		Similar to the proof of \cref{lemma:one-phase-conc} first we analyze $Y_i(t)$ and $\norm{\config{Y}(t)}$.
	Again we model $Y_i(t) \sim \Bin(x_i, x_i /n)$ and $\norm{\config{Y}(t)}_1 \sim \sum_{j=1}^k \Bin(x_j , x_j / n)$.
Let $c'=\max\{c,\sqrt{6}\}$. 
Then our goal is to bound
$\Pr [ X_i(t+1)  > n/\Decided \cdot (12{c'}^2+2\polyaConstLarge) \cdot \log n ]$.

First, note that if $ \Decided < (12{c'}^2+2\polyaConstLarge) \cdot \log n $ then $n/\Decided \cdot (12{c'}^2+2\polyaConstLarge) \cdot \log n > n$ and

	\[
		\Pr \left[ X_i(t+1)  > \frac{n}{\Decided} \cdot (12{c'}^2+2\polyaConstLarge) \cdot \log n \right] =0
	\]
	and the statement of the lemma follows immediately.
	Hence, in the following we can assume that
	 $ \Decided \ge (12{c'}^2+2\polyaConstLarge) \cdot \log n $.
	We define the event $\mathcal{E}_i$ as follows
\[
	    \mathcal{E}_i = \left\{ Y_i(t) < 2{c'}^2 \cdot \log n \land \norm{\config{Y}(t)}_1 > \frac{\Decided}{2} \right\}.
	\]
	First we bound the probability of $\bar{\mathcal{E}_i}$.
	We apply general Chernoff upper Bound (\cref{General_Upper_Chernoff_bound}) and get for $\delta'=1$ that
\[
\Pr\left[  Y_i(t) \ge 2{c'}^2 \cdot \log n \right] \leq
\exp\left(-\frac{{c'}^2 \cdot \log n}{3}\right) \le \exp\left(-\frac{6 \cdot\log n}{3}\right) \le  n^{-2}.
\]
Due to the definition of $c'$ we have $\Decided\ge (12{c'}^2+2\polyaConstLarge) \log n\ge 10\log n$. Applying  Chernoff bounds with (\cref{lemma:chernoff_poisson_trials})  $\delta'=\sqrt{(4 \log n)/\Decided} \le 1/2$ we get
\[
\Pr\left[\norm{\config{Y}(t)}_1 \le \frac{\Decided}{2}\right]
\le \Pr\left[\norm{\config{Y}(t)}_1 \le \Decided \cdot \left(1-\sqrt{\frac{4\cdot\log n}{\Decided}}\right)\right] \le
\exp\left(-\frac{4\cdot\Decided\cdot \log n}{2\cdot\Decided}\right)
\le n^{-2}.
\]
An application of the union bound yields
\begin{equation}\label{eq:73}
    \Pr\left[\bar{\mathcal{E}_i}\right]\le 2n^{-2}.
\end{equation}

It remains to consider the outcome of boosting part $t$.
We fix $Y_i(t) =y_i$ and define $d:=\norm{\config{Y}(t)}_1$.

Similar to \cref{lemma:one-phase-conc} we model $X_i(t+1) \sim \PE(y_i, d- y_i, n - d)$ and apply \cref{thm:polya_bound_small_support},  which states a tail bound for this Pólya Eggenberger distribution, to deduce that
	\begin{align}
	\label{eq:74}
	    \Pr \left[X_i(t+1) > \frac{n}{d}\cdot \left( 3y_i + \polyaConstLarge\cdot\log{n}\right) ~\Big|~ \mathcal{E}_i \right]< 2n^{-2}.
	\end{align}
	Conditioned on $\mathcal{E}_i$ we have $y_i < 2{c'}^2 \log n$ and $d > \frac{\psi}{2}$. 
    Therefore,

	\[
	\frac{n}{d}\cdot \left( 3y_i + \polyaConstLarge\cdot\log{n}\right) < \frac{n}{\Decided}\cdot \left( 6y_i + 2\polyaConstLarge\cdot\log{n}\right) < \frac{n}{\Decided} \cdot\left( \left(12{c'}^2 + 2\polyaConstLarge\right)\cdot\log n\right)
	\] Together with Inequality (\ref{eq:74}), this yields
	\[
	   \Pr \left[ X_i(t+1) > \frac{n}{\Decided}\cdot ((12{c'}^2+2\polyaConstLarge) \cdot  \log n) ~\Big|~ \mathcal{E}_i \right] < \Pr \left[X_i(t+1) > \frac{n}{d}\cdot \left( 3y_i + 2\polyaConstLarge\log{n}\right) ~\Big|~ \mathcal{E}_i \right]  <2n^{-2}    .
	\]
Similar to \cref{lemma:one-phase-conc} the statement follows by an application of the law of total probability.
Note that $\polyaConstLarge>1$, therefore $12{c'}^2+2\polyaConstLarge< 12c^2+74\polyaConstLarge$.
\end{proof}

\begin{lemma}%
\label{lemma:ratio} 
Fix $\X(t) = \x(t)$
and consider two opinions $i,j$ with support $x_i(t) - x_j(t) \geq \sDelta \sqrt{n\log{n}}$ and $x_j(t) \geq c_w \sqrt{n\log{n}}$, then
$
\Pr[{X_i(t+1)}/{X_j(t+1)} \geq (x_i(t)/x_j(t))^{1.5}]\ge 1-2n^{-2}.
$
\end{lemma}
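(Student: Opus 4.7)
The plan is to apply the one-phase concentration result \cref{lemma:one-phase-conc} separately to opinions $i$ and $j$, with a single deviation parameter $\delta$ large enough that each bound fails with probability at most $n^{-2}$. Taking $\delta := \sqrt{3\log n/\concConstEps}$ yields $7e^{-\concConstEps \delta^2} \leq n^{-2}$ for $n$ large enough. The admissibility condition $\delta < x_j(t)/\sqrt{n} \leq x_i(t)/\sqrt{n}$ required by \cref{lemma:one-phase-conc} follows from $x_j(t) \geq c_w \sqrt{n \log n}$ together with $c_w^2 = 64 + 128/\concConstEps \geq 128/\concConstEps$, so that $x_j(t)/\sqrt{n} \geq c_w \sqrt{\log n} > \delta$. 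The hypothesis $x_j(t) \geq c_w\sqrt{n\log n}$ is designed exactly for this purpose.

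A union bound then yields, with probability at least $1 - 2n^{-2}$, both
\[
X_i(t+1) \geq \frac{x_i(t)^2}{\Decided}\left(1 - \frac{\sqrt{n}\,\delta}{x_i(t)}\right)
\qquad\text{and}\qquad
X_j(t+1) \leq \frac{x_j(t)^2}{\Decided}\left(1 + \frac{\sqrt{n}\,\delta}{x_j(t)}\right).
\]
Dividing these, $X_i(t+1)/X_j(t+1)$ is at least $(x_i(t)/x_j(t))^2$ times a correction factor, and the task is to show that this correction loses us at most a factor $(x_i(t)/x_j(t))^{1/2}$, so the target ratio $(x_i(t)/x_j(t))^{1.5}$ is still reached. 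Setting $a := \sqrt{n}\,\delta/x_j(t)$ and $b := (x_i(t) - x_j(t))/x_j(t)$, and using $x_i(t) \geq x_j(t)$ to drop a harmless factor in the numerator, the claim reduces to the elementary inequality $(1+b)^{1/2} \geq (1+a)/(1-a)$, equivalently $b \geq 4a/(1-a)^2$.

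Both $a$ and $b$ scale like $1/x_j(t)$, so their ratio is independent of $x_j(t)$: a direct computation gives $b/a \geq \sDelta \cdot \sqrt{\concConstEps/3}$. The same inequality $c_w^2 \geq 128/\concConstEps$ used above also gives $a \leq \sqrt{3/\concConstEps}/c_w \leq \sqrt{3/128} < 0.2$, so $(1-a)^2 \geq 0.64$ and it suffices to check $b/a \geq 4/0.64 = 6.25$. The definition $\sDelta = (160 c_w)^2 + (148 c_p)^2$ in \cref{def:Constant} comfortably yields $\sDelta \sqrt{\concConstEps/3} \gg 6.25$, with enormous slack.

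I do not expect a substantial obstacle: the constants in \cref{def:Constant} were clearly engineered precisely to make this and related ratio estimates (such as the preservation of significance in \cref{lem:initial-bias}) go through. The only genuinely nontrivial step is lining up the constants in the last paragraph, and everything else is a direct invocation of \cref{lemma:one-phase-conc} followed by a union bound and elementary algebra.
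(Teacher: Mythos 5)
Your proposal is correct and follows essentially the same route as the paper's proof: apply \cref{lemma:one-phase-conc} to both opinions with a deviation parameter $\delta = \Theta(\sqrt{\log n})$, take a union bound, and reduce the ratio bound to an elementary inequality whose validity is guaranteed by the choice of constants in \cref{def:Constant}. The only difference is cosmetic — the paper verifies the final inequality $1 - 2\sqrt{n}\delta/x_j \geq (x_i/x_j)^{-1/2}$ by a two-case split on whether $x_i/x_j \geq 2$, whereas your single equivalence $(1+b)^{1/2} \geq (1+a)/(1-a) \Leftrightarrow b \geq 4a/(1-a)^2$ handles both cases at once and is arguably cleaner.
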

\begin{proof}[Proof] First we lower bound the support of opinion $i$  and upper bound the support of opinion $j$. To do so, we apply \cref{lemma:one-phase-conc} to both opinions with $\delta = \sqrt{(\ln 7+2\log n)/\concConstEps}$.

Note that $c_w\ge \frac{8\delta}{\sqrt{\log n}}$. In this way, we get
    \[
      \Pr\left[X_i(t+1)\ge \frac{x_i^2}{\Decided}- \frac{x_i}{\Decided}\cdot \sqrt{n}\cdot \delta
    \right] \ge  1-n^{-2} \quad \text{ and } \quad
    \Pr\left[X_j(t+1)\le \frac{x_j^2}{\Decided}+ \frac{x_j}{\Decided}\cdot \sqrt{n}\cdot \delta
      \right] \ge  1-n^{-2}.
    \] 
 
    An application of the union bound yields with probability at least $1-2n^{-2}$
    \begin{align*}
        \frac{X_i(t+1)}{X_j(t+1)}
        \geq \left(\frac{\frac{x_i^2}{\Decided}- \frac{x_i}{\Decided}\cdot \sqrt{n}\cdot \delta}{\frac{x_j^2}{\Decided}+ \frac{x_j}{\Decided}\cdot \sqrt{n}\cdot \delta}\right)
            \geq \left( \frac{x_i}{x_j} \right)^2 \cdot \left( \frac{1-\frac{\sqrt{n} \delta}{x_i}}{1+\frac{\sqrt{n} \delta}{x_j}} \right) \geq \left( \frac{x_i}{x_j} \right)^2 \cdot \left(1-\frac{2 \sqrt{n} \delta}{x_j} \right)
    \end{align*}
    where we applied the inequality $(1-a)/(1+a)\ge 1-2a$ for any $a$. For the moment let us assume  $1- (2 \sqrt{n}\cdot \delta/x_j) \ge  (x_i/x_j)^{-1/2}$, then we have
    \[
      \frac{X_i(t+1)}{X_j(t+1)}
            \geq \left( \frac{x_i}{x_j} \right)^2 \cdot \left(1-\frac{2 \sqrt{n} \delta}{x_j} \right)
            \geq \left( \frac{x_i}{x_j} \right)^{1.5}.
    \] and the statement follows immediately.
    Hence, in the remaining part of the proof we prove the following observation.
    \begin{observation*}
    \label{eq:ratio_case_study}
     If $x_i-x_j\ge\sDelta\sqrt{n \log n}$ and $x_j\ge c_w\sqrt{n\log n}$ then, \[  1-\left( \frac{x_i}{x_j} \right)^{-1/2} \geq \frac{2 \sqrt{n} \delta}{x_j}.\]
    \end{observation*}
    First assume that $x_i / x_j \geq 2$.
  
    Then
    \[
        1-\left( \frac{x_i}{x_j} \right)^{-1/2} \geq 1/4 \geq \frac{2\delta}{c_w\sqrt{\log n}} \geq \frac{2\sqrt{n} \delta}{x_j}
    \]

    Next assume that $x_i/x_j < 2$.
    Let $1>\varepsilon>0$ be defined s.t. $x_i / x_j = 1 + \varepsilon$.
    Observe that
    \begin{equation}
    \label{eq:squaring_eps}
        1-\left(\frac{x_i}{x_j}\right)^{-1/2}  =  1-\left(1+\varepsilon\right)^{-1/2} \geq \varepsilon/10     \end{equation}

    By using the assumptions,
 
     it follows that
    \begin{equation}
    \label{eq:squaring_eps_2}
        \varepsilon/10
        = \frac{x_i-x_j}{10 x_j}
        \geq \frac{\sDelta \sqrt{n\log{n}}}{10 x_j}
        \geq \frac{c_w \sqrt{n\log{n}}}{x_j}
        \geq \frac{2\sqrt{n} \delta}{x_j}.
    \end{equation}
     Then, it follows by a combining (\ref{eq:squaring_eps}) and (\ref{eq:squaring_eps_2}).
\end{proof}

\lemInitialBias*
\begin{proof}[Proof]
Recall that an opinion $j$ is insignificant if $x_{max}-x_j > \sDelta \cdot \sqrt{n\cdot\log{n}}$.

Let $A(t)$ be the set of insignificant opinions with support larger than $4\cdot c_w \cdot \sqrt{n\cdot\log{n}}$ and let $B(t)$ be the set of insignificant opinions with support smaller than
$4\cdot c_w \cdot \sqrt{n\cdot\log{n}}$ and larger than zero.
We show that every member of $A(t)$ and $B(t)$ remains insignificant at the start of phase $t+1$.
For this we consider several cases depending on the largest opinion with support $x_{max}$.
\paragraph{Case 1: $x_{max}\geq \sqrt{n}\cdot \log^{3/2}n$}
First we deal with $A(t)$ by fixing an opinion $j \in A(t)$. Note that $A(t) = \emptyset$ implies the statement in this case directly. 
We lower bound the support of opinion with maximum support and upper bound the support of opinion $j$. To do so, we apply \cref{lemma:one-phase-conc} to both opinions with $\delta = \sqrt{(\ln 7+2\log n)/\concConstEps}$ and yield
\begin{equation}\label{eq:upper_bound_lem:initial-bias}
    \Pr\left[X_{max}(t+1)\ge \frac{x_{max}^2}{\Decided}- \frac{x_{max}}{\Decided}\cdot \sqrt{n}\cdot \delta \right] \ge  1-n^{-2}
    \end{equation}
    and
\[
    \Pr\left[X_j(t+1)\le \frac{x_j^2}{\Decided}+ \frac{x_j}{\Decided}\cdot \sqrt{n}\cdot \delta\right] \ge  1-n^{-2}.
\]
By a simple union bound we have with probability at least $1-2\cdot n^{-2}$ that
\begin{align}
\label{eq:bias-00}
	X_{max}(t+1) - X_j(t+1)
	&>  \frac{x_{max}^2 -x_j^2}{\Decided} -\left(\frac{(x_{max}+x_j)\cdot \sqrt{n}\cdot\delta}{\Decided}\right)
	\nonumber \\ 
	 &= \frac{x_{max} + x_j}{\Decided}\cdot \left( (x_{max} - x_j) - \delta\right).
\end{align}
Since $x_{max} > \Decided$ (see  \cref{{obs:polya}}) and $x_{max}>x_j$ we have
\begin{align*}
\label{eq:bias-uno}
	 \frac{x_{max} + x_j}{\Decided}\cdot \left( (x_{max} - x_j) - \delta\right)
	 &\geq \left(1+ \frac{x_j}{x_{max}}\right) \cdot (x_{max}-x_j) - \left(1+\frac{x_j}{x_{max}}\right) \cdot \delta \\
	 &= (x_{max}-x_j) + \frac{x_j}{x_{max}} \cdot (x_{max}-x_j) - \left(1+\frac{x_j}{x_{max}}\right) \cdot \delta \\
	 &\geq  (x_{max}-x_j) + \frac{x_j}{x_{max}} \cdot (x_{max}-x_j) - 2 \cdot \delta.
\end{align*}
In the observation below we show $x_j/x_{max}\cdot (x_{max} - x_j) - 2\delta>0$. From that follows $X_{max}(t+1)-X_j(t+1) \geq(x_{max}-x_j)\geq \sDelta\cdot\sqrt{n\cdot\log{n}}$ where the last inequality holds since opinion j is insignificant.

\begin{observation*}
Assume  $x_j > 4 \cdot c_w \cdot \sqrt{n \cdot \log n}$ and $x_{max} - x_j\ge \sDelta\sqrt{n\log n}$ then $\frac{x_j}{x_{max}}\cdot (x_{max} - x_j) - 2\delta>0$.
\end{observation*}
\begin{proof}

\noindent To ease the calculations let $\Delta'=(x_{max} - x_j)/\sqrt{n \log n}$ and $c_1 = \sqrt{(\ln 7+2\log n)/(\concConstEps\cdot \log n)}$.
Recall that $\delta = c_1\cdot \sqrt{\log n}$, then
\begin{align*}
    \frac{x_j}{x_{max}}\cdot (x_{max} - x_j) - 2\delta
    &\ge \sqrt{n\log n}\cdot \left( \frac{x_j\cdot (x_{max}-x_j)}{x_{max} \cdot \sqrt{n\cdot\log{n}}} -2c_1\right) \\
    &\ge \sqrt{n\log n}\cdot \left( \frac{x_j\cdot \Delta'}{x_j+x_{max}-x_j} -2c_1\right) \\
	&= \sqrt{n\log n}\cdot \left( \frac{ \Delta'}{1+\frac{x_{max}-x_j}{x_j}}-2c_1 \right) \\
    &\overset{(a)}{\ge} \sqrt{n\log n}\cdot \left( \frac{ \Delta'}{1+\frac{\Delta'}{4c_w}}-2c_1 \right) \\
    &= \sqrt{n\log n} \cdot \left( \frac{4c_w \cdot \Delta' - 2c_1 \cdot (4c_w + \Delta')}{4c_w + \Delta'} \right) \\ 
    &= \sqrt{n\log n} \cdot \left( \frac{ 4c_w(\Delta'-2c_1) +\Delta'(4c_w-2c_1) }{4c_w + \Delta'} \right)
    \overset{(b)}{>} 0    
\end{align*}
where we use (a) $(x_{max}-x_j)/{x_j}\leq \frac{x_{max}-x_j}{4\cdot c_w \cdot \sqrt{n\cdot\log{n}} } =\frac{\Delta'}{4c_w}$ due to $x_j\ge 4c_w\sqrt{n\log n}$ and (b) $\Delta' =(x_{max}-x_j)/\sqrt{n\log n} \geq \sDelta > 2\cdot c_1$ and $c_w=8\sqrt{1+2/\concConstEps} > c_1/2$ (\cref{def:Constant}).
\end{proof}
Another union bound application on (\ref{eq:bias-00}) over all opinions $j \in A(t)$ yields with probability at least $1-2\cdot n^{-1}$ that all opinions $j\in A(t)$ remains insignificant at the start of phase $t+1$.

Next we deal with $B(t)$ by fixing an opinion $j \in B(t)$ (assuming $B(t) \neq \emptyset$).
Again We lower bound the support of the largest opinion  and upper bound the support of opinion $j$.
For the largest opinion we have (\ref{eq:upper_bound_lem:initial-bias}) where for opinion $j$ we get by \cref{lemma:weak-phase-conc} for $c = 4\cdot c_w$ that
\begin{equation}\label{eq:case2-lem:initial-bias}
\Pr[X_j(t+1)  < \frac{n}{\Decided} \cdot (192\cdot c_w^2 + 74\polyaConstLarge) \cdot \log{n}]\ge 1-4n^{-2} .
\end{equation}
An application of the union bound yields with probability at least $1-5n^{-2}$ that
\begin{align}\label{eq:large_max_small_insignificant_difference}
    X_{max}(t+1)-X_j(t+1)
    &\geq \frac{x_{max}^2}{\Decided}- \frac{x_{max}}{\Decided}\cdot \sqrt{n}\cdot \delta - \frac{n}{\Decided} \cdot (192\cdot c_w^2 + 74\polyaConstLarge) \cdot \log{n} \\
    &= \frac{x_{max}^2}{\Decided} \cdot \left( 1 - \frac{\delta \cdot \sqrt{n}}{x_{max}} - \frac{(192\cdot c_w^2 + 74\polyaConstLarge) \cdot n\cdot\log{n}}{x_{max}^2} \right) \nonumber\\
    &\geq x_{max} \cdot \left( 1 - \frac{\delta \cdot \sqrt{n}}{\sqrt{n} \cdot \log^{3/2}{n}} - \frac{(192\cdot c_w^2 + 74\polyaConstLarge) \cdot n\cdot\log{n}}{n\cdot\log^{3}{n}} \right) \nonumber \\
    &\geq x_{max} \cdot \left( 1 - \frac{\delta }{\log^{3/2}{n}} - \frac{(192\cdot c_w^2 + 74\polyaConstLarge)}{\log^{2}{n}} \right) \nonumber \\
    &\geq \sDelta \cdot \sqrt{n\log{n}} \nonumber
\end{align}
where we use $x_{max} > \Decided$ (see \cref{obs:polya}) and $x_{max} \geq \sqrt{n}\cdot \log^{3/2}{n}$.
Another union bound application on (\ref{eq:large_max_small_insignificant_difference}) over all opinions $j \in B(t)$ yields with probability at least $1-5\cdot n^{-1}$ that all opinions $j\in B(t)$ remains insignificant at the start of phase $t+1$.

\medskip

Next we do a similar analysis as before on $A(t)$ and $B(t)$ but with another assumption on the largest opinion.
\paragraph{Case 2: $x_{max} < \sqrt{n}\cdot \log^{3/2}n$}
Again, first we deal with $A(t)$.
Assuming $A(t) \neq \emptyset$ (otherwise there is nothing to show) we fix an opinion $j \in A(t)$.
Note that by the definition of insignificant opinions $x_{max} \geq \sDelta\cdot\sqrt{n\cdot \log n}$. 
This is suffices to apply the analysis of the first case on the largest opinion and $j\in A(t)$ such that all opinions $j \in A(t)$ remains insignificant at the start of phase $t+1$ with probability at least  $1-2\cdot n^{-1}$.

Next we deal with $B(t)$ by fixing an opinion $j \in B(t)$ (assuming $B(t) \neq \emptyset$).
Again we lower bound the support of opinion with maximum support and upper bound the support of opinion $j$.
Note that by the definition of insignificant opinions $x_{max} \geq \sDelta\cdot\sqrt{n\cdot \log n}$. 
This is suffices to use (\ref{eq:upper_bound_lem:initial-bias}) for the largest opinion whereas we use (\ref{eq:case2-lem:initial-bias}) for opinion $j$.
Note that $c_w>8$. Recall that $\delta = c_1\cdot \sqrt{\log n}$ for $c_1 = \sqrt{(\ln 7+2\log n)/(\concConstEps\cdot \log n)}$.
Then, an application of the union bound yields with probability at least $1-5n^{-2}$ that
\begin{align}\label{eq:ratio_max_small_insignificant_small}
    \frac{X_{max}(t+1)}{X_j(t+1)}
    &\geq \frac{ \frac{x_{max}^2}{\Decided} \cdot \left( 1- \frac{c_1 \cdot \sqrt{n\log n}}{x_{max}} \right) }{ \frac{n}{\Decided} \cdot (192\cdot c_w^2 + 74\polyaConstLarge) \cdot \log n } = \frac{x_{max}^2 \cdot \left( 1- \frac{c_1 \cdot \sqrt{n\log n}}{x_{max}} \right) }{ n\log{n} \cdot (192\cdot c_w^2 + 74\polyaConstLarge)} \nonumber \\
    &\overset{(a)}{\geq} \frac{ \sDelta^2 \cdot n\log n \cdot \left( 1- \frac{c_1 \cdot \sqrt{n\log n}}{\sDelta\cdot\sqrt{n\log n}} \right) }{ n\log{n} \cdot (192\cdot c_w^2 + 74\polyaConstLarge) } = \frac{ \sDelta^2 \cdot \left(1-\frac{c_1}{\sDelta}\right) }{(192\cdot c_w^2 + 74\polyaConstLarge)} 
\end{align}
where we use (a) $x_{max}\geq \sDelta\cdot \sqrt{n\log n}.$ 
Another union bound application  over all opinion $j \in B(t)$ yields with probability at least $1-5n^{-1}$ that (\ref{eq:ratio_max_small_insignificant_small}) holds for all $j \in B(t)$.
Additionally, $ x_{max} > \sDelta \cdot \sqrt{n \log n}$ allows us to apply the second statement of \cref{lem:k-to-sqrtlog}, 
which implies with probability at least $1-n^{-\BigOmega{1}}$  that
\[
	X_{max}(t+1) \geq \min\{ \sqrt{n}\cdot \log^{3/2} n ~,~ \sDelta \cdot\sqrt{n \log n}  + \frac{1}{60} \cdot \sqrt{n \log n}  \} > \sDelta \cdot \sqrt{n\log n} \cdot\left( 1+ \frac{1}{60 \cdot \sDelta} \right).
\]
Combining this with the previous result we get that, with probability at least $1-(5n^{-1} + n^{-\BigOmega{1}})$, for every $j \in B(t)$ it holds that
\begin{align*}
	X_{max}(t+1) - X_j(t+1) 
	&= X_{max}(t+1) \cdot \left( 1- \frac{X_j(t+1)}{X_{max}(t+1)} \right) \\
	&>  \sDelta \cdot \sqrt{n \log n} \cdot \left( 1+ \frac{1}{60\cdot \sDelta} \right) \cdot \left(1 - \frac{192\cdot c_w^2 + 74\polyaConstLarge}{\left(1-\frac{c_1}{\sDelta}\right) \cdot \sDelta^2}\right) \\
	&=  \sDelta \cdot \sqrt{n \log n} \cdot \left( 1+ \frac{1}{60\cdot \sDelta} \right) \cdot \left(1 - \frac{192\cdot c_w^2 + 74\polyaConstLarge }{(\sDelta-c_1) \cdot \sDelta}\right)
	\\
	&\overset{(a)}{>}  \sDelta \cdot \sqrt{n \log n} \cdot \left( 1+ \frac{1}{60\cdot \sDelta} \right) \cdot \left(1 - \frac{2(192\cdot c_w^2 + 74\polyaConstLarge) }{\sDelta \cdot \sDelta}\right)
	\\
	&\overset{(b)}{>} \sDelta \cdot \sqrt{n \log n} \cdot \left( 1+ \frac{1}{60\cdot \sDelta} \right) \cdot \left(1 - \frac{1}{62 \cdot \sDelta}\right)
	> \sDelta \cdot\sqrt{n \log n}
\end{align*}
where in (a) we use $\sDelta-c_1> \sDelta/2$ and in (b) we utilize $\sDelta = (160\cdot c_w)^2+(148\cdot \polyaConstLarge)^2 > 124\cdot(192c_w^2+74\polyaConstLarge)$ (see \cref{def:Constant}).
 This implies that all opinions $j\in B(t)$ remains insignificant at the start of phase $t+1$.

\end{proof}

\lemKtoSqrtlog*
\begin{proof}[Proof]
    We start with the proof of the first statement and therefore assume $\sqrt{n} \leq x_{max} \leq \sqrt{n \log n}$.
    The main ingredients of this proof will be \cref{lem:case-a}. We set $\tau = 0.9 \cdot x_{max} $ and distinguish between two cases. 
    \paragraph{Case 1} At most $\sqrt{n} / \log^{4} n$ opinions have size at least $\tau$ at the start of \Decision part $t$. \\
    This case is covered by the first statement of \cref{lem:case-a}, which immediately yields that 
    \[
        X_{max}(t+1) > x_{max} + \frac{1}{10} \cdot  x_{max} - \frac{1}{12} \cdot x_{max} = x_{max} \cdot \left( 1 + \frac{1}{60}\right)
    \]
     with probability $1 - 7\exp(- ( \concConstEps / 625) \cdot x_{max}^2 / n)$.
     
   \paragraph{Case 2} At least $\sqrt{n} / \log^{4} n$ opinions have size at least $\tau$ at the start of \Decision part $t$. \\
    We employ the second case of \cref{lem:case-b}, which yields that
    \[
        X_{max}(t+1) > (1 - n^{-\BigOmega{1}}) \cdot \left( \frac{\tau^2}{x_{max}} + \frac{1}{4} \cdot \frac{\tau}{x_{max}}  \sqrt{n \log n} \right)
    \]
    with probability $1- 7\exp(-(\polyaConstSmall / 25) \cdot \tau^2 /  n) = 1 - 7 \exp (-(81\polyaConstSmall / 2500) \cdot x_{max}^2 / n)$. When substituting $\tau$ by $(9/10) \cdot x_{max}$ we can simplify this inequality to 
    \begin{equation*}
        X_{max}(t+1) > (1 - n^{-\BigOmega{1}}) \left( x_{max} \cdot \frac{81}{100} + \frac{9}{40} \cdot \sqrt{n \log n}\right) > (1 - n^{-\BigOmega{1}}) \cdot x_{max} \cdot \frac{207}{200}
    \end{equation*}
    where we used in the second step that $x_{max} \leq \sqrt{n \log n}$. For large enough $n$, we have $(1 -n^{-\BigOmega{1}}) \cdot (207 / 200) > 1 + 1/60$. Therefore, we have $X_{max}(t+1) > x_{max}(1 + 1/60)$ with probability at least  $1 - 7 \exp (-(81\polyaConstSmall / 2500) \cdot x_{max}^2 / n)$ in this case.

    Hence, in both cases, the lemmas statement follows with probability at least
    \[
        1 - \max\Big\{ 7\exp(- ( \concConstEps / 625) \cdot x_{max}^2 / n) ~,~ 7 \exp (-(81\polyaConstSmall / 2500) \cdot x_{max}^2 / n)\Big\}
        \geq 1 - 7\exp(- ( \concConstEps / 625) \cdot x_{max}^2 / n),
    \]
    where we used that $\concConstEps \geq \polyaConstSmall / 192$ (see \cref{def:Constant}) and the first statement of \cref{lem:k-to-sqrtlog} follows.
    
    \medskip
    
    The second statement of \cref{lem:k-to-sqrtlog} follows by a similar argument. We assume that  $\sqrt{n \log n} <  x_{max} < \sqrt{n}\log^{3/2} n$, this time set  $\tau= x_{max} - \frac{1}{10}\sqrt{n \log n}$, and again distinguish two cases. 
    
    \paragraph{Case 1} At most $\sqrt{n} / \log^{4} n$ opinions have size at least $\tau$ at the start of \Decision part $t$. \\
    In this case we again apply the first case of \cref{lem:case-a}, which because of $x_{max} > \sqrt{n \log n}$ yields that
    \[
        X_{max}(t+1) > x_{max} + \frac{1}{10} \cdot \sqrt{n \log n} - \frac{1}{12} \cdot \sqrt{n \log n} = x_{max} +  \frac{1}{60} \cdot \sqrt{n \log n}
    \]
     with probability $1-7\exp(-\concConstEps\cdot \log n/625)$.
    
     \paragraph{Case 2}
    At least $\sqrt{n} / \log^{4} n$ opinions have size at least $\tau$ at the start of \Decision part $t$. \\
    We employ the second case of \cref{lem:case-b}, which immediately yields that
    \begin{equation}
    \label{eq:k-to-sqrtlog-2}
        X_{max}(t+1) > (1 - n^{-\BigOmega{1}}) \cdot \left( \frac{\tau^2}{x_{max}} + \frac{1}{4} \cdot\frac{\tau}{x_{max}} \cdot \sqrt{n \log n} \right)
    \end{equation}
    with probability $1-7\exp\left(-\frac{\polyaConstSmall}{25} \cdot \min \{\frac{\tau^2}{n}, \log n\} \right).$ We first substitute $\tau$ by $x_{max} - \frac{1}{10}\sqrt{n \log n}$ to further bound the right-hand side
    \begin{align*}
        (1 - n^{-\BigOmega{1}}) &\cdot \left( \frac{\left(x_{max} - \frac{1}{10} \cdot\sqrt{n \log n}\right)^2}{x_{max}} + \frac{1}{4} \cdot \frac{x_{max} - \frac{1}{10} \cdot\sqrt{n \log n}}{x_{max}}  \sqrt{n \log n} \right) \\
       & > (1 - n^{-\BigOmega{1}}) \cdot \left( x_{max}  - \frac{2}{10} \cdot\sqrt{n \log n} + \frac{1}{4} \cdot\left( 1 - \frac{\sqrt{n \log n}}{10 \cdot x_{max}}\right) \cdot\sqrt{n \log n}  \right) \\
       & > (1 - n^{-\BigOmega{1}}) \cdot \left( x_{max}  - \frac{2}{10} \cdot\sqrt{n \log n} + \frac{9}{40} \cdot\sqrt{n \log n}\right)\\
       & = (1 -n^{-\BigOmega{1}}) \cdot \left( x_{max} + \frac{1}{40} \cdot \sqrt{n \log n} \right) > x_{max} + \frac{1}{60} \cdot \sqrt{n \log n}.
    \end{align*}
    In the second and last step we used that $\sqrt{n \log n} < x_{max} < \sqrt{n} \log^{3/2} n$. 
    Additionally, note that  $\tau\ge 9/10\sqrt{n\log n}$. Hence, the probability of (\ref{eq:k-to-sqrtlog-2}) can be bounded by $1-7\exp(- (81\polyaConstSmall / 2500) \cdot \log n)$ and the second statement follows.
    
    Just as in the proof of the first statement, we finish by noting that the probability in either case is at least 
    \[
        1 - \max\Big\{ 7\exp(-\concConstEps\cdot \log n/625) ~,~ 7\exp(- (81\polyaConstSmall / 2500) \cdot \log n)\Big\}
        \geq 1 - 7\exp(-\concConstEps\cdot \log n/625),
    \]
    where we used $\concConstEps \geq \polyaConstSmall / 192$.
    
\end{proof}

In what follows we list the technical statement we used in the proof of \cref{lem:k-to-sqrtlog}.

\begin{lemma}
\label{lem:case-a}
\label{lem:case-b}
    Fix $\X(t) = \x(t)$. Let $\sqrt{n} \leq x_{max}(t) < \sqrt{n} \log^{3/2} n$ and  $0 < \tau < x_{max}(t)$.
    \begin{enumerate}
        \item If at most $\sqrt{n} / \log^4 n$ opinions have larger support than $\tau$, then with probability at least $1-7\exp\left(-(\concConstEps / 625) \cdot \min\{x_{max}(t)^2 / n \,,\, \log n \}\right)$ it holds that 
    \[
          X_{max}(t+1) > x_{max}(t) + (x_{max}(t) - \tau) - \frac{1}{12}\min\{x_{max}(t), \sqrt{n \log n}\}.
    \]
    \item If at least $\sqrt{n} / \log^4 n$ opinions have larger support than $\tau$, then with probability at least $1-7\exp\left(-(\polyaConstSmall/25) \cdot \min \{\tau^2 / n \,,\, \log n\} \right)$ it holds
    \[
       X_{max}(t+1) > (1 - n^{-\BigOmega{1}}) \cdot \left( \frac{\tau^2}{x_{max}(t)} + \frac{1}{4}\frac{\tau}{x_{max}(t)}  \sqrt{n \log n} \right).
    \]
    \end{enumerate}
\end{lemma}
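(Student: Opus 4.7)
The plan is to handle the two cases by complementary techniques. In Case~1 (few opinions above $\tau$) I track the opinion that currently realizes $x_{max}(t)$ through the phase via \cref{lemma:one-phase-conc}, so that its growth already lower-bounds $X_{max}(t+1)$. In Case~2 (many opinions above $\tau$) I exploit the \emph{variance} of the \Decision part to show that at least one opinion in $\cL:=\{i:x_i(t)\ge\tau\}$ gains an extra additive $\Omega(\sqrt{\log n\cdot\Var{Y_i}})$ in $Y_i$, and then propagate this advantage through the \Boosting part using the Pólya--Eggenberger concentration \cref{thm:polya_bound_total_balls}. In both cases the target lower bound on $X_{max}(t+1)$ is obtained by carefully choosing a single opinion and lower-bounding its post-phase support.

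For Case~1, fix $i^\star$ with $x_{i^\star}(t)=x_{max}(t)$ and apply \cref{lemma:one-phase-conc} to $i^\star$ with the choice $\delta:=\min\{x_{max}(t)/\sqrt{n},\sqrt{\log n}\}/25$, which immediately reproduces the claimed failure probability $7\exp(-(\concConstEps/625)\min\{x_{max}(t)^2/n,\log n\})$. The algebraic core is to upper-bound $\Decided=\sum_j x_j(t)^2/n$: the at most $\sqrt{n}/\log^4 n$ opinions above $\tau$ contribute at most $(\sqrt{n}/\log^4 n)\cdot x_{max}(t)^2/n$, which is negligible since $x_{max}(t)<\sqrt{n}\log^{3/2}n$, while the remaining opinions contribute at most $\tau\cdot\sum_j x_j(t)/n\le\tau$. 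Hence $\Decided\le\tau(1+o(1))$. The identity $x_{max}(t)^2/\tau=x_{max}(t)+(x_{max}(t)-\tau)+(x_{max}(t)-\tau)^2/\tau$ then shows that the Lemma's main lower bound $x_{max}(t)^2/\Decided$ exceeds the target by a slack of at least $(x_{max}(t)-\tau)^2/\tau$; verifying case-by-case (according to whether $x_{max}(t)\lessgtr\sqrt{n\log n}$) that the deviation term $(x_{max}(t)/\Decided)\sqrt{n}\delta$ is absorbed by this slack plus the allowance $\tfrac{1}{12}\min\{x_{max}(t),\sqrt{n\log n}\}$ then completes the argument, since $X_{max}(t+1)\ge X_{i^\star}(t+1)$.

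For Case~2, with $|\cL|\ge\sqrt{n}/\log^4 n$, I proceed in three steps. First, by the mutual independence of the $Y_i\sim\Bin(x_i(t),x_i(t)/n)$ across $i$ (\cref{obs:bin}), the binomial anti-concentration bound \cref{lem:reverse-chernoff-simple} gives that each $i\in\cL$ individually satisfies $Y_i\ge\Ex{Y_i}+\tfrac12\sqrt{\log n\cdot\Var{Y_i}}$ with probability at least $n^{-c}$ for a sufficiently small absolute constant $c<1/2$, so combining independence with $|\cL|\ge\sqrt{n}/\log^4 n$ yields at least one such index $i^\star\in\cL$ except with probability $\exp(-n^{\Omega(1)})$. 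Second, a Chernoff bound on $\norm{\Y(t)}_1$, whose expectation is $\Decided\le x_{max}(t)$ and which admits a multiplicative concentration thanks to $\Decided\ge|\cL|\tau^2/n$, delivers $\norm{\Y(t)}_1\le x_{max}(t)(1+o(1))$ with super-polynomial probability. Third, conditional on these two events, \cref{obs:polya} gives that $X_{i^\star}(t+1)$ follows a Pólya--Eggenberger distribution with conditional mean at least $(1-o(1))\bigl(\tau^2/x_{max}(t)+\tfrac12\tau\sqrt{n\log n}/x_{max}(t)\bigr)$, and applying \cref{thm:polya_bound_total_balls} with deviation parameter $\Theta(\min\{\tau/\sqrt{n},\sqrt{\log n}\})$ then yields the claimed lower bound on $X_{max}(t+1)\ge X_{i^\star}(t+1)$ with the stated probability $1-7\exp(-(\polyaConstSmall/25)\min\{\tau^2/n,\log n\})$.

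The main obstacle is the constant tracking in Case~2: the anti-concentration provides only the additive gain $\tfrac12\sqrt{\log n\cdot\Var{Y_i}}$, and this must survive two multiplicative $(1-o(1))$ losses — one from the Chernoff bound on $\norm{\Y(t)}_1$ and one from the Pólya--Eggenberger concentration slack — so these losses are absorbed by degrading the coefficient of $\tau\sqrt{n\log n}/x_{max}(t)$ from $1/2$ down to $1/4$. Verifying the admissibility condition $\delta<\sqrt{Y_{i^\star}}$ of the Pólya--Eggenberger bound in the worst case $\tau=\Theta(\sqrt{n})$, and confirming that the Pólya--Eggenberger term is indeed the binding failure probability (the anti-concentration and $\norm{\Y(t)}_1$ Chernoff errors being super-polynomially small by comparison), round out the verification.
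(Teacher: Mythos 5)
Your proposal is correct and follows essentially the same route as the paper's proof: in Case~1 it tracks the current maximum opinion via \cref{lemma:one-phase-conc} with $\delta=\tfrac{1}{25}\min\{x_{max}(t)/\sqrt{n},\sqrt{\log n}\}$ after bounding $\Decided\le\tau+\LittleO{\sqrt{n}}$, and in Case~2 it combines binomial anti-concentration over the $\ge\sqrt{n}/\log^4 n$ large opinions with a Chernoff bound on $\norm{\Y(t)}_1$ and the Pólya--Eggenberger tail bound, degrading the coefficient from $1/2$ to $1/4$ exactly as the paper does. The only (cosmetic) differences are the exact algebraic identity you use in Case~1 in place of the paper's inequality $x_i/\tau'\ge 1+(x_i-\tau')/x_i$, and the phrasing of the justification for the concentration of $\norm{\Y(t)}_1$.
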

\begin{proof}[Proof]
We start with the first statement. Let $i$ be an opinion which provides the $x_{max}$. The intuition is that many opinions lie below $\tau$ and therefore  $x_i$ will grow in expectation. Observe that in this setting $\Decided = \Ex{\norm{\config{Y}(t)}_1}$ is maximized if $\sqrt{n} / \log^{4} n$ opinions have support $x_{max}$, while as many remaining opinions as possible have support exactly $\tau$.
    It follows that
    \begin{equation}
    \label{eq:case-a}
        \Decided=\Ex{\norm{\config{Y}(t)}_1} = \sum_{j=1}^{k} \frac{x_j^2}{n} \leq \frac{\sqrt{n}}{\log^{4}{n}}\cdot \frac{x_{max}^2}{n} + \tau = \tau + \LittleO{\sqrt{n}} =: \tau'.
    \end{equation}
     We now employ the result of \cref{lemma:one-phase-conc} together with $\delta$ such that $25 \cdot \delta = \min\{x_i / \sqrt{n} , \sqrt{\log n}\}$, which directly yields
     \begin{equation}
     \label{eq:case-a-2}
        \Pr\left[X_i(t+1) > \frac{x_i^2}{\Decided} - \frac{x_i}{\Decided} \sqrt{n} \delta \right] > 1 - 7 \exp(-\concConstEps  \delta^2).
     \end{equation}
     We now lower bound the expression in (\ref{eq:case-a-2}) by substituting $\delta$ and applying the bound on $\Decided$ (\ref{eq:case-a}) in the second step.
     \[
        \frac{x_i^2}{\Decided} - \frac{x_i}{\Decided} \sqrt{n} \delta =  \frac{x_i}{\Decided}\cdot \left( x_i - \sqrt{n} \delta \right) \geq \frac{x_i}{\tau'}\cdot \left(x_i - \frac{1}{25} \min\{x_i, \sqrt{n \log n}\} \right).
     \]
    Using that $x_i / \tau' \geq (1 + \frac{x_i -\tau'}{x_i})$ (follows from $(x_i - \tau')^2 \geq 0$) we continue this inequality chain as follows
    \[
        \frac{x_i}{\tau'} \left(x_i - \frac{1}{25} \min\{x_i, \sqrt{n \log n}\} \right) \geq x_i + (x_i - \tau') - \frac{1}{25} (1 + \frac{x_i - \tau'}{x_i})\cdot \min \{x_i , \sqrt{n \log n}\}.
    \]
    When using that $(x_i - \tau') / x_i < 1$ as well as  $\tau'-\tau = \LittleO{\sqrt{n}} = \LittleO{\min\{x_i, \sqrt{n \log n}\}}$, it follows that
    \[
        x_i + (x_i - \tau') - \frac{1}{25} \left(1 + \frac{x_i - \tau'}{x_i}\right) \cdot \min \set{x_i , \sqrt{n \log n}} >
        x_i + (x_i - \tau) - \frac{1}{12}  \min \set{x_i , \sqrt{n \log n}}.
    \]
    This concludes the proof of the first statement.

    \medskip
    We continue with the proof of the second statement. We assume that at least $\sqrt{n} / \log^4 n$ lie above $\tau$. Let $\cL$ denote the set of these opinions. We will show that at least one of these opinions will grow by roughly $\frac{\tau}{x_{max}} \sqrt{n \log n}$ more than expected.
         For $i \in \cL$, we model $Y_i(t) \sim \Bin(x_i, x_i / n)$ as usual. As $x_i \geq \tau$, it is easy to see that  $\Bin(x_i, x_i / n)$ stochastically majorizes $\Bin(\tau, \tau / n)$. When further applying the anti-concentration result \cref{lem:reverse-chernoff-simple} to $\Bin(x_i,x_i / n)$ and setting $\delta = \sqrt{\log n \cdot n} / (2\tau)$ we get the following inequality chain
         \begin{align}
         \label{eq:case-b-anticonc}
              \Pr \left[Y_i(t) >  \frac{\tau^2}{n} + \frac{1}{2} \frac{\tau}{\sqrt{n}} \sqrt{\log n} \right]  \overset{major.}{\geq}
            \Pr \left[ \Bin(\tau, \tau / n) > \frac{\tau^2}{n} + \frac{1}{2} \frac{\tau}{\sqrt{n}} \sqrt{\log n}\right]\nonumber \\
            \overset{ \cref{lem:reverse-chernoff-simple}}{>} \frac{1}{6 \sqrt{(1+\delta)\cdot\frac{\tau^2}{n}}} \cdot \exp(-\delta^2 \cdot \frac{\tau^2}{n}) = \frac{1}{\polylog n} \cdot  \exp(-\log n / 4) > n^{-1/3}.
         \end{align}
         We now define an indicator random variable $Z_i$ for every $i \in \cL$. We set $Z_i = 1$ iff $Y_i(t) > \frac{\tau^2}{n} + \frac{1}{2} \frac{\tau}{\sqrt{n}} \sqrt{\log n}$, and $Z_i = 0$ otherwise. Note that for two opinions $i,i' \in \cL$ with $i \neq i'$ the variables $Z_i$ and $Z_{i'}$ are independent. This way the random variable $Z := \sum_{i \in |\cL|} Z_i$ is the sum of $|\cL| > \sqrt{n} / \log^4 n$ independent Poisson trials. In (\ref{eq:case-b-anticonc}) we established that $\Pr[Z_i = 1] > n^{-1/3}$. Therefore, $\Ex{Z} > |\cL| \cdot n^{-1/3} > n^{1/6}$ and a Chernoff bound application yields that, with probability $(1-n^{-\omega(1)})$ we have $Z \geq 1$. This implies that
         \begin{equation}
         \label{eq:case-b-single-exists}
            \Pr \Big[\exists j \in  \cL: Z_j = 1 \Big] = \Pr \left[ \exists j \in \cL: Y_j(t) > \frac{\tau^2}{n} + \frac{1}{2} \frac{\tau}{\sqrt{n}} \sqrt{\log n} \right] > 1 - n^{-\omega(1)}.
         \end{equation}
         Additionally, we model $\norm{\config{Y}(t)}_1$ as the sum of Poisson trials where $\Ex{\norm{\config{Y}(t)}_1} \leq x_{max}$. As $x_{max} \geq \sqrt{n}$, another Chernoff bound application with $\delta = n^{-1/5}$ yields that, with probability $(1 - n^{-\omega(1)})$, we have  $\norm{\config{Y}(t)}_1 < x_{max} (1 + n^{-\BigOmega{1}})$. Consider the following event defined for every opinion $i \in \cL$
         \[
            \mathcal{E}_i \Leftrightarrow \Big\{ Y_i(t) > \frac{\tau^2}{n} + \frac{\tau}{2\sqrt{n}} \sqrt{\log n} \land \norm{\config{Y}(t)}_1  < x_{max} \cdot(1+ n^{-\BigOmega{1}}) \Big\}.
        \]
        From (\ref{eq:case-b-single-exists}) and our bound on $\norm{\config{Y}(t)}_1$ above it follows that, with probability $(1-n^{-\omega(1)})$, there must be a $j \in \cL$ such that $\mathcal{E}_j$ is fulfilled. We fix this opinion $j$ after the \Decision part and fix $Y_j(t)=y_j$ define $d=\norm{\config{Y}(t)}_1 $. Remember, we model the outcome of the \Boosting part $t$ as $X_{j}(t+1) \sim  \PE(y_j,d-y_j,n-d)$. By \cref{thm:polya_bound_total_balls} we get for any $0 < \delta < \frac{\tau}{\sqrt{n}}$ that
    \begin{equation}
    \label{eq:case-b}
        \Pr\Big[X_j(t+1) < \frac{n}{d}\cdot \left( y_j - \sqrt{y_j} \cdot \delta \right) ~\Big|~ \mathcal{E}_j \Big] < 4\exp(-\polyaConstSmall \cdot\delta^2).
    \end{equation}
        In the following part of the proof, our goal is to simplify the bound on $X_j(t+1)$ in (\ref{eq:case-b}). We define $y' := \frac{\tau^2}{n} + \frac{\tau}{2\sqrt{n}} \sqrt{\log n}$. Conditioned on $\mathcal{E}_j$ we have $y_j > y'$ and $d < x_{max}\cdot(1 + n^{-\BigOmega{1}})$. When using these bounds we get that
            \begin{equation}
    \label{eq:lem22-1}
         \frac{n}{d}\cdot \left( y_j - \sqrt{y_j} \cdot \delta \right) > \frac{n}{x_{max}}\cdot (1 - n^{-\BigOmega{1}})\cdot \left( y' - \sqrt{y'} \cdot \delta \right).
    \end{equation}
         Furthermore, when fixing $\delta$ s.t. $5 \cdot \delta = \min\{ \tau/\sqrt{n}~,~ \sqrt{\log n}\}$ in the penultimate step of the following inequality chain, we get that
    \begin{align}
    \label{eq:caseb-minmax}
        \sqrt{y'} \cdot \delta &= \sqrt{\frac{\tau^2}{n} + \frac{1}{2} \frac{\tau}{\sqrt{n}} \sqrt{\log n}} \cdot \delta =
        \sqrt{\frac{\tau}{\sqrt{n}} \cdot \left(\frac{\tau}{\sqrt{n}} + \frac{1}{2}\sqrt{\log n} \right)} \cdot \delta \nonumber\\
        &\leq \sqrt{\max\{ \frac{\tau}{\sqrt{n}}, \sqrt{\log n}\} \cdot \left(\frac{\tau}{\sqrt{n}} + \frac{1}{2}\sqrt{\log n} \right)} \cdot \delta \nonumber \\
         &\leq \sqrt{\max\{ \frac{\tau}{\sqrt{n}}, \sqrt{\log n}\} \cdot \left( 1 + \frac{1}{2} \right) \cdot \max\{ \frac{\tau}{\sqrt{n}}, \sqrt{\log n}\}} \cdot \delta \nonumber \\
        & = \sqrt{1 + \frac{1}{2}} \cdot  \max \{\frac{\tau}{\sqrt{n}}, \sqrt{\log n} \} \cdot \frac{1}{5} \min \{\frac{\tau}{\sqrt{n}}, \sqrt{\log n}\} < \frac{1}{4} \frac{\tau}{\sqrt{n}} \sqrt{\log n}
    \end{align}
    We use this to continue the inequality chain in (\ref{eq:lem22-1}) and substitute $y'$ in the second step to derive the following intermediate result
    \begin{align*}
        \frac{n}{d} (y_i - \sqrt{y_i} \delta) &\overset{(\ref{eq:lem22-1})}{>}   \frac{n}{x_{max}}\cdot (1 - n^{-\BigOmega{1}})\cdot \left( y' - \sqrt{y'} \cdot \delta \right) \\
       & > \frac{n}{x_{max}} \left( 1-n^{-\BigOmega{1}}\right) \left(\frac{\tau^2}{n} + \frac{\tau}{2\sqrt{n}} \sqrt{\log n} - \sqrt{y'} \delta\right)\\ &\overset{(\ref{eq:caseb-minmax})}{>} \frac{n}{x_{max}}\cdot (1 - n^{-\BigOmega{1}}) \cdot \left( \frac{\tau^2}{n} + \frac{\tau\cdot \sqrt{\log n}}{4 \sqrt{n}}\right) \\
       & = (1 - n^{-\BigOmega{1}}) \cdot \left( \frac{\tau^2}{x_{max}} + \frac{1}{4}\frac{\tau}{x_{max}}  \sqrt{n \log n} \right)
    \end{align*}
    Note that, for an arbitrary random variable $X$ and $x > x'$, it holds that $\Pr[X < x] \geq \Pr[X < x']$. Therefore, the above implies for $\delta = (1/5) \cdot \min\{ \tau/\sqrt{n}~,~ \sqrt{\log n}\}$  that
    \[
        \Pr\Big[X_j(t+1) < \frac{n}{d}\left( y_j - \sqrt{y_j} \cdot \delta \right) ~\Big|~ \mathcal{E}_j \Big] \geq
         \Pr \Big[ X_j(t+1) < (1 - n^{-\BigOmega{1}})  \Big( \frac{\tau^2}{x_{max}} + \frac{1}{4}\frac{\tau}{x_{max}} \sqrt{n  \log n} \Big)
        ~\Big|~
         \mathcal{E}_j \Big].
    \]
    Remember, in (\ref{eq:case-b}) we upper-bounded the probability on the left-hand side. Therefore, for $\delta = (1/5) \cdot \min\{ \tau/\sqrt{n}~,~ \sqrt{\log n}\}$ this bound also carries over to the probability on the right-hand side and we get
    \[
       \Pr \Big[ X_j(t+1) < (1 - n^{-\BigOmega{1}}) \cdot \Big( \frac{\tau^2}{x_{max}} + \frac{1}{4}\frac{\tau}{x_{max}} \sqrt{n  \log n} \Big) ~\Big|~ \mathcal{E}_j \Big] <  4\exp(-\frac{\polyaConstSmall}{25} \cdot(\min\{ \tau^2/n~,~ \log n\}).
    \]
    The result follows as we already established that some opinion $j \in \cL$ fulfills event $\mathcal{E}_j$ \whp.
\end{proof}

The next lemma is a simple statement, which shows that in cases which are not covered by \cref{lem:k-to-sqrtlog} the maximum may only shrink by an $(1 - o(1))$ factor.

\begin{lemma}
\label{lem:max-not-shrinking}
    Fix $\X(t) = \x(t)$ and let $x_{max}(t) \geq \sqrt{n} \log n$.
    Then,
    \[
       Pr\left[ X_{max}(t+1) > x_{max}(t) \cdot (1  - \frac{2}{\sqrt{\concConstEps\cdot \log n}})\right] \ge 1-n^{-2}.
    \]
\end{lemma}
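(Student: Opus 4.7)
The plan is to apply \cref{lemma:one-phase-conc} directly to an opinion $i$ that attains the current maximum support. First I would fix such an $i$ with $x_i(t) = x_{max}(t)$, and invoke the lower-tail bound of \cref{lemma:one-phase-conc} with deviation parameter $\delta := \sqrt{(\ln 7 + 2\log n)/\concConstEps}$. This choice makes the failure probability exactly $7 e^{-\concConstEps\delta^2} = n^{-2}$. The constraint $0<\delta<x_i(t)/\sqrt n$ required by the lemma is satisfied for $n$ large enough since, by assumption, $x_i(t)\ge \sqrt n\log n$ gives $x_i(t)/\sqrt n \ge \log n$, while $\delta=\BigO{\sqrt{\log n}}$.

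The lemma then yields, with probability at least $1-n^{-2}$, the bound
\[
X_i(t+1) \;\ge\; \frac{x_i(t)^2}{\Decided} \;-\; \frac{x_i(t)}{\Decided}\sqrt n\,\delta \;=\; \frac{x_i(t)}{\Decided}\bigl(x_i(t)-\sqrt n\,\delta\bigr).
\]
Next I would use the simple estimate $\Decided = \sum_j x_j(t)^2/n \le x_{max}(t)\cdot\sum_j x_j(t)/n = x_{max}(t)$ (already invoked several times earlier in \cref{sec:concentration-population-model}), so that $x_i(t)/\Decided \ge 1$. Substituting and using $x_i(t)=x_{max}(t)$ gives
\[
X_i(t+1) \;\ge\; x_{max}(t)\left(1 - \frac{\sqrt n\,\delta}{x_{max}(t)}\right).
\]

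Finally, I would bound the error term using the assumption $x_{max}(t)\ge \sqrt n\log n$:
\[
\frac{\sqrt n\,\delta}{x_{max}(t)} \;\le\; \frac{\sqrt n\,\sqrt{(\ln 7 + 2\log n)/\concConstEps}}{\sqrt n\,\log n} \;\le\; \frac{\sqrt{3}}{\sqrt{\concConstEps\log n}} \;<\; \frac{2}{\sqrt{\concConstEps\log n}},
\]
where the penultimate inequality holds once $\log n$ is large enough so that $\ln 7 + 2\log n \le 3\log n$. Since $X_{max}(t+1)\ge X_i(t+1)$ by definition of the maximum, the claimed bound follows with probability at least $1-n^{-2}$.

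I do not expect any real obstacle here: the statement is essentially a direct rewriting of the lower-tail half of \cref{lemma:one-phase-conc} for the largest opinion, combined with the uniform upper bound $\Decided \le x_{max}(t)$. The only slightly delicate point is to confirm that the chosen $\delta$ meets the validity range $\delta < x_i(t)/\sqrt n$ required by \cref{lemma:one-phase-conc}, which is comfortably satisfied in the regime $x_{max}(t)\ge \sqrt n\log n$.
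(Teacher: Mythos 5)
Your proposal is correct and matches the paper's own proof almost verbatim: both apply the lower-tail bound of \cref{lemma:one-phase-conc} to the largest opinion with $\delta=\sqrt{(\ln 7+2\log n)/\concConstEps}$, use $\Decided\le x_{max}(t)$, and absorb the error term via $x_{max}(t)\ge\sqrt{n}\log n$. No issues.
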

\begin{proof}[Proof]
    We lower bound the support of the largest opinion.
    To do so, we apply \cref{lemma:one-phase-conc} to the largest opinion with $\delta = \sqrt{(2\log{n} +\ln{7})/\concConstEps}$.
    In this way, we get
    \[
        \Pr\Big[ X_{max}(t+1) \geq \frac{x_{max}^2}{\Decided} - \frac{x_{max}}{\Decided} \cdot \sqrt{n} \cdot \delta \Big] \geq 1 - n^{-2} .
    \]
    We can further relax this by using the relation between $x_{max}$ and $\Decided$ (see \cref{sec:analysis-population-model}) and the assumption $x_{max} \geq \sqrt{n}\cdot \log n$.
    It follows that
    \[
       \frac{x_{max}^2}{\Decided} - \frac{x_{max}}{\Decided} \cdot \sqrt{n} \cdot \delta
       \geq \frac{x_{max}^2}{x_{max}} \cdot \left( 1 -  \frac{\delta}{\log{n}}\right)
       \geq x_{max} \cdot \left( 1 - \frac{\sqrt{2\log{n} + \ln{7}}}{\sqrt{\concConstEps} \cdot \log{n}} \right)
       \geq x_{max} \cdot \left( 1 - \frac{2}{\sqrt{\concConstEps\cdot \log{n}}} \right) .
    \]

\end{proof}

\subsection{Consensus for $k \leq \sqrt{n} / \log n$}
\label{sec:details-analysis-case1}

\proCaseOne*
\begin{proof} 
    \label{proof:pro-case-one-part-2}
    First we show that all agents agree on one opinion. 
    We fix $\X(t) = \x(t)$ and consider two arbitrary opinions $i$ and $j$.
    If both opinions are strong \cref{thm:one-weak} shows that one of them becomes weak or super-weak within $\BigO{\log{n}}$ phases with probability at least $1-\BigO{n^{-1.9}}$.
    As soon as either $i$ or $j$ are weak \cref{lem:weak-to-super-weak} shows that the weak opinion becomes super-weak within the next $\BigO{\log \log n}$ phases and remains super-weak for the rest of the process  (\cref{lem:weak-to-super-weak}). This happens again with a probability of  $1-\BigO{n^{-1.9}}$.
    Hence, after $\BigO{\log n}$  phases either $i$ or $j$ are super-weak. 
    Since we have at most $k^2 \leq (\sqrt{n}/\log n )^2=o(n)$ pairs of distinct opinions we can apply the union bound over all such pairs to show that all but a single opinion are super-weak within $t' = \BigO{\log n}$ phases \whp.
    In \cref{lem:all-super-weak} we show that the single remaining non-super-weak opinion wins within two additional phases.

    It remains to show that the winning opinion is one of the initially significant opinions. 
    Recall that $\Sigset{X(t)}$ denotes the set of all significant opinions at the start of phase $t$.
    We show for $t' = \BigO{\log n}$ that $\Sigset{X(t+t')} \subseteq \Sigset{X(t)}$.
    This means that no opinion which is insignificant in phase $t$ can become significant in phase $t+t'$. 
    As the subset relation is transitive, we have that
    \begin{align*}
        \Pr \left[ \Sigset{X(t+t')} \subseteq \Sigset{X(t)} \right] &\geq \Pr \left[ \forall t<t_1\leq t+t': \Sigset{X(t_1)} \subseteq \Sigset{X(t_1-1)}\right]\\ \nonumber
     &= 1 - \Pr \left[ \exists t<t_1\leq t+t': \Sigset{X(t_1)} \not \subseteq \Sigset{X(t_1-1)} \right].
    \end{align*}
    Furthermore, from \cref{lem:initial-bias} we have that for any $t_1\ge 0$ it holds $\Pr[ \Sigset{X(t_1)}\subseteq \Sigset{X(t_1-1)}]\ge 1-n^{-\BigOmega{1}}$. Together with union bound application, this implies that
    \begin{align*}
        1 &- \Pr \left[ \exists t<t_1\leq t+t': \Sigset{X(t_1)} \not \subseteq \Sigset{X(t_1-1)} \right] \\
       & \geq 1 - \sum_{t_1=t+1}^{t+t'} \Pr \left[ \Sigset{X(t_1)}  \not \subseteq \Sigset{X(t_1-1)} \right] \\
        &\geq 1-t'\cdot n^{-\BigOmega{1}}\ge 1-n^{-\BigOmega{1}}.
    \end{align*}
    Therefore, we have \whp that $\Sigset{X(t+t')} \subseteq \Sigset{X(t)}$.
    In the first part of the proof we also established that, \whp, only a single opinion $i$ remains in  phase $t +t'$.
    Clearly this single remaining opinion $i$ is significant in $\X(t + t')$, or in other words $i \in \Sigset{X(t + t')}$. As $\Sigset{X(t+t')} \subseteq \Sigset{X(t)}$ \whp, this implies that $i \in \Sigset{X(t)}$ and the result follows.
\end{proof}



\begin{lemma}
\label{thm:one-weak}
Fix $\X(t) = \x(t)$ and any two distinct strong opinions $i$ and $j$.
Then at least one of them will become weak or super-weak within \BigO{\log{n}} phases with probability at least $1-\BigO{n^{-1.9}}$.
\end{lemma}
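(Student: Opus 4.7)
The plan is to show that the ratio $X_i(t)/X_j(t)$ drifts away from~$1$ and eventually exceeds the constant $10/9$, at which point the smaller of the two opinions drops below $0.9\,X_{max}$ and hence ceases to be strong. The argument proceeds in two stages: first establishing an additive gap $|X_i(t) - X_j(t)| \geq \sDelta\sqrt{n\log n}$, and then amplifying the gap via \cref{lemma:ratio}.

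\textbf{Stage 1 (creating the gap).} If at any point one of the two opinions drops below $c_w\sqrt{n\log n}$, it is already super-weak and we are done; so I assume throughout that $x_i(s), x_j(s) \geq c_w\sqrt{n\log n}$ for every phase $s$ considered. Set $D(s) := X_i(s) - X_j(s)$. Using \cref{obs:decided,obs:polya}, together with the concentration statement of \cref{lemma:one-phase-conc}, the conditional standard deviation of $X_i(s+1)$ and of $X_j(s+1)$ is of order $\Theta(\sqrt{n})$, so $D$ accumulates variance $\Theta(n)$ per phase. Moreover, because both $i$ and $j$ are strong, $\Decided \le x_{max} \le (x_i+x_j)/1.8$, whence
\[
\Ex{\,D(s+1)\mid \x(s)\,} \;=\; \frac{x_i(s)^2 - x_j(s)^2}{\Decided(s)} \;\geq\; 1.8\cdot D(s),
\]
so the drift is multiplicative and strictly away from $0$ (symmetrically if $D(s)<0$). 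Plugging this mix of multiplicative drift and $\Theta(\sqrt{n})$ one-step fluctuation into the drift theorem of \cite{DBLP:conf/spaa/DoerrGMSS11} in the style of \cite{DBLP:conf/podc/GhaffariL18} yields that within $O(\log n)$ phases, either $|D(s)| \geq \sDelta\sqrt{n\log n}$ or one of the two opinions has already fallen below $c_w\sqrt{n\log n}$; in either case Stage~2 can commence (or we are done). A careful union bound across the phases keeps the failure probability at $O(n^{-1.9})$.

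\textbf{Stage 2 (amplifying the gap).} Once $|D(s^\star)| \geq \sDelta\sqrt{n\log n}$ with $\min\{X_i(s^\star),X_j(s^\star)\}\ge c_w\sqrt{n\log n}$, WLOG $X_i(s^\star) > X_j(s^\star)$, the preconditions of \cref{lemma:ratio} are satisfied. Iterating that lemma gives $X_i(s^\star+r)/X_j(s^\star+r) \geq (X_i(s^\star)/X_j(s^\star))^{(3/2)^r}$ as long as the smaller opinion remains above $c_w\sqrt{n\log n}$ and the additive-gap condition continues to hold (which is preserved, since the ratio only grows). Since the base of the exponent is strictly above $1$, after $r = O(\log\log n)$ phases the ratio exceeds $10/9$, which forces $X_j < 0.9 X_i \le 0.9 X_{max}$ and hence $j$ to be weak or super-weak. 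Again a union bound gives total failure probability $O(n^{-1.9})$ for this stage. Summing the two stages yields the claimed $O(\log n)$ phases.

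\textbf{Main obstacle.} The delicate part is Stage~1: the process $D(s)$ is not a clean random walk, and a direct application of \cite{DBLP:conf/spaa/DoerrGMSS11} requires extracting both the variance lower bound (while the opinions remain in the strong regime) and a suitable one-step tail bound from the binomial--P\'olya composition, and then ruling out the pathological scenarios in which $D(s)$ stays close to~$0$ for too long or one of $x_i,x_j$ dips below $c_w\sqrt{n\log n}$ prematurely (which, when it happens, is actually favorable for us and must be folded into the case analysis rather than excluded). Once Stage~1 is in place, Stage~2 is a short and direct iteration of \cref{lemma:ratio}.
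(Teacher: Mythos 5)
Your two-stage structure is exactly the paper's: first build an additive gap of $\sDelta\sqrt{n\log n}$ via a drift argument, then amplify the ratio by iterating \cref{lemma:ratio} until $j$ drops below $0.9\,x_{max}$. Stage~2 matches the paper's argument closely. However, the proposal has a genuine gap precisely at the point you flag as "the main obstacle": the anti-concentration step needed to escape the regime $|D(s)|=o(\sqrt{n})$. Saying that $D$ "accumulates variance $\Theta(n)$ per phase" is the right intuition, but the drift theorem (\cref{lemma_drift_markov_chain}) requires as its first hypothesis a concrete constant-probability escape statement, $\Pr[|X_i(t+1)-X_j(t+1)|\ge c_a\sqrt{n}]=\BigOmega{1}$ even when $x_i=x_j$, where the expected drift is zero and no concentration bound helps. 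The paper supplies this in \cref{lemma:drift_first_part} (first part) by an explicit anti-concentration argument: reverse Chernoff bounds (\cref{lemma:reverse_chernoff_v2}) force $Y_i(t)$ above and $Y_j(t)$ below their means by $\Theta(\sqrt{n})\cdot x_i/n$ in the decision part, each with constant probability, and then Pólya--Eggenberger concentration (\cref{thm:polya_bound_total_balls}) shows the boosting part preserves this gap. Without that ingredient, Stage~1 does not go through, so the proposal as written is a plan rather than a proof.

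Two smaller inaccuracies in Stage~2. First, the claim that $r=\BigO{\log\log n}$ iterations suffice is wrong: starting from an additive gap of $\sDelta\sqrt{n\log n}$ the ratio can be as small as $1+\Theta(\sqrt{\log n/n})$ (when $x_j=\Theta(n)$), so driving $(x_i/x_j)^{(3/2)^r}$ up to $10/9$ requires $r=\Theta(\log n)$; this does not harm the overall $\BigO{\log n}$ bound but the accounting is off. Second, "the additive-gap condition is preserved since the ratio only grows" is not a valid inference — both supports could shrink while the ratio grows. The paper re-establishes the additive gap each phase by applying \cref{lemma:one-phase-conc} to both opinions and using that $(x_i+x_j)/\Decided>9/5$ because both opinions are strong; you need this (or an equivalent) to keep \cref{lemma:ratio} applicable across iterations.
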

\begin{proof}[Proof]
    First we show that at least one of opinions $i$ and $j$ will become weak or super-weak.
    We consider the difference between both opinion $i$ and $j$ via a case study.
    If the difference is $\LittleO{\sqrt{n\log{n}}}$,
    we apply the drift result from \cref{lemma_drift_markov_chain} to increase the difference up to $\BigOmega{\sqrt{n\log{n}}}$. To be more precise, we map the difference $\lvert X_i(t)-X_j(t) \rvert$ to the state space of $W(t) =  \lfloor \lvert X_i(t)-X_j(t) \rvert/(c_a \cdot \sqrt{n}) \rfloor \in \{0,\dots,(\sDelta/c_a) \cdot \sqrt{\log{n}}\}$ where $c_a$ originates from \cref{lemma:drift_first_part}.
    Observe that, for some $t'$, $W(t')= (\sDelta/c_a) \cdot \sqrt{\log{n}}$ implies that $\lvert X_i(t')-X_j(t') \rvert \geq \sDelta \cdot \sqrt{n\log{n}}$.
    Now we deal with the two requirements within the drift result with the help of \cref{lemma:drift_first_part}.
    The first requirement is fulfilled by the first result in \cref{lemma:drift_first_part}.
    That is,
    \begin{align*}
        \Pr\left[W(t+1) \geq 1\right] 
        & \geq \Pr\left[\lfloor \lvert X_i(t+1)-X_j(t+1) \rvert \rfloor \geq c_a \cdot \sqrt{n}\right] = \BigOmega{1} .
    \end{align*}
    The second requirement is fulfilled by the second result in \cref{lemma:drift_first_part}.
    Assuming $c_a \cdot \sqrt{n} \leq \lvert X_i(t)-X_j(t) \rvert \leq \sDelta \cdot \sqrt{n\log{n}}$, it holds for a suitable constant $c_2 > 0$ that
    \begin{align*}
        \MoveEqLeft \Pr\left[W(t+1) \geq \min\{(5/4) \cdot W(t), (\sDelta/c_a) \cdot \sqrt{\log{n}}\}\right] \\
        &\geq \Pr\left[ \lvert X_i(t+1)-X_j(t+1) \rvert \geq \min\{ (1+\varepsilon)\cdot \lvert x_i - x_j \rvert , \sDelta \cdot \sqrt{n\log{n}}\} \right] \\
        &\geq 1 - 14\cdot \exp\left(-\concConstEps \cdot (\lvert x_i - x_j \rvert)^2 / 16n\right) \\
        &\geq 1 - 14\cdot \exp\left(-\concConstEps/16 \cdot c_a \cdot  (\lvert x_i - x_j \rvert) / \sqrt{n} \cdot (\lvert x_i - x_j \rvert)/c_a \sqrt{n} \right) \\
        &\geq 1 - \exp\left(-c_2 \cdot W(t)\right)
    \end{align*}
    
    Thus, due to the drift result, the difference between opinion $i$ and $j$ is at least $\sDelta \cdot \sqrt{n\log{n}}$ in $\BigO{\log{n}}$ phases. 

    \medskip
    Now, we assume that $\lvert X_i(t) - X_j(t) \rvert \geq \sDelta \cdot \sqrt{n \log n}$.
    Since both opinions are strong  and their difference is sufficiently large, we apply \cref{lemma:ratio}, which yields \2whp that
    \begin{equation}
        \label{eq:iterate-bias-increase}
        \frac{X_i(t+1)}{X_j(t+1)} \geq \left( \frac{x_i}{x_j}\right)^{1.5}.
    \end{equation}
    Our goal is to repeatedly apply above result over multiple phases. We start by establishing that \cref{lemma:ratio} also may be applied in the following  phase. To this end, we need to check the two conditions which fulfill the requirements of \cref{lemma:ratio}: (i) $X_i(t+1) - X_j(t+1) \geq \sDelta \cdot \sqrt{n \log n}$ holds, and (ii) opinions $i$ and $j$ remain strong opinions in $\X(t+1)$. Note that the lemmas statement immediately follows in case (ii) is violated. In the following, we will establish that (i) indeed holds. We apply \cref{lemma:one-phase-conc}  to both $i$ and $j$ and set $\delta = \sqrt{(\ln 7+2\log n)/\concConstEps}$. This way, we get, \2whp, that 
    \begin{align*}
        X_i(t+1)-X_j(t+1)
            &\geq \frac{1}{\Decided} \cdot \left( x_i^2 - x_j^2 - (x_i + x_j) \cdot \sqrt{(\ln 7+2\log n)/\concConstEps} \cdot \sqrt{n} \right) \\
            &\geq \frac{x_i+x_j}{\Decided} \cdot \left( (x_i - x_j) -  \sqrt{(\ln 7+2\log n)/\concConstEps} \cdot \sqrt{n} \right)
    \end{align*}
    Since both opinions are strong and $\Decided \leq x_{max}$ (see \cref{obs:decided}), it follows that $(x_i + x_j) / \Decided > 9/5$. This, together with above inequality chain implies that, indeed, $X_i(t+1)-X_j(t+1) > \sDelta \cdot \sqrt{n \log n}$. Above argument can easily be translated into an induction, which yields that $X_i(t + t') - X_j(t + t') \geq  \sDelta \cdot \sqrt{n \log n}$ until a round $t +t'$ is reached where opinion $j$ becomes weak (i.e., condition (ii) above is violated).
    Note that, \whp, $t' = \BigO{\log n}$ must hold as otherwise it follows by a repeated application of (\ref{eq:iterate-bias-increase}) that 
    \[
        \frac{X_i(t+t')}{X_j(t+t')} \geq \left(1 + \frac{\sDelta \cdot \sqrt{n \log n}}{x_j}\right)^{1.5^{t'}} \gg  \left(1 + \frac{1}{\sqrt{n}}\right)^{1.5^{t'}} > n.
    \]
    Hence opinion $j$ will become either weak or super-weak within $\BigO{\log n}$ phases with probability at least $1-\BigO{n^{-1.9}}$.
\end{proof}

\begin{lemma}
\label{lemma:drift_first_part}
Fix $\X(t) = \x(t)$ and two distinct strong opinions $i$ and $j$.
Let $c_a = \max\{1/\polyaConstSmall ~,~ 100\}$. 
\begin{enumerate}
    \item If $\lvert x_i(t) - x_j(t) \rvert < c_a \cdot \sqrt{n}$, then
$
    \Pr\left[\lvert X_i(t+1)-X_j(t+1) \rvert \geq c_a \cdot \sqrt{n}\right] = \BigOmega{1} .
$
\item If $c_a \cdot \sqrt{n} \leq \lvert x_i(t) - x_j(t) \rvert < \sDelta \cdot \sqrt{n\log{n}} $, then
\[
    \Pr\left[\lvert X_i(t+1)-X_j(t+1) \rvert \geq (5/4)\cdot (x_i(t) - x_j(t))\right] \geq  1 - 14\cdot \exp\left(-\concConstEps \cdot (\lvert x_i(t) - x_j(t) \rvert)^2 / 16n\right) .
\]
\end{enumerate}
\end{lemma}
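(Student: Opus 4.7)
The lemma splits into two complementary parts: Part~2 amplifies an existing gap, while Part~1 creates one from near-parity; these call for concentration and anti-concentration respectively. \textbf{Part~2.} Assume without loss of generality $x_i(t) \ge x_j(t)$, and set $\delta := (x_i - x_j)/(4\sqrt{n})$. The plan is to apply \cref{lemma:one-phase-conc} to opinion $i$ (lower tail) and to opinion $j$ (upper tail); the precondition $\delta < \min(x_i,x_j)/\sqrt{n}$ holds because both opinions are strong, which forces $x_i - x_j \le 0.1\, x_{\max} \le x_j/9$. A union bound then yields, with failure probability at most $14\exp(-\concConstEps\, \delta^2) = 14\exp(-\concConstEps (x_i-x_j)^2/(16n))$ (matching the stated bound), that
\[
X_i(t+1) - X_j(t+1) \ge \frac{x_i + x_j}{\Decided}\bigl((x_i - x_j) - \sqrt{n}\,\delta\bigr) = \frac{x_i + x_j}{\Decided}\cdot\frac{3}{4}(x_i - x_j).
\]
Since $\Decided \le x_{\max}$ by \cref{obs:decided} and $x_i + x_j \ge 1.8\, x_{\max}$, the leading factor exceeds $1.8$, giving an amplification factor of at least $1.8 \cdot 3/4 = 1.35 \ge 5/4$.

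\textbf{Part~1.} Now $|x_i - x_j| < c_a\sqrt{n}$, and I want constant probability of creating a gap of at least $c_a\sqrt{n}$. The lemma is applied in the regime $k \le \sqrt{n}/\log n$, so $x_{\max} \ge n/k \ge \sqrt{n}\log n$; together with $x_i + x_j \ge 1.8\, x_{\max}$ this gives $x_{\max} \le n/1.8$, hence $1 - x_i/n \ge 0.4$. Thus $\Bin(x_i, x_i/n)$ has standard deviation of order $x_i/\sqrt{n}$, which is comfortably large compared to $c_a$. I would apply the binomial anti-concentration result \cref{lem:reverse-chernoff-simple} independently to $Y_i$ and $Y_j$ (independent by \cref{obs:decided}) to show that, with some constant probability, $Y_i - Y_j \ge c'\sqrt n$ for a suitable constant $c'$. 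In the boosting part this gap amplifies in expectation by the factor $n/\lVert\config{Y}\rVert_1 \ge n/(x_{\max}(1+o(1))) = \Theta(1)$, and \cref{thm:polya_bound_total_balls} applied to $X_i(t+1)$ and $X_j(t+1)$ (conditional on the decision outcome) preserves the gap up to further constant loss, delivering $|X_i(t+1) - X_j(t+1)| \ge c_a\sqrt n$ with constant probability.

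\textbf{Main obstacle.} Part~1 is the delicate step: the final $c_a\sqrt{n}$ threshold only survives if the anti-concentration constant from \cref{lem:reverse-chernoff-simple}, the variance truncation from the $(1-x_i/n)$ factor, the boosting amplification factor $n/x_{\max}$, and the Pólya--Eggenberger concentration loss all align with $c_a = \max\{1/\polyaConstSmall, 100\}$. Bookkeeping these constants carefully (possibly by applying \cref{lem:reverse-chernoff-simple} at a specific quantile tuned to $c_a$, rather than at a fraction of a standard deviation) will be the main work of the proof.
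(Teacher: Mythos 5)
Your Part~2 is correct and is exactly the paper's argument: apply \cref{lemma:one-phase-conc} to both opinions with $\delta = (x_i-x_j)/(4\sqrt{n})$, use $\Decided \le x_{\max}$ and $x_i+x_j \ge 1.8\,x_{\max}$ to get the factor $1.8\cdot\tfrac34 = 1.35 \ge 5/4$, with the stated failure probability. Your Part~1 also has the paper's architecture (anti-concentration in the \Decision part to create a gap, then Pólya--Eggenberger concentration to preserve it through the \Boosting part; the paper uses \cref{lemma:reverse_chernoff_v2} where you propose \cref{lem:reverse-chernoff-simple}, which is immaterial).

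However, Part~1 contains a genuine quantitative error, in the form of two false intermediate claims that happen to compensate. First, the target $Y_i - Y_j \ge c'\sqrt{n}$ is \emph{not} achievable with constant probability in the hard regime. Take $k = \sqrt{n}/\log n$ opinions each of support $\sqrt{n}\log n$: then $Y_i \sim \Bin(x_i, x_i/n)$ has mean $x_i^2/n = \log^2 n$ and standard deviation $\Theta(\log n)$, so a gap of $c'\sqrt{n}$ is $\Theta(\sqrt{n}/\log n)$ standard deviations and has probability $\exp(-\Omega(n/\log^2 n))$. Second, the boosting amplification factor $n/\lVert\config{Y}\rVert_1 \approx n/\Decided \ge n/x_{\max}$ is not $\Theta(1)$; in the same example it is $\sqrt{n}/\log n$. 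The correct bookkeeping (which is what the paper does) is to aim for a \Decision-part gap of order $c_a\, x_{\max}/\sqrt{n}$ --- i.e., $\Theta(c_a)$ standard deviations of $Y_i$, reachable with probability $\exp(-O(c_a^2)) = \Omega(1)$ by anti-concentration at a deviation of $(10/7)c_a\, x_i/\sqrt{n}$ --- and then let the amplification $n/d$ blow this up to $\Theta(c_a\sqrt{n})$, absorbing the Pólya--Eggenberger loss of $O(c_a\sqrt{n})$ with room to spare. You gesture at this fix in your ``main obstacle'' paragraph (``a specific quantile tuned to $c_a$''), but as written the plan would fail at the anti-concentration step. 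One further detail you omit and will need: a Chernoff lower bound on $Y_j$ and two-sided control of $\lVert\config{Y}\rVert_1$ around $\Decided$, both required to apply \cref{thm:polya_bound_total_balls} and to bound the denominator $d$ in the amplification.
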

\begin{proof}[Proof]
    We track the difference between two strong opinions $i$ and $j$ throughout a single phase.
    We  start with the first statement and assume, w.l.o.g., that $x_i \geq x_j$.
    The idea is to apply anti-concentration results  during the \Decision part in order to establish a sufficiently large difference between the support of both opinions and to roughly maintain this difference  throughout the following \Boosting part. \\
    At first we analyze the \Decision part and consider $Y_i(t), Y_j(t)$ and  $\norm{\config{Y}(t)}_1$.
    Recall (see \cref{obs:polya}) that $Y_i(t) \sim \BinDistr(x_i,x_i/n)$, $Y_j(t) \sim \BinDistr(x_j,{x_j}/{n})$ and $\norm{\config{Y}(t)}_1$ can be modeled as a sequence of $n$ Poisson trials.
    We define the event $\mathcal{E}$ as follows 
    \begin{align*}
	    \mathcal{E} = \Big\{ Y_i(t) \geq \frac{x_i^2}{n} +\frac{10}{7} \cdot c_a \cdot\frac{x_i}{\sqrt{n}} \mbox{ and } 
	     \left(1-\frac{2}{\sqrt{\log n}}\right)\cdot \frac{x_j^2}{n} \leq Y_j(t) \leq \frac{x_j^2}{n} -\frac{10}{7} \cdot c_a \cdot\frac{x_j}{\sqrt{n}} \\ \mbox{ and }
	    \norm{\config{Y}(t)}_1 = \Decided\cdot \left(1 \pm \frac{6}{\sqrt{\log n}} \right) \Big\}.        
    \end{align*}

	Now we bound probability of $\Bar{\mathcal{E}}$.
    We apply \cref{lemma:reverse_chernoff_v2} to both opinions $i$ and $j$ with  $\delta_i = ((10/7) \cdot c_a \sqrt{n})/x_i$ and $\delta_j = ((10/7) \cdot c_a \sqrt{n})/x_j$ and yield  
    \begin{align}
        \Pr\left[Y_i(t) \geq \frac{x_i^2}{n} +\frac{10}{7} \cdot c_a \cdot\frac{x_i}{\sqrt{n}}\right] &\geq \exp\left(-9\cdot ((10/7) \cdot c_a)^2 \right)
        \geq \exp\left(-20\cdot c_a^2 \right)
        \label{eq:drift_strong_anti_positve}\\
        \Pr\left[Y_j(t) \leq \frac{x_j^2}{n} -\frac{10}{7} \cdot c_a \cdot\frac{x_j}{\sqrt{n}}\right] &\geq \exp\left(-9\cdot ((10/7) \cdot c_a)^2 \right)
        \geq \exp\left(-20\cdot c_a^2 \right)
        \label{eq:drift_strong_anti__negative}
    \end{align}
    Next we apply Chernoff Bound(\cref{lemma:chernoff_poisson_trials}) to  $\norm{\config{Y}(t)}_1$ and $Y_j(t)$ where in the latter case we only need an additional lower bound. 
    Note that $\Ex{\norm{\config{Y}(t)}_1} \geq x_{max}^2 / n = \BigOmega{\log^2 n}$ due to  $x_{max} \geq n/k$ and $k \leq \sqrt{n}/\log{n}$.
    Hence, for $\delta' = 6/\sqrt{\log{n}}$ we get 
    \[
        \Pr\Big[\norm{\config{Y}(t)}_1 \le \Decided\cdot\left(1-\delta'\right)\Big] 
        \leq n^{-2} 
   \text{ and }
        \Pr\Big[\norm{\config{Y}(t)}_1 \ge \Decided\cdot\left(1+\delta'\right)\Big] 
        \leq n^{-2} . 
    \]
    In case of $Y_j(t)$ we use the fact that opinion $j$ is strong and hence, similar to the previous case,  $x_j \geq 0.9 \cdot x_{max} \geq 0.9 \cdot \sqrt{n}\log{n}$.
    Again we apply Chernoff Bound(\cref{lemma:chernoff_poisson_trials}) for $\delta' = \sqrt{2\cdot n \cdot \log{n}}/x_j \leq 4/\sqrt{\log n}$ and yield
    \[
        \Pr\left[ Y_j(t) \leq \left(1-\frac{4}{\sqrt{\log n}}\right)\cdot \frac{x_j^2}{n} \right] 
        \leq \Pr\left[ Y_j(t) \leq \left(1-\frac{\sqrt{2\cdot n\cdot\log n}}{x_j}\right)\cdot \frac{x_j^2}{n} \right]
        \leq n^{-2} .
    \]
    An application of the union bound on the bounds of $Y_j(t)$ yields 
    \begin{align*}
        \Pr\left[ Y_j(t) \notin \left(\left(1-\frac{4}{\sqrt{\log n}}\right)\cdot \frac{x_j^2}{n}, \frac{x_j^2}{n} -\frac{10}{7} \cdot c_a \cdot \frac{x_j}{\sqrt{n}}\right)  \right] 
        &\leq 1- \exp\left(-9\cdot ((10/7) \cdot c_a)^2 \right) + n^{-2} \\ 
        &\leq 1-\left(\exp\left(-10\cdot  c_a^2 \right) - n^{-2}\right) .
    \end{align*}
    Since $Y_{i}(t)$ and $Y_j(t)$ are independent,
    an another application of the union bound yields
    \begin{equation}
        \label{eq:drift_prep_union_bound_decisionPhase}
        \Pr\left[\Bar{\mathcal{E}}\right] \leq 1- \left( \left(\exp\left(-10\cdot  c_a^2 \right) - n^{-2}\right) \cdot \exp\left(-20\cdot  c_a^2 \right) \right) + 2n^{-2} = p < 1 
    \end{equation}
    where $p<1$ is a constant probability.  \\
    Now we deal with the outcome of the \Boosting part conditioned on the event $\mathcal{E}$.
    We fix $Y_i(t) = y_i$, $Y_j(t) = y_j$ and define $d=\norm{\config{Y}(t)}_1$.
    Again, recall (see \cref{obs:bin,obs:polya}) that $X_i(t+1) \sim \PE(y_i,d-y_i,n-d)$ and $X_j(t+1) \sim \PE(y_j,d-y_j,n-d)$.
    We apply the tail bound for this Pólya Eggenberger distribution from \cref{thm:polya_bound_total_balls}. 
    Note that conditioned on the event $\mathcal{E}$ and the previous observations about strong opinions we have $\sqrt{y_i} \geq \sqrt{y_j} \geq 0.9\cdot  \sqrt{1-4/\sqrt{\log n}} \cdot \log{n}$.
    Thus, clearly $\delta = (2 \cdot c_a)/7 < \sqrt{y_j}$ and hence we get
    \begin{align}
        \Pr\left[X_i(t+1) < y_i \cdot \frac{n}{d} - \sqrt{y_i} \cdot\delta \cdot \frac{n}{d} ~\Big|~ \mathcal{E}\right] &\leq 4\exp\left({-\polyaConstSmall \cdot \delta^2}\right) \leq 4\exp\left(-4\cdot \polyaConstSmall   \cdot c_a^2 / 49\right)  \label{eq:drift_strong_polya_positive} \\
        \Pr\left[X_j(t+1) > y_j \cdot \frac{n}{d} + \sqrt{y_j} \cdot \delta \cdot \frac{n}{d} ~\Big|~ \mathcal{E}\right] &\leq 4\exp\left({-\polyaConstSmall \cdot\delta^2}\right) \leq 4\exp\left(-4\cdot \polyaConstSmall   \cdot c_a^2 / 49\right) \label{eq:drift_strong_polya_negative}
    \end{align}

    Now we show that the difference between $X_i(t+1)$ and $X_j(t+1)$ is still sufficiently large.
    Conditioned on the event $\mathcal{E}$ we get that 
    \begin{align*}
        X_i(t+1)-X_j(t+1) 
        &\geq \frac{n}{d} \cdot \left( y_i -y_j -\delta \cdot (\sqrt{y_i}+\sqrt{y_j}) \right) \\
        &\geq \frac{n}{d} \cdot \Bigg( \frac{x_i^2 -x_j^2}{n} + \frac{10}{7}\cdot c_a \cdot \frac{x_i+x_j}{\sqrt{n}} \\ 
            &\qquad - \delta \cdot \frac{x_i}{\sqrt{n}} \cdot \left( \sqrt{1+\frac{10 \cdot c_a \sqrt{n}}{7\cdot x_i}} + \sqrt{1- \frac{10 \cdot c_a \sqrt{n}}{7\cdot x_j}} \right) \Bigg) \\
    \end{align*}
    By the definition of strong opinions  and the inequality $\sqrt{1-z} + \sqrt{1+z} \leq 2$ for $z \in (-1,+1), z  \neq 1$ it follows that
    \begin{align*}
        X_i(t+1)-X_j(t+1) 
        &\geq \frac{n}{d} \cdot \left( \frac{10}{7} \cdot c_a \cdot \frac{x_i +x_j}{\sqrt{n}} - 2 \cdot \delta \cdot \frac{x_i}{\sqrt{n}} \right) \\
        &\geq \sqrt{n} \cdot \frac{x_{max}}{d} \cdot \left( 2 \cdot \frac{9}{10} \cdot (10/7) \cdot c_a  - 2 \cdot \delta  \right) \\
        &\geq 2\cdot \sqrt{n} \cdot \frac{x_{max}}{d} \cdot \left( \frac{9}{7} \cdot c_a  -   \delta  \right)
    \end{align*}
    Again by the event $\mathcal{E}$ and the fact that $\Decided\leq x_{max}$ (see \cref{obs:decided}) it follows that
    \begin{align*}
        X_i(t+1)-X_j(t+1) 
        \geq \sqrt{n} \cdot \frac{2 \cdot x_{max}}{(1+6/\sqrt{\log n}) \cdot x_{max}} \cdot  \left( \frac{9}{7}  \cdot c_a  -  \frac{2}{7}\cdot c_a  \right) 
        \geq c_a \cdot \sqrt{n} . 
    \end{align*}
    An application of the union bound yields that the difference between $X_i(t+1)$ and $X_j(t+1)$ holds with probability $1-8\exp(-4\cdot \polyaConstSmall \cdot c_a^2 / 49)$.
    At last we combine this with (\ref{eq:drift_prep_union_bound_decisionPhase})  via an application of the law of total probability to deduce that the first statement follows with constant probability.
    Due the choice of $c_a$ it holds, with at least constant probability, that 
    \begin{align*}
        \MoveEqLeft \Pr \left[\lvert X_i(t+1)-X_j(t+1) \rvert \geq c_a \cdot \sqrt{n} \right]  \\
        &=
        \Pr \left[ \lvert X_i(t+1)-X_j(t+1) \rvert \geq c_a \cdot \sqrt{n} ~\Big|~ \mathcal{E}
        \right]\cdot \Pr\left[\mathcal{E}\right] 
        \\&\quad+
         \Pr \left[ \lvert X_i(t+1)-X_j(t+1) \rvert \geq c_a \cdot \sqrt{n} ~\Big|~ \bar{\mathcal{E}}
        \right]\cdot \Pr\left[\bar{\mathcal{E}}\right]
        \\
        &\geq  \Pr \left[ \lvert X_i(t+1)-X_j(t+1) \rvert \geq c_a \cdot \sqrt{n} ~\Big|~ \mathcal{E}
        \right]\cdot \Pr\left[\mathcal{E}\right] \\
        &\geq \left(1-8 \cdot \exp\left(-4\cdot \polyaConstSmall \cdot c_a^2 / 49\right)\right) \cdot \left( \left( \left(\exp\left(-10\cdot  c_a^2 \right) - n^{-2}\right) \cdot \exp\left(-20\cdot  c_a^2 \right) \right) - 2n^{-2}\right)
    \end{align*}
    \medskip
    
    We continue with the proof of the second statement and assume  $x_i-x_j\geq c_a \cdot \sqrt{n} $.
    We apply \cref{lemma:one-phase-conc} to both opinions with $\delta = (x_i -x_j)/4\cdot\sqrt{n} $ and yield 
    \begin{align*}
        \Pr\left[X_i(t+1) \leq \frac{x_i^2}{\Decided} - \frac{x_i}{\Decided} \cdot \sqrt{n} \cdot\delta \right] &\leq 7\cdot\exp\left(-\concConstEps \cdot \delta^2\right) 
            = 7 \cdot\exp\left(-\concConstEps \cdot \frac{(x_i-x_j)^2}{16n}\right) \\
        \Pr\left[X_j(t+1) \geq \frac{x_j^2}{\Decided} + \frac{x_j}{\Decided} \cdot \sqrt{n}\cdot \delta\right] &\leq7\cdot\exp\left(-\concConstEps \cdot \delta^2\right) 
            = 7 \cdot\exp\left(-\concConstEps \cdot \frac{(x_i-x_j)^2}{16n}\right)
    \end{align*}
     Now, as $i$ and $j$ are assumed to be strong opinions, it follows that $(x_i + x_j) / \Decided > 9/5$.
     An application of the union bound yield the second statement, with probability at least $ 1 - 14\cdot \exp\left(-\concConstEps \cdot (\lvert x_i - x_j \rvert)^2 / 4n\right)$,  due to 
    \begin{equation*}
        X_i(t+1)-X_j(t+1)
            \geq \frac{x_i + x_j}{\Decided}\cdot \left(1-\frac{\sqrt{n}\cdot\delta}{x_i-x_j} \right) \cdot (x_i - x_j)
            \geq \frac{9}{5} \cdot \left(1-\frac{1}{4}\right) \cdot(x_i - x_j) \geq (5/4) \cdot  (x_i-x_j). \qedhere
    \end{equation*}
\end{proof}

\begin{lemma}
\label{lem:weak-to-super-weak}
    Fix $\X(t) = \x(t)$ and an opinion $j$.
    If opinion $j$ is weak, then it will become super-weak in $\BigO{\log\log{n}}$ phases with probability at least $1-\BigO{n^{-1.9}}$.
    If opinion $j$ is super-weak, then it will remain super-weak in at least $\BigOmega{\log^2{n}}$ following phases with probability more than $1-\BigO{n^{-1.9}}$. 
\end{lemma}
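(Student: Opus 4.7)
The plan is to prove the two statements separately by iterating our single-phase concentration results, relying on the standing assumption $k \le \sqrt{n}/\log n$ of \cref{sec:analysis-case1} which via pigeonhole implies $x_{max}(t+s) \ge n/k \ge \sqrt{n}\log n$ and $\Decided(t+s) \ge n/k \ge \sqrt{n}\log n$ throughout the process, since the number of non-zero opinions can only decrease over time.

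For the first statement, I would track the ratio $R_s := X_{max}(t+s)/X_j(t+s)$. Since $j$ is weak we start with $R_0 \ge 10/9$. At each step $s$, as long as $j$ has not yet become super-weak, both preconditions of \cref{lemma:ratio} are satisfied for the pair consisting of some opinion realizing the current maximum and $j$: the lower bound $X_j(t+s) \ge c_w\sqrt{n\log n}$ holds because $j$ is still weak, and the gap bound follows from
\[
X_{max}(t+s) - X_j(t+s) \ge 0.1 \cdot X_{max}(t+s) \ge 0.1 \sqrt{n}\log n \ge \sDelta\sqrt{n\log n}
\]
for $n$ sufficiently large (e.g.\ $\log n \ge 100\sDelta^2$). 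Invoking \cref{lemma:ratio} yields $R_{s+1} \ge R_s^{1.5}$ with probability at least $1-2n^{-2}$, where I use $X_{max}(t+s+1) \ge X_i(t+s+1)$ for any $i$ realizing the max at time $t+s$. Iterating this doubly-exponential growth gives $R_\ell \ge (10/9)^{1.5^\ell}$, which exceeds $\sqrt{n}/(c_w\sqrt{\log n})$ once $\ell = \BigTheta{\log\log n}$; combined with $X_{max}(t+\ell) \le n$, this forces $X_j(t+\ell) \le c_w\sqrt{n\log n}$, i.e.\ $j$ is super-weak. A union bound over the $\BigO{\log\log n}$ invocations of \cref{lemma:ratio} gives overall failure probability $\BigO{n^{-2}\log\log n} = \BigO{n^{-1.9}}$.

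For the second statement, assume $x_j(t) \le c_w\sqrt{n\log n}$ so $j$ is super-weak. I would apply \cref{lemma:weak-phase-conc} with $c = c_w$ in every phase: combined with $\Decided \ge \sqrt{n}\log n$ it yields
\[
X_j(t+1) \le \frac{(12c_w^2 + 74\polyaConstLarge) \cdot n\log n}{\Decided} \le (12c_w^2 + 74\polyaConstLarge)\sqrt{n}
\]
with probability at least $1-4n^{-2}$. For $n$ large enough we have $(12c_w^2 + 74\polyaConstLarge)\sqrt{n} \le c_w\sqrt{n\log n}$ (i.e.\ once $\sqrt{\log n} \ge (12c_w^2 + 74\polyaConstLarge)/c_w$), so $j$ stays super-weak. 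Iterating over up to $\BigO{\log^2 n}$ phases and union-bounding gives total failure probability $\BigO{n^{-2}\log^2 n}= \BigO{n^{-1.9}}$.

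The main subtlety lies in the first statement: I must guarantee that the preconditions of \cref{lemma:ratio} are preserved throughout the $\BigO{\log\log n}$ iterations. The ratio growth $R_{s+1} \ge R_s^{1.5} > 10/9$ keeps opinion $j$ strictly below $0.9 \cdot X_{max}$ (so it remains weak unless it has already become super-weak, in which case we stop), and the pigeonhole lower bound $X_{max}(t+s) \ge n/k \ge \sqrt{n}\log n$ is invariant as opinions die off, so the gap condition $x_{max}(t+s) - x_j(t+s) \ge \sDelta\sqrt{n\log n}$ continues to hold automatically. Consequently, I never have to track the identity of the maximum-support opinion, which in fact may change between phases.
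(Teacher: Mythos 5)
Your proof is correct and follows essentially the same route as the paper's: the first statement by iterating \cref{lemma:ratio} for $\BigO{\log\log n}$ phases (using weakness plus $x_{max}\geq n/k\geq\sqrt{n}\log n$ to certify its preconditions and the observation that the growing ratio keeps $j$ from becoming strong), and the second by iterating \cref{lemma:weak-phase-conc} with $\Decided\geq n/k\geq\sqrt{n}\log n$. The only cosmetic difference is your stopping criterion for the ratio ($\sqrt{n}/(c_w\sqrt{\log n})$ instead of the paper's $n$), which changes nothing.
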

\begin{proof}[Proof]
    First we show that a weak opinion will become super-weak.
    Let opinion $j$ be weak but not super-weak opinion in $\x$, i.e., $c_w \cdot \sqrt{n\log n} < x_j < 0.9 \cdot x_{max}$. 
    As opinion $j$ is weak, it follows that the difference $x_{max} -x_j \geq n/(10\cdot k) = \omega(\sqrt{n \log n})$ is large enough and we can apply \cref{lemma:ratio}. This yields \2whp that 
    \begin{equation}
    \label{eq:w-t-sw}
        \frac{X_{max}(t+1)}{X_j(t+1)} \geq \left( \frac{x_{max}}{x_j}\right)^{1.5}.
    \end{equation}
    This result has two implications. For one, it states that the ratio between the largest opinion and opinion $j$ grows substantially. Second, it also implies that opinion $j$ cannot become strong in $\X(t+1)$. To see this, we combine (\ref{eq:w-t-sw}) together with the fact that $j$ is weak in $\x$
    \[
        X_j(t+1) \overset{(\ref{eq:w-t-sw})}{\leq } X_{max}(t+1)  \cdot \left(\frac{x_j}{x_{max}} \right)^{1.5} < X_{max}(t+1) \cdot \left(\frac{9}{10}\right)^{1.5} < X_{max}(t+1) \cdot 0.9.
    \]
    Hence, $j$ is either weak or super-weak in phase $t+1$. If $j$ is super-weak, we are done. Otherwise we may again apply \cref{lemma:ratio}. We follow this approach for $t' = \log_{1.5} \log_{9/10} n$ phases,  at which point, either (i) $j$ already became super-weak in some phase $< t + t'$, or (ii) it follows by the growth of the ratio that with probability at least $1-\BigO{n^{-1.9}}$
    \[
        \frac{X_{max}(t+t')}{X_i(t+t')}   \geq \left(\frac{10}{9} \right)^{1.5^{t'}} \geq n .
    \]
    This implies that, opinion $j$ must have already became super-weak.
    
    \medskip
    
    In this second part of the proof, we show that a super-weak opinion $j$ in $\x$ remains super-weak.
    We apply \cref{lemma:weak-phase-conc} on this opinion $j$ and yield 
    \begin{equation}
		\Pr \left[ X_j(t+1)  > \frac{n}{\Decided} \cdot (12c_w^2 + 74\polyaConstLarge) \log  n \right] < 4n^{-2}.
	\end{equation}
	Using that $\Decided = \Ex{\norm{\config{Y}(t)}_1} \geq n/k \geq \sqrt{n}  \log n$  then implies 
 	\begin{equation*}
		\Pr \Big[ X_j(t+1)  > (12c_w^2 + 74\polyaConstLarge) \cdot \sqrt{n}  \Big] < 4n^{-2}.
	\end{equation*}
	As $ (12c_w^2 + 74\polyaConstLarge) \cdot \sqrt{n}  = \LittleO{\sqrt{n \log n}}$ it follows that opinion $j$ remains super-weak at the start  phase $t+1$. A repetition of this argument, together with a union bound application yields that opinion $j$ will remain super-weak for at least $\BigOmega{\log^2{n}}$ phases with probability more than $1-\BigO{n^{-1.9}}$.
\end{proof}

\begin{lemma}
\label{lem:all-super-weak}
    Fix $\X(t) = \x(t)$.
    If all but a single opinion $i$ are super-weak, then only opinion $i$ remains in phase $t+2$ \whp. 
\end{lemma}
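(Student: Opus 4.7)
The plan is a two-phase argument: phase $t$ drives every super-weak opinion from $\BigO{\sqrt{n \log n}}$ down to $\BigO{\log n}$ agents, and phase $t+1$ then wipes each such opinion out entirely during its decision part.

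For the first phase, I would use the hypothesis $k \leq \sqrt{n}/\log n$ of this section together with super-weakness of every $j \neq i$ to note that $x_i(t) \geq n - (k-1) c_w \sqrt{n\log n} = n - o(n)$ and hence $\Decided(t) \geq x_i(t)^2/n = n - o(n)$. Applying \cref{lemma:weak-phase-conc} to each opinion $j \neq i$ with constant $c := c_w$ then yields $X_j(t+1) \leq C \log n$ for some absolute constant $C$, with probability $1 - \BigO{n^{-2}}$. A union bound over the at most $k$ opinions shows this holds simultaneously for all $j \neq i$ \whp, so in particular $x_i(t+1) \geq n - k C\log n = n - \BigO{\sqrt{n}}$.

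For phase $t+1$ I would apply \cref{obs:decided}: $Y_j(t+1) \sim \Bin(X_j(t+1), X_j(t+1)/n)$, so $\Pr[Y_j(t+1) \geq 1] \leq \Ex{Y_j(t+1)} \leq C^2 \log^2 n / n$ by Markov. A union bound over the at most $k \leq \sqrt{n}/\log n$ opinions $j \neq i$ gives $\Pr[\exists j \neq i \colon Y_j(t+1) \geq 1] \leq C^2 \log n / \sqrt{n} = n^{-\BigOmega{1}}$, while $x_i(t+1) \geq n - \BigO{\sqrt{n}}$ forces $Y_i(t+1) \geq 1$ with probability $1 - e^{-\BigOmega{n}}$ by Chernoff. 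Conditioning on both events, opinion $i$ is the only decided opinion entering the boosting part of phase $t+1$, and the phase-clock guarantee recalled at the start of \cref{sec:analysis-population-model} ensures every undecided agent then adopts opinion $i$. Thus $X_i(t+2) = n$ \whp.

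The main obstacle is the slack in the second step: $\Pr[Y_j(t+1) \geq 1]$ is only polynomially small, so the union bound succeeds precisely because $k \leq \sqrt{n}/\log n$. This is exactly the reason why the other two regimes (\cref{sec:analysis-case2,sec:analysis-case3}) first have to reduce the number of opinions into this range before the present clean-up argument can be applied.
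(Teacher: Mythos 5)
Your proposal is correct and follows essentially the same route as the paper: both use \cref{lemma:weak-phase-conc} with $\Decided(t) = \BigOmega{n}$ to shrink every super-weak opinion to $\BigO{\log n}$ support after phase $t$, then observe that such an opinion has probability $\polylog n / n$ of retaining any decided agent after the decision part of phase $t+1$, and conclude with a union bound. Your extra remark that opinion $i$ itself keeps a decided agent (so the boosting part actually propagates $i$) is a detail the paper leaves implicit but does no harm to make explicit.
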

\begin{proof}[Proof]
    Let opinion $i$ be the only non super-weak opinion. 
    It follows that $x_i \geq n - (k-1)\cdot c_w\cdot \sqrt{n\log n} $. As we consider $k \leq \sqrt{n} /  \log n$, this implies that $x_i \geq n - n\cdot(c_w /  \sqrt{\log n}) = n\cdot (1-\LittleO{1})$. Furthermore, as $\Decided= \Ex{\norm{\config{Y}(t)}_1} = \sum_{j=1}^{k} x_j^2 / n \geq x_i^2 / n$ this also implies that $\Decided= \BigOmega{n}$.
	
	Now we fix some opinion $j$ that is super-weak. We apply \cref{lemma:weak-phase-conc} and use that $\Decided = \BigOmega{n}$ which immediately yields 
	\[
		\Pr \left[ X_j(t+1)  = \omega( \log n) \right] < 4n^{-2}.
	\] 
    Such an opinion $j$ will, \whp, not have a single decided agent at the start of \Boosting part $t+1$ because
 	\begin{align*}
		\Pr \Big[ Y_j(t+1) > 0 ~\Big|~ X_j(t+1) = \BigO{\log n} \Big] & <  1 - \left(1 - \frac{\BigO{\log n}}{n}\right)^{\BigO{\log n}} \\
		& <  \frac{\polylog n}{n}.
	\end{align*} 
	Therefore, $j$ will vanish before phase $t+2$ \whp.
	A simple union bound argument now yields that, \whp, \emph{none}  of the opinions which are super-weak at the beginning of phase $t$ will survive until phase $t+2$. 
\end{proof}

\subsection{Consensus for $\sqrt{n}/\log{n} <k \leq \sqrt{n}/c_k$}
\label{sec:details-analysis-case2}

\lemDriftSecondCaseK*
\begin{proof}[Proof]
    First we establish, $X_{max}(t+t_1)\ge \sqrt{n\log n}$ for some $t_1=\BigO{\log n}$ with the drift result from \cref{lemma_drift_markov_chain}.
    In contrast to the proof of \cref{thm:one-weak} we apply the result on the largest opinion.
    To be more precise, we map the largest opinion with support $X_{max}(t)$ to the state space of $W(t) = \lfloor X_{max}(t)/(\sqrt{n}) \rfloor \in \{0,\dots,\sqrt{\log{n}}\}$.
    Observe that $W(t)= \sqrt{\log{n}}$ implies  that $X_{max}(t) \geq \sqrt{n\log{n}}$.
    Now we deal with both requirements of the drift result.
    The first requirement in is surely true due to the regime of $k$ we consider. That is, we have that  $X_{max}(t)\ge n/k \ge c_k\sqrt{n} \ge \sqrt{n}$ and hence 
    \[
        \Pr[ W(t+1) \geq 1] = \Pr[\lfloor X_{max}(t+1)/(\sqrt{n}) \rfloor \geq 1]
    \]
    holds with at least constant probability.
    The second requirement  is fulfilled by the first result in \cref{lem:k-to-sqrtlog}.
    As long as $X_{max}(t)$ is smaller than $\sqrt{n\log{n}}$, it holds for a suitable constant $c_2 > 0$ that 
    \begin{align*}
        \Pr[W(t+1) \geq \min\{(1+1/60) W(t), \log{n}\}] 
        &\geq \Pr[X_{max}(t+1) \geq \min\{(1+1/60) x_{max}, \sqrt{n\log{n}}\}] \\
        &\geq 1- 7\exp(-(\concConstEps/625) \cdot (x_{max}^2 / n)) \\
        &\geq 1- \exp(-c_2 \cdot \lfloor x_{max}/(\sqrt{n}) \rfloor) \\
        &= 1- \exp(-c_2 \cdot W(t)) .
    \end{align*}
    Thus, due to the drift result, the support of the largest opinion is at least $\sqrt{n\cdot\log{n}}$ within $\BigO{\log{n}}$ phases \whp.

    Now, we continue with a configuration which satisfies $X_{max}(t+t_1) \geq \sqrt{n\log n} $.
    Here, we repeatedly apply the second statement of \cref{lem:k-to-sqrtlog}.
    It states that the largest opinion increases by an additional term of size at least $(1/60)\cdot \sqrt{n\log{n}}$.
    Thus, for $t_2=\BigO{\log n}$ phases, it follows that $X_{max}(t+t_1+t_2)\ge \sqrt{n}\log ^{3/2}n$  \whp.
\end{proof}

\lemLogToEnd*
\begin{proof}[Proof] 
Starting with the configuration $\x$ at \Decision part $t$, we consider two cases.

	\paragraph{Case 1} $\Decided(t) = \Ex{\norm{\config{Y}(t)}_1} \leq \frac{1}{2}\sqrt{n} \log^{3/2} n$. Let $i$ be an opinion which provides $x_{max}$. As in this case $x_i \geq 2 \cdot \Decided(t)$, we will show that $i$ gathers additional support until \Decision part $t+1$. To that end, we apply \cref{lemma:one-phase-conc}  with $\delta = \sqrt{(5/\concConstEps) \log n}$. This immediately yields 
	\begin{equation}
	\label{eq:lem18-1}
	\Pr \left[X_i(t+1) > \frac{x_i^2}{\psi(t)} - \sqrt{5/\concConstEps} \cdot \frac{x_i}{\psi(t)} 			\sqrt{n \log n} \right]  \geq 1  - n^{-4}.
	\end{equation}
	Remember, we have  $x_i \geq 2 \Decided(t)$.  We use this in the second step of the following calculation 
	\[
	\frac{x_i^2}{\Decided(t)} - \sqrt{5/\concConstEps} \cdot \frac{x_i}{\Decided(t)} \sqrt{n \log n} = \frac{x_i}{\Decided(t)} \left(x_i - \BigO{\sqrt{n \log n}} \right) \geq 2x_i - \BigO{\sqrt{n \log n}} > x_i (1 +  1/2).
	\]
	As for arbitrary random variables $X$ and $x \leq x'$, it holds that $\Pr[X > x] \leq \Pr[X > x']$, we get 
	\begin{equation*}
		 \Pr \left[X_i(t+1) > x_{i} (1 + 1/2) \right] \geq \Pr \left[X_i(t+1) > \frac{x_i^2}{\psi(t)} - \sqrt{5/\concConstEps} \cdot \frac{x_i}{\psi(t)} 			\sqrt{n \log n} \right] \overset{(\ref{eq:lem18-1})}{\geq}  1- n^{-4}.
	\end{equation*}
	Clearly it holds that $X_{max}(t+1) \geq X_i(t+1)$. Therefore above result implies that, \whp,  $X_{max}(t+1) > x_{i} (1 + 1/2) = x_{max} (1 + 1/2)$. Next, we apply \cref{lem:max-not-shrinking}, which states that, \whp, the maximum shrinks by at most an $(1-\LittleO{1})$ factor with every phase. Therefore, \whp, the following holds 
 \begin{align*}
	X_{max}(t+3) \overset{\cref{lem:max-not-shrinking}}{>} X_{max}(t+2) (1-\LittleO{1}) \overset{\cref{lem:max-not-shrinking}}{>} X_{max}(t+1) (1-\LittleO{1}) \\
	> x_{max}(1 + \frac{1}{2}) \cdot (1- \LittleO{1}) > x_{max} (1 + \frac{1}{4}).
\end{align*}
In the second line we used that, \whp, $X_{max}(t+1) > x_{max}(1 + 1/2)$ as argued above.
Note that this inequality chain implies the first statement of the lemma.

	\paragraph{Case 2} $\Decided(t) > \frac{1}{2}\sqrt{n} \log^{3/2} n$.
	In this case, we define $L(t)$ as the set of opinions $j$ with support at most $x_j \leq 2\sqrt{n} \log n$. We consider a fixed opinion $j \in L(t)$ and will show that $X_j(t+1) = \BigO{\sqrt{n \log n}}$. In other words, such an opinion $j$ will shrink by an $\BigOmega{\sqrt{\log n}}$ factor.
	To that end, we model $Y_j(t) \sim \Bin(x_j, x_j / n)$ as usual (see \cref{obs:bin}). As $j \in L(t)$, we have $\Ex{Y_j(t)} \leq 4 \log^2 n$. We apply Chernoff bounds  (\cref{lemma:chernoff_poisson_trials}) with $\delta =2/ \sqrt{\log n}$ and get 
	\begin{equation}
	\label{eq:lem18-3}
		\Pr \left[Y_j(t) > 4\log^2 n  \left( 1 + \frac{2}{\sqrt{\log n}} \right) \right] \leq \exp \left(- \frac{\delta^2 \cdot 4\log^2 n}{3} \right) =  \exp \left(-\frac{16\log n}{3} \right) < n^{-5}.
	\end{equation} 
	Similar, we model $\norm{\config{Y}(t)}_1 = \sum_{i=1}^{k} Y_i(t)$ as the sum of Poisson trials where $\Ex{\norm{\config{Y}(t)}_1} = \Decided(t) \geq \frac{1}{2} \sqrt{n} \log^{3/2} n$. Just as above we employ Chernoff bounds with $\delta = 2/ \sqrt{\log n}$ and derive
	\begin{equation}
	\label{eq:lem18-4}
		\Pr \left[\norm{\config{Y}(t)}_1 <\frac{1}{2} \sqrt{n} \log^{3/2} n (1 - \LittleO{1}) \right] < n^{-5}.
	\end{equation}
	We now define the following event for opinion $j \in L(t)$
	\begin{equation*}
		\mathcal{E}_j :\Leftrightarrow \left\{ Y_j(t) < 4\log^2 n \cdot  ( 1 + \LittleO{1}) ~ \land ~ \norm{\config{Y}(t)}_1  > \frac{1}{2} \sqrt{n} \log^{3/2} n \cdot (1 - \LittleO{1}) \right\}.
	\end{equation*}
	A union bound application together with (\ref{eq:lem18-3}) and (\ref{eq:lem18-4}) yields that $\Pr[\mathcal{E}_j] > 1 - 2n^{-5}$. We will now track opinion $j$ throughout \Boosting part $t$. We consider fixed $Y_j(t) = y_i$ and define $d=\norm{\config{Y}(t)}_1$, and as noted in \cref{obs:polya} we model $X_j(t+1) \sim \PE(y_i, d-y_i, n -d)$. From the concentration inequality in \cref{thm:polya_bound_small_support} we then get that 
	\begin{equation}
    \label{eq:thm31-k}
    	\Pr \left[X_{j}(t+1) > \frac{n}{d}\cdot(3y_j + \BigO{\log n}) ~\Big|~ \mathcal{E}_j \right] < 2n^{-2}.
    \end{equation}
	As we only consider values of $y_i$ and $d$ that fulfill $\mathcal{E}_j$ we have that
	\[
		\frac{n}{d}\cdot(3y_i + \BigO{\log n}) < 24 \sqrt{n \log n} \cdot (1 + \LittleO{1}) < 25 \sqrt{n \log n}.
	\]
	We use this inequality in the first step and then (\ref{eq:thm31-k}) in the second to deduce that 
	\[
		\Pr \left[X_{j}(t+1) >25 \sqrt{n \log n} ~\Big|~ \mathcal{E}_j \right]  < \Pr \left[X_{j}(t+1) > \frac{n}{d}\cdot(3y_i + \BigO{\log n}) ~\Big|~ \mathcal{E}_j \right]  \overset{(\ref{eq:thm31-k})}{<} 2n^{-2}.
	\]
	An application of the law of total probability yields
	\begin{align*}
		 \Pr \left[X_{j}(t+1)  \leq 25 \sqrt{n \log n} \right]  &\geq \Pr \left[X_{j}(t+1) \leq 25 \sqrt{n \log n} ~\Big|~ \mathcal{E}_j \right] \cdot \Pr \left[\mathcal{E}_j \right] \\
		& \geq (1-2n^{-2}) \cdot \Pr \left[\mathcal{E}_j \right] > (1-2n^{-2}) \cdot (1-2n^{-5}) \geq (1 - 3n^{-2}).
	\end{align*}
	Finally, we apply union bounds over all opinions $j \in L(t)$ which yields that
	\[
		\Pr \left[ \forall j \in L(t): X_j(t) \leq 25 \sqrt{n \log n} \right] \geq (1 -3 n^{-2} \cdot |L(t)|) > (1 - n^{-1}).
	\]
	Additionally, we again apply \cref{lem:max-not-shrinking}, which states that $X_{max}(t+1) > x_{max} (1-\LittleO{1})$ \whp. When using union bounds, we get that the following event holds \whp
	\[
		\mathcal{E} :\Leftrightarrow \left\{ \forall j \in L(t): X_j(t) \leq 25 \sqrt{n \log n} ~ \land ~  X_{max}(t+1) > x_{max} (1 -\LittleO{1}) \right\}.
	\]
	
	\medskip
	
	We now continue our analysis for one further phase. To that end, we fix the configuration $\X(t+1) = \bar{\x}$ and assume that $\bar{\x}$ fulfills the event $\mathcal{E}$. Again we distinguish two cases.
	
	\paragraph{Case 2.1} $\Decided(t+1) := \Ex{Y_i(t+1)} \leq \frac{1}{2}\sqrt{n} \log^{3/2} n$. This is mostly a repetition of case 1, replacing $t$ with $t+1$ and $\x$ with $\bar{\x}$. Let $i$ be an opinion which provides $ \bar{x}_{max}$. As in case 1, we apply \cref{lemma:one-phase-conc} and get for $\delta = \sqrt{(5/\concConstEps) \cdot \log n}$ that 
	\begin{equation}
	\label{eq:lem18-6}
			\Pr \left[X_i(t+2) > \frac{\bar{x}_i^2}{\psi(t+1)} - \sqrt{5/\concConstEps} \cdot \frac{\bar{x}_i}{\psi(t+1)} 			\sqrt{n \log n} \right]  > 1  - n^{-4}.
	\end{equation}
	As $\bar{\x}$ fulfills the event $\mathcal{E}$, we have $\bar{x}_i \geq x_{max} (1-\LittleO{1}) > 2 \Decided(t) \cdot (1 - \LittleO{1})$. We can use this to show 
	\begin{align*}
	\frac{\bar{x}_i^2}{\Decided(t+1)} - \sqrt{5/\concConstEps} \cdot \frac{\bar{x}_i}{\Decided(t+1)} \sqrt{n \log n} &= \frac{\bar{x}_i}{\Decided(t+1)} \left(\bar{x}_i - \BigO{\sqrt{n \log n}} \right) \\
	&\geq (2 - \LittleO{1}) \cdot  (\bar{x}_i - \BigO{\sqrt{n \log n})} > \bar{x}_i (1 +  1/2).
	\end{align*}
	In combination with (\ref{eq:lem18-6}), this then implies $\Pr \left[X_{i}(t+2) > \bar{x}_i (1 + 1/2) \right] > 1-n^{-4}$.
	Just as in case 1, we then use $X_{max}(t+2) \geq X_i(t+2)$ and apply \cref{lem:max-not-shrinking} to argue that the maximum shrinks by at most an $(1-\LittleO{1})$ factor throughout phase $t+3$. This way, we get \whp
	\begin{align*}
		X_{max}(t+3) \overset{\cref{lem:max-not-shrinking}}{> } X_{max}(t+2) \cdot (1 - \LittleO{1})  \overset{\cref{lem:max-not-shrinking}}{> } \bar{x}_i \cdot (1+ 1/2) \cdot (1- \LittleO{1}) \\\ \overset{\mathcal{E}}{>} x_{max} \cdot (1 + 1/2) \cdot (1- \LittleO{1}) > x_{max} (1 + 1/4).
	\end{align*}
	In the penultimate step, we used that $\bar{x}_i = \bar{x}_{max}$ and that  $\bar{\x}$ fulfills event $\mathcal{E}$ (i.e., $\bar{x}_{max} > x_{max} (1 - \LittleO{1})$). The first statement of \cref{lem:log32-to-end} follows.
	
	\paragraph{Case 2.2}  $\Decided(t+1) > \frac{1}{2}\sqrt{n} \log^{3/2} n$. This case is similar to case 2. Remember, we currently consider the fixed configuration $\bar{\x}$ at \Decision part $t+1$. As we assume that $\bar{\x}$ fulfills $\mathcal{E}$, we have that every opinion $j \in L(t)$ has $\bar{x}_j \leq 25 \cdot \sqrt{n \log n}$. We will now show that, \whp, $X_j(t+2) = \BigO{\sqrt{n / \log n}}$ for every such opinion. We start by fixing some $j \in L(t)$. Just as in case 2 we model $Y_j(t+1) \sim \Bin(\bar{x}_j, \bar{x}_j / n)$. As $\Ex{Y_j(t+1)} \leq 625 \log n$, a Chernoff bound application with $\delta = 1/5$ yields 
	\begin{equation}
	\label{eq:lem18-21}
		\Pr \left[Y_j(t+1) > 625 \log n \cdot (1 + 1/5) \right] \leq \exp \left(-\frac{25 \log n}{3} \right) < n^{-5}.
	\end{equation}
	Additionally, just as argued for (\ref{eq:lem18-4}), we have $ \Ex{\norm{\config{Y}(t+1)}_1}= \Decided(t+1) > \frac{1}{2}\sqrt{n} \log^{3/2} n$ and therefore a Chernoff bound application yields
	\begin{equation}
	\label{eq:lem18-22}
		\Pr \left[\norm{\config{Y}(t+1)}_1 <\frac{1}{2} \sqrt{n} \log^{3/2} n (1 - \LittleO{1}) \right] < n^{-5}.
	\end{equation}
	Next, we track the evolution of opinion $j$ throughout \Boosting part $t+1$. We first  define
	\begin{equation*}
		\mathcal{E}_j :\Leftrightarrow \left\{ Y_j(t+1) \leq  750 \log n ~ \land ~ \norm{\config{Y}(t)}_1  > \frac{1}{2} \sqrt{n} \log^{3/2} n \cdot (1 - \LittleO{1}) \right\}.
	\end{equation*}
	The events (\ref{eq:lem18-21}) and (\ref{eq:lem18-22}), together with a union bound application, imply that $\Pr[\mathcal{E}_j] > 1 - 2n^{-5}$. We fix $Y_j(t+1) = \bar{y}_j$ and define $ \bar{d}=\norm{\config{Y}(t+1)}_1$. Just as in case 2, we model $X_j(t+2)$ with a Pólya-Eggenberger distribution and apply  \cref{thm:polya_bound_small_support} to derive (the constant \polyaConstLarge  originates from \cref{thm:polya_bound_small_support})
	\begin{equation}
    \label{eq:thm31-k2}
    	\Pr \left[X_{j}(t+2) > \frac{n}{\bar{d}}\cdot(3\bar{y}_j + \polyaConstLarge \cdot \log n) ~\Big|~ \mathcal{E}_j \right] < 2n^{-2}.
    \end{equation}
    As we condition on the event $\mathcal{E}_j$, we have $\bar{y}_j \leq 750 \log n$ and $d = \sqrt{n} \log^{3/2} n \cdot (1 - \LittleO{1})$. Therefore 
    \[
    	\frac{n}{\bar{d}} \cdot (3 \bar{y}_i + \polyaConstLarge \cdot \log n) < c' \cdot \sqrt{n / \log n},
    \]
    for the constant $c' = 2625 + \polyaConstLarge$. This further implies the first step in the following inequality chain 
    \[
    	 \Pr \left[X_j(t+2) > c' \cdot \sqrt{n / \log n}  ~\Big|~ \mathcal{E}_j  \right] \leq \Pr \left[X_{j}(t+2) > \frac{n}{\bar{d}}\cdot(3\bar{y}_j + \BigO{\log n}) ~\Big|~ \mathcal{E}_j \right] \overset{(\ref{eq:thm31-k2})}{<} 2n^{-2}.
    \]
    Just as in case 2, we apply the law of total probability to deduce that 
    \begin{align*}
    	\Pr \left[X_j(t+2) \leq c' \cdot \sqrt{n / \log n} \right] &\geq  \Pr \left[X_j(t+2) \leq c' \cdot \sqrt{n / \log n} ~\Big|~ \mathcal{E}_j \right] \cdot \Pr \left[\mathcal{E}_j \right] \\
    	&\geq (1-2n^{-2}) \cdot (1-2n^{-5}) > (1 - 3n^{-2}).
    \end{align*}
    When applying union bounds over all $j \in L(t)$ this implies that, \whp, 
    \[
    	\forall j \in L(t) : ~X_j(t+2) < c' \sqrt{n / \log n}
    \]
    The remaining analysis will now differ from case 2. We  show that after \Decision part $t+2$, most opinions in $L(t)$ will vanish. Observe that
    \[
    	\Pr \left[Y_j(t+2) = 0 ~|~ X_j(t+2) \leq c' \sqrt{n / \log n} \right] \geq \left(1 - \frac{c' \sqrt{n}}{n \cdot \sqrt{\log n}} \right)^{c' \sqrt{n / \log n}} \geq 1 - \frac{(c')^2}{\log n},
    \]
	where the probability on the right describes the situation where not a single agent of opinion $j$ manages to sample an agent of opinion $j$ throughout the \Decision part. 
	For every $j \in L(t)$ we now define an indicator random variable $Z_j$, where $Z_j= 1$ iff $Y_j(t+2) > 0$, and $Z_j = 0$ otherwise. This way, $Z:= \sum_{j \in L(t)} Z_j$ describes the number of opinions in $L(t)$ that survive. Above, we established that $\Pr[Z_j = 1] \leq (c')^2 / \log n$ which implies $\Ex{Z} \leq (c')^2 \cdot |L(t)| / \log n$. In order to apply the Chernoff bound in \cref{General_Upper_Chernoff_bound}, we further bound $\Ex{Z} \leq \max \{ (c')^2 \cdot |L(t)| / \log n~,~ \log n\} =: \mu_z$. By \cref{General_Upper_Chernoff_bound} we get for $\delta = 1/2$ that 
	\[
		\Pr \left[Z > \max \{ (c')^2 \cdot |L(t)| / \log n~,~ \log n\} \cdot (1 + 1/2)\right] < \exp\left(-\frac{\delta^2 \mu_z}{ 2+ \delta} \right) = n^{-\BigOmega{1}}
	\]
Next, we count the number of opinions that survive \Decision part $t+2$. We assume the worst-case of every opinion $i \not \in L(t)$ surviving and combine this with above result. This way, \whp, at most 
\begin{equation}
\label{eq:lem18-10}
	 (k - |L(t)|)  + \frac{3}{2} \max \left\{ \frac{(c')^2}{\log n} \cdot |L(t)| ~,~ \log n \right\}
\end{equation}
opinions remain. Remember, initially we defined $L(t)$ as the set of opinions of size at most $2\sqrt{n} \log n$. By a counting argument it follows that $k - |L(t)| \leq \sqrt{n} / (2 \log n)$ as at most $\sqrt{n} / (2 \log n)$ opinions can have support greater than $2\sqrt{n} \log n$. Additionally, due to the regime of $k$ we consider in this section, we have $|L(t)| \leq k < \sqrt{n} / c_k$ for some arbitrary large constant $c_k >0$. We use this to loosen the bound on the remaining opinions in (\ref{eq:lem18-10}) to  
\[
	 (k - |L(t)|)  + \frac{3}{2} \max \left\{ \frac{(c')^2}{\log n} \cdot |L(t)| ~,~ \log n \right\} \leq \frac{\sqrt{n}}{2 \log n} + \frac{3 (c')^2 \sqrt{n}}{2 c_k \log n} \leq \frac{\sqrt{n}}{\log n}.
\]
 The last step follows from the fact that $c_k=4 \cdot (2625 + c_p)^2 \geq  3(c')^2 = 3 \cdot (2625 + c_p)^2$ (see \cref{def:Constant}). 
Therefore, \whp, only $\sqrt{n} / \log n$ remain after \Decision part $t+2$ and the second statement of \cref{lem:log32-to-end} follows.
\end{proof}

\subsection{Consensus for $k > \sqrt{n}/c_k$}
\label{sec:details-analysis-case3}

\proCaseThree*
\begin{proof}[Proof]
Fix an arbitrary configuration $\X(t)=\x$ and assume that it has $k(t) > \sqrt{n} / c_k$ opinions with non-zero support. Let $\mathcal{K}_s$ denote the set of opinions that have support at most $2c_k \sqrt{n}$ in $\x$.
Conceptually, this set contains opinions of small support that are likely to vanish.
We fix some $i \in \mathcal{K}_s$ and model the number of decided agents of opinion $i$ as  $Y_i(t) \sim \Bin(x_i, x_i / n)$ (see \cref{obs:bin}). This way, we have that 
\begin{equation}
\label{eq:last-prop1}
	\Pr \left[Y_i(t) = 0 \right] = \left( 1 - \frac{x_i}{n} \right)^{x_i} \geq \left( 1 - \frac{2c_k}{\sqrt{n}}\right)^{2c_k \sqrt{n}} \geq e^{-4c_k^2} := p
\end{equation}
Hence, with constant probability $1>p>0$, such a small opinion $i$ will not have a single decided agent at the start of \Boosting part $t$.
We now define an indicator random variable $Z_i$ for each opinion $i \in \mathcal{K}_s$, where $Z_i =1$ iff $Y_i(t) = 0$ and $Z_i = 0$ otherwise.
This way, $Z := \sum_{i \in \mathcal{K}_s} Z_i$ describes the number of small opinions that have not a single decided agent at the start of \Boosting part $t$.
Note that the random variables $Z_i$ are independent and by (\ref{eq:last-prop1}) we have $\Ex{Z} \geq |\mathcal{K}_s| \cdot p$. 
Before applying Chernoff bounds on $Z$, we take a closer look at $|\mathcal{K}_s|$. By a counting argument, we have that
$|\mathcal{K}_s| > k(t) - \sqrt{n} / (2c_k)$ as at most $\sqrt{n} / (2c_k)$ opinions can have support of more than $2c_k \sqrt{n}$. Initially we assumed that $k(t) > \sqrt{n} /c_k$, this further implies that $|\mathcal{K}_s| > k/2$. Therefore, we can continue the bound on $\Ex{Z}$ as 
\[
	\Ex{Z} \geq |\mathcal{K}_s|  \cdot p \geq k(t) \cdot \frac{p}{2}
\]
A Chernoff bound application now easily shows that, \whp, $Z > k(t) \cdot \frac{p}{3} = \BigOmega{k(t)}$. Throughout the following \Boosting part $t$, no agent will adopt any of these $Z$ opinions. Hence, if there exists at least one decided agent of another opinion, then these $Z$ opinions will vanish forever \footnote{Note, if there is not a single decided agent after the decision part of phase $t$, then the configuration of opinions does not change throughout phase $t$, i.e., $\X(t+1) = \X(t)$}.
In other words if $\norm{\Y}_1 = \sum_{i=1}^{k(t)} Y_i(t) > 0$ \emph{and} $Z > k(t) \cdot \frac{p}{3} = \BigOmega{k(t)}$, then we have $k(t+1) \leq k(t) - k(t) \cdot \frac{p}{3}$.
Observe that 
\[
	\Pr \Big[ \norm{\Y}_1  > 1 \Big] = 1 - \prod_{i=1}^{k(t)} \left(1 - \frac{x_i}{n} \right)^{x_i} > 1 - \left(1 - \frac{1}{n}\right)^{n} = 1 - \frac{1}{e}.
\]
When combining this result with the bound on $Z$ via union bounds, we have that
\[
 \Pr \left[Z > \frac{k(t)}{3} \cdot p \land \norm{\Y}_1 > 0 \right] \geq 1 - \frac{1}{e} - n^{-\BigOmega{1}} > \frac{1}{2}
\]
Summarizing, we showed that, with probability at least 1/2, we have that $k(t+1) < k(t) - \frac{k(t) \cdot p}{3} =  k(t) \cdot (1 - p/3)$, i.e., the number of remaining opinions shrinks by a constant factor until \Decision part $t$.

In the following we will call a phase $\hat{t}$ successful, if either $k(\hat{t}+1) < k(\hat{t}) (1 - p/3)$ or $k(\hat{t}) < \sqrt{n} / c_k$. We just showed that with probability at least $1/2$ a phase $\hat{t}$ is successful. Even if a phase $\hat{t}$ is not successful, then $k(\hat{t} + 1) \leq k(\hat{t})$ as the number of distinct opinions is non-increasing over time. By a Chernoff bound application, we have that in a sequence of $3 \log_{1 / (1 - p/3)} n = \BigTheta{\log n}$ phases at least $\log_{1 / (1 -p/3)} n$ will be successful. 
From phase $0$ to $3 \log_{1 / (1 - p/3)} n$ we will therefore have $\log_{1 / (1 - p/3)} n$ successful phases \whp and
\begin{align*}
    k\cdot(3 \log_{1 / (1 - p/3)} n) &\leq \max\{ \sqrt{n} / c_k ~,~ k(0) \cdot (1 - p/3)^{\log_{1 / (1 - p/3)} n} \} \\
   &\leq \max\{ \sqrt{n} / c_k ~,~ n \cdot (1 - p/3)^{\log_{1 / (1 - p/3)} n} \} = \sqrt{n} / c_k.
\end{align*}

\medskip

It remains to show that insignificant opinions at the start of phase $0$ remains significant until phases $3 \log_{1 / (1 - p/3)} n$. To avoid repetition, we refer to the proof of \cref{pro:case-1} where we show that  $\Sigset{\X(t + t')} \subseteq \Sigset{\X(t)}$ for $t'=\BigO{\log n}$.
This statement follows from \cref{lem:initial-bias}, which states that insignificant opinions remains insignificant from one \Decision part to the next.
Finally, we note that $|\Sigset{\X(t')}| > 0$ must hold as the opinion with the maximum support is always significant.

\end{proof}

\subsection{Proof of \cref{thm:main-result-population-model}}
\label{sec:details-proof-of-main-result}

We first show that the correctness of the synchronization follows from \cite{DBLP:conf/soda/AlistarhAG18,DBLP:journals/rsa/PeresTW15}.
There it is shown that for a polynomial number of phases and for any pair of agents $u$ and $v$ the distance between $\Clock u$ and $\Clock v$ w.r.t.\ the circular order modulo $6\tau\log{n}$ is less than $\tau \log{n}$, \whp.
The choice of $\tau$ also ensures that every undecided agent is able to adopt an opinion in the boosting part of a phase, \whp.

\begin{proof}[Proof of Synchronization Properties]
In every interaction every agent is either in the decision part or the boosting part of a fixed phase.
We call an agent $u$ \emph{active in a decision part} as long as $\DecisionFlag u = \tfalse$.
An agent $u$ is \emph{active in a boosting part} as long as $\Clock u \leq 5\tau\log{n}$.
Furthermore, we define $P_u(\theta)$ as the number of the phase to which agent $u$ belongs in interaction $\theta$.
Intuitively, we aim to show that the leaderless phase clock separates the phases of agents such that no agent is active in a decision part while another agent is active in the boosting part at the same time.
Recall that the leaderless phase clock works as follows.
The clock of agent $u$ uses the variable $\Clock u$ which can take values in $\set{0,\ldots, 6\tau \log n-1}$ for a suitably chosen constant $\tau$.
The circular order modulo $m$, $a \leq_{(m)} b$, is defined as $a \leq_{(m)} b \equiv (a \leq b ~\text{\textsc{xor}}~ \abs{a - b} > m/2)$,
and the distance w.r.t.\ the circular order modulo $m$ is defined as $ \min\set{\abs{a - b},~ m - \abs{a - b}}$.
In every interaction $(u,v)$, the smaller of the two values $\Clock u$ and $\Clock v$ is increased by one modulo $6\tau\log{n}$.
Here, smaller refers to the circular order modulo $6\tau\log{n}$.
For the correctness of our protocol it is sufficient that the following synchronization properties hold for a polynomial number of interactions.
\begin{enumerate}
\item For any pair of agents $u$ and $v$ we have $P_u(\theta)=P_v(\theta)\pm 1$.
\item Assume agent $u$ with $P_u(\theta) = t$ is interacting at time $\theta$, and $u$ is active in the decision part of phase $t$.
Then there exists no agent $v$ that is already active in the boosting part of phase $t$ or still active in the boosting part of phase $t-1$.
\item Assume agent $u$ with $P_u(\theta) = t$ is interacting at time $\theta$, and $u$ is active in the boosting part of phase $t$.
Then there exists no agent $v$ that is already active in the decision part of phase $t+1$ or still active in the decision part of phase $t$.
\item Let $Z(t)$ be defined as the interval of interactions during which all agents $u$ are together and active in the boosting part of the same phase $t$, i.e., \[ Z(t) = \bigcap_{u}\set{t | P_u(\theta) = t \text{ and } 2\tau\log{n} \leq \Clock u(t) \leq 5\tau\log n}.\]
Then for each $1 \leq t \leq \poly{\log{n}}$ we have $\abs{Z(t)} > n \tau \log{n}$.
\end{enumerate}

The first condition directly follows from \cite{DBLP:conf/soda/AlistarhAG18,DBLP:journals/rsa/PeresTW15}.
There it is shown that for a polynomial number of phases and for any pair of agents $u$ and $v$ the distance between $\Clock u$ and $\Clock v$ w.r.t.\ the circular order modulo $6\tau\log{n}$ is smaller than $\tau \log{n}$, \whp.

To show the second and third conditions, it suffices to show that a) no agent becomes active in the boosting part of a phase $t$ while another agent is still active in the decision part of phase $t$, and b) no agent becomes active in the decision part of a phase $t+1$ while another agent is still active in the boosting part of phase $t$.

To show a), fix a phase $t \leq \poly\log{n}$ and let $\theta$ be the first interaction in which any agent $u$ has $P_u(\theta) = t$ and $\Clock u = 2\tau\log{n}-1$.
As before, observe that $\Clock u$ and $\Clock v$ differ by less than $\tau\log{n}$ for any pair of agents $u$ and $v$ at interaction $\theta$ \whp.
Hence at interaction $\theta$, no other agent $v$ has a clock value $\Clock v \leq \tau\log{n}$ \whp.
It follows that every other agent $v$ has already set $\DecisionFlag v = \ttrue$ at interaction $\theta$ and thus is not active in the decision part \whp.
This guarantees (\whp) a clean separation between decision parts and boosting parts.
(Technically, it is also necessary that agent $v$ has been at least once the left agent in an interaction pair $(v,w)$. This condition follows from a simple Chernoff bound, since every agent was part of at least $\tau\log{n}$ many interactions and in each interaction an involved agent is the left agent with probability $1/2$.)

To show b), fix again a phase $t \leq \poly\log{n}$ and let $\theta$ be the first interaction in which any agent $u$ has $P_u(\theta) = t$ and $\Clock u = 6\tau\log{n}-1$.
As before, observe that $\Clock u$ and $\Clock v$ differ by less than $\tau\log{n}$ for any pair of agents $u$ and $v$ at interaction $\theta$ \whp.
Hence at interaction $\theta$, no other agent $v$ has a clock value $\Clock v \leq 5\tau\log{n}$ \whp, and thus no other agent is active in interaction $\theta$ \whp.
This now guarantees (\whp) the clean separation between boosting parts and decision parts.

It remains to show the fourth condition.
Fix a phase $t \leq \poly\log{n}$ and let $z_{\min} = \min Z(t)$ and $z_{\max} = \max Z(t)$.
At interaction $z_{\min}$, there exists an agent $u$ with $\Clock u = 2\tau\log{n}$. 
Hence no agent can have a clock value larger than or equal to $3\tau\log{n}$ \whp (since $\Clock u$ and $\Clock v$ differ by less than $\tau \log{n}$ \whp, see above).
Analogously, at interaction $z_{\max}$, there exists an agent $u$ with $\Clock u = 5\tau\log{n}-1$. As before, no agent can have a clock value smaller than $4\tau\log{n}$ \whp.
It takes at least $n\tau\log{n}$ interactions for all agents to advance their clocks from $3\tau\log{n}$ to $4\tau\log{n}$ and hence $\abs{Z(t)} \geq n\tau\log{n}$ \whp.

\medskip

The fourth condition guarantees that all undecided agents become decided again at the end of a boosting phase.
This holds since the interval of interactions during which all agents are in the boosting part of a phase is long enough for a so-called broadcast to succeed, and the way how agents become decided can be seen as a simple broadcast process.
It is folklore that for a sufficiently large constant $\tau$ a broadcast succeeds within $n \tau \log{n}$ interactions \whp (see, e.g., the notion of \emph{one-way epidemics} in \cite{DBLP:journals/dc/AngluinAE08a}).
Note that all agents becoming decided in the boosting part is also crucial to modeling the boosting part by a \emph{Pólya-Eggenberger distribution}.
\end{proof}

Above, we showed that the clocks properly separate the parts of each phase and guarantee
 \whp, long enough phases (of length $\BigO{\log n}$ time) such that \cref{obs:bin, obs:polya} hold \whp. In the rest of the section we assume that this indeed holds.
The proof for the second part of \cref{thm:main-result-population-model} closely resembles the proofs conducted in \cite{DBLP:conf/podc/GhaffariP16a,DBLP:conf/icalp/BerenbrinkFGK16}.
Note that it is not sufficient to adapt the proof of Part 1 of the theorem since the number of phases given in \cref{thm:main-result-population-model} for the biased case is $\BigO{\log n}$, which is too high. The proof of the second part relies on the two technical \cref{lem:23, lem:24}, which we present at the end of this section.

\begin{proof}[Proof of Part II of \cref{thm:main-result-population-model}]
In case $k \leq \sqrt{n} / \log n$ it follows directly from \cref{lem:23} that after $\BigO{\log \log_{\alpha} n}$ phases all agents agree on the initial majority opinion \whp. In case of $k > \sqrt{n} / \log n$ we first relay on \cref{lem:23}, where we establish that after $\BigO{\log \log_{\alpha} n + \log \log n}$ time the initial majority grows to size $(5/8)\cdot n$ \whp. Then, we apply \cref{lem:24}, which shows that after further $\BigO{\log \log n}$ phases again all agents agree on this majority opinion \whp.
The runtime of Part II of the theorem follows as each phases lasts for $\BigO{\log n}$ time.
This concludes the proof.
\end{proof} 

 \begin{lemma}
 \label{lem:23}
      Fix $\X(0) = \x(0)$ and let $i^*$ be an opinion that provides the initial maximum.
      Assume $\x(0)$ has an additive bias of at least $\sDelta\sqrt{n\log{n}}$ and a multiplicative bias of $\alpha$.
        \begin{enumerate}
            \item If $k \leq \sqrt{n} / \log n$, then all agents agree on opinion $i^*$ in $\BigO{\log \log_{\alpha} n}$ phases, \whp.
            \item If $k > \sqrt{n} / \log n$, then for $t^*=\BigO{\log \log_\alpha n + \log \log n}$ we have $X_{i^*} (t^*) > (5/8) \cdot n$ \whp.
        \end{enumerate}
 \end{lemma}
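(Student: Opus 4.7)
The plan is to adapt the ratio-squaring argument from the proof of \cref{thm:one-weak} to exploit the multiplicative bias $\alpha$. The key observation is that \cref{lemma:ratio} turns a ratio $X_{i^*}/X_j = \rho$ into $\rho^{1.5}$ in one phase, so iterating it gives a ratio of at least $\alpha^{1.5^t}$ after $t$ phases, which crosses $n$ once $t = \BigO{\log\log_\alpha n}$.

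For Claim~1 ($k \leq \sqrt{n}/\log n$), I would fix an arbitrary competitor $j \neq i^*$ and iterate \cref{lemma:ratio} on the pair $(i^*, j)$. Two side conditions must be maintained throughout the induction: (i) the additive gap $X_{i^*}(t) - X_j(t) \geq \sDelta\sqrt{n\log n}$ and (ii) $X_j(t) \geq c_w\sqrt{n\log n}$. Condition~(i) is preserved inductively exactly as in the closing paragraphs of the proof of \cref{thm:one-weak}: apply \cref{lemma:one-phase-conc} separately to both opinions with $\delta = \Theta(\sqrt{\log n})$ and use $(x_{i^*}+x_j)/\psi > 9/5$ (since $i^*$ remains strong and $\psi \leq x_{max}$) to absorb the $\sqrt{n}\delta$ deviation. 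Either condition~(ii) eventually fails---in which case $j$ enters the super-weak regime and by \cref{lem:weak-to-super-weak} stays there---or the ratio crosses $n$ within $\BigO{\log\log_\alpha n}$ phases, forcing $X_j$ down to zero. Union-bounding over all $k \le n$ competitors, every competing opinion is super-weak after $\BigO{\log\log_\alpha n}$ phases, and \cref{lem:all-super-weak} then yields consensus on $i^*$ in two additional phases.

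For Claim~2 ($k > \sqrt{n}/\log n$), where we only need to drive $X_{i^*}$ past $5n/8$, I would unroll the same ratio-squaring into a recurrence for $X_{i^*}$ itself. From \cref{lemma:one-phase-conc} together with the upper bound $\psi(t) \leq x_{i^*}(t)^2/n + x_{i^*}(t)/\alpha_t$ (which uses $x_j \leq x_{i^*}/\alpha_t$ for every $j \neq i^*$, where $\alpha_t$ denotes the current multiplicative bias), one obtains $X_{i^*}(t+1) \geq (\alpha_t/2)\cdot x_{i^*}(t)$ while $\alpha_t\cdot x_{i^*}(t) \leq n$. Since \cref{lemma:ratio} gives $\alpha_t \geq \alpha_0^{1.5^t}$, telescoping yields $x_{i^*}(t) \geq x_{i^*}(0)\cdot \alpha_0^{\Omega(1.5^t)}/2^t$; setting the right-hand side to $5n/8$ gives $\BigO{\log\log_\alpha n}$ main-growth phases. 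The extra $\BigO{\log\log n}$ summand accounts for a short preparatory stage needed while $x_{i^*}$ sits in the regime $\sqrt{n\log n} \leq x_{i^*} < \sqrt{n}\log^{3/2} n$: here \cref{lem:k-to-sqrtlog} lifts $X_{i^*}$ over the threshold at which \cref{lemma:one-phase-conc} gives concentration tight enough for the telescoping above to kick in.

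I expect the main obstacle to be the bookkeeping of the side conditions along the induction, in particular verifying that both $X_{i^*} - X_j \geq \sDelta\sqrt{n\log n}$ and the strong-opinion status $X_{i^*} \geq 0.9\cdot X_{max}$ are preserved as the ratio grows. A secondary subtlety specific to Claim~2 is that the bound $x_j \leq x_{i^*}/\alpha_t$ must be justified using the current rather than the initial bias; this follows by combining the growing $\alpha_t$ with \cref{lem:initial-bias}, which ensures that no insignificant opinion can re-enter the significant range, so that $X_{(2)}$ stays at most $X_{i^*}/\alpha_t$ throughout.
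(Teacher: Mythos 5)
Your Claim~1 is essentially the paper's own argument: iterate \cref{lemma:ratio} to raise the ratio to the power $1.5$ each phase, conclude that after $\BigO{\log\log_\alpha n}$ phases every competitor is super-weak, and finish with \cref{lem:all-super-weak}. The paper packages this through the quantities $\alpha(t)=X_{max}(t)/X_{sec}(t)$ and $\gamma(t)=\min\{\alpha(t),X_{max}(t)/(c_w\sqrt{n\log n})\}$ and maintains the additive gap by invoking \cref{lem:initial-bias} rather than redoing the $(x_i+x_j)/\Decided>9/5$ computation, but the substance is the same and your bookkeeping plan is workable.

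Claim~2, however, has a genuine gap, and it sits exactly where the case $k>\sqrt{n}/\log n$ is hard. Your telescoping rests on the assertion that \cref{lemma:ratio} gives $\alpha_t\ge\alpha_0^{1.5^t}$. But \cref{lemma:ratio} requires the smaller opinion to have support at least $c_w\sqrt{n\log n}$, and for $k>\sqrt{n}/\log n$ the second-largest opinion falls below that threshold within a few phases (and the support of $i^*$ itself may still be only $\BigTheta{\sqrt{n\log n}}$ at that point). Once $X_{sec}(t)<c_w\sqrt{n\log n}$, the only available upper bound on $X_{sec}(t+1)$ is the coarse one from \cref{lemma:weak-phase-conc}, which yields $\alpha_{t+1}$ of order $x_{i^*}(t)^2/(n\log n)$ only; this can be a bounded constant even when $\alpha_t$ is already polynomially large (take $X_{sec}$ tiny while $x_{i^*}=\BigTheta{\sqrt{n\log n}}$), so $\alpha_{t+1}\ge\alpha_t^{1.5}$ is simply false in this regime and the telescoped product collapses. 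The paper's fix is to run the doubly-exponential recursion on $\gamma(t)=\min\{\alpha(t),X_{max}(t)/(c_w\sqrt{n\log n})\}$ instead (\cref{obs:part-B}: $\gamma(t+1)\ge\gamma(t)^{5/4}$ as long as $X_{max}(t)^2/n<X_{sec}(t)/3$), and to convert a large $\gamma$ into $X_{max}\ge(5/8)n$ only in the single final phase where $X_{max}^2/n\ge X_{sec}/3$ forces $\Decided\le(4/3)X_{max}^2/n$. Your proposed fallback for the intermediate regime---letting \cref{lem:k-to-sqrtlog} carry $x_{i^*}$ across $[\sqrt{n\log n},\sqrt{n}\log^{3/2}n]$---also cannot deliver the stated bound: that lemma only guarantees an additive gain of $\sqrt{n\log n}/60$ per phase, so crossing a window of width $\BigTheta{\sqrt{n}\log^{3/2}n}$ costs $\BigTheta{\log n}$ phases, not $\BigO{\log\log n}$, which would destroy the runtime claimed in Part~2.
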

 \begin{proof}
    In this proof, we will use the following notions with respect to some configuration $\X(t)$: Remember that $X_{max}(t)$ is the support of a largest opinion and let $X_{sec}(t)$ denote the support of a second-largest opinion in $\X(t)$. Furthermore, we define
    \[
        \alpha(t) := X_{max}(t)/ X_{sec}(t) \quad \text{ and } \quad \gamma(t) := \min \Big\{\alpha(t) ~,~ X_{max}(t) / (c_w \sqrt{n \log n}\Big\}.
    \]
    To facilitate the proof we first show three intermediate results: \cref{obs:part-A, obs:part-B, obs:part-C}.
    \begin{observation}
    \label{obs:part-A}
        Fix $\X(t) = \x(t)$. If $x_{max}(t) -  x_{sec}(t) > \sDelta \sqrt{n \log n}$ then
        \begin{equation}\label{eq:exponent}
             \alpha(t+1) > \gamma(t)^{3/2}
    \end{equation}
    \end{observation}
    \begin{proof} We start by considering a fixed opinion $j$ with $x_{max} > x_j \geq c_w \sqrt{n \log n}$.
    An application of \cref{lemma:ratio} immediately yields with probability $1-n^{-2}$ that 
    \begin{equation}
    \label{eq:lemma21-square}
     \frac{X_{max}(t+1)}{X_j(t+1)} \overset{\cref{lemma:ratio}}{\geq }\left( \frac{x_{max}}{x_j} \right)^{1.5} \geq \alpha(t)^{1.5} \geq \gamma(t)^{3/2}.
    \end{equation}
    On the other hand, consider now some fixed opinion $j$ with $x_j < c_w \sqrt{n \log n}$.
    We lower and upper bound the support of the largest opinion and $j$, respectively. We define $c_1 :=\sqrt{(\ln 7+2\log n)/(\concConstEps\log n)}$ and apply \cref{lemma:one-phase-conc} with $\delta=c_1\sqrt{\log n}$ and \cref{lemma:weak-phase-conc} with $c=c_w$ we get that  with probability at least $1-5n^{-2}$
    \begin{align*}
        \frac{X_{max}(t+1)}{X_{j}(t+1)}
        &\geq \frac{ \frac{x_{max}^2}{\Decided} \cdot \left( 1- \frac{c_1 \cdot \sqrt{n\log n}}{x_{max}} \right) }{ \frac{n}{\Decided} \cdot (12c_w^2 + 74\polyaConstLarge) \cdot \log n } = \left( \frac{x_{max}}{\sqrt{(12c_w^2 + 74\polyaConstLarge) \cdot n\log{n}}} \right)^2 \cdot  \left( 1- \frac{c_1 \cdot \sqrt{n\log n}}{x_{max}} \right) \\
        &=   \left( \frac{x_{max}}{\sqrt{ n\log{n}}} \right)^{3/2} \cdot \left( \frac{x_{max}}{\sqrt{ n\log{n}}} \right)^{1/2} \cdot \frac{1}{(12c_w^2 + 74\polyaConstLarge)} \cdot   \left( 1- \frac{c_1 \cdot \sqrt{n\log n}}{x_{max}} \right) \\
        &\overset{(a)}{\geq} \left( \frac{x_{max}}{\sqrt{ n\log{n}}} \right)^{3/2} \cdot  \sqrt{\sDelta}  \cdot \frac{1}{(12c_w^2 + 74\polyaConstLarge)} \cdot   \left( 1- \frac{c_1}{\sDelta} \right) \\
        &\overset{(b)}{\geq} \left( \frac{x_{max}}{\sqrt{ n\log{n}}} \right)^{3/2} \cdot \frac{\sqrt{\sDelta}}{24c_w^2 + 148\polyaConstLarge}  
        \overset{(c)}{\geq} \left( \frac{x_{max}}{\sqrt{ n\log{n}}} \right)^{3/2} \cdot \left(\frac{1}{c_w}\right)^{3/2}
        \geq \gamma^{3/2}.
    \end{align*}
    in which for (a) we use $ x_{max}/(c_w \sqrt{n\log{n}}) < x_{max}/x_j$ and  $x_{max} \geq \sDelta \sqrt{n\log{n}}$, for (b) we utilize $1-c_1/\sDelta>1/2$ and for (c) we consider  $\sDelta=(160\cdot c_w)^2+(148\cdot \polyaConstLarge)^2\ge ((24c_w^2+148\polyaConstLarge)/c_w^{3/2})^2$ (see \cref{def:Constant}).
    Summarizing, we show for any fixed opinion $j$ with $x_j < x_{max}$ we have with probability at least $1-O(n^{-2})$. 
    \[
        \frac{X_{max}(t+1)}{X_j(t+1)} \geq \gamma(t)^{3/2}.
    \]
    A union bound over all opinions $j$ yields that, \whp, $\alpha(t+1) > \gamma(t)^{3/2}$ as desired.
    \end{proof}
    
    \begin{observation}
    \label{obs:part-B}
        Fix $\X(t) = \x(t)$.  If (i) $x_{max}(t) - x_{sec}(t) \geq \sDelta \cdot \sqrt{n \log n}$ , (ii) $\gamma(t) \geq \sDelta / c_w = 1 + \Omega(1)$, and (iii)  $x_{max}(t)^2 / n <  x_{sec}(t)/3$, then \whp
        \[
            \gamma(t+1) \geq \gamma(t)^{5/4}.
        \]
    \end{observation}
    \begin{proof} We again fix a configuration $\X(t) = \x$ at the start of some phase $t$ and assume that $\x$ fulfills the requirements of the observation. We will show that \whp,
    \[
        \frac{X_{max}(t+1)}{ c_w \cdot\sqrt{n \log n}} \geq \gamma(t)^{5/4}.
    \]
    Note that, together with the result of \cref{obs:part-A} this also implies $\gamma(t+1) \geq \gamma(t)^{5/4}$.
    
    To show this, we distinguish between two cases. First consider the case of  $x_{sec} < c_w \cdot \sqrt{n \log n}$.
    We apply \cref{lemma:one-phase-conc} together with $\delta = c_1 \cdot\sqrt{n \log n}$ for $c_1 =\sqrt{(\ln 7+2\log n)/(\concConstEps\log n)}$. This yields for $\Decided = \sum_{j=1}^{k} x_j^2 / n$  and with probability $1-5n^{-2}$ that 
    \[
        X_{max}(t+1) \overset{\cref{lemma:one-phase-conc}}{>} \frac{x_{max}^2}{\Decided} \cdot \left(1 - \frac{c_1 \sqrt{n \log n}}{x_{max}} \right) > \frac{x_{max}^2}{\Decided} \left( 1 - \frac{c_1}{\sDelta}\right) > \frac{x_{max}^2}{\Decided} \left(1 - \frac{1}{100} \right).
    \]
    Here we used in the second step that $x_{max} > \sDelta \sqrt{n \log n}$ as per assumption (i) and the last step follows from the definition of the constants $c_1$ and $\sDelta$ (see \cref{def:Constant}).
    To further simplify above result, we note the following. First,  $\Decided = \sum_{j=1}^{k} x_j^2 / n \leq x_{max}^2 / n +  x_{sec}$ is true for any configuration $\x$. Second, because we additionally assume $x_{sec} <  c_w \sqrt{n \log n}$ and $x_{max}^2 / n < x_{sec}/3$ this further implies $\Decided \leq (1 + 1/3) c_w \cdot\sqrt{n \log n}$. When using this we get
    \[
        X_{max}(t+1) \geq \frac{x_{max}^2}{(1 + 1/3) \cdot c_w \cdot \sqrt{n \log n}} \cdot \left(1 -  \frac{1}{100} \right) > \frac{x_{max}^2}{c_w \cdot\sqrt{n \log n}} \cdot \frac{5}{8} = x_{max} \cdot  \gamma(t) \cdot \frac{5}{8}.
    \]
    When dividing by $c_w \cdot\sqrt{n \log n}$ on both ends of this inequality chain, we get
    \[
        \frac{X_{max} (t+1)}{c_w \cdot\sqrt{n \log n}} \geq \frac{x_{max}}{c_w \cdot \sqrt{n \log n}} \cdot \gamma(t) \cdot \frac{5}{8} \geq \gamma(t)^{2} \cdot \frac{5}{8} > \gamma(t)^{5/4}.
    \]
    In the last step we rely on assumption (ii), which implies that $\gamma(t) > \sDelta / c_w \gg 100$.
    
    The remaining case we need to consider is $x_{sec} \geq c_w \sqrt{n \log n}$. Let $j$ denote an opinion with $x_j = x_{sec}$.
    First, we argue  that $X_{j}(t+1) > x_{j} / 2$.
    To that end we again apply \cref{lemma:one-phase-conc} with $\delta = c_1 \cdot\sqrt{n \log n}$  which yields with probability $1-5n^{-2}$ that
    \[
        X_{j}(t+1) \overset{\cref{lemma:one-phase-conc}}{>} \frac{x_{j}^2}{\Decided} \cdot \left(1 - \frac{c_1 \cdot\sqrt{n \log n}}{x_{j}} \right) > \frac{x_{j}^2}{\Decided} \cdot \frac{2}{3}.
    \] 
    In the second step, we used $x_{j} > c_w \cdot\sqrt{n \log n}$ and  $c_w > 3c_1$ (see \cref{def:Constant}).
    Just as before, we argue that $\Decided \leq x_{max}^2 / n +  x_{j} < (1 + 1/3)\cdot x_{j}$ is implied by assumption (i).  This implies
    \[
        X_{j}(t+1) > \frac{x_{j}^2}{\Decided} \cdot \frac{2}{3} \geq  x_{j} \cdot \frac{1}{(1 + 1/3)} \cdot \frac{2}{3}  \geq \frac{x_{j}}{2}
    \]
    When first using that $x_{j} \ge c_w \cdot\sqrt{n \log n}$ followed by above inequality in the next step, we get 
    \begin{equation}
    \label{eq:lemma21-22}
        \frac{X_{max}(t+1)}{c_w \cdot\sqrt{n \log n}} \geq \frac{X_{max}(t+1)}{x_{j}} > \frac{1}{2} \cdot \frac{X_{max}(t+1)}{X_{j}(t+1)}.
    \end{equation}
     Recall, throughout the proof of \cref{obs:part-A}, we established in inequality  (\ref{eq:lemma21-square}) that $X_{max}(t+1)/X_j(t+1)>(x_{max}/x_j)^{1.5}$ in case opinion $j$ has $x_j > c_w \cdot\sqrt{n \log n}$. As this is indeed the case, we have \2whp
    \[
        \frac{X_{max}(t+1)}{X_{j}(t+1)} \overset{(\ref{eq:lemma21-square})}{>} \left( \frac{x_{max}}{x_{j}} \right)^{3/2} \geq \gamma(t)^{3/2}
    \]
    Now, when combining this with (\ref{eq:lemma21-22}) we get \2whp that
    \[
       \frac{X_{max}(t+1)}{c_w \sqrt{n \log n}}  \geq \frac{1}{2} \cdot \left(\frac{X_{max}(t+1)}{X_{j}(t+1)}\right)^{3/2} = \frac{1}{2} \gamma(t)^{3/2} > \gamma(t)^{5/4}.
    \]
    Where we use in the last step that per assumption (ii) $\gamma(t) > \sDelta / c_w \gg 100$ is large enough (see \cref{def:Constant}).
    \end{proof}
    
    \begin{observation}
    \label{obs:part-C}
    Assume $\X(0)$ has an additive bias of at least $\sDelta \sqrt{n \log n}$. Then, \whp, it holds for all $t < \log^2 n$ that 
    \begin{enumerate}
        \item $X_{max}(t) - X_{sec}(t) > \sDelta \sqrt{n \log n}$ and 
        \item $\Sigset{\X(t)} = \Sigset{\X(0)}$
    \end{enumerate}
     The first statement implies that the bias does not fall below $\sDelta \sqrt{n \log n}$ in the first $\log^2 n$ rounds.
    \end{observation}
    \begin{proof} 
    To show this observation, we first consider some fixed phase $t$ and assume that $X_{max}(t) - X_{sec}(t) > \sDelta \sqrt{n \log n}$. Note that this is equivalent to  assuming that $|\Sigset{\X(t)}| = 1$, or in other words, assuming that there exists exactly one significant opinion. From \cref{lem:initial-bias} it now follows that $\Sigset{\X(t+1)} \subseteq \Sigset{\X(t)}$ \whp. On the other hand, we know that  $\Sigset{\X(t+1)} \neq \emptyset$ as the largest opinion is always significant. When combining these two observations, we therefore get \whp that $\Sigset{\X(t+1)} = \Sigset{\X(t)}$. 
    An inductive application of this approach yields that, \whp,
    \[
        \forall 0 < t \leq \log^2 n: \Sigset{\X(t)} = \Sigset{\X(0)}.
    \]
    This immediately yields the second results of \cref{obs:part-C}. The second result follows as $|\Sigset{\X(t)}| = |\Sigset{\X(0)}|= 1$, which implies the existence of an additive bias of at least $\sDelta \sqrt{n \log n}$ in phase $t$.
    \end{proof}
    
    We are now ready to start with the proof of \cref{lem:23}.
      We first repeatedly apply the result of \cref{obs:part-A} until we hit a phase $t_1$ with $\gamma(t_1) < \alpha(t_1)$. From the definition of $\gamma(t_1)$ this is equivalent to  $X_{sec}(t_1) < c_w \sqrt{n \log n}$.
      Recall, \cref{obs:part-A} states that if in some phase $t$ we have $X_{max}(t) -  X_{sec}(t) > \varepsilon \sqrt{n \log n}$, then $\alpha(t+1) > \gamma(t)^{3/2}$ \whp. In \cref{obs:part-C}, we established that this requirement on the additive bias is fulfilled in the first $\log^2 n$ phases \whp. Therefore, the result of \cref{obs:part-A} is applicable to the first $t' = \log_{3/2} \log_{\alpha(0)} n = \BigO{\log n}$ phases \whp. That is, \whp, we have 
      \[
	    \forall 0 \leq t < t':~ \alpha(t+1) \geq \gamma(t)^{3/2}.
	  \]
	  Now, assume that $t_1 > t'$, i.e., we have $\gamma(t_1) < \alpha(t_1)$ for the first time in some phase after $t'$. Therefore, in every phase $t$ with $t \leq t'$ we have $\gamma(t) = \alpha(t)$. This implies $\alpha(t') \geq \alpha(0)^{(3/2)^{t'}} > n$. This further implies $X_{sec}(t') < c_w \sqrt{n \log n}$ and from the definition of $\gamma(t')$, it follows that $\gamma(t') < \alpha(t')$. This is a contradiction to our assumption of $t_1 >t'$. Hence $t_1 < \log_{5/4} \log_{\alpha(0)} n = \BigO{\log n}$  \whp
	  
	  \medskip
	  
	  Starting with phase $t_1$, we split our analysis into two cases. First assume that $k < \sqrt{n} / \log n$. Per definition of phase $t_1$ we have $X_{sec}(t_1) < c_w \sqrt{n \log n}$. This implies that all opinions besides the first are super-weak in $\X(t_1)$. From \cref{lem:all-super-weak} it follows that in phase $t_1 + 2$ only a single opinion prevails \whp. As $t_1 = \BigO{\log n}$ \whp, the second statement of \cref{obs:part-C} implies that this opinion must be the initially significant opinion. The first statement of \cref{lem:23} follows.
	  
	  \medskip
	  
	   The case of $k \geq \sqrt{n} / \log n$ is more involved. In this case, we need to follow a different approach from phase $t_1$ onward.
	   In the following we define three events. We say that a phase $t$ fulfills
	   \begin{enumerate}
	       \item event $\mathcal{E}_1(t)$ iff $X_{max}(t) - X_{sec}(t) > \sDelta \sqrt{n \log n}$
	       \item event $\mathcal{E}_2(t)$ iff $\gamma(t) \geq \sDelta / c_w$
	       \item event $\mathcal{E}_3(t)$ iff $X_{max}(t)^2 / n < X_{sec}(t) / 3$.
	   \end{enumerate}
	   Observe, if $t$ fulfills all three events, then we may apply \cref{obs:part-B} and get $\gamma(t+1)  \geq \gamma(t)^{5/4}$.
	   We now show that we will soon arrive at a phase $t_1 + t_2$ such that $\mathcal{E}_3(t)$ does not hold.
	   
	   Recall that we have $t_1 = \BigO{\log n}$ \whp. It follows from \cref{obs:part-C} that $\mathcal{E}_1(t_1)$ is fulfilled. Additionally, it follows from the definition of $\gamma(t_1)$ that $\gamma(t_1) = \frac{X_{max}(t_1)}{c_w \sqrt{n \log n}} \geq \sDelta / c_w > 1$. Therefore $\mathcal{E}_2(t_1)$ also holds. Now, in case $\mathcal{E}_3(t_1)$ does not hold, we are finished. Otherwise, we may apply \cref{obs:part-B} and get \whp that $\gamma(t_1+1) \geq \gamma(t_1)^{5/4}$. 
	   
	   Note that, in such case, $\mathcal{E}_2(t_1+1)$ will also be satisfied because $\gamma(t_1 + 1) \geq \gamma(t_1)^{5/4} \geq \sDelta / c_w$.
	   Furthermore, \cref{obs:part-C} still guarantees that, \whp, $\mathcal{E}_1(t_1 + 1)$ still holds. In case $\mathcal{E}_3(t_1 + 1)$ is fulfilled we again apply \cref{obs:part-B} and get $\gamma(t_1 +2 ) \geq \gamma(t_1)^{(5/4)^2}$.
        We repeat this approach for $t' = \log_{5/4} \log_{\sDelta/c_w} n = \BigO{\log \log n}$ phases. Even if there is no $t$ with $0 \leq t < t'$ such that $\mathcal{E}_3(t_1 + t)$ is violated, we have \whp that
	   	\[
	        \gamma(t_1 + t') \geq \gamma(t_1)^{(5/4)^{t'}} > \left(\frac{\sDelta}{c_w} \right)^{(5/4)^{t'}} = n.
	   \]
	   At this point $\mathcal{E}_3(t_1  + t')$ must be violated. 
	   It follows that $t_2 \leq t' = \BigO{\log \log n}$ holds \whp.
	   
	    Now fix the configuration $\X(t_1 + t_2) = \x$ and assume that $\mathcal{E}_3(t_1 + t_2)$ does not hold. This implies that $x_{max}^2 / n \geq x_{sec} / 3$. As $\Decided < x_{max}^2 / n +  x_{sec}$ this further implies $\Decided \leq (x_{max}^2 / n) \cdot  (1 +  1/3)$. We now apply \cref{lemma:one-phase-conc} with $\delta = c_1 \cdot \sqrt{n \log n}$ which yields that
	    \[
	        X_{max}(t_1 + t_2 + 1) \geq \frac{x_{max}^2}{\Decided} \left( 1 - \frac{c_1 \sqrt{n \log n}}{x_{max}}\right) \geq \frac{x_{max}^2}{\frac{x_{max}^2}{n} (4/3)} \left(1 - \frac{1}{100} \right)
            > n \cdot \frac{5}{8}
	    \]
	    Summarizing, in phase $t_1 + t_2 + 1 = \BigO{\log \log_{\alpha(0)} n + \log \log n)}$ we have $X_{max}(t_1 + t_2 +1) > n \cdot (5/8)$ \whp. Additionally, just as in the case of $k \leq \sqrt{n} \log n$, we argue that \cref{obs:part-C} implies that this maximum must be the initially largest opinion.
 \end{proof}

 \begin{lemma}
 \label{lem:24}
     Assume $k > \sqrt{n} / \log n$. Fix $\X(t) = \x(t)$. Assume  $x_1(t)\ge (5/8) \cdot n$. Then all agents agree on this opinion within $\BigO{\log\log{n}}$ phases, \whp.
 \end{lemma}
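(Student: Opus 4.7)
I would track $z(t) := n - X_1(t)$, the total support of opinions other than $1$, and show it shrinks roughly quadratically per phase, yielding doubly-exponential convergence in $\BigO{\log\log n}$ phases. The driving algebraic identity is
\[
  n - \frac{x_1(t)^2}{\Decided} \;=\; \frac{\sum_{j\neq 1} x_j(t)^2}{\Decided} \;\leq\; \frac{n \sum_{j\neq 1} x_j(t)^2}{x_1(t)^2} \;\leq\; \frac{n\, z(t)^2}{x_1(t)^2} \;\leq\; \frac{64}{25}\cdot\frac{z(t)^2}{n},
\]
using $\Decided \geq x_1(t)^2/n$, the trivial bound $\sum_{j\neq 1}x_j^2 \leq (\max_{j\neq 1}x_j)\cdot z \leq z^2$, and $x_1(t) \geq (5/8)n$. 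Applying \cref{lemma:one-phase-conc} to opinion $1$ with $\delta = \BigTheta{\sqrt{\log n}}$ and noting that $x_1/\Decided \leq n/x_1 = \BigO{1}$, this gives
\[
  z(t+1) \;\leq\; \tfrac{64}{25}\,\frac{z(t)^2}{n} \;+\; \BigO{\sqrt{n\log n}}
\]
with probability $1 - n^{-\BigOmega{1}}$.

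Setting $u_t := (64/25)\, z(t)/n$, this reads $u_{t+1} \leq u_t^2 + o(1)$ with $u_0 \leq (64/25)\cdot(3/8) = 24/25 < 1$, so $u_t$ decays doubly exponentially; a union bound over iterations yields $z(t) = \BigO{\sqrt{n\log n}}$ within $\BigO{\log\log n}$ phases \whp. To close the remaining gap I couple the boosting part to a two-colour Pólya-Eggenberger process by treating all opinions $\neq 1$ as a single colour, with $Y_1 \approx x_1^2/n \approx n$ balls of colour $1$ and $\sum_{j\neq 1} Y_j$ balls of the other colour. Since $\Ex{\sum_{j\neq 1} Y_j} = \sum_{j\neq 1} x_j^2/n \leq z^2/n = \BigO{\log n}$, a Chernoff bound combined with \cref{thm:polya_bound_small_support} yields $z(t+1) = \BigO{\log n}$ \whp after one more phase; and one further phase finishes the job, because $\Pr[\sum_{j\neq 1} Y_j = 0] \geq \prod_{j\neq 1}(1-x_j/n)^{x_j} \geq 1 - \BigO{\log^2 n/n}$, and on that event no decided non-$1$ agent exists to propagate, so every undecided agent adopts opinion $1$ in the boosting part and $z$ drops to $0$.

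The main technical obstacle is preserving the quadratic character of the recursion. Critically, the error in \cref{lemma:one-phase-conc} is \emph{additive} of order $\sqrt{n\log n}$, not multiplicative, so the coefficient $64/25$ in the recursion is not inflated and the starting value $u_0 = 24/25$ remains safely strictly below $1$. This additive error pins the process at the threshold $z \sim \sqrt{n\log n}$, precisely where the two-colour Pólya-Eggenberger argument and the subsequent $\sum_{j\neq 1} Y_j = 0$ event close out the remaining $\BigO{\log n}$ non-$1$ agents.
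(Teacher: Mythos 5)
Your proposal is correct, and it takes a genuinely different route from the paper. The paper never tracks the aggregate non-majority mass; instead it first shows (over $4$ phases, via \cref{lemma:weak-phase-conc} and a counting argument) that all opinions of support at most $\sqrt{n}\log n$ die out so that fewer than $\sqrt{n}/\log n$ opinions remain, and then hands off to the machinery of \cref{sec:analysis-case1}: every surviving opinion other than the majority is weak, becomes super-weak within $\BigO{\log\log n}$ phases by the ratio-squaring argument of \cref{lem:weak-to-super-weak}, and is finished off by \cref{lem:all-super-weak}. Your argument instead exploits $x_1 \geq (5/8)n$ directly: the identity $n - x_1^2/\Decided = \bigl(\sum_{j\neq 1}x_j^2\bigr)/\Decided \leq (64/25)\,z^2/n$ turns \cref{lemma:one-phase-conc} (applied only to opinion $1$) into a quadratic contraction of $z(t)=n-X_1(t)$, and the two-colour aggregation of the Pólya urn (which is legitimate, since merging colours in a Pólya--Eggenberger process yields a two-colour Pólya--Eggenberger process) lets you invoke \cref{thm:polya_bound_small_support} once on the lumped minority colour rather than union-bounding over up to $n$ individual opinions. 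Both arguments are doubly-exponential and hence cost $\BigTheta{\log\log n}$ phases, so neither buys a better runtime; what yours buys is self-containedness (no dependence on the weak/super-weak lemmas) and the elimination of per-opinion union bounds in the contraction stage, while the paper's version buys reuse of already-proved lemmas. Two small points to make explicit when writing it up: (i) the monotonicity $z(t+1)\leq z(t)$ \whp (which follows from $(64/25)(3/8)=24/25<1$ plus the additive term being $o(z(t))$ above the $\sqrt{n\log n}$ threshold) is needed to keep $x_1\geq (5/8)n$ and hence the constant $64/25$ valid in every phase; and (ii) your final step succeeds only with probability $1-\BigO{\log^2 n/n}$, which matches the guarantee the paper itself accepts in \cref{lem:all-super-weak} but could be boosted by one additional intermediate phase if a stronger polynomial failure bound were desired.
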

 \begin{proof} 
 
     We start by showing that $4$ phases following $t$, at most $\sqrt{n} / \log n$ opinions will have non-zero support. Let $i$ be an opinion which provides $ \x_{max}$. From the assumption we have $ x_i > (5/8)\cdot n$. Let $L(t)$ denote the set of opinions with support  at most $\sqrt{n} \log n$ and fix an opinion $j \in L(t)$. Since $x_i > (5/8)\cdot n$, it follows that $\Decided := \Ex{\norm{\Y(t)}_1} = \sum_{j=1}^{k} x_j^2 / n \geq \frac{x_i^2}{n} > (25/64)\cdot n$.
     We now distinguish two cases, depending on the size of $x_j$, and show that $X_j (t+1) = \BigO{\log^2 n}$.
     First, assume that $x_j < \sqrt{2/\concConstEps} \cdot \sqrt{n \log n}$ (remember, the constant \concConstEps  is stated in \cref{def:Constant}). In this case, we apply \cref{lemma:weak-phase-conc} and use that $\psi = \BigTheta{n}$ which immediately yields that 
     \begin{equation}
     \label{eq:thm-lemma-2-eq0}
		\Pr \left[ X_j(t+1) = \omega(\log n) \right] < 4n^{-2}.		 
	\end{equation}
     In the case of $x_j >  \sqrt{2/\concConstEps} \cdot \sqrt{n \log n}$, we apply \cref{lemma:one-phase-conc} together with $\delta =x_j / 2\sqrt{n} $ and get 
	\begin{equation}
	\label{eq:thm-lemma-2-eq1}
		\Pr \left[ X_j(t+1) > \frac{x_j^2}{\Decided} \cdot \frac{3}{2} \right] = \Pr \left[ X_j(t+1) > \frac{x_j^2}{\Decided} \left(1 + \frac{\delta\cdot \sqrt{n}}{x_j}\right) \right] < 7 n^{-2}. 
	\end{equation}
	When using that $\Decided > (25/64)\cdot n$ and $x_j < \sqrt{n} \log n$ (remember, $j \in L(t)$) we have $(x_j^2 / \Decided) \cdot (3/2) \leq 4\log^2 n$. The inequality in (\ref{eq:thm-lemma-2-eq1}) then implies that 
	\[
		\Pr \left[ X_j(t+1) \geq 4 \log^2 n \right] < 7 n^{-2}.
	\]
	Hence, in any case, we have that $X_j(t+1) < 4 \log^2 n$ with probability at least $1 - \BigO{n^{-2}}$. By a union bound application, we get that this holds for every $j \in L(t)$ \whp Note that this also implies that $L(t) \subseteq L(t+1)$ \whp
    Additionally, it follows from \cref{lemma:one-phase-conc} that, \whp,  $X_i(t+1) > x_i(1-\LittleO{1})$.
    Now observe that 
     \[
        \Pr[Y_j(t+2) = 0 ~|~ X_j(t+1) < 4 \log^2 n] > \left(1 - \frac{4 \log^2 n}{n}\right)^{4 \log^2 n} > 1 - \frac{\polylog n}{n}.
     \] 
    This implies that some fixed opinion $j \in L(t)$ vanishes after \Decision part $t+2$ \whp. As $L(t) \subseteq L(t+1)$ \whp this approach can be repeated, and we deduce that $Y_j(t+3) = 0$ with probability at least $1 - \polylog n / n^2$.  By a union bound application it follows \whp for every $j \in L(t)$ that $Y_j(t+3) = 0$, and by a counting argument we have that all but $\sqrt{n} / \log n$ opinions lie in $L(t)$. Therefore, at the start of \Decision part $t+4$, only $\sqrt{n} / \log n$ opinions remain.
    
    Additionally, consider how the largest opinion $i$ evolves from the \Decision part of phase $t$ until $t+4$.
    By \cref{lemma:one-phase-conc} we have for $\delta = \sqrt{\log n}$ that, \whp, 
    \begin{equation}
    \label{eq:theorem-part2-2}
        X_i(t+1) > \frac{x_i^2}{\Decided} - \frac{x_i}{ \Decided} \sqrt{n \log n}
    \end{equation}
    The inequality $\Decided = \sum_{i=1}^{k} x_i^2 / n \geq x_{max}$ is true for every configuration $\x$ and additionally we assumed $x_i = x_{max}$. We use this to lower-bound the right-hand side of  (\ref{eq:theorem-part2-2}) and get that $X_i(t+1) > x_i (1 - \LittleO{1})$ \whp Note, as $x_i \geq (5/8)\cdot n$, this easily implies that $X_i(t+1) > n/2$ and therefore $i$ will remain the opinion with the largest support.
    When repeating this argument three more times from \Decision part $t+1$ until $t+4$, we therefore get that $X_i(t+4) > (5/8)\cdot n \cdot (1 -\LittleO{1}) > (24/64) \cdot n$ \whp.
    
    Summarizing, we get by a union-bound application that in \Decision part $t+4$ (i) less than $\sqrt{n} / \log n$ distinct opinions remain, and (ii) the largest opinion $i$ has at least $(24/64)\cdot n$ support \whp. The small amount of remaining opinions allows us the apply results from the analysis in \cref{sec:analysis-case1}. According to the definition of weak in \cref{sec:analysis-case1}, all opinions besides $i$ must be weak in phase $t+4$. Let $j$ with $j \neq i$ be such an opinion. As $j$ is weak, it follows from \cref{lem:weak-to-super-weak} (which states that weak opinions become super-weak and stay super-weak) that opinion $j$ will be super-weak at some phase $t + 4 +t'$ with  $t'= \BigO{\log \log  n}$ and probability at least $1-\BigO{n^{-1.9}}$. By a union bound application it follows \whp that \emph{every} such opinion $j$ is super-weak in \Decision part $t+4+t'$. Next we apply \cref{lem:all-super-weak} which implies that after further $2$ phases only opinion $i$ remains and the result follows.
 \end{proof}

\section{Analysis for the Gossip Model}
\label{sec:analysis-gossip-model}

Our algorithm in the gossip model is similar to the one in the population model.
As before, our algorithm runs in multiple phases that consist of a \Decision part (one round) and a \Boosting part (multiple rounds). 
Analogously to the population model, we use $\X(t)$ to denote the configuration at the start of phase $t$ (before the \Decision part) and $\Y(t)$ to denote the number of decided agents of each opinion (at the beginning of the \Boosting part).
As before, we fix in our analysis some configuration $\X(t) = \x$ at the beginning of phase $t$.
This allows us to model $\Y(t)$ as a vector of independent random variables with Binomial distributions $\Y_i(t) \sim \Bin(x_i, x_i / n)$, as in the analysis for the population model.

As in the population model, the decision part and the boosting part are analyzed separately.
The bounds derived for the population model can also be applied here w.r.t.\ the decision part.
Hence, the analysis of the decision part is simply a repetition of the analysis from \cref{sec:analysis-population-model}. 
However, the boosting part is different since we can no longer use Pólya-Eggenberger distributions to model the outcome of the entire \Boosting part of a phase.
Instead, we consider the rounds of a phase one by one (see \cref{lem:sync-boosting-concentration}).
Let $\Y(t,r)$ denote the configuration of decided agents in round $r$ of the \Boosting part of phase $t$.
Then we have $\Y(t) = \Y(t,1)$ and $\X(t+1) = \Y(t, \T{BC}+1)$.
When we fix $\Y(t,r) = \y$, we can model $\Y_i(t,r+1) \sim \y_i + \Bin(n - \norm{\y}_1, y_i / n)$.
This holds since every undecided agent $u$ (out of $n - \norm{\y}_1$ many undecided agents) contacts one other random agent $v$ and adopts opinion $i$ if $v$ is one of the $\y_i$ many agents that have opinion $i$.
This means we can model $\X_i(t+1)$ by a sum of not independent binomial random variables.


\begin{proof}[Proof of \cref{thm:main-result-gossip-model}] 
\label{proof:gossip-model-theorem}
The proof of this theorem is mainly a repetition of the proof of \cref{thm:main-result-population-model} by 
replacing the corresponding concentration results used in \cref{sec:analysis-population-model} (\cref{thm:polya_bound_total_balls},  \cref{thm:polya_bound_small_support}) with the bounds derived in this section (\cref{lem:sync-boosting-concentration}).
This way \cref{pro:case-1}, \cref{pro:case-2} and \cref{pro:case-3} again guarantee that after $\BigO{\log n}$ phases consensus is reached \whp. 
Similarly, we have by \cref{lem:initial-bias} that an initial bias of $\sDelta \sqrt{n\log n} $ for large enough constant $\sDelta > 0$ suffices for the initially largest opinion to win \whp.
Additionally, \cref{lem:initial-bias} guarantees that only significant opinions can become the consensus opinion \whp.
Furthermore, if a bias of at least $\sDelta \sqrt{n\log n} $ is present, the same arguments as in the proof of \cref{thm:main-result-population-model} yield that consensus is reached after (i) $\BigO{\log \log_\alpha n}$ phases if $k \leq \sqrt{n} /  \log n$, and (ii) $\BigO{\log \log_\alpha n + \log \log n}$ phases otherwise.
As each phase consists of $\T{BC} = \BigO{\log  k + \log \log n}$ rounds, all statements of \cref{thm:main-result-gossip-model} follow.
\end{proof}

In the next lemma we show a similar concentration bound for the boosting part in the gossip model as in \cref{thm:polya_bound_total_balls} and \cref{thm:polya_bound_small_support} for the population model.
Note that $\delta = \BigOmega{1}$ in the assumptions of \cref{lem:sync-boosting-concentration} whereas $\delta$ can be arbitrarily small in \cref{thm:polya_bound_total_balls}.
However, that slightly more general result is not necessary for the analysis.  \cref{lem:sync-boosting-concentration} divides the boosting part of the phase in up to 4 different sub-parts
which are analyzed in \cref{lem:sync-to-linear} to \cref{lem:sync-4-last-steps}, respectively.

\begin{lemma}
\label{lem:sync-boosting-concentration} 
    Fix the configuration $\Y(t) = \y$ at the start of the \Boosting part of phase $t$ and some opinion $i$. Then, for any $\delta$ with $c_1 < \delta < \sqrt{y_i}$ 
        \[
            \Pr \left( X_i(t+1) < \frac{y_i}{\norm{\y}_1} \cdot n - \sqrt{y_i} \frac{n}{\norm{\y}_1} \delta \right) < c_2 \cdot \exp(-c_3 \delta^2) \text{ and }
        \]
        \[
            \Pr \left( X_i(t+1) > \frac{y_i}{\norm{\y}_1} \cdot n + \sqrt{y_i} \frac{n}{\norm{\y}_1} \delta\right) < c_2 \cdot \exp(-c_3 \delta^2).
        \]
        Furthermore, we have 
        \[
            \Pr \left( X_i(t+1) > \frac{n}{\norm{\y}_1} (2 y_i + c_4 \log n) \right) < n^{-2}.
        \]
    Here $c_1,c_2,c_3,c_4 > 0$ are suitable constants.
\end{lemma}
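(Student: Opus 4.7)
The plan is to analyze the boosting part round by round, exploiting the fact that conditional on $\Y(t,r) = \y(r)$, the one-round increment for opinion $i$ is $Y_i(t,r+1) - y_i(r) \sim \Bin(n - \norm{\y(r)}_1, y_i(r)/n)$. The target value $\frac{y_i}{\norm{\y}_1}\cdot n$ appears because one round of boosting preserves the ratio $y_i(t,r)/\norm{\y(t,r)}_1$ in expectation, while $\norm{\Y(t,\T{BC}+1)}_1 = n$ \whp by the broadcast-time bound from~\cite{DBLP:conf/focs/KarpSSV00} (which is precisely what determines $\T{BC}$).

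I would partition the boosting process into four subintervals according to the number of currently decided agents $d_r := \norm{\Y(t,r)}_1$, each covered by one of the auxiliary lemmas \cref{lem:sync-to-linear}--\cref{lem:sync-4-last-steps}: (A) slow start while $d_r \le 2\norm{\y}_1$; (B) an exponential-growth phase where $d_r$ roughly doubles per round until it reaches $\BigTheta{n}$; (C) a linear phase where the undecided count shrinks geometrically from $\BigTheta{n}$ down to $\BigO{\log n}$; and (D) a cleanup of $\BigO{1}$ rounds that decides the last handful of agents. Within each stage I apply a Chernoff bound to the one-step binomial increment and maintain the invariant $|y_i(t,r) - (y_i/\norm{\y}_1)\cdot d_r| \le \sqrt{y_i\cdot d_r/\norm{\y}_1}\cdot \delta_r$ for a stage-dependent $\delta_r$ that inherits from the preceding stage. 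Chaining the stages by conditioning on $\y(t,r)$ at each boundary, and using that $d_{\T{BC}+1} = n$, yields the two symmetric tail bounds with $c_3\delta^2$ in the exponent.

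For the "furthermore" upper bound $X_i(t+1) \le \frac{n}{\norm{\y}_1}(2y_i + c_4 \log n)$, which must hold without assuming $y_i$ is large, a pointwise round-wise argument suffices: a single-round Chernoff upper tail gives $y_i(t,r+1) \le y_i(t,r)\cdot d_{r+1}/d_r + c'\log n$ with probability $1 - n^{-3}$, and telescoping over the $\T{BC} = \BigO{\log n}$ rounds absorbs the accumulated additive contributions into the final $c_4\log n$ term via a union bound.

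The main obstacle is controlling the accumulated error across $\BigO{\log n}$ rounds while keeping a clean $c_3\delta^2$ exponent: a naive union bound would inflate the effective $\delta$ by $\sqrt{\log \T{BC}}$. The saving observation is that the per-round variances $(n - d_r)\cdot y_i(t,r)/n$ form a geometric-type sum dominated by the late rounds in stages (C)--(D), so their total is $\BigO{y_i\cdot n/\norm{\y}_1}$ rather than $\T{BC}$ times a per-round maximum, matching exactly the target variance of the claimed bound. Stage (D), where Chernoff becomes weak because only $\BigO{\log n}$ undecided agents remain, contributes only $\BigO{\log n}$ to the deviation in absolute terms and is therefore absorbed into the coarser additive "furthermore" bound rather than into the $\sqrt{y_i}\cdot (n/\norm{\y}_1)\cdot \delta$ target.
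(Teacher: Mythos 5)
Your architecture matches the paper's: model each boosting round as a binomial increment conditioned on the current configuration, split the $\T{BC}$ rounds into stages, track the ratio $Y_i(t,r)/\norm{\Y(t,r)}_1$ with per-round relative Chernoff errors that decay geometrically (so the total error and the total failure probability are dominated by a single term rather than inflated by a factor depending on the number of rounds), and conclude via $\norm{\Y(t,\T{BC}+1)}_1 = n$. This is essentially \cref{lem:sync-to-linear,lem:sync-to-muk,lem:sync-til-uk,lem:sync-4-last-steps}.

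There is, however, one genuine gap: your treatment of the endgame. You stop the Chernoff analysis when $\BigO{\log n}$ undecided agents remain and declare that the resulting $\BigO{\log n}$ absolute slack is ``absorbed into the coarser additive furthermore bound rather than into the $\sqrt{y_i}\,(n/\norm{\y}_1)\,\delta$ target.'' But the first two tail bounds must hold for \emph{every} $\delta$ with $c_1 < \delta < \sqrt{y_i}$, and the permitted deviation $\sqrt{y_i}\,(n/\norm{\y}_1)\,\delta$ can be far smaller than $\log n$ --- e.g.\ $\delta = \BigO{1}$, $y_i = \BigO{1}$ and $\norm{\y}_1 = \BigTheta{n}$ gives a budget of $\BigO{1}$. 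In that regime your argument proves nothing about the first two inequalities, and the lemma is in fact invoked elsewhere with such small $\delta$. The paper's fix is to make the stage boundaries depend on $\delta$ itself: the multiplicative analysis is only abandoned once the number of undecided agents has dropped to at most $\BigO{\delta^2}$ (see the conditions $\delta > U_r/\sqrt{n}\cdot 10$ defining $\rthree$ and the one extra round in \cref{lem:sync-4-last-steps} that brings $U_{r+1}$ down to $\delta^2/50$), at which point even the worst case --- all or none of the remaining agents adopting opinion $i$ --- costs only a multiplicative $\bigl(1 \pm \delta^2/(50\,y_i(t,r))\bigr)$, which fits the budget precisely because $\delta < \sqrt{y_i}$. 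You need this $\delta$-calibrated cutoff (or an equivalent device) to close the argument. As a secondary remark, for the ``furthermore'' bound the paper sidesteps your telescoping (which gets delicate exactly in the same endgame, where the relative concentration of $D_{r+1}$ degrades) by coupling with an artificial configuration having $y_i' = \BigTheta{\log n}$ and invoking the first statement with $\delta = \sqrt{y_i'}-1$; your route may be salvageable but the paper's reduction is cleaner.
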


\newcommand{\rone}{r_{(1)}}
\newcommand{\rtwo}{r_{(2)}}
\newcommand{\rthree}{r_{(3)}}
\newcommand{\rfour}{r_{(4)}}

\newcommand{\F}[1]{Y_{#1}}
\newcommand{\f}[1]{y_{#1}}

\begin{proof}
We first prove the first statement.
To improve the readability of our analysis, we introduce the following notation w.r.t.\ the \Boosting part of a fixed phase $t$ and opinion $i$.
For some round $r\geq 1$ of the \Boosting part of phase $t$, we abbreviate $\F{r} := Y_i(t,r)$. Furthermore, we denote by the random variable $D_r = \norm{Y_i(t,r)}$ the number of decided agents at the beginning of round $r$. Similarly, we define $U_r = n- D_r$ to be the number of undecided agents at the beginning of round $r$.
As we start from a fixed configuration $\Y(t) = \y$ at the start of the \Boosting part of phase $t$, we assume the initial values $\F{1},D_1$ and $U_1$ to be fixed as $\F{1} = \f{1}$, $D_1 = d_1$ and $U_1 = u_1$.

When fixing the process at the start of some round $r$ with  $\F{r} = \f{r}$, $D_r = d_r$ and $U_r = u_r$, we may model $\F{r+1} \sim \Bin(u_r,\f{r} / n)$, $D_{r+1} \sim \Bin(u_r, d_r / n)$ and $U_r \sim \Bin(u_r, u_r / n)$. The proof of the first point of \cref{lem:sync-boosting-concentration} mainly relies on this observation. The idea is to compute $\F{r+1}, D_{r+1}$ and $U_{r+1}$ round-by-round and to accumulate the error terms and the error probabilities that arise from Chernoff bounds. Throughout the analysis of the first statement of \cref{lem:sync-boosting-concentration}, we assume $\delta$ to be a fixed value with $c_1  < \delta < \sqrt{\f{1}}$ for some large enough constant $c_1>0$. In order to further facilitate the calculation, we split the boosting part of phase $t$ into sub-parts with the following boundaries.
\begin{enumerate}
    \item $\rone$, the first round such that $U_{\rone} < n/\lambda$ where $\lambda >0$ is a large constant.
    \item $\rtwo$, the first round such that $U_{\rtwo} < n/\lambda$ and 
    \begin{enumerate}
    \item 
     $\delta > \sqrt{\F{\rtwo} \cdot U_{\rtwo}/ n} \cdot 10$, or
    \item 
    $\delta > U_{\rtwo} / \sqrt{n} \cdot 10$. 
    \end{enumerate}
    \item $\rthree$, the first round such that $U_{\rthree} < n/\lambda$ and   $\delta > U_{\rthree} / \sqrt{n} \cdot 10$.
    
    \item $\rfour$, the first round such that $U_{\rfour} = 0$.
\end{enumerate}
Our process moves through the sub-parts $[1,\rone), [\rone,\rtwo), [\rtwo, \rthree)$ and $[\rthree, \rfour]$ in sequence. Note, it may happen that $\rtwo = \rthree$, in which case the third sub-part is skipped. For each sub-part, we provide a concentration result in \cref{lem:sync-to-linear, lem:sync-to-muk,lem:sync-til-uk,lem:sync-4-last-steps}, respectively. Throughout any fixed sub-part, the fraction $\F{r} / D_r$ deviates by at most an $(1 \pm \frac{1}{8} \frac{\delta}{\sqrt{\f{1}}})$ factor
with probability  $1 - c\exp(-c'\delta^2)$ for some fitting constants $c,c'> 0$. This implies that
\begin{align*}
    X_i(t+1) &= \F{\rfour} = n \cdot \frac{\F{\rfour}}{D_{\rfour}} < n \cdot \frac{\f{1}}{d_1} \prod_{i=1}^{4} \left(1 + \frac{1}{8} \frac{\delta}{\sqrt{\f{1}}} \right) \leq n \frac{\f{1}}{d_1} (1 + \frac{\delta}{\sqrt{\f{1}}}) = \frac{\f{1}}{d_1} n + \frac{\sqrt{\f{1}}}{d_1} n \cdot \delta
    \\
    &= \frac{y_i}{\norm{\y}} n + \frac{\sqrt{y_i}}{\norm{\y}} n \delta.
\end{align*}
By using the chain rule, it follows that this deviation is not exceeded in \emph{any} sub-part with probability at least $1 - 4 \cdot c \exp(-c' \delta^2) \geq 1 - c_2 \exp(-c_3 \delta^2)$ for suitable constants $c_2 =4c$ and $c_3 = c'$.
A symmetric statement for the lower bound on $\F{\rfour}$ holds analogously.

We continue with the proof of the second statement of  \cref{lem:sync-boosting-concentration}. 
We start by fixing the configuration $\Y(t) = \y$ at the beginning of the \Boosting part of phase $t$ and some opinion $i$. Let $d= \norm{\y}$ and $c_4 = 4 c_3^{-1}$, where $c_3$ is the constant from \cref{lem:sync-boosting-concentration}.
We distinguish three cases depending on the size of $d$ and $y_i$.

\paragraph{Case 1: $d \leq c_4 \log n$} The result follows as the statement trivially holds, since
\[
    \Pr\left[ X_i(t+1) > \frac{n}{d} (2a + c_4 \log n) \right] \leq \Pr \left[ A > n \right] = 0.
\]
\paragraph{Case 2: $y_i \geq \frac{c_4}{2} \log n$} In this case, we apply the first statement of \cref{lem:sync-boosting-concentration} with $\delta = \sqrt{y_i} - 1$, which gives us
\[
    \Pr \left[X_i(t+1) > \frac{n}{d} 2y_i \right] <  n^{-2}.
\]
\paragraph{Case 3: $d > c_4 \log n$ and $y_i < \frac{c_4}{2} \log n$}
We show this case via a coupling. Define $\y'$ with $y'_i = \frac{c_4}{2} \log n$ and $\norm{\y'} = \norm{\y}$. That is, this artificial configuration has the same amount of decided agents but more agents of opinion $i$.
We couple a process that starts the \Boosting part with configuration $\y'$ with our original process. If both processes are subject to the same random choices, we observe that (i) the number of agents becoming decided in every round is the same in both processes, and (ii) the probability for an agent to adopt opinion $i$ is always higher in the coupled process.
Let $X'(t+1)$ be the support of opinion $i$ at the end of the \Boosting part of phase $t$ in the coupled process. It follows that $X'_i(t+1) \succeq X_i(t+1)$ and therefore  
\[
    \Pr \left[X_i(t+1) > \frac{n}{d} c_4 \log n \right] \leq  \Pr \left[X'_i(t+1) > \frac{n}{d} c_4 \log n \right] = \Pr \left[A' > \frac{n}{d'} 2y_i' \right] < n^{-2}.
\]
\paragraph{Case 4}
In the final case we follow a similar approach as in the second case and use the first statement of \cref{lem:sync-boosting-concentration} to bound $X'_i(t+1)$.
\end{proof}

In what follows, we list all the statements required to bound the phases as required by the proof of the first statement of \cref{lem:sync-boosting-concentration}.

\medskip

\paragraph{Additional Notation and Definitions}
In order to make the analysis more readable, we define the abbreviation $a = b(1 \pm c)$ to denote $a \in [b(1-c), b(1+c)]$. Additionally, for two intervals $[s,t], [u,v]$ we say $[s,t] \asymp [u,v]$ iff $u \leq s$ and $t \leq v$ (or in other words $[s,t] \in [u,v]$).

\begin{lemma}
\label{lem:sync-to-linear}
    Let $\rone$ be the first round such that $U_{\rone} < n/\lambda$ for some constant $\lambda > 0$. Then, we have 
	\[
		\frac{\F{\rone}}{D_{\rone}} = \frac{\f{1}}{d_1}  \left(1 \pm \frac{1}{8} \cdot \frac{\delta}{\sqrt{\f{1}}} \right)
	\]
	with probability $1-7\cdot e^{-c\delta^2}$ where $c > 0$ is a constant depending on $\lambda$.
\end{lemma}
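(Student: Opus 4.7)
The plan is to exploit the martingale structure of the ratio $R_r := \F{r}/D_r$ across the boosting rounds and conclude via a martingale concentration inequality.

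First I would verify the martingale property. Conditioning on the configuration at the start of round $r$, each of the $U_r$ undecided agents samples one uniformly random agent; it becomes decided precisely when this sample is decided, so $M_r := D_{r+1} - D_r$ has distribution $\Bin(U_r, D_r/n)$. Conditional on $M_r$, each newly decided agent independently adopts opinion $i$ with probability $R_r$, giving $\F{r+1} - \F{r} \sim \Bin(M_r, R_r)$. A one-line computation then yields $\Ex{R_{r+1} \mid \mathcal{F}_r} = R_r$, so $(R_r)_{r \geq 1}$ is a martingale with $R_1 = \f{1}/d_1$.

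Second I would bound $\rone = \BigO{\log n}$ whp. From $\Ex{U_{r+1} \mid \mathcal{F}_r} = U_r^2/n$ (equivalently, $D_r$ grows by a factor of at least $1 + U_r/n \geq 1 + 1/\lambda$ each round while $U_r \geq n/\lambda$), standard Chernoff bounds applied to the $\Bin(U_r, D_r/n)$ samples give that $D_r$ reaches $n(1-1/\lambda)$ within $\BigO{\log n}$ rounds whp.

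Third I would apply Freedman's inequality to $(R_r)$. On the event that $R_r \leq 2R_1$ for every $r$ smaller than some first-bad time $r^*$ (which holds whenever $|R_{r^*}-R_1| \leq R_1$), the predictable quadratic variation can be controlled via the telescoping identity $M_r/(D_r D_{r+1}) = 1/D_r - 1/D_{r+1}$:
\begin{align*}
V_{r^*}
&:= \sum_{r < r^*} \mathrm{Var}(R_{r+1}-R_r \mid \mathcal{F}_r)
\leq \sum_{r < r^*} \frac{M_r R_r}{D_{r+1}^2} \\
&\leq 2R_1 \sum_{r < r^*} \frac{M_r}{D_r D_{r+1}}
\leq \frac{2R_1}{D_1}.
\end{align*}
Targeting the deviation $t := \tfrac{\delta R_1}{8\sqrt{\f{1}}}$, one checks $t^2/V_{r^*} \geq \delta^2/128$, and together with the per-step bound $|R_{r+1}-R_r| \leq M_r/D_{r+1} \leq 1$ the conditional form of Freedman's inequality (equivalently, an application to the stopped martingale) then yields $\Pr[|R_{\rone}-R_1| \geq t] \leq c \exp(-c' \delta^2)$ for suitable constants $c, c' > 0$.

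\textbf{Main obstacle.} The crux is controlling the predictable variance cleanly. The telescoping identity bounds $\sum_r M_r/(D_r D_{r+1})$ by $1/D_1$, but only under the cap $R_r = \BigO{R_1}$; without such a cap, $R_r$ could drift to order $1$, inflating the variance estimate by a factor $1/R_1$ and ruining the $\exp(-c\delta^2)$ tail. The stopping time (or Freedman's conditional form) is what fixes this. The hypothesis $\delta > c_1$ for a sufficiently large constant $c_1$ is then what guarantees that the sub-Gaussian branch of Freedman dominates the Bernstein correction arising from the per-step bound $b \leq 1$.
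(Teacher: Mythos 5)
Your route is genuinely different from the paper's. The paper runs a round-by-round Chernoff argument, exploiting that $Y_{r}$ grows geometrically while $U_r \geq n/\lambda$ so that both the relative error terms and the failure probabilities form convergent (geometric) series. You instead observe that the ratio $R_r = Y_{r}/D_r$ is an \emph{exact} martingale and try to finish with Freedman's inequality. The martingale observation is correct and attractive (it even removes the dependence on $\lambda$), and your bound on the predictable quadratic variation via the telescoping identity $M_r/(D_r D_{r+1}) = 1/D_r - 1/D_{r+1}$, combined with the stopping-time cap $R_r \leq 2R_1$, is right. But the final application of Freedman has a genuine gap.

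Freedman's exponent is $t^2/(2v + 2bt/3)$, where $b$ is an almost-sure bound on the increments. Your increments are $R_{r+1}-R_r = (B_r - R_r M_r)/D_{r+1}$ with $B_r \mid M_r \sim \Bin(M_r, R_r)$, so the only almost-sure bound is $|R_{r+1}-R_r| \leq M_r/D_{r+1}$, which is typically of constant order (in round $1$, $M_1 \approx u_1 d_1/n$ and $D_2 \approx 2d_1$, so $M_1/D_2 \approx 1/2$). With $b = \Theta(1)$ and $t = \delta\sqrt{y_1}/(8d_1)$, the Bernstein term $2bt/3$ dominates $2v = 4y_1/d_1^2$ whenever $\delta d_1 \gg \sqrt{y_1}$ — the typical case, since $d_1$ can be $\Theta(n)$. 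The exponent then degenerates to $\Theta(t/b) = \Theta(\delta\sqrt{y_1}/d_1)$ rather than $\Theta(\delta^2)$, which is vacuous for large $d_1$; your closing claim that $\delta > c_1$ makes the sub-Gaussian branch dominate is exactly backwards. The fix is to refine the filtration so that within round $r$ the $M_r$ adoption decisions are revealed one at a time: each micro-increment is then bounded by $1/D_{r+1} \leq 1/d_1$ while the total predictable variation is unchanged (still at most $2R_1/D_1$), and now $2bt/3 \leq 2v \cdot \delta/(48\sqrt{y_1}) < 2v$ because $\delta < \sqrt{y_1}$, so Freedman does yield $\exp(-\Omega(\delta^2))$. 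This refinement also repairs a smaller imprecision: $M_r$ and $D_{r+1}$ are not $\mathcal{F}_r$-measurable, so your displayed variance bound must be taken with respect to the enriched $\sigma$-algebra that includes $M_r$, which the micro-step filtration provides automatically.
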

\begin{proof}
	We fix our process at the start of round $r$ and assume that $\F{r} = \f{r} \geq \f{1} \cdot \beta^{r}$ for $\beta = (1+\frac{47}{48\lambda}) > 1$ and $U_r = u_r \geq n/\lambda$.
    We model $\F{r+1}$  as $\F{r+1} = \f{r} +  \Bin(u_r, \frac{\f{r}}{n})$. Next, we define $\Delta(r,\delta) = z \cdot \frac{\delta \cdot \sqrt{r}}{\sqrt{\f{r}}}$ where $z =\frac{1}{48} \cdot \min\{ 1\,, 1 / \sum_{r=1}^{\infty} \frac{\sqrt{r}}{\sqrt{\beta}^{r-1}} \} = \BigTheta{1}$. When using that $\f{r} > \f{1} \cdot \beta^{r-1}$ we observe that $\Delta(r,\delta) \leq 1/48 < 1$. We apply Chernoff bounds with error term $\Delta(r,\delta)$ to derive
	\begin{align*}
		\Pr\left[ \Bin(u_r, \frac{\f{r}}{n}) = \frac{u_r \cdot \f{r}}{n} \cdot  \left(1 \pm \Delta(r,\delta) \right)\right] &\geq 1 -  2\exp \left(-\frac{z^2 \cdot \delta^2 \cdot r \cdot u_r}{3n} \right) \\
		&\geq 1 - 2\exp(-z^2 \cdot \delta^2 \cdot r \cdot \lambda / 3) \\
		&=  1 - 2 \exp(-c_\lambda \cdot \delta^2 r), 
	\end{align*}
	where we used $u_r \geq n/\lambda$ in the second line and set $c_\lambda= z^2 \cdot \lambda / 3 = \BigTheta{1}$ in the second.
	Therefore, we have with probability at least $1-2\exp(-c_\lambda \delta^2 r)$ that
	\[
		\F{r+1} = \f{r} + \frac{u_r \cdot \f{r}}{n}(1 \pm \Delta(r,\delta)).
	\]
	Observe that this together with $\Delta(r,\delta) \leq 1/48$ implies $\F{r+1} \geq \f{r} \cdot (1 + \frac{47u_r}{48n}) \geq \f{r} (1+ \frac{5}{6\lambda}) = \f{r} \beta \geq \f{1} \cdot  \beta^{r}$. Furthermore, above result also implies the following weaker bound of 
	\begin{equation}
	\label{eq:sync-1-ak}
		\F{r+1} = \f{r} \left(1 + \frac{u_r}{n}\right) \cdot \left(1 \pm \Delta(r,\delta) \cdot \frac{u_r}{n}\right) \asymp \f{r} \cdot \left(1 + \frac{u_r}{n}\right) \cdot \left(1 \pm \Delta(r,\delta) \right).
	\end{equation}
	
	Similar, we can also model $D_{r+1}$ by $D_{r+1} = d_{r} + \Bin(u_{r}, \frac{d_r}{n})$ and bound it with the help of Chernoff bounds. As $d_r \geq \f{r}$ a repetition of the above yields that -- with probability at least $1-2\exp(-c_\lambda \delta^2 r)$ -- we have
	\begin{equation}
	\label{eq:sync-2-dk}
		D_{r+1} = d_{r} \left(1 + \frac{u_r}{n}\right) \cdot (1 \pm \Delta(r,\delta)).
	\end{equation}
Finally, we employ union bounds and combine the bounds (\ref{eq:sync-1-ak}) and (\ref{eq:sync-2-dk}) to deduce that
	\[
		\Pr\left[ \frac{\F{r+1}}{D_{r+1}} = \frac{\f{r}}{d_r} (1 \pm 3\Delta(r,\delta)) \text{ and } \F{r+1} \geq \f{1} \cdot \beta^{r} \right] > 1 - 4\exp(-c_\lambda \delta^2(r+1)).
	\]
Note that by $\F{r+1} \geq \f{1} \cdot \beta^{r}$ we established that one of the assumption that we made w.r.t.\ round $r$ also holds in round $r+1$.
Therefore, starting from $r=1$ we can inductively apply above bounds until $r={\rone}$ where $U_{\rone} < n/\lambda$ for the first time and get
\begin{align}
	\Pr \Bigg[\frac{\F{\rone}}{D_{\rone}} \leq \frac{\f{1}}{d_1} \prod_{r=1}^{{\rone}-1} \left(1 + 3\Delta(r,\delta) \right) \text{ and }   \frac{\F{\rone}}{D_{\rone}} \geq \frac{\f{1}}{d_1} \prod_{r=1}^{{\rone}-1} \left(1 - 3\Delta(r,\delta) \right)  \label{eq:sync-1-ak-dk} \\
	\text{ and }	\forall 1 \leq r \leq {\rone}: \F{r} \geq \f{1} \beta^{r-1} \Bigg] 
	> 1 - \sum_{r=1}^{{\rone} -1} 4 \exp(-c_\lambda\delta^2 (r+1)). \nonumber
\end{align}
In order to closer examine the error terms, we start by observing that
\[
	\sum_{r=1}^{{\rone} -1} 3\Delta(r,\delta) = 3z\delta \sum_{r=1}^{{\rone} -1}\sqrt{\frac{r}{\f{r}}} < 3z \frac{\delta}{\sqrt{\f{1}}} \sum_{r=1}^{\infty}\sqrt{\frac{r}{\beta^{r-1}}} \leq \frac{1}{16} \cdot \frac{\delta}{\sqrt{\f{1}}} < \frac{1}{16}
\]
where in the second step we used that $\f{r} \geq \f{1} \beta^{r-1}$ and in the third step we further reduced the expression by substituting $z$.
Now we taker a closer look at the first product in the upper bound on $\F{\rone} / D_{\rone}$ in (\ref{eq:sync-1-ak-dk}) and employ the Weierstrass product inequality 
\[
	\prod_{r=1}^{{\rone}-1} \left(1 + 3\Delta(r,\delta) \right) \leq \frac{1}{1- \sum_{r=1}^{\rone -1} 3\Delta(r,\delta)} \leq 1 + 2 \sum_{r=1}^{{\rone} -1}3\Delta(r,\delta) \leq 1 + \frac{1}{8} \cdot \frac{\delta}{\sqrt{\f{1}}}.
\]
When it comes to error term of the lower bound on $\F{\rone} / D_{\rone}$ a similar approach yields
\[
	 \prod_{r=1}^{{\rone}-1} (1 - 3\Delta(r,\delta)) > 1 -  \sum_{r=1}^{{\rone} -1}3 \Delta(r,\delta) > 1 - \frac{1}{16} \frac{\delta}{\sqrt{\f{1}}} > 1 - \frac{1}{8}\frac{\delta}{\sqrt{\f{1}}}.
\]
Finally, we look at the error probability of (\ref{eq:sync-1-ak-dk}) and define $h_r = \exp(-c_\lambda \delta^2 r)$. For $\delta^2 \cdot c_\lambda \geq 1$, we have that $h_{r+1} \leq h_r \cdot e^{-1}$. Hence 
\begin{equation}
\label{eq:sync-1-error-sum}
	\sum_{r=1}^{{\rone} - 1} 4 \exp(-c_\lambda \delta^2 r) = 4 \sum_{r=1}^{{\rone} -1} h_r \leq 4h_0 \cdot \sum_{r=1}^{\infty}\frac{1}{e^{r}} < 7 \cdot h_0 = 7 \cdot \exp(-c_\lambda \delta^2)
\end{equation}
and the lemma statement follows.
\end{proof}

\begin{lemma}
\label{lem:sync-to-muk}
    Let $\rone$ be the first round such that $U_{\rone} < n/\lambda$ for some large constant $\lambda > 0$.
    Additionally, let $\rtwo$ be the first round such that $\delta > \sqrt{\F{\rtwo} \cdot U_{\rtwo} / n} \cdot 10$ \emph{or} $\delta > U_{\rtwo} / \sqrt{n} \cdot 10$. Then, 
	\[
		\frac{\F{\rtwo}}{D_{\rtwo}} = \frac{\F{\rone}}{D_{\rone}} \left( 1 \pm \frac{1}{8}\frac{\delta}{\sqrt{\F{\rone}}} \right)
	\]  
	with probability at least $1- c \exp(-c'\delta^2)$. Here $c,c' >0$ are constants.
\end{lemma}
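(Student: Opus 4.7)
The plan is to mirror the round-by-round Chernoff analysis used to prove \cref{lem:sync-to-linear}, but adapted to the new regime where $u_r$ shrinks rapidly rather than staying bounded below by $n/\lambda$. At the start of each round $r \in [\rone, \rtwo)$ I would condition on the state $\F{r}, D_r, U_r$ and model $\F{r+1} - \f{r} \sim \Bin(u_r, \f{r}/n)$ and $D_{r+1} - d_r \sim \Bin(u_r, d_r/n)$; as in the previous proof, the expected multiplicative growth factor $(1+\mu_r)$, with $\mu_r := u_r/n$, cancels in the ratio $\f{r+1}/d_{r+1}$, so the per-round relative multiplicative error on this ratio is of order $\mu_r \Delta_r$, where $\Delta_r$ is the Chernoff deviation parameter used in the round.

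The key new structural input is that inside $[\rone, \rtwo)$ we have $\E[\mu_{r+1} \mid \mathcal{F}_r] = \mu_r^2$ (each of the $u_r$ undecided agents picks another undecided agent with probability $\mu_r$), which combined with $\mu_{\rone} \leq 1/\lambda$ gives a doubly-exponential decay $\mu_r \leq \lambda^{-(r-\rone+1)}$ (after absorbing a Chernoff-type concentration for $U_{r+1}$) and a sub-part length $\rtwo - \rone = \BigO{\log \log n}$. To exploit this fast decay I would set $\Delta_r := z \delta \cdot 2^{(r-\rone)/2}/\sqrt{\mu_r \f{r}}$ for a small constant $z > 0$. The per-round Chernoff exponent $\Delta_r^2 \mu_r \f{r}/3 = (z^2/3)\,\delta^2 \cdot 2^{r-\rone}$ then grows geometrically in $r$, so the per-round failure probability $\exp(-(z^2/3)\,\delta^2 \cdot 2^{r-\rone})$ telescopes geometrically, and adding the symmetric bound for $D_{r+1}$ yields a total failure probability $\leq c \exp(-c' \delta^2)$ for suitable constants.

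For the relative error on the ratio, substituting into $\mu_r \Delta_r$ gives $\mu_r \Delta_r = z\delta\sqrt{\mu_r/\f{r}} \cdot 2^{(r-\rone)/2} \leq (z\delta/\sqrt{\lambda \F{\rone}}) \cdot (2/\lambda)^{(r-\rone)/2}$, where I used $\f{r} \geq \F{\rone}$ and $\mu_r \leq \lambda^{-(r-\rone+1)}$. Choosing $\lambda$ large enough so that $2/\lambda < 1$ and $z$ small enough, $\sum_r \mu_r \Delta_r$ is a convergent geometric series bounded by $\delta/(48\sqrt{\F{\rone}})$. Chaining $\prod_r (1 + 3\mu_r \Delta_r)$ across the $\BigO{\log\log n}$ rounds as in \cref{lem:sync-to-linear}'s proof then yields $\F{\rtwo}/D_{\rtwo} \in (\F{\rone}/D_{\rone})(1 \pm (1/8)\delta/\sqrt{\F{\rone}})$ on the high-probability event, as claimed.

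The main obstacle is that the clean quadratic Chernoff tail requires $\Delta_r \leq 1$, which combined with the bound $\mu_r \f{r} \geq \delta^2/100$ (guaranteed by the definition of $\rtwo$) only holds for the first $\BigO{1}$ rounds of the sub-part. For the remaining $\BigO{\log\log n}$ rounds one must instead invoke a two-sided Bernstein-type tail whose exponent transitions from $\exp(-\Omega(\Delta_r^2 \mu_r \f{r}))$ to $\exp(-\Omega(\Delta_r \mu_r \f{r}))$; the $2^{(r-\rone)/2}$ factor inside $\Delta_r$ is tuned precisely so that even in the linear Bernstein regime the exponent $\Delta_r \mu_r \f{r} \geq z\delta\sqrt{\mu_r\f{r} \cdot 2^{r-\rone}} \geq (z\delta^2/10)\cdot 2^{(r-\rone)/2}$ still grows geometrically in $r$, preserving both the geometric summability of the failure probabilities and the geometric summability of the relative deviations. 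Carefully handing off between the two Chernoff regimes — and bookkeeping the symmetric contributions from $\F{r+1}$ and $D_{r+1}$ in each — is the most delicate part of the proof.
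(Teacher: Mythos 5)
Your proposal follows the same skeleton as the paper's proof: condition on the state at the start of each round $r\in[\rone,\rtwo)$, model the increments of $\F{r}$ and $D_r$ as binomials, let the common growth factor $(1+u_r/n)$ cancel in the ratio, and accumulate the per-round relative errors geometrically using the doubly-exponential collapse of the undecided count, closing with a Weierstrass-type product bound. Where you genuinely diverge is in the choice of the per-round deviation and in how you exit the quadratic Chernoff regime. The paper sets $\Delta_{\f{}}(r,\delta)=z\delta\sqrt{r-\rone+1}\big/\sqrt{\f{r}u_r/n}$ and \emph{caps} it at $z\delta/\sqrt{\f{r}u_r/n}$ once $\delta\sqrt{r-\rone+1}$ exceeds $\sqrt{\f{r}u_r/n}$, then proves (its third enumerated claim) that the capped case can only occur in the single round $r=\rtwo-1$, so the quadratic tail suffices everywhere else. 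You instead let the deviation grow like $2^{(r-\rone)/2}$ and hand off to a linear/Bernstein tail $\exp\!\left(-\Omega(\Delta_r\,\f{r}u_r/n)\right)$ whenever $\Delta_r>1$; your computation that this exponent is still at least $(z\delta^2/10)\,2^{(r-\rone)/2}$ (using $\sqrt{\f{r}u_r/n}\ge\delta/10$ inside the sub-part) is correct, and the relative error $\Delta_r u_r/n$ on the ratio is unaffected by which tail regime applies, so the chaining still yields the $(1\pm\tfrac18\delta/\sqrt{\F{\rone}})$ window. Your route buys freedom from the paper's somewhat delicate "the cap binds only in the last round" argument, at the price of invoking a two-regime tail bound in possibly many rounds. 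One point you pass over too quickly: the per-round failure probabilities $\exp(-u_r^2/(3n))$ for the concentration of $U_{r+1}$ (which underpins your bound $\mu_r\le\lambda^{-(r-\rone+1)}$) must themselves sum to $O(\exp(-c'\delta^2))$; this is exactly where the second disjunct in the definition of $\rtwo$ is needed, since $\delta\le 10\,U_r/\sqrt n$ for all $r<\rtwo$ gives $u_r^2/n\ge\delta^2/100$ in the worst (last) round and geometrically larger exponents in earlier rounds.
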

\begin{proof}
	We fix the configuration at the start of some round $r \geq \rone$ and let $\F{r} = \f{r}$, $D_{r} = d_r$, $U_r = u_r$ and assume that $\delta \leq \sqrt{\mu_r} \cdot 10$ and $\delta \leq u_r / \sqrt{n} \cdot 10$. Here $\mu_r = \sqrt{\f{r} \cdot u_r /n}$ denotes the expected number of agents that join opinion $i$ throughout round $r$.  Just as in the proof of \cref{lem:sync-to-linear}, we model $\F{r+1} = \f{r} + \text{Bin}(u_r, \f{r} / n)$ we get by a Chernoff bound application that
	\begin{align}
	\label{eq:sync-2-a}
	\F{r+1} = \f{r} + \frac{u_r \cdot \f{r}}{n} \left(1 \pm \Delta_{\f{}}(r,\delta) \right) \text{ where } \Delta_{\f{}}(r,\delta) = z \cdot \begin{cases}
								\frac{\delta \cdot \sqrt{r - \rone + 1}}{\sqrt{\mu_r}} & \text{ if } \delta \cdot \sqrt{r - \rone + 1} \leq  \sqrt{\mu_r} \\
								\frac{\delta}{\sqrt{\mu_r}} &\text{ otherwise}
							\end{cases}
	\end{align}
	with  probability $1-2\cdot \exp(-\Delta_{\f{}}(r,\delta)^2 \cdot \mu_r / 3)$ and for some arbitrary constant $z < \frac{1}{96}$. Note that the above also implies the slightly weaker bound $\F{r+1} = \f{r} \cdot (1 + \frac{u_r}{n}) \left(1 \pm \Delta_{\f{}}(r,\delta) \frac{u_r}{n} \right)$. In a similar way, we model $D_{r+1}$ as $d_r + \text{Bin}(u_r, d_r/n)$ and get 
	\begin{equation}
	\label{eq:sync-2-d}
		D_{r+1} = d_r \left(1 + \frac{u_r}{n} \right) \left(1 \pm \Delta_{\f{}}(r,\delta) \frac{u_r}{n} \right)
	\end{equation}
	also with probability $1-2\cdot \exp(-\Delta_{\f{}}(r,\delta)^2 \cdot \mu_r / 3)$ as $d_r \geq \f{r}$ leads to even stronger concentration when applying Chernoff bounds to $\text{Bin}(u_r, d_r /n)$.
	
	Finally, we consider the decrease of $U_{r+1}$. By a Chernoff bound application, we have with probability at least $1 - \exp(-\frac{u_r^2}{3n})$ that 
	\begin{equation}
	\label{eq:sync-2-u}
		U_{r+1} < \frac{u_{r}^2}{n} \cdot 2.	
	\end{equation}
	Hence, it follows from a union bound application that the events in (\ref{eq:sync-2-a}),(\ref{eq:sync-2-d}) and (\ref{eq:sync-2-u}) hold with probability at least $1 - 4 \exp(-\Delta_{\f{}}(r,\delta) \cdot  \mu_r / 3) -  \exp(-\frac{u_r^2}{3n})$ in total. A repetition of above approach for increasing $r$ until the first round $\rtwo$ is reached such that either (i) $\delta > \sqrt{\mu_{\rtwo}} \cdot 10$, or (ii) $\delta > u_{\rtwo} / \sqrt{n}$ yields the following 
	\begin{align}
	    \label{eq:sync-2-mega-event}
		Pr\Big[\forall r \text{ with } \rone \leq r < \rtwo: &\F{r+1} = \f{r+1}, D_{r+1} = d_{r+1} \text{ and } U_{r+1} = u_{r+1} \text{ with }  \\ \nonumber
		&\f{r+1} = \f{r} \cdot (1 + \frac{u_r}{n}) \left(1 \pm \Delta_{\f{}}(r,\delta) \frac{u_r}{n} \right), \\ \nonumber
		&d_{r+1} = d_r \cdot (1 + \frac{u_r}{n}) \left(1 \pm \Delta_{\f{}}(r,\delta) \frac{u_r}{n} \right), \\ \nonumber
		&u_{r+1} \leq \frac{u_{r}^2}{n} \cdot 2 \Big] \\ \nonumber
		> &1 - 4\sum_{r=\rone}^{\rtwo - 1} \exp \left(-\Delta_{\f{}}(r,\delta)^2 \cdot \frac{u_r \cdot \f{r}}{n} \right) - 4 \sum_{r=\rone}^{\rtwo - 1} \exp(-\frac{u_r^2}{n}) \nonumber
	\end{align}
	This event implies the following statements for all $r$ with $\rone \leq r < \rtwo$:
	\begin{enumerate}
		\item $\f{r+1} < 3 \f{r}$ \label{enum:sync-2-item-1}
		\item $u_{r+1} \leq \frac{n}{(\lambda/2)^{2^{(r+1) - \rone }}}$
		\item $\Delta_{\f{}}(r, \delta) \neq z \cdot \frac{\delta \cdot \sqrt{r - \rone + 1}}{\sqrt{\mu_r}}$ implies that $r= \rtwo - 1$ \label{enum_sync-2-item-3}
		\item $\sum_{r=\rone}^{\rtwo - 1} \Delta_{\f{}}(r,\delta) \frac{u_r}{n}< 2z \cdot \frac{\delta}{\sqrt{\f{\rone}}} \leq 1/5$.
		\label{enum_sync-2-item-4}
	\end{enumerate}
	The first statement is straight forward and follows from the recursive bound on $\f{r}$ in (\ref{eq:sync-2-mega-event}). Similar, the second statement follows from the recursive bound on $u_r$. It can be shown, e.g., per induction over $\ell$  with the following hypothesis (note that the base case of $\ell=0$ holds as $u_{\rone} \leq n/\lambda$)
	\[
	    u_{\rone + \ell}\leq \frac{n}{\lambda^{2^\ell}} \cdot 2^{2^{\ell} -1}.
	\]
	Regarding the third statement, we assume that  $\Delta_{\f{}}(r,\delta) \neq z \cdot \frac{\delta \cdot \sqrt{r - \rone + 1}}{\sqrt{\mu_r}}$ for $r < \rtwo -1$. According to the definition of $\Delta_{\f{}}(r,\delta)$, this implies that $\delta \sqrt{r - \rone + 1} > \sqrt{\mu_r}$. We start by observing that $\mu_{r+1}$ shrinks significantly when compared to $\mu_r$
	\[
		\mu_{r+1} = \f{r+1} \cdot \frac{u_{r+1}}{n} \leq 6 \f{r} \frac{u_r}{n} \cdot \frac{u_r}{n} = \mu_{r} \cdot 6 \frac{u_r}{n} \leq \mu_r \frac{6}{(\lambda /2)^{2^{r - \rone}}}.
	\]
	Here we used \cref{enum:sync-2-item-1} as well as the bound on $u_{r+1}$ of (\ref{eq:sync-2-mega-event}) in the first step and \cref{enum_sync-2-item-3} in the last step. From this and  $\delta \sqrt{r - \rtwo + 1} > \sqrt{\mu_r}$ it follows that $\delta > \sqrt{\mu_{r + 1}} \cdot 10$ in case $\lambda > 0$ is a large enough constant. This in turn implies that $r + 1 = \rtwo$ and the third statement follows. Finally, we show the fourth statement 
	\begin{align*}
		\sum_{r = \rone}^{\rtwo - 1} \Delta_{\f{}}(r,\delta) \frac{u_r}{n} < z \cdot \sum_{r=\rone}^{\rtwo - 1} \left( \frac{\delta \cdot \sqrt{r - \rtwo + 1}}{\sqrt{\f{r}}} \cdot \sqrt{\frac{u_r}{n}} \right) + z \cdot \frac{\delta}{\sqrt{\f{\rtwo - 1}}} \sqrt{\frac{u_{\rtwo - 1}}{n}}  \\
		< z \cdot \frac{\delta}{\sqrt{\f{\rone}}} \sum_{\ell = 0}^{\infty} \frac{\sqrt{\ell  + 1}}{(\lambda/2)^{2^{\ell}}} + z \cdot \frac{\delta}{\sqrt{\f{\rone}}} \frac{1}{\lambda} < 2z \cdot \frac{\delta}{ \sqrt{\f{\rone}}} \leq \frac{1}{48}.
	\end{align*} 
	In the first step, we use \cref{enum_sync-2-item-3} to split the sum. In the third step we assume $\lambda > 0$ to be a large enough constant s.t. the infinity series can be bounded by $1$.
In the last step we just used that $\delta < \sqrt{\f{\rone}}$ and $z < 1/96$.

We are now ready to finalize the proof. The recursive bounds on $\f{r+1}$ and $d_{r+1}$ in (\ref{eq:sync-2-mega-event}) imply that
\[
	\frac{\F{\rtwo}}{D_{\rtwo}} = \frac{\f{\rone}}{d_{\rone}} \prod_{r= \rone}^{\rtwo - 1}\left(1 \pm 3 \Delta_{\f{}}(r,\delta) \frac{u_r}{n} \right),
\]	
which can be simplified with the help of \cref{enum_sync-2-item-4} and the Weierstrass product inequality. This yields the desired concentration statement
\[
	\frac{\F{\rtwo}}{D_{\rtwo}} = \frac{\f{\rone}}{d_{\rone}}\left(1 \pm 6 \sum_{r = \rone}^{\rtwo - 1}\Delta_{\f{}}(r,\delta) \frac{u_r}{n}  \right)  \asymp \frac{\f{\rone}}{d_{\rone}} \left( 1 \pm \frac{1}{8} \frac{\delta}{\sqrt{\f{\rone}}}\right).
\]
Remember, the probability for this (see (\ref{eq:sync-2-mega-event})) is 
\begin{equation}
\label{eq:sync-2-prob}
1 - 4\sum_{r=\rone}^{\rtwo - 1} \exp \left(-\Delta_{\f{}}(r,\delta)^2 \cdot \frac{u_r \cdot \f{r}}{n} \right) - 4 \sum_{r=\rone}^{\rtwo - 1} \exp(-\frac{u_r^2}{3n})
\end{equation}
with the help of \cref{enum_sync-2-item-3} we may again split the first sum and get
\begin{align}
\label{eq:sync-2-prob-1}
	 4\sum_{r=\rone}^{\rtwo - 1} \exp \left(-\Delta_{\f{}}(r,\delta)^2 \cdot \frac{u_r \cdot \f{k}}{n} \right) \leq 4 \sum_{r=\rone}^{\rtwo - 1} \exp \left(-z^2 \delta^2 (r - \rtwo + 1)) \right) + 4\exp(-z^2 \delta^2) \\
	 \leq 7 \exp(-\delta^2) +  4\exp(-z^2 \delta^2) < 11 \exp(-z^2 \delta^2) \nonumber.
\end{align}
Next, we bound the second sum in (\ref{eq:sync-2-prob}). Using the recursive bound on $u_r$ in (\ref{eq:sync-2-mega-event}) and  $u_r < n/\lambda$ it yields that 
\[
u_{\rtwo - 1} \leq (u_{\rtwo-1-r})\cdot(\frac{2}{\lambda})^{r} \text{ for $r$ with } 0 \leq r < \rtwo - \rone.
\]
From this we get that 
\[
\frac{u_{\rtwo -1 -r}^2}{n} \ge (\lambda/2)^{2r}\cdot \frac{u_{\rtwo - 1}^2}{n}> (r+1) \cdot\frac{u_{\rtwo - 1}^2}{n} > (r+1) \cdot \frac{\delta^2}{100} \text{ for $r$ with } 0 \leq r < \rtwo - \rone
\]
where we used that $\delta \leq u_{\rtwo - 1} / \sqrt{n} \cdot 10$ in the last step.
Therefore, we can bound the second sum in (\ref{eq:sync-2-prob}) by 
\begin{align}
\label{eq:sync-2-undecided-prob}
4\sum_{r=\rone}^{\rtwo-1}\exp\Big(-\frac{u_r^2}{3n}\Big)\le 4 \sum_{r=0}^{\rtwo-\rone - 1}\exp\Big( -\frac{u_{\rtwo - 1}^2}{3n}\Big)^{(r+1)} \le \sum_{r=0}^{\rtwo - \rone - 1} \exp(-\delta^2 \cdot (r+1) / 300) 
 \\ \nonumber 
 \leq  c \cdot  \exp(-c \cdot \delta^2)
\end{align}
where $\delta = \BigOmega{1}$ guarantees that the last sum in the first line increases at most as fast as a geometric series, and $c >0$ is some fitting constant.
\end{proof}

\begin{lemma}
\label{lem:sync-til-uk}
    Let $\rtwo$ be the first round such that $U_{\rtwo} < n/\lambda$ for some large constant $\lambda  >0$ and (a) $\delta > \sqrt{\F{\rtwo} \cdot U_{\rtwo} / n} \cdot 10$ \emph{or} (b) $\delta > U_{\rtwo} / \sqrt{n} \cdot 10$.
    Additionally, let $\rthree$ be the first round where $\delta > U_{\rthree} / \sqrt{n} \cdot 10$. Then, we have 
	\[
		\frac{\F{\rthree}}{D_{\rthree}} = \frac{\F{\rtwo}}{D_{\rtwo}} \left(1 \pm \frac{1}{8} \cdot \frac{\delta}{\sqrt{\F{\rtwo}}} \right)
	\]
	with probability $1-c\exp(-c' \delta^2)$ where $c,c'$ are constant.
\end{lemma}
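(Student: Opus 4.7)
The plan is to follow the proof of \cref{lem:sync-to-muk} closely, adapting the per-round deviation terms to the present regime.

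First, I will dispense with the trivial case: if condition~(b) already holds at round $\rtwo$ (i.e., $\delta > U_{\rtwo}/\sqrt{n} \cdot 10$), then by the definition of $\rthree$ we have $\rthree = \rtwo$, and the stated identity holds with probability~$1$. So I may assume that only condition~(a) holds at $\rtwo$, which yields $\mu := u_{\rtwo}\f{\rtwo}/n < \delta^2/100$ while $u_r \geq \delta\sqrt{n}/10$ for all $r \in [\rtwo,\rthree)$.

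Following the template of \cref{lem:sync-to-muk}, I will track $\F{r}, D_r, U_r$ round-by-round via the conditional Binomials $\F{r+1} - \f{r} \sim \Bin(u_r, \f{r}/n)$, $D_{r+1} - d_r \sim \Bin(u_r, d_r/n)$, and $U_{r+1} \sim \Bin(u_r, u_r/n)$. The essential structural fact, established exactly as in the proof of \cref{lem:sync-to-muk}, is that $U_r$ decays doubly exponentially once $u_r < n/\lambda$: one obtains $u_{\rtwo+j} \leq n(2/\lambda)^{2^j}/2$ \whp, so $\rthree - \rtwo = O(\log\log n)$ and every geometric sum in $u_r/n$ is dominated by its first term. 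In particular $\sum_{r \geq \rtwo} u_r/n = O(u_{\rtwo}/n) = O(1/\lambda)$ and $\sum_{r \geq \rtwo} u_r \f{r}/n = O(\mu) = O(\delta^2)$.

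For each round I will apply multiplicative Chernoff with the deviation $\Delta_{\f{}}(r,\delta) = z\delta/\sqrt{u_r \f{r}/n}$ (the ``atypical'' regime of \cref{lem:sync-to-muk}, which here prevails from round $\rtwo$ onward since $u_r\f{r}/n \leq \mu < \delta^2/100$) to obtain the per-round identity $\F{r+1}/D_{r+1} = (\f{r}/d_r)(1 \pm 3\Delta_{\f{}}(r,\delta)\, u_r/n)$. Summing gives $\sum_r \Delta_{\f{}}(r,\delta)\, u_r/n = z\delta \sum_r \sqrt{u_r/(n\f{r})} \leq (z\delta/\sqrt{\f{\rtwo}}) \cdot O(\sqrt{u_{\rtwo}/n}) \leq \delta/(16\sqrt{\f{\rtwo}})$ for small enough $z$ and large enough $\lambda$, and then telescoping via the Weierstrass product inequality will deliver $\F{\rthree}/D_{\rthree} = (\F{\rtwo}/D_{\rtwo})(1 \pm (1/8)\delta/\sqrt{\F{\rtwo}})$, as required.

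The hard part will be the small-mean regime: once $u_r\f{r}/n \ll 1$, per-round Chernoff gives only $\exp(-\Omega(\delta \sqrt{u_r\f{r}/n}))$, which is too weak to union-bound across $O(\log\log n)$ rounds while retaining an $\exp(-\Omega(\delta^2))$ failure probability. My plan to resolve this is to split the rounds into those with $u_r\f{r}/n \geq \delta^2/100$ (where per-round Chernoff already yields $\exp(-\Omega(\delta^2))$) and those below the threshold (where the aggregated Binomial across all such rounds has mean $O(\delta^2)$, so a single Chernoff application on the aggregate yields $\exp(-\Omega(\delta^2))$). Combining the two contributions via union bound then delivers the claimed $c\exp(-c'\delta^2)$ failure probability.
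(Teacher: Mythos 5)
Your outline is sound, and you have correctly isolated the real difficulty of this sub-part: once $u_r \f{r}/n \ll \delta^2$, the relative-deviation Chernoff bound used in \cref{lem:sync-to-muk} no longer yields an $\exp(-\Omega(\delta^2))$ failure probability for the upper tail. The paper resolves this differently from you. It keeps the per-round structure but switches to the super-exponential (Poisson-type) tail bound of \cref{lem:chernoff-superexponential}, taking the additive deviation $\delta_r^2/(1+\ln(\delta_r^2/\mu_r))$ with $\delta_r = \delta\sqrt{r-\rtwo+1}$, so that each round fails with probability $\exp(-c_1\delta_r^2)$ and both the accumulated errors and the failure probabilities sum geometrically; for the \emph{lower} bound the paper uses no concentration at all, observing that even if none of the remaining $u_{\rtwo} \le \delta^2 n /(100\,\f{\rtwo})$ undecided agents ever adopt opinion $i$, the ratio drops by at most a factor $1 - u_{\rtwo}/d_{\rtwo} \ge 1-\tfrac{1}{8}\delta/\sqrt{\f{\rtwo}}$. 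Your aggregation idea is in the same spirit as that deterministic lower bound and is arguably cleaner than the paper's upper bound: since $D_{\rthree}\ge d_{\rtwo}$, it suffices to show that the total number $T$ of new adopters over $[\rtwo,\rthree)$ satisfies $T\le\tfrac{1}{8}\delta\sqrt{\f{\rtwo}}$, and since the aggregate mean is at most $\mu_{\rtwo}(1+O(1/\lambda))\le\delta^2/90$ while the target exceeds $\delta^2/8$ (using $\delta<\sqrt{\f{\rtwo}}$), a single Bernstein/Chernoff application gives $\exp(-\Omega(\delta^2))$ with no logarithmic correction terms and no growing-$\delta_r$ bookkeeping. (Incidentally, your first bucket --- rounds with $u_r\f{r}/n \ge \delta^2/100$ --- is empty here: condition (a) forces $\mu_{\rtwo}<\delta^2/100$ and $\mu_r$ only decreases thereafter.)

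The one step you must still make precise is the ``single Chernoff on the aggregate.'' The increments $\F{r+1}-\F{r}$ are conditionally $\Bin(U_r,\F{r}/n)$ with \emph{random} parameters and are not independent, so you cannot literally treat $T$ as one Binomial; you need a stochastic-domination, conditional-MGF, or stopping-time argument, carried out on the high-probability event where $U_r$ decays doubly exponentially and (say) $\F{r}\le 2\f{\rtwo}$, to bound $T$ by a Binomial of mean $O(\delta^2)$. You also still need the per-round control of $D_r$ and $U_r$ with geometrically decaying error probabilities (the paper gets the latter from $u_r^2/n \ge (r+1)\delta^2/100$, which follows from $u_{\rthree-1}\ge\delta\sqrt{n}/10$ and the doubly-exponential decay); importing this from the template as you propose is fine. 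With those details filled in, your route is a valid and somewhat more streamlined alternative to the paper's argument.
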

\begin{proof}
    At the start of round $\rtwo$ at least one of the conditions (a) or (b) must hold. If (b) is fulfilled, then $\rtwo = \rthree$ and we are finished because $\F{\rthree} / D_{\rthree} = \F{\rtwo} / D_{\rtwo}$.

    Otherwise, in $\rtwo$ only (a) holds and we need to analyze the process  until we reach the round $\rthree$ where $\delta > U_{\rthree} / \sqrt{n} \cdot 10$.
    We will proof the upper and lower bound separately.
\paragraph{Upper Bound}
	We consider some arbitrary round $r \geq \rtwo$ and fix $\F{r} = \f{r}$, $U_r = u_r$ and $D_r = d_r$. Furthermore, we define $\mu_r = \sqrt{a_r u_r / n}$ and $\delta_r = \delta \cdot \sqrt{r - \rtwo +1}$. We assume that (i) $\delta_r \geq 10  \sqrt{\mu_r}$, and (ii) $\delta^2 \leq u_r^2 / n$. As in the proofs of the  \cref{lem:sync-to-muk,lem:sync-to-linear}, we model $\F{r+1}= \f{r} + \Bin(u_r, \frac{\f{r}}{n})$. We apply \cref{lem:chernoff-superexponential} to bound the far-right tail of this  Binomial distribution. As we assumed that $\delta_r^2 > 100 \cdot \mu_r$ this yields 
    \begin{equation}
        \F{r+1} < \f{r} + \frac{\f{r} \cdot u_r}{n} + \frac{\delta_r^2}{1 + \ln(\delta_r^2 / \mu_r)}
    \end{equation}
    with probability at least $1 - \exp(-c_1 \delta_r^2)$ for some constant $c_1 > 0$. For $\Delta_{\f{}}(r,\delta) = \frac{\delta_k^2}{\f{k} \cdot (1 + \ln(\delta_r^2 / \mu_r))}$
    this also implies the slightly weaker bound of 
	\begin{equation}
	\label{eq:sync-3-a}
		\F{r+1} < \f{r}(1 + \frac{u_r}{n}) ( 1 + \Delta_{\f{}}(r,\delta)).
	\end{equation}
	In a similar way, we model $D_{r+1} = d_r + \Bin(u_r, \frac{d_r}{n})$ and apply Chernoff bounds to derive that 
    \begin{align}
    \label{eq:sync-3-d}
        D_r > d_r + \frac{u_r \cdot d_r}{n} (1 - \Delta_d(r,\delta)) \text{ where } \Delta_d(r,\delta) = \frac{1}{15} \cdot \begin{cases}
                    \frac{\delta_r}{\sqrt{d_r \cdot u_r / n}} &\text{if $\delta_r < \sqrt{d_r \cdot u_r / n}$} \\
                    \frac{\delta}{\sqrt{d_r \cdot u_r / n}} &\text{otherwise}
                \end{cases}
    \end{align}
    with probability at least $1-\exp(-\Delta_d(r,\delta)^2 \cdot \frac{u_r d_r}{n})$. Note that the bound in  (\ref{eq:sync-3-d}) also implies the following weaker bound 
    \begin{equation}
    \label{eq:sync-3-d2}
        D_r > d_r \left( 1 + \frac{u_r}{n}\right) \left( 1 - \Delta_d(r,\delta) \frac{u_r}{n}\right).
    \end{equation}
    Finally, just as in the proof of \cref{lem:sync-to-muk}, we apply Chernoff bounds and deduce that with probability $1 - \exp(-\frac{u_r^2}{n})$ we have 
    \begin{equation}
    \label{eq:sync-3-u}
        U_{r+1} < \frac{u_r^2}{n} \cdot 2.
    \end{equation}
    When employing union bounds, we deduce that the events in (\ref{eq:sync-3-a}),(\ref{eq:sync-3-d}) and (\ref{eq:sync-3-u}) occur at the same time with probability at least $1 - \exp(c_1 \delta_r^2) -  \exp(-\Delta_d(r,\delta)^2 \frac{u_r d_r}{n}) - \exp(-\frac{u_r^2}{3n})$. 
    Note that from (\ref{eq:sync-3-a}) and $\delta < \sqrt{\f{\rtwo}}$ it easily follows that $\F{r+1} < 3 \f{r}$. This, combined with (\ref{eq:sync-3-u}) and $u_r \leq n/\lambda$, implies 
    \[
        \mu_{r+1} = \F{r+1} \cdot \frac{U_{r+1}}{n} < 6 \cdot \f{r} \frac{u_r}{n} = 6 \cdot \frac{u_r}{n} \mu_r \leq \frac{6}{\lambda} \mu_r.
    \]
    Our assumption (i) stated that $\delta_r > \sqrt{\mu_r} \cdot 10$. Note that $\delta_{r+1} > \delta_{r} > \sqrt{\mu_r} \cdot 10 > \sqrt{\mu_{r}} \cdot 10 \cdot \sqrt{\frac{6}{\lambda}} > \mu_{r+1}$ as long as $\lambda >6$.
    In other words, assumption (i) can be established inductively.
    
    Therefore, starting with $r=\rtwo$, we can repeat this approach until assumption (ii) is violated. This way, we have the following, where $\rthree$ denotes the first round s.t. $\delta > U_{\rthree} / \sqrt{n}$
    \begin{align}
	    \label{eq:sync-3-mega-event}
		\Pr \Big[\forall r \text{ with } \rtwo \leq r < \rthree: &\F{r+1} = \f{r+1}, D_{r+1} = d_{r+1} \text{ and } U_{r+1} = u_{r+1} \text{ with }  \\ \nonumber
		&\f{r+1} < \f{r} \cdot (1 + \frac{u_r}{n}) \left(1 + \Delta_{\f{}}(r,\delta) \right), \\ \nonumber
		&d_{r+1} > d_r \cdot (1 + \frac{u_r}{n}) \left(1 - \Delta_d(r,\delta) \frac{u_r}{n} \right), \\ \nonumber
		&u_{r+1} < \frac{u_{r}^2}{n} \cdot 2 \Big] \\ \nonumber
		> &1 - \sum_{r=\rtwo}^{\rthree - 1} \exp \left(-c_1 \delta_{r}^2 \right) -  \sum_{r=\rone}^{\rthree - 1} \exp(-\Delta_d(r,\delta) \frac{u_r d_r}{n}) - \sum_{r=\rtwo}^{\rthree -1} \exp(-\frac{u_r^2}{3n}) \nonumber
	\end{align}
    These recursive bounds allow for the following statements to be made for $r$ s.t. $\rtwo \leq r < \rthree$
    \begin{enumerate}
        \item $u_{r} <  \frac{n}{(\lambda/2)^{2^{r - \rtwo}}}$ \label{enum:sync-3-item-1}
        \item $\delta_r^2 \geq \mu_r \cdot \frac{100}{6} \cdot e^{2^{r - \rtwo} + 7}$ \label{enum:sync-3-item-2}
        \item $\Delta_d(r,\delta) \neq \frac{1}{10} \frac{\delta_r}{\sqrt{d_r u_r / n}}$ implies that $r = \rthree -1$ \label{enum:sync-3-item-3}
        \item $\sum_{r=\rtwo}^{\rthree -1} \Delta_{\f{}}(r,\delta) < \frac{1}{8} \frac{\delta^2}{\sqrt{\f{\rtwo}}}$ \label{enum:sync-3-item-4}
        \item $\sum_{r=\rtwo}^{\rthree -1} \Delta_d(r,\delta) \frac{u_r}{n} < \frac{1}{10} \frac{\delta}{\sqrt{d_r}}$ \label{enum:sync-3-item-5}
    \end{enumerate}
    Just as in the proof of \cref{lem:sync-to-muk}, the first statement follows from the recursive bound $u_{r+1}<\frac{u_r^2}{n}$ in (\ref{eq:sync-3-mega-event}). Regarding the second point, for $r=\rtwo$ the statement holds by definition of round $\rtwo$. For $\rtwo < r < \rthree$, we have that
    \[
        \mu_r = \f{r} \frac{u_r}{n} \leq 6 \f{r-1} \frac{u_{r-1}}{n} \cdot \frac{u_{r-1}}{n} = 6 \mu_{r-1} \cdot \frac{u_{r-1}}{n} \leq \dots \leq \mu_{\rtwo} \prod_{r=\rtwo}^{r - 1} 6 \frac{u_{r}}{n} \leq \mu_{\rtwo} \frac{6}{(\lambda /2)^{2^{r - \rtwo - 1}}}.
    \]
    In the second step we used again that (\ref{eq:sync-3-mega-event}) implies $\f{r+1} < 3\f{r}$. In the final step, we used \cref{enum:sync-3-item-1} and crudely bounded the product by its smallest factor (note that large enough $\lambda> 0$ ensures that none of the factors are larger than one). We now use this intermediate result and deduce that
    \[
        \delta_r^2 > \delta^2 \geq \mu_{\rtwo} \cdot 100 \geq \mu_{r} \cdot 100 \cdot  \frac{(\lambda/2)^{2^{r-\rtwo - 1}}}{6} > \mu_r \cdot \frac{100}{6} \cdot e^{2^{r - \rtwo + 7}},
    \]
    where in the last step we assume $\lambda$ to be a large enough constant. We continue with the proof of \cref{enum:sync-3-item-3}. Assume that $\Delta_d(r,\delta) \neq \frac{1}{10}\frac{\delta_r}{\sqrt{d_r u_r / n}}$ for some $r$ with $\rtwo \leq r <\rthree - 1$. This implies that $\delta^2 \geq (d_r u_r /n) / (r - \rtwo + 1)$. Moreover, we have that
    \[
        \frac{u_{r+1}^2}{n} \leq \frac{u_r^2}{n} = \frac{d_r \cdot u_r}{n} \cdot \frac{u_r}{d_r} \cdot  \leq \frac{d_r \cdot u_r}{n} \cdot \frac{1}{(1- 1/\lambda)} \cdot  \frac{1}{(\lambda/2)^{2^{r-\rtwo}}} < \frac{d_r\cdot u_r}{n} \cdot \frac{1}{r - \rtwo  + 1} \leq \delta^2,
    \]
    where we crudely bounded $d_r \geq n(1- 1/\lambda)$ and employed \cref{enum:sync-3-item-1} in the third step. The penultimate step follows for large enough constant $\lambda > 0$.
    This way, we just established that $\delta > \sqrt{u_{r+1}} / \sqrt{n}$, which in turn implies $r +1 = \rthree$ and \cref{enum:sync-3-item-3} follows.
    We proceed with \cref{enum:sync-3-item-4} and observe 
	\[
		\sum_{r=\rtwo}^{\rthree - 1} \Delta_{{\f{}}}(r,\delta) \leq \frac{\delta^2}{\f{\rtwo}} \cdot \sum_{r=\rtwo}^{\rthree-1} \frac{r - \rtwo +1}{1 + \ln(\frac{\delta_r^2}{\mu_r})}
		\leq \frac{\delta^2}{\f{\rtwo}} \sum_{r=\rtwo }^{\infty} \frac{r - \rtwo +1}{1 + 2^{r - \rtwo + 7} + \ln(100/6)}  \leq \frac{\delta^2}{\f{\rtwo}} \cdot  \frac{1}{20},
 	\] 
    where we employed \cref{enum:sync-3-item-2} in the second step to bound $\delta_r^2 / \mu_r$. The infinite sum clearly converges to some constant as the denominator easily dominates the numerator. We continue with the final statement. We have 
    \begin{align*}
	\sum_{r=\rtwo}^{\rthree - 1} \Delta_d(r,\delta) \cdot \frac{u_r}{n} \leq \left( \sum_{r=\rtwo}^{\rthree - 2} \frac{1}{15} \frac{\delta_r}{\sqrt{d_r}} \cdot \sqrt{\frac{u_r}{n}} \right) + \frac{1}{15} \frac{\delta}{\sqrt{d_{\rtwo}}} \cdot \frac{u_{\rtwo}}{n} \\
	\leq \frac{1}{15} \frac{\delta}{\sqrt{d_r}} \left(1 + \sum_{r=\rtwo}^{\infty} \sqrt{\frac{r - \rtwo + 1}{(\lambda/2)^{2^{r - \rtwo}}}}\right)
	\leq \frac{1}{10} \frac{\delta}{\sqrt{d_r}}.
\end{align*}
    Here we employed \cref{enum:sync-3-item-1} in order to split the sum in the first step and bound $u_r$ with the help of \cref{enum:sync-3-item-1} in the second step. In the last step we assume $\lambda >0$ to be a sufficiently large constant.

    We are now ready to translate the statement in (\ref{eq:sync-3-mega-event}) into a concentration result. We first use the recursive bounds on $\f{r+1}$ and $d_{r+1}$ followed by an application of the Weierstrass product inequality and finally use \cref{enum:sync-3-item-4,enum:sync-3-item-5} which yields 
    \[
        \frac{\F{\rthree}}{D_{\rthree}} < \frac{\f{\rtwo}}{d_{\rtwo}} \prod_{r=\rtwo}^{\rthree -1} \frac{(1 + \Delta_{\f{}}(r,\delta))}{(1-\Delta_d(r,\delta) \frac{u_k}{n})} \leq \frac{\f{\rtwo}}{d_{\rtwo}} \cdot  \frac{1 + 2 \sum_{r=\rtwo}^{\rthree -1} \Delta_{\f{}}(r,\delta)}{1 -  \sum_{r=\rtwo}^{\rthree -1} \Delta_d(r,\delta) \frac{u_r}{n}} < \frac{\f{\rtwo}}{d_{\rtwo}}\left( 1 + \frac{1}{8} \cdot \frac{\delta}{\sqrt{\f{r}}} \right).
    \]
    What remains is to take a closer look at the probability of (\ref{eq:sync-3-mega-event}). Along the lines of (\ref{eq:sync-1-error-sum}) in the proof of \cref{lem:sync-to-linear}, it follows that $\sum_{r=\rtwo}^{\rthree -1} \exp(-c_1 \delta_r^2) < 7 \exp(-c_1 \delta^2)$ when using that $\delta^2 > 1/c_1 = \BigOmega{1}$. The second sum may be bounded along the lines of (\ref{eq:sync-2-prob-1}) in the proof of \cref{lem:sync-to-muk}, where we employ \cref{enum:sync-3-item-3} to split the sum. A bound for the third sum can be developed along the lines of   (\ref{eq:sync-2-undecided-prob}) which is contained in the proof of  \cref{lem:sync-to-muk}. In total, this allows us to lower bound the probability of (\ref{eq:sync-3-mega-event}) by  $1 - c' \cdot \exp(-c' \delta
   ^2)$ for some constant $c' > 0$.
    
    \paragraph{Lower Bound} Per assumption we know that at the start of round $\rtwo$ we have that $100 \cdot \mu_{\rtwo}< \delta^2$. In the worst case, no remaining agent adopts opinion $i$. Therefore, we have for any round $r \geq \rtwo$ that 
\[
	\frac{\F{r}}{D_{k}} \geq \frac{\f{\rtwo}}{d_{\rtwo} +  u_{\rtwo}} = \frac{\f{\rtwo}}{d_{\rtwo}} \left( \frac{1}{1 + \frac{u_{\rtwo}}{d_{\rtwo}}}\right) > \frac{\f{\rtwo}}{d_{\rtwo}}(1 - \frac{u_{\rtwo}}{d_{\rtwo}}).  
\]
As $100 \cdot \mu_{\rtwo}= 100 \cdot \frac{\f{\rtwo} u_{\rtwo}}{n} \leq \delta^2$ and $d_{\rtwo} \geq n(1 - 1/\lambda)$, we have 
\[
	\frac{u_{\rtwo}}{d_{\rtwo}} \leq  \frac{u_{\rtwo}}{(1- \frac{1}{\lambda})n} \leq \frac{1 +\frac{2}{\lambda}}{100} \cdot \frac{\delta^2}{\f{\rtwo}}  < \frac{1}{8} \cdot \frac{\delta^2}{\f{\rtwo}}
\]
and conclude the proof of the lower bound.
\end{proof}


\begin{lemma}
\label{lem:sync-4-last-steps}
    Let $\rthree$ be the first round where $\delta > \frac{U_{\rthree}}{\sqrt{n}} \cdot 10$  and $U_{\rthree} < n/\lambda$ for a large enough constant $\lambda > 0$. Additionally, let $U_{\rfour}$ be the first round with $U_{\rfour} = 0$. Then, we have 
	\[
		\frac{\F{\rfour}}{D_{\rfour}} = \frac{\F{\rfour}}{n}= \frac{\F{\rthree}}{D_{\rthree}} \left(1 \pm \frac{1}{8} \cdot \frac{\delta}{\sqrt{\F{\rthree}}} \right)
	\]
	with probability $1-c\exp(-c' \delta^2)$ where $c,c' >$ are constants.
\end{lemma}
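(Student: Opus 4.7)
The plan follows the round-by-round strategy used in the proofs of \cref{lem:sync-to-muk,lem:sync-til-uk}, with the key observation that from round $\rthree$ onward the undecided population is already so small that the total future contribution of new $i$-agents is dominated by the target deviation $\delta\sqrt{\F{\rthree}}$.

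The lower bound requires no concentration. Since no decided agent ever leaves opinion $i$, $\F{\rfour}\ge \F{\rthree}$ deterministically. Using $u_{\rthree}\le \delta\sqrt{n}/10$ (from $\rthree$'s defining condition) together with $\F{\rthree}\le n$, I get
\[
  \frac{\F{\rfour}}{n} \;\ge\; \frac{\F{\rthree}}{n} \;=\; \frac{\F{\rthree}}{D_{\rthree}}\cdot\frac{D_{\rthree}}{n} \;\ge\; \frac{\F{\rthree}}{D_{\rthree}}\!\left(1-\frac{u_{\rthree}}{n}\right) \;\ge\; \frac{\F{\rthree}}{D_{\rthree}}\!\left(1-\frac{1}{8}\frac{\delta}{\sqrt{\F{\rthree}}}\right).
\]

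For the upper bound I iterate round by round. Fixing $\F{r}=\f{r}$, $D_r=d_r$, $U_r=u_r$ at some $r\ge \rthree$, the number of new $i$-joiners follows $\Bin(u_r,\f{r}/n)$ with mean $\mu_r = u_r\f{r}/n$, while the number of newly decided agents follows $\Bin(u_r,d_r/n)$. The defining property of $\rthree$ implies $\sqrt{\mu_r}\le \delta\sqrt{\F{\rthree}}/10$, so the per-round target deviation $\delta\sqrt{r-\rthree+1}$ sits deep in the super-exponential regime of the binomial; invoking the far-right Chernoff estimate yields $\F{r+1}\le \f{r}(1+u_r/n)(1+\Delta_f(r,\delta))$ with failure probability $\exp(-c\delta^2(r-\rthree+1))$, and a standard Chernoff bound on $D_{r+1}$ gives a matching $D_{r+1}\ge d_r(1+u_r/n)(1-\Delta_d(r,\delta)u_r/n)$. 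A separate Chernoff bound on $\Bin(u_r,u_r/n)$ yields $u_{r+1}\le 2u_r^2/n$ \whp, so the $u_r$ contract quadratically, $\rfour-\rthree = \BigO{\log\log n}$, and the sums $\sum \Delta_f(r,\delta)$ and $\sum \Delta_d(r,\delta)u_r/n$ converge to $\BigO{\delta/\sqrt{\F{\rthree}}}$. Multiplying the per-round estimates and applying the Weierstrass product inequality as in the closing step of \cref{lem:sync-til-uk} gives $\F{\rfour}/n = \F{\rfour}/D_{\rfour} \le \F{\rthree}/D_{\rthree}\cdot(1+\frac{1}{8}\delta/\sqrt{\F{\rthree}})$, and the accumulated failure probability telescopes to $c\exp(-c'\delta^2)$ because the per-round bound decays as $\exp(-c\delta^2\cdot 2^{r-\rthree})$ and a super-geometric series is dominated by its first term.

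The main obstacle will be the case split inside $\Delta_f$ familiar from \cref{lem:sync-til-uk}: for small $r-\rthree$ the ordinary Chernoff factor $\delta\sqrt{r-\rthree+1}/\sqrt{\mu_r}$ governs the error, while for larger $r-\rthree$ the far-tail form takes over, and one must show that the transition between the two regimes happens at most once (so the ``far-tail'' contribution is a single term) and that the infinite sums still converge once $u_r$ has contracted quadratically a few times. I expect this bookkeeping to parallel items~(1)--(4) in the proof of \cref{lem:sync-til-uk} with the only real simplification being that $D_{\rfour}=n$ exactly, so the final $u_r/n$ correction disappears when translating back into the ratio statement.
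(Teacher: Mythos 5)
Your lower bound is fine --- it is in fact a clean, fully deterministic argument (no decided agent ever abandons opinion $i$, so $\F{\rfour}\ge \F{\rthree}$, and $D_{\rthree}/n = 1-U_{\rthree}/n \ge 1-\delta/(10\sqrt{n}) \ge 1-\frac{1}{8}\delta/\sqrt{\F{\rthree}}$ since $\F{\rthree}\le n$), and it matches the spirit of the paper's lower bound.

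The upper bound, however, has a genuine gap. You propose to keep running the round-by-round concentration machinery of \cref{lem:sync-til-uk} past round $\rthree$, relying on the w.h.p.\ quadratic contraction $U_{r+1}\le 2U_r^2/n$ both to make $\sum_r \Delta_{\f{}}(r,\delta)$ converge and to make the failure probabilities telescope to $c\exp(-c'\delta^2)$. But the defining property of $\rthree$ is precisely that $U_{\rthree}^2/n < \delta^2/100$: from this round on the undecided population is too small to concentrate. The Chernoff bound behind $U_{r+1}\le 2U_r^2/n$ fails with probability $\exp(-U_r^2/(3n))$, which is \emph{not} $\exp(-\Omega(\delta^2))$ once $U_r^2/n \ll \delta^2$; for instance with $U_{\rthree+1}=\Theta(\delta^2)$ and $\delta$ polynomial in $n$, the event $U_{\rthree+2}\le 2U_{\rthree+1}^2/n$ amounts to $U_{\rthree+2}=0$ and fails with probability $\Theta(U_{\rthree+1}^2/n)$, which can vastly exceed $\exp(-c'\delta^2)$. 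Without the contraction, the doubly-exponential growth of $\ln(\delta_r^2/\mu_r)$ (item (2) in the proof of \cref{lem:sync-til-uk}) is unavailable, so neither the deviation sums nor the error probabilities close; note also that a purely deterministic bound from $\rthree$ on does not suffice either, since the $U_{\rthree}\le \delta\sqrt{n}/10$ potential joiners can exceed the budget $\delta\sqrt{\F{\rthree}}/8$ when $\F{\rthree}\ll n$. The paper's proof resolves exactly this tension: it performs \emph{one} concentrated round at $\rthree$ (splitting on whether $\delta\le 10\sqrt{\F{\rthree}U_{\rthree}/n}$), uses a single additional Chernoff bound to get $U_{\rthree+1}\le \delta^2/50$ with probability $1-\exp(-\delta^2/300)$, and then bounds all remaining rounds \emph{deterministically}: the at most $\delta^2/50$ surviving undecided agents can change $\F{}$ by at most $\delta^2/50 \le \delta\sqrt{\F{\rthree}}/50$ in the worst case, which fits the error budget without spending any further probability. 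You need this switch to a worst-case argument; the iteration cannot be pushed all the way to $\rfour$.
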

\begin{proof}
    We abbreviate $r={\rthree}$. Fix $U_r = u_r$, $\F{r} = \f{r}$, and $D_r = d_r$. First we consider the concentration of $\F{r+1} / D_{r+1}$. We distinguish two cases.
    
    \paragraph{Case 1: $\delta \leq \sqrt{\f{r} u_r / n} \cdot 10$} As usual we model $\F{r+1} = \f{r} + \Bin(u_r, \frac{\f{r}}{n})$. Chernoff bounds yield with probability at least $1-2\exp(-\delta^2 /300)$ that 
    \[
        \F{r+1} = \f{r} + \frac{\f{r} u_r}{n}(1 \pm   \frac{1}{10}\frac{\delta}{\sqrt{\f{r} u_r / n}}) \asymp \f{r} (1 + \frac{u_r}{n}) (1 \pm  \frac{1}{10} \frac{\delta \sqrt{\frac{u_r}{n}}}{\sqrt{\f{r}}}) \asymp \f{r} (1 + \frac{u_r}{n}) \left(1 \pm \frac{1}{10 \sqrt{\lambda}} \cdot \frac{\delta}{\sqrt{a_r}}\right)
    \]
    A similar approach yields that $D_{r+1} = d_{r} (1 + \frac{u_r}{n}) \left(1 \pm \frac{1}{\lambda} \cdot \frac{\delta}{\sqrt{\f{r}}}\right)$ holds also with probability at least $1 -2\exp(-\delta^2 / 3)$ and a union bound application yields 
    \[
        \Pr \left[ \frac{\F{r+1}}{D_{r+1}} = \frac{\f{r}}{d_r} (1 \pm \frac{4}{10 \cdot \sqrt{\lambda}} \frac{\delta}{\sqrt{\f{r}}})  \right] \geq 1 - 4 \exp(-\delta^2 /3).
    \]
    \paragraph{Case 2: $\delta > \sqrt{\f{r} u_r / n} \cdot 10$} In this case, the lower bound can be shown with completely the same argument as the lower bound of \cref{lem:sync-til-uk}. Which yields that (with probability $1$)
    \[
        \frac{\F{r}}{D_r} \geq \frac{\f{r}}{d_r} \left(1 - \frac{1+\frac{2}{\lambda}}{100} \frac{\delta^2}{\sqrt{\f{r}}} \right) \geq \frac{\f{r}}{d_r} \left(1 - \frac{1}{50}\frac{\delta}{\sqrt{\f{r}}}\right).
    \]
    For the upper bound, we again use that $\F{r+1} = \f{r} + \Bin(u_r, \f{r} / n)$
    A Chernoff bound application yields that, with probability $1 - \exp(-\delta^2 / 300)$, 
    \[
        \F{k+1} = \f{r} + \Bin(u_r, \f{r} / n) < \f{r} + \frac{u_r \cdot \f{r}}{n} + \delta^2 / 100 < \f{r} + \frac{\delta^2}{50}
    \]
    as $\delta^2 / 100 > \sqrt{\f{r} u_r / n}$. Therefore, it follows that
    \[
       \frac{\F{r+1}}{D_{r+1}} < \frac{\f{r} + \delta^2 / 50}{d_r} \leq \frac{\f{r}}{d_r}(1 + \frac{\delta^2}{50 \cdot \f{r}}) < \frac{\f{r}}{d_r} (1 + \frac{1}{50}\frac{\delta}{\sqrt{\f{r}}}).
    \]
    Hence, in any case, $\F{r+1} / D_{r+1}$ is concentrated around $\f{r} / d_r$ with relative error of order $(1 \pm \delta / \sqrt{\f{r}})$.
    
    Now, consider $U_{r+1} = \Bin(u_r, u_r / n)$. According to our assumption we have $\delta > \frac{u_r}{\sqrt{n}} \cdot 10$ which implies that  $\Ex{\Bin(u_r, u_r / n)} \leq \delta^2 / 100$ and a Chernoff bound application yields that, with probability at least $1-\exp(-\delta^2 / 300)$, we have  $U_{r+1} \leq \delta^2 / 50$. Continuing from round $r+1$ until $\rfour$, we assume that either all or none of the remaining agents adopt opinion $i$. As only $\delta^2 / 50$ agents remain this causes an additional multiplicative error term of at most $(1 \pm \frac{\delta^2}{50 \cdot \f{r}})$ and the result follows.
\end{proof}

\section{A Note on Uniformity}
\label{sec:uniform-protocol}

\Vardef{Stage}
\Vardef{Bit}
\Vardef{Hit}
\Vardef{OddRound}
\Vardefx{InitialOpinion}{initial\std{O}pinion}
\def\Init{\mathsf{init}}
\def\Count{\mathsf{count}}
\def\Sync{\mathsf{sync}}
\def\Run{\mathsf{run}}

In our basic protocol for the gossip model, all agents run (synchronously) one round of the \Decision part followed by \BigTheta{\log{n}} many rounds of the \Boosting part. In order to distinguish between those two parts, agents use a counter modulo \BigTheta{\log{n}}. 
Counting modulo \BigTheta{\log{n}}, however, requires knowledge of $n$.
In order to answer the open question posed in \cite{DBLP:journals/dc/BecchettiCNPST17}, we present a protocol which does not require knowledge of $n$ (or any function of $n$).
Such a protocol is called \emph{uniform}, since any such protocol can be applied to any population size without adaption.
In the following, we therefore assume a restrictive variant of the gossip model, where agents neither know $n$ nor $k$, nor do they have access to a source of randomness. (This does not apply to the interactions, though, which are still uniformly at random).

The main challenge is to get an approximation of $\log{n}$ which can then be used to replace $\T{BC}$ in \cref{alg:consensus-gossip-model}.
For this, we use the entropy encoded in the opinions of the agents when choosing a random communication partner.
That is, each agent may remember whether it has seen its own opinion when communicating with a random partner.
All agents that are marked after a two step selection process start a broadcast (we will see that this is at most a $1/4$-th fraction of all agents).
All agents $u$ that start a broadcast have a bit $\Bit u = \tfalse$.
Whenever an agent $v$ receives a broadcast by interacting with another agent $u$ that has either started or received a broadcast in an earlier round, it sets its own bit such that $\Bit v = \lnot \Bit u$.
This way, the agents can easily be divided into two sets, each of which consist of a constant fraction of all agents.
Applying now a counting procedure as in \cite{DBLP:conf/soda/AlistarhAEGR17}, one of the agents obtains a value $T$ of order $T = \Theta(\log n)$, which is then sent to all other agents in the system via a maximum broadcast.
This way, all agents can act in a completely synchronous manner, and our protocol can be deployed as described in \cref{sec:gossip-model} by setting $\T{BC}$ to the value of $T$. 

On an intuitive level, the agents run through three stages before executing the actual protocol in stage 4.
In the first stage, the initialization stage, agents will generate a synthetic coin (cf.\ \cite{DBLP:conf/soda/AlistarhAEGR17}). This coin is stored by each agent $u$ in the variable $\Bit u$. It is generated by a broadcast process.
In the second stage, the agents initialize a counter $T_u$. This counter $T_u$ is incremented by agent $u$ for as many rounds as agent $u$ interacts with another agent $v$ with $\Bit v = \Bit u$.
Note that a similar approach for approximating $\log{n}$ up to constant factors was already described in \cite{DBLP:conf/soda/AlistarhAEGR17}.
In the third stage, the synchronization stage, all agents $u$ broadcast the maximum counter value $T_u$.
In round $1000\cdot T_u$, the agents leave the synchronization stage and execute a variant of \cref{alg:consensus-gossip-model}.
The first three stages will make sure that each agent $u$ has \whp the same value $T_u = \BigTheta{\log{n}}$.
The final stage, the running stage, executes the actual protocol, with two minor changes: First, agents count the rounds modulo $1000\cdot T_u$. (Note that the factor of $1000$ was chosen for convenience and is not tight.)
Secondly, if agent $u$ encounters while executing the running stage another agent $v$ with $T_v > T_u$, it aborts the running stage, reverts its current opinion to the initial opinion, and goes back to the synchronization stage.

Formally, every agent $u$ has variables $\Stage u \in \set{\Init, \Count, \Sync, \Run}$, $T_u \in \mathbb{N}$, $\Round u \in \set{0, \dots, T}$, $\Opinion u, \InitialOpinion u \in \set{1, \dots, k}$, $\Bit u$, $\Hit u$, $\Undecided u \in \set{\ttrue , \tfalse}$.
The variables $\Opinion u$, $\Round u$ and $\Undecided u$ have the same meaning as in \cref{alg:consensus-gossip-model}, $\InitialOpinion u$ holds the initial opinion of agent $u$, $T_u$ (initially 0) is used to store an approximation of $\log n$, and the flags $\Bit u$ and $\Hit u$ (both initially \tfalse) are used in the first and second stage to set up $T_u$.
The variable $\Stage u$ (initialized to $\Init$) determines which set of instructions is used:
\begin{itemize}
\item The initialization stage ($\Stage u = \Init$) is defined in \cref{alg:uniform-1},
\item the counting stage ($\Stage u = \Count$) is defined in \cref{alg:uniform-2},
\item the synchronization stage ($\Stage u = \Sync$) is defined in \cref{alg:uniform-3}, and
\item the running stage ($\Stage u = \Run$) is defined in \cref{alg:uniform-4}.
\end{itemize}

\begin{figure}[p]
\noindent\begin{minipage}[t]{0.5\textwidth-1em}
\begin{lstalgo}[H]{Initialization stage\label{alg:uniform-1}}
Actions performed when agents $(u, v)$ interact
with $\Stage u = \Init$:

if $\Round u$ is even then$\label{alg:uniform-1-1}$
	if $\InitialOpinion u = \InitialOpinion v$ then
		$\Hit u \gets \ttrue$
	else
		$\Hit u \gets \tfalse$

if $\Round u$ is odd then
	if $\Hit u \land \Opinion u \neq \Opinion v$ then
		$\Stage u \gets \Count\label{alg:uniform-1-12}$

if $\Stage v \in \set{\Count, \Sync, \Run}$ then
	$\Stage u \gets \Count$
	$\Bit u \gets \lnot \Bit v$

$\Round u \gets \Round u + 1$
\end{lstalgo}
\end{minipage}\hfill\begin{minipage}[t]{0.5\textwidth-1em}
\dcmlinenumbersfalse
\begin{lstalgo}[H]{Counting stage\label{alg:uniform-2}}
Actions performed when agents $(u, v)$ interact
with $\Stage u = \Count$:

if $\Stage v \neq \Init$ then
	if $\Bit u = \Bit v$ then
		$T_u \gets T_u + 1$
	else
		$\Stage u \gets \Sync$

$\Round u \gets \Round u + 1$

\end{lstalgo}
\end{minipage}

\noindent\begin{minipage}[t]{0.5\textwidth-1em}
\begin{lstalgo}[H]{Synchronization stage\label{alg:uniform-3}}
Actions performed when agents $(u, v)$ interact
with $\Stage u = \Sync$:

if $T_u < T_v$ then
	$T_u \gets T_v$
	$\Round u \gets \Round v$
	$\Opinion u \gets \InitialOpinion u$

$\Round u \gets \Round u + 1$
if $\Round u \geq 1000 \cdot T_u$ then
	$\Round u \gets 0$
	$\Stage u \gets \Run$

\end{lstalgo}
\end{minipage}\hfill\begin{minipage}[t]{0.5\textwidth-1em}
\dcmlinenumbersfalse
\begin{lstalgo}[H]{Running stage\label{alg:uniform-4}}
Actions performed when agents $(u, v)$ interact
with $\Stage u = \Run$:

if $T_u < T_v$ then
	$\Stage u \gets \Sync$
	execute $\Sync$ stage

if $T_u = T_v$ then
    /* Decision Part: {\normalfont $\Round u = 0$} */
    if $\Round u = 0$ then
    	if $\Opinion u \neq \Opinion v$ then
    		$\Undecided u \gets \ttrue$
    	else
    	    $\Undecided u \gets \tfalse$
    
    /* Boosting Part: {\normalfont $\Round u > 0$} */
    if $\Round u > 0$ and $\Undecided u$ then
    	if not $\Undecided v$ then
    		$\Undecided u \gets \tfalse$
    		$\Opinion u \gets \Opinion v$
    
    /* Synchronization by counting modulo $1000 T_u$ */
	$\Round u \gets (\Round u + 1) \bmod (1000 T_u)$
\end{lstalgo}
\end{minipage}
\end{figure}

We start our analysis by analyzing the time $\rho$ when all agents have completed the counting stage. We show that $\rho = \BigO{\log{n}}$. Then we show that at time $\rho$ the maximum value $T_u(\rho)$ over all agents $u$ is in $\BigOmega{\log{n}}$.
\begin{lemma}
Let $\rho = \min\set{r | \forall u : \Stage u(r) \neq \Init \wedge \Stage u(r) \neq \Count}$.
Then, \whp
\begin{enumerate}
\item $\rho < 50\log{n}$, \label{statement-rho-1}
\item $\max_u\set{T_u(\rho)} > 0.1\log{n}$. \label{statement-rho-2}
\end{enumerate}
\end{lemma}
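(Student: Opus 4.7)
The plan is to analyze the protocol in three overlapping sub-phases and bound each by $\BigO{\log n}$, with constants summing to less than $50\log n$. First, I will show that within $\BigO{\log n}$ rounds some agent has transitioned from Init to Count via the opinion-based mechanism of \cref{alg:uniform-1}. For an agent $u$ with opinion $i$, the probability of triggering in a pair of consecutive (even, odd) rounds is exactly $x_i(n-x_i)/n^2$ (match in the even round, mismatch in the odd round, both with a fresh random partner). Assuming $k \geq 2$ (otherwise the configuration is already at consensus and the claim is vacuous), the expected number of triggers per pair of rounds is $E = \sum_i x_i^2(n-x_i)/n^2 \geq (n-1)/n$, with the minimum attained when one opinion has support $n-1$. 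A second-moment argument then lower bounds the probability of at least one trigger per pair of rounds by a positive constant, so $\BigO{\log n}$ pairs suffice \whp.

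Once one agent is in Count, the remaining Init agents adopt the Count stage through a one-way broadcast: whenever an Init agent $u$ samples an agent $v$ in Count, Sync or Run, $u$ transitions to Count and sets $\Bit u = \lnot \Bit v$. This is a standard broadcast in the random phone call model, which completes in $\BigO{\log n}$ rounds \whp~\cite{DBLP:conf/focs/KarpSSV00}. Since bits alternate along the broadcast tree, a careful round-by-round accounting of the informed population — doubling in the early phase and saturating in the final $\BigO{1}$ rounds — will show that both bit values are carried by $\BigOmega{n}$ agents \whp\ at the end of the broadcast.

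After all agents are in Count, each agent $u$ exits to Sync upon sampling an agent $v$ (not in Init) with $\Bit v \neq \Bit u$. Since both bit values are held by $\BigOmega{n}$ agents, the per-round exit probability is at least some constant $p > 0$, and a union bound shows that all agents leave Count within $\BigO{\log n}$ additional rounds \whp. Summing the three phases with careful constants yields $\rho < 50\log n$ \whp, proving Statement~\ref{statement-rho-1}.

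For Statement~\ref{statement-rho-2}, the counter $T_u$ equals the number of rounds agent $u$ spends in Count before exiting. Each such round the exit probability is at most some constant $q < 1$, so $\Pr[T_u \geq 0.1\log n] \geq (1-q)^{0.1 \log n} = n^{-c}$ for some $c < 1$. Applied to the $\BigOmega{n}$ agents that entered Count during the broadcast — whose exit events are approximately independent given the broadcast structure — a standard probabilistic argument shows that at least one agent satisfies $T_u \geq 0.1\log n$ \whp. The main technical obstacle will be rigorously establishing that both bit classes contain $\BigOmega{n}$ agents after the broadcast; the broadcast tree has a random level structure, and the 2-colouring by parity of levels must be shown balanced enough via a Chernoff-based inductive argument on the round-by-round growth of each colour class.
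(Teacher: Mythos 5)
Your overall skeleton matches the paper's: a first trigger of the opinion-based Init-to-Count transition in $O(\log n)$ rounds, a broadcast that moves everyone out of Init in $O(\log n)$ further rounds, a geometric-tail upper bound on each counter for Statement 1, and an inverse-polynomial survival probability for $\Omega(n)$ agents for Statement 2. The computation for the first trigger and the geometric arguments are sound. However, the load-bearing step — that \emph{both} bit classes end up with a constant fraction of the agents — is exactly the part you defer as the ``main technical obstacle,'' and the route you sketch for it (a Chernoff-based induction on the two colour classes, round by round along the broadcast) is the wrong tool: in the early rounds the informed population is $O(1)$ or polylogarithmic, so Chernoff gives no useful concentration on how the colours split, and an adversarially unlucky early split is not obviously self-correcting with high probability. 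Without this balance you cannot conclude that the per-round exit probability from Count is bounded away from both $0$ and $1$, which both statements rely on.

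The paper sidesteps the tree entirely. Let $I(t)$ be the set of agents no longer in Init and let $\phi$ be the \emph{last} round with $|I(\phi)| < n/5$. One shows that in a single round at most a $9/20$ fraction of the remaining Init agents can leave (probability at most $1/5$ of contacting a non-Init agent, plus at most $1/4$ for the hit/miss mechanism), so $n/5 \le |I(\phi+1)| \le (13/20+o(1))n$; hence at least $(7/20-o(1))n$ agents are still in Init at round $\phi+1$, while the larger colour class already has at least $n/10$ agents. Each still-uninitiated agent contacts that class with probability at least $1/10$ in round $\phi+1$ and joins the \emph{opposite} class, so by Chernoff both classes have at least $n/32$ agents from round $\phi+2$ onward — a one-round argument at a carefully chosen time, applied only when the relevant populations are already linear in $n$. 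A second, related point: for Statement 2 you apply the survival bound to ``the $\Omega(n)$ agents that entered Count during the broadcast,'' but agents that enter Count before the balance is established may sit in a bit class that is still tiny, making their per-round continuation probability vanish. You need to restrict to the $\Omega(n)$ agents that are still in Init at round $\phi+1$ and therefore only start counting after round $\phi+2$, for whom every counting round has continuation probability at least $1/32$; given the realized colour classes, these agents' contacts are genuinely (not just ``approximately'') independent, which is what justifies the final Chernoff bound over the $\Omega(n)$ trials.
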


\begin{proof}
The proof consists of three parts. Note that a similar analysis was already conducted in \cite{DBLP:conf/soda/AlistarhAEGR17} in a different setting. For completeness, we give the proof adapted to our model.

\paragraph{Part 1}
Let $r_0$ be the first time step, in which some agents change their stage from $\Init$ to $\Count$. Then, the set of these agents has size at least $1$ and at most $n/4(1+o(1))$ \whp (for the details see beginning of part II).
We prove now an intermediate result: there is a time when a certain constant fraction of agents concludes the initialization stage such that at least $1/32\cdot n$ many agents have their bit set to \ttrue and to \tfalse, respectively.

Let $I(t)$ be the set of agents $u$ that have $\Stage u \neq \Init$ at the beginning of round $t$, i.e.,
\[ I(t) = \set{u | \Stage u(t) \neq \Init}, \]
and define $\phi$ to be the last round $t$ in which $\abs{I(t)}/n < 1/5$ (note that $\phi$ can be $0$), i.e., 
\[ \phi = \max\set{t | \abs{I(t)} < \frac{1}{5} \cdot n } .\]
Then we have (by definition) 
\begin{equation} \label{eq:rho-0}
\abs{I( \phi + 1)} \geq n/5 .
\end{equation}

Now observe that an agent leaves the initialization stage in round $\phi$ when it interacts with another agent in $I(\phi)$ or it successfully passes \cref{alg:uniform-1-1} to \cref{alg:uniform-1-12} in \cref{alg:uniform-1} in rounds $\phi-1$ and $\phi$. 
The probability to encounter another agent with $\Stage u \neq \Init$ is at most $1/5$.
Additionally, if $\phi$ is odd, an agent $u$ with $Stage u = \Init$ leaves the initialization stage with probability at most $1/4$ due to hitting the own opinion in round $\phi -1$ and a different opinion in round $\phi$.
These events might not be independent; nevertheless, by a union bound we get that each agent leaves the initialization stage in round $\phi$ with probability at most $9/20$.
Since the agents act independently, an application of Chernoff bounds gives that \whp
\begin{equation} \label{eq:rho-1} \abs{ I( \phi + 1)} < \left(\frac{13}{20} + o(1)\right) \cdot n . \end{equation}

We now consider round $\phi + 2$.
Let $I_b(t)$ be defined as the set of agents $u$ that have $\Stage u \neq \Init$ \textbf{and} $\Bit u(t) = b$ in round $t$.
W.l.o.g.\ assume $\abs{I_0(\phi + 1)} \geq \abs{I_1(\phi + 1)}$
By \eqref{eq:rho-0} we have $\abs{I( \phi + 1)} \geq n/5$, and hence $\abs{I_0( \phi + 1)} \geq n/10$ (since we assumed $\abs{I_0(\phi + 1)} \geq \abs{I_1(\phi + 1)}$).
By \eqref{eq:rho-1}, at least $(7/20 - o(1)) \cdot n$ many agents are not in $I(\phi + 1)$.
Each of these agents interacts in round $\phi + 1$ with an agent in $I_0(\phi + 1)$ with probability at least $1/10$.
Again, an application of Chernoff bounds gives that at least
$(7/200 - o(1)) \cdot n \geq 1/32 n$ many agents join $I_1(\phi + 2)$ in round $\phi + 1$.
Hence for any $r \geq \phi + 2$ we have $I_1(r) \geq 1/32 n$, and by assumption
$\abs{I_0(\phi + 1)} \geq \abs{I_1(\phi + 1)}$ we have $I_0(r) \geq 1/10 n \geq 1/32 n$. (The case $\abs{I_0(\phi + 1)} \geq \abs{I_1(\phi + 1)}$ follows analogously by exchanging $I_0$ with $I_1$ and vice versa.)
This implies that for any $r \geq \phi + 2$ and any $b \in \set{\ttrue, \tfalse}$ we have \whp
\begin{equation}
\frac{\abs{I_b(r)}}{n} \in [1/32, 1-1/32] . \label{eq:rho-2}
\end{equation}

\paragraph{Part 2} We now show the first statement, $\rho < 50\log{n}$.

We start by showing that at least one agent enters the counting stage within at most $2\log{n}$ rounds.
Let $E_{r,u}$ be the event that in round $2r$ agent $u$ sets $\Stage u$ to $\Count$ in \cref{alg:uniform-1-12} of \cref{alg:uniform-1}.
The event $E_{r,u}$ occurs if and only if two events occur:
agent $u$ interacts with an agent $v$ of the same opinion in round $2r-1$, and agent $u$ interacts with an agent $w$ of a different opinion in round $2r$. For both, the probability is at least $1/n$, and the events are complimentary. (We assume that $k > 1$ and agents may interact with themselves.) Hence we observe that $\prob{E_{r,u}} > 1/n\cdot(1-1/n)$.
Note that all interactions of agents are independent and hence the $E_{r,u}$ are independent.
Then the probability that no agent enters the counting stage within $2\log{n}$ rounds is \[
\prod_{r'=1}^{r}\prod_{u}\left(1 - \prob{E_{r',u}} \right) \leq \left( 1 - \frac{1}{n}\cdot\left(1-\frac{1}{n}\right) \right)^{2n\log{n}} < 1/n. \]

Let now $r_0$ be the first round in which any of the events $E_{r,u}$ occurs for any agent $u$, i.e., $r_0 = \min\set{r | E_{r,u}}$.
Observe that any agent $u$ with $\Stage u = \Init$ proceeds to the counting stage once it interacts with another agent $v$ where $\Stage v \neq \Init$.
This means, the way how agents proceed to the counting stage can be modeled by a broadcast process starting in round $r_0$.
It follow from \cite{DBLP:conf/focs/KarpSSV00} that at time $6\log{n}$ every agent has concluded the initialization phase.

It remains to show that every agent concludes the counting stage after at most $44\log{n}$ additional rounds \whp.

Let $T_u$ be the counter of agent $u$ when it has concluded the counting stage.
As before, let $\phi$ be defined as the last round $t$ in which $\abs{I(t)}/n < 1/5$.
Let furthermore $p_{u,r}$ be the probability that $u$ interacts with another agent $v$ with $\Bit u(r) = \Bit v(r)$ in a round $r \geq \phi+2$.
In \eqref{eq:rho-2}, we obtained bounds on $\abs{I_b(r)}/n$ for any $r \geq \phi + 2$ and any $ b \in \set{\ttrue, \tfalse}$ and from these bounds we observe $p_{u,r} \in [1/32, 1-1/32]$ for $r \geq \phi+ 2$.

Observe that the number of successes of agent $u$ can be majorized by $T_u \preceq \phi + 2 + T_u^{(\max)}$ as follows.
The first expression $\phi+ 2 \leq 6\log{n}$ is an upper bound on $\phi + 2$. (In the first $\phi + 1$ rounds, we do not have a bound on $\abs{I_1(r)}/\abs{I_0(r)}$ and hence bound the number of successes by the number of rounds.)
The second expression is a geometrically distributed random variable $T_u^{(\max)} \sim  G(1/32)$. (Note that $G(1/32)$ models a probability of $1/32$ as \emph{failure probability}. In the case of a failure, the counting stage ends.)
With this random variable $T_u^{(\max)}$ we obtain
\[
\prob{T_u \leq 50\log{n} } \geq \prob{T_u^{(\max)} \leq 44\log{n}} 
= 1 - \left(1-\frac{1}{32}\right)^{44\log{n}} \geq 1 - 1/n^2.
\]
We take a union bound over all $n$ agents and get that $\max_u\set{T_u(\rho)}\leq 50\log{n}$ \whp.

\paragraph{Part 3}
We finally show the second statement, $\max_u\set{T_u(\rho)} > 0.1\log{n}$.

Let $T_u$ be the counter of agent $u$ when it has concluded the counting stage.
As before, let $\phi$ be defined as the last round $t$ in which $\abs{I(t)}/n < 1/5$, and recall that in round $\phi + 1$ at least $(7/20 - o(1)) \cdot n \geq 1/3 \cdot n$ many agents are not in $I(\phi + 1)$.
Let $R$ be the set of these agents with $\abs{R} \geq 1/3 n$.
The agents in $R$ leave the initialization stage in round $\phi + 1$ at the earliest.
In particular, they run the counting stage starting in a round $r \geq \phi + 2$.
Let now $p_{u,r}$ be the probability that one of these agents $u \in R$ interacts with another agent $v$ with $\Bit u(r) = \Bit v(r)$ and recall that $p_{u,r} \in [1/32, 1-1/32]$ for any $r \geq \phi + 2$ and any $ b \in \set{\ttrue, \tfalse}$.
Hence the number of successes of agent $u$ can be minorized by $T_u \succeq T_u^{(\min)}$ where $T_u^{(\min)} \sim G(1-1/32)$ is a geometrically distributed random variable with parameter $1-1/32$.
We compute
\[
\prob{T_u \leq 0.1\log{n} } = 1 - \left(1-\left(1-\frac{1}{32}\right)\right)^{0.1\log{n}}
= 1 - \left(\frac{1}{2}\right)^{0.5\log n}
= 1 - 1/\sqrt{n}.
\]
This means that the probability that an agent $u \in R$ reaches a counter value $T_u$ of at least $0.1\log{n}$ is at least $1/\sqrt{n}$.
Recall that the agents act independently. A Chernoff bound
on the number of agents in $R$ that reach a counter value of at least $0.1\log{n}$ gives the desired result: at least one agent $u$ reaches 
$T_u \geq 0.1\log{n}$ and hence at time $\rho$ when all agents have concluded the counting stage it holds that $\max_u\set{T_u(\rho)} > 0.1\log{n}$.
\end{proof}

The technical lemma above allows us to show the main results for this section.
\begin{proposition}
The protocol defined via \cref{alg:uniform-1,alg:uniform-2,alg:uniform-3,alg:uniform-4} is uniform. Within $\BigO{\log{n}}$ rounds, every agent $u$ adopts the same value of $T_u = T \in [0.1\log{n},50\log{n}]$. In round $1000T$, every agent starts to run \cref{alg:uniform-4}, synchronizing phases modulo $1000 T$.
\end{proposition}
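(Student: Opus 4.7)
Uniformity is immediate from inspection of Algorithms~\ref{alg:uniform-1}--\ref{alg:uniform-4}: no line reads $n$ or $k$, and every threshold (for example $1000\cdot T_u$) is computed from the agent's local state. The substantive claims are that (i)~every agent eventually holds the same counter value $T$, (ii)~this common $T$ lies in $[0.1\log n,\,50\log n]$, and (iii)~the absolute round at which every agent first enters Algorithm~\ref{alg:uniform-4} is exactly $1000T$. I plan to combine the preceding lemma with a standard broadcast argument and an invariant governing how $\Round{}$ is transferred during adoption.

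For (i) and (ii) I set $T^{*}:=\max_{u}T_u(\rho)$. The preceding lemma yields, with high probability, $\rho\le 50\log n$ and $T^{*}\in[0.1\log n,\,50\log n]$ (the upper bound simply because $T_u$ is incremented at most once per round and only while $u$ is in the counting stage). After round $\rho$ no agent is in the initialization or counting stage, so the only way $T_u$ can change is via the $T_u\gets T_v$ rule of Algorithm~\ref{alg:uniform-3}, which raises $T_u$ only to a value already present elsewhere; hence $T_u\le T^{*}$ for all $u$ and all $r\ge\rho$. The set $S(r):=\{u:T_u(r)=T^{*}\}$ is therefore nonempty and monotone non-decreasing, and whenever some $u\notin S(r)$ contacts any $v\in S(r)$, the update rule of Algorithm~\ref{alg:uniform-3} (triggered directly when $u$ is in Sync, or via the ``execute $\Sync$ stage'' clause of Algorithm~\ref{alg:uniform-4} when $u$ happens to be in Run) forces $T_u\gets T^{*}$. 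The growth of $S(r)$ therefore dominates a standard push-pull broadcast on the complete graph and, by~\cite{DBLP:conf/focs/KarpSSV00}, reaches all $n$ agents within some $\tau=O(\log n)$ additional rounds \whp; since $\rho+\tau\le 100\log n\le 1000T^{*}$, the infection completes strictly before any agent with $T_u=T^{*}$ could first satisfy $\Round u\ge 1000T_u$.

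For (iii) I would maintain the invariant that, for every $r\ge\rho$, every agent $u$ with $T_u(r)=T^{*}$ that is still in Sync satisfies $\Round u(r)=r$. The base case is immediate at round $\rho$: $\Round u$ is incremented exactly once per round by every one of the four algorithms and is never reset before the transition from Sync to Run, so $\Round u(\rho)=\rho$. When some new $u$ joins $S$ at a round $r$ by interacting with a synchronized $v$, Algorithm~\ref{alg:uniform-3} sets $\Round u\gets\Round v=r$ and then increments, giving $\Round u(r+1)=r+1$, which preserves the invariant. Because the broadcast completes before any source can enter Run, the contacted $v$ is always still in Sync at the moment of infection, so the inductive step is valid throughout. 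Consequently at absolute round $1000T^{*}$ every agent simultaneously satisfies $\Round u\ge 1000T_u$ with $T_u=T^{*}$, resets $\Round u$ to $0$, and enters Run in that very round; from then on all agents execute Algorithm~\ref{alg:uniform-4} in lockstep modulo the common value $1000T^{*}$.

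The main obstacle is handling agents that transition to Run \emph{before} joining $S$; in particular, any $u$ with $T_u(\rho)=0$ satisfies $\Round u\ge 1000\cdot 0$ immediately and jumps to Run at round $\rho$. The first clause of Algorithm~\ref{alg:uniform-4} is designed for exactly this situation: such a $u$ is pulled back to Sync as soon as it contacts any $v$ with $T_v>T_u$, at which point the ``execute $\Sync$ stage'' call overwrites $(T_u,\Round u,\Opinion u)\gets(T_v,\Round v,\InitialOpinion u)$. Since the broadcast argument is indifferent to the stage of $u$ at the moment of contact, these premature Run excursions fold into the same infection analysis, and the $\Opinion$-reset guarantees that they leave no residue on the final consensus state. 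Combining the three ingredients yields the proposition.
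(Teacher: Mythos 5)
Your proposal is correct and follows essentially the same route as the paper's proof: the same invariant (agents holding the maximal counter value $T$ have $\Round{}$ equal to the absolute round, preserved because adoption copies $\Round v$ along with $T_v$), the same broadcast-of-the-maximum argument via \cite{DBLP:conf/focs/KarpSSV00}, and the same timing comparison $\rho + O(\log n) \le 100\log n \le 1000T$. Your explicit treatment of agents that prematurely enter the running stage is a detail the paper leaves implicit, but it does not change the argument.
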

\begin{proof}
From the description of the pseudo codes it is clear that the protocol does not depend on $n$.
To prove the correctness of the synchronization, we first show the following invariant. Recall that $\rho$ is the time when all agents have completed the counting stage.

Let $T = \max_u{T_u(\rho)}$ be the maximal value obtained by any agent in the counting stage. For any agent $u$ with $T_u = T$ it holds in any round $r$ that $\Round u(r) = r \bmod 1000T$. (Notice that $T_u$ may change during the execution of the protocol.)

Observe that there are two possible ways how an agent $u$ can obtain $T_u = T$. Either agent $u$ samples $T$ in the counting stage. In this case, the agent has correctly counted rounds from the very beginning, and it proceeds to count rounds modulo $1000T$ for the rest of the time. (The reset $\Round u \gets 0$ at the end of the synchronization phase behaves identical to counting modulo $1000T$.)
Otherwise, the agent $u$ adopts $T$ from another agent $v$ in some round $r$. In this case, $u$ also adopts $\Round u \gets \Round v$ according to \cref{alg:uniform-3}. By the invariant, $v$ had $\Round v(r) = r \bmod 1000T$ in round $r$ and hence the invariant also holds for $u$.

Note that once an agent $u$ adopts $T_u = T$, it reverts itself to its initial opinion. From that time on, there are no further instructions that could alter $T_u$ or $\Round u$ (except for counting regularly in $\Round u$ modulo $1000 T$). This implies  that at the end of the synchronization stage in round $r = 1000 T$, every agent $u$ has $\Opinion u = \InitialOpinion u$.

It remains to argue that at the end of the synchronization stage in round $r = 1000 T$, every agent $u$ has $T_u(r) = T$.
Recall that $\rho \leq 50\log{n}$ and $T \geq 0.1\log{n}$.
The synchronization stage ends in round $1000T$, which is at the earliest in round $100\log{n}$.
Hence, starting with round $50\log{n}$, at least one agent knows the maximal value $T$ and starts broadcasting this value to all other agents. 
With at least $50\log{n}$ rounds remaining until round $1000 T$, every other agent $u$ with $T_u < T$ adopts the maximum $T$.
When some agent adopts $T$ in some round $r$, from then on it executes the (remainder of the) synchronization stage with correct value $T_u = T$ and correct value of $\Rounds u = r$.
Observe that the $50\log{n}$ rounds are enough for the broadcast to reach all agents \cite{DBLP:conf/focs/KarpSSV00}.
\end{proof}

\begin{remark*}
In round $1000T$, all agents start the running stage (\cref{alg:uniform-4}), fully synchronized in phases of length $1000T$.
Starting with round $1000T$, the execution of the protocol is identical to an execution of \cref{alg:consensus-gossip-model} with $\T{BC}$ set to $1000T$.
\end{remark*}

\section{Auxiliary Results}
\label{sec:auxiliary-results}

\subsection{Concentration Results}

\begin{theorem}[\cite{DBLP:books/daglib/0012859}, Theorem $4.4$, $4.5$]
\label{lemma:chernoff_poisson_trials}
    Let $X_1, \dots, X_n$ be independent Poisson trials with $\Pr[X_i = 1] = p_i$ and let $X = \sum X_i$ with $\E{X} = \mu$. Then the following Chernoff bounds hold: \\
    For $0 < \delta' \leq 1$:
    \[
        \Pr[X > (1+\delta') \mu] \leq e^{-\mu {\delta'}^2 / 3}.
    \]
    For $0 <\delta' < 1$:
    \[
        \Pr[X < (1-\delta') \mu] \leq e^{-\mu {\delta'}^2 / 2},
    \]
\end{theorem}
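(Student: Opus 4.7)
The plan is to prove both bounds by the standard exponential moment method (Bernstein/Chernoff trick). For the upper tail I would start from Markov's inequality applied to $e^{tX}$ for a parameter $t>0$ to be chosen, yielding
\[
\Pr[X \ge (1+\delta')\mu] = \Pr[e^{tX} \ge e^{t(1+\delta')\mu}] \le \frac{\E{e^{tX}}}{e^{t(1+\delta')\mu}}.
\]
Independence of the $X_i$ then factors the moment generating function as $\E{e^{tX}}=\prod_{i=1}^n \E{e^{tX_i}}$, and for each Poisson trial I would use $\E{e^{tX_i}} = 1+p_i(e^t-1) \le \exp\!\bigl(p_i(e^t-1)\bigr)$, so that $\E{e^{tX}} \le \exp\!\bigl(\mu(e^t-1)\bigr)$.

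For the upper tail I would optimize by picking $t=\ln(1+\delta')>0$, which gives the well-known bound $\Pr[X\ge(1+\delta')\mu] \le \bigl(e^{\delta'}/(1+\delta')^{1+\delta'}\bigr)^\mu$. The analytic work then reduces to verifying the elementary inequality $(1+\delta')\ln(1+\delta')-\delta' \ge \delta'^{2}/3$ for $\delta'\in(0,1]$, which can be done by comparing Taylor series or by showing that the difference vanishes at $\delta'=0$ and has a nonnegative derivative on $(0,1]$. Exponentiating then yields the claimed $\exp(-\mu\delta'^2/3)$ bound. For the lower tail the same factorization argument is applied with $t=\ln(1-\delta')<0$ (so Markov is used on $e^{tX}\ge e^{t(1-\delta')\mu}$, reversing the inequality appropriately), giving $\Pr[X\le(1-\delta')\mu] \le \bigl(e^{-\delta'}/(1-\delta')^{1-\delta'}\bigr)^{\mu}$, and one checks the slightly tighter inequality $(1-\delta')\ln(1-\delta') + \delta' \ge \delta'^{2}/2$ on $(0,1)$.

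The one non-mechanical step is precisely these two single-variable analytic inequalities; everything else is routine algebra. I would handle them by defining $f(\delta')=(1+\delta')\ln(1+\delta')-\delta'-\delta'^{2}/3$ (respectively $g(\delta')=(1-\delta')\ln(1-\delta')+\delta'-\delta'^{2}/2$), noting $f(0)=0=g(0)$, and checking $f'(\delta')\ge 0$ on $(0,1]$ and $g'(\delta')\ge 0$ on $(0,1)$ via a second differentiation. Since the statement is cited verbatim from Mitzenmacher–Upfal, an acceptable alternative is simply to refer to their Theorems~4.4 and~4.5; the proof sketch above recovers it self-contained in a handful of lines.
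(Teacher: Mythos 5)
Your proof is correct and is the standard moment-generating-function argument; the paper does not prove this statement itself but cites it from Mitzenmacher--Upfal, whose proof is exactly the one you sketch (including the two elementary inequalities $(1+\delta')\ln(1+\delta')-\delta'\ge \delta'^2/3$ on $(0,1]$ and $(1-\delta')\ln(1-\delta')+\delta'\ge \delta'^2/2$ on $(0,1)$, both of which check out by the derivative argument you describe). Nothing further is needed.
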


\begin{lemma}[Super-exponential Chernoff Bound]
\label{lem:chernoff-superexponential} 
	Let $X_1, ... ,X_n$ be $n$ independent random variables taking value in $\{ 0,1\}$ and $X = \sum_{i=1}^{n} X_i$. Then, for $\Ex{X} = \mu$ and $\delta >0$ it holds that
	\begin{align*}
		&\Pr \Big[ X > \mu +  \delta \sqrt{\mu} \Big] < \exp(-c \cdot \delta^2) &\text{ for } \delta^2 \leq \mu \\
		&\Pr \Big[X > \mu + \frac{\delta^2}{1 + \ln(\frac{\delta^2}{\mu})}  \Big]  < \exp(-c \cdot \delta^2) &\text{ for } \delta^2 > \mu
	\end{align*}
	where $c > 0$ is a universal constant.
\end{lemma}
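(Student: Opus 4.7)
The plan is to derive both bounds from the standard multiplicative Chernoff--Bennett inequality
\[
\Pr[X \geq (1+\lambda)\mu] \leq \exp(-\mu\,h(\lambda)), \qquad h(\lambda) := (1+\lambda)\ln(1+\lambda) - \lambda,
\]
valid for any sum of independent $\{0,1\}$ random variables via the usual moment-generating-function argument applied termwise.

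For the first regime ($\delta^2 \leq \mu$) I would set $\lambda = \delta/\sqrt{\mu} \in (0,1]$, so that $(1+\lambda)\mu = \mu + \delta\sqrt{\mu}$. The textbook estimate $h(\lambda) \geq \lambda^2/3$ on $[0,1]$ (immediate from a Taylor expansion of $h$) then yields $\mu\,h(\lambda) \geq \delta^2/3$ and hence the first tail bound with $c = 1/3$.

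For the second regime ($\delta^2 > \mu$) I would write $r := \delta^2/\mu > 1$ and choose the deviation $s := \delta^2/(1+\ln r)$, corresponding to $\lambda = s/\mu = r/(1+\ln r)$. The elementary fact $r \geq 1 + \ln r$ for $r \geq 1$ gives $\lambda \geq 1$, putting us into the super-exponential tail. The key ingredient is the inequality
\[
h(\lambda) \geq \tfrac{1}{2}\,\lambda\,\ln(1+\lambda) \quad \text{for all } \lambda \geq 0,
\]
which I would prove by rearranging to $2\lambda \leq (\lambda+2)\ln(1+\lambda)$ and comparing derivatives given equality at $\lambda = 0$. Substituting into Bennett's bound,
\[
\mu\,h(\lambda) \;\geq\; \tfrac{s}{2}\,\ln\!\Bigl(1 + \tfrac{r}{1+\ln r}\Bigr) \;=\; \tfrac{\delta^2}{2(1+\ln r)}\;\ln\!\Bigl(1 + \tfrac{r}{1+\ln r}\Bigr),
\]
so the claim reduces to showing that $\Phi(r) := \ln\!\bigl(1 + r/(1+\ln r)\bigr)/(1+\ln r)$ is bounded below by a positive constant on $r > 1$.

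Verifying this last step is the main --- though not deep --- obstacle. One computes $\Phi(1) = \ln 2$, and as $r \to \infty$ the expansion $\ln(1 + r/(1+\ln r)) = \ln r - \ln(1+\ln r) + o(1)$ gives $\Phi(r) \to 1$. A short calculus argument, or tabulation at a handful of anchor points such as $r \in \{e, e^2, e^3\}$, shows that $\Phi$ stays uniformly positive on $[1,\infty)$ with a minimum near $r = e^2$ of roughly $0.4$. Taking any explicit lower bound on $\Phi$ and combining it with the $1/3$ from the first regime yields the claimed universal constant $c > 0$.
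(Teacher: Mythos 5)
Your proposal is correct, and it reaches the same destination by a noticeably cleaner route than the paper. The paper's proof splits $\delta^2>\mu$ into two further subcases: for $\mu<\delta^2\leq\mu e^{6}$ it reuses the weak Chernoff form $\exp(-\min\{\lambda,\lambda^2\}\mu/3)$ together with the observation that $\delta^2/7\leq\delta^2/(1+\ln(\delta^2/\mu))$ there, and only for $\delta^2>\mu e^{6}$ does it switch to the sharper bound $\Pr[X>\mu(1+\lambda)]<(e/\lambda)^{\lambda\mu}$ with $y=\ln(\delta^2/\mu)$ and $\lambda=e^{y}/(1+y)$. You instead work with the Bennett form $\exp(-\mu h(\lambda))$ throughout and fold the entire super-exponential regime into the single inequality $h(\lambda)\geq\tfrac12\lambda\ln(1+\lambda)$ (which your derivative comparison does establish), reducing everything to showing that $\Phi(r)=\ln\bigl(1+r/(1+\ln r)\bigr)/(1+\ln r)$ is uniformly bounded below on $(1,\infty)$. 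That buys you a uniform argument with no artificial threshold at $e^{6}$; the price is that the final step is an ad hoc calculus fact rather than a direct application of a named bound. The one place you should tighten is exactly that step: "tabulation at anchor points" is not a proof. The clean way to finish is to note that $\Phi$ is continuous and strictly positive on $(1,\infty)$, that $\Phi(r)\to\ln 2$ as $r\to 1^{+}$, and that $\Phi(r)\to 1$ as $r\to\infty$ (since $\ln\bigl(1+r/(1+\ln r)\bigr)=\ln r-\ln(1+\ln r)+o(1)$), whence $\inf_{r>1}\Phi(r)>0$ by compactness of $[1,\infty]$; this is routine and closes the argument.
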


\begin{proof}
	Let $X$ be defined as in the lemmas statement. From the Chernoff bound we have for $\lambda > 0$ that $\Pr[X > \mu(1+\lambda)] < \exp(-\min\{\lambda, \lambda^2\} \cdot \mu /3)$. It is easy to see that, for fitting constants $c_1, c_2 > 0$, this implies for $\delta >0$ that 
	\begin{align}
		\Pr \left[ X > \mu +  \delta \sqrt{\mu} \right] < \exp(-c_1 \cdot \delta^2) &\text{ if } \delta^2 \leq \mu \text{, and} \label{strong-chernoff-eq-1} \\
		\Pr \left[ X > \mu +  \delta^2 / 7 \right] < \exp(-c_2 \cdot \delta^2) &\text{ if } \delta^2 > \mu .  \label{strong-chernoff-eq-2}
	\end{align}
	Inequality (\ref{strong-chernoff-eq-1}) corresponds directly to the first statement of the lemma. However, observe that (\ref{strong-chernoff-eq-2}) only implies the second desired inequality in case $\mu < \delta^2 \leq \mu \cdot e^{6}$. This results from the fact that $\delta^2 / 7 \leq \delta^2 / ( 1 + \ln(\delta^2 / \mu))$ in this setting.
	
In order to tackle the case of $\delta^2 > \mu \cdot e^{6}$, we employ a different version of the Chernoff bound, which is tighter for large values of $\delta$. That is, by inequality (1.10.8) of \cite{Solr-1684971306}  we have for $\lambda > 0$ that $Pr[X > \mu(1+\lambda)] < (e/\lambda)^{(\lambda \cdot \mu)}$. We now define  $y= \ln(\delta^2 /\mu) > 6$ and set $\lambda = \delta^2  (1+y)^{-1} \mu^{-1} = e^{y} (1+y)^{-1}$. This way we get
\begin{equation}
\label{strong-chernoff-eq-3}
	\Pr\left[ X > \mu + \frac{\delta^2}{1 + \ln(\delta^2 / \mu)}\right] < \left( \frac{e(1+y)}{e^{y}} \right)^\frac{\delta^2}{(1+y)}.
\end{equation}
For $y \geq 6$, it holds that $(1+y) < e^{y/2 - 1}$. This implies that $e(1+y) e^{-y} <  e^{-y/2}$ and allows us to upper-bound the term on the right-hand side of (\ref{strong-chernoff-eq-3}) as follows
\[ 
\left( \frac{e(1+y)}{e^{y}} \right)^\frac{\delta^2}{(1+y)} < e^{-\frac{y}{2} \frac{\delta^2}{1+y}} <  e^{-\frac{\delta^2}{4}}.
\]
The results follow when setting $c= \min\{c_1,c_2, 1/4\}$.
\end{proof}

\begin{theorem}[General Chernoff upper Bound]\label{General_Upper_Chernoff_bound}
Let $X_1\cdots,X_n$ be independent 0-1 random variables. Let $X=\sum_{i=1}^n X_i$ and $\mu_u\ge 0$ such that $\E{X}\le \mu_u$. Then, for any $\delta'>0$
\[
\Pr[X\ge (1+\delta')\cdot \mu_u]\le e^{-\frac{\delta'^2\cdot \mu_u}{2+\delta'}}.
\]

\end{theorem}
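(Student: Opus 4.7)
The plan is to run the standard exponential-moment (Cramér–Chernoff) argument, with an extra monotonicity step to accommodate the fact that we only have an upper bound $\mu_u$ on $\E{X}$ rather than equality.

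First, I would fix any $t>0$ and apply Markov's inequality to the nonnegative random variable $e^{tX}$:
\[
\Pr[X \geq (1+\delta')\mu_u] \;=\; \Pr\!\left[e^{tX} \geq e^{t(1+\delta')\mu_u}\right] \;\leq\; \frac{\E{e^{tX}}}{e^{t(1+\delta')\mu_u}}.
\]
Since the $X_i$ are independent $0$-$1$ variables with $p_i := \Pr[X_i=1]$, independence gives
\[
\E{e^{tX}} \;=\; \prod_{i=1}^n \E{e^{tX_i}} \;=\; \prod_{i=1}^n \bigl(1 + p_i(e^t-1)\bigr) \;\leq\; \prod_{i=1}^n \exp\!\bigl(p_i(e^t-1)\bigr) \;=\; \exp\!\bigl((e^t-1)\,\E{X}\bigr),
\]
where I used $1+x\le e^x$.

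Second, this is where the hypothesis $\E{X}\le \mu_u$ (rather than $\E{X}=\mu_u$) enters. Because $t>0$ implies $e^t-1>0$, the bound is monotone in $\E{X}$, so
\[
\E{e^{tX}} \;\leq\; \exp\!\bigl((e^t-1)\,\mu_u\bigr).
\]
Substituting back and optimizing with the standard choice $t = \ln(1+\delta')$ (which is legal since $\delta'>0$) yields
\[
\Pr[X \geq (1+\delta')\mu_u] \;\leq\; \exp\!\Bigl(-\mu_u\bigl[(1+\delta')\ln(1+\delta') - \delta'\bigr]\Bigr).
\]

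Third, I would close the proof with the elementary analytic inequality
\[
(1+\delta')\ln(1+\delta') - \delta' \;\geq\; \frac{\delta'^2}{2+\delta'} \qquad \text{for all } \delta'>0,
\]
which can be verified by defining $f(\delta') := (1+\delta')\ln(1+\delta') - \delta' - \delta'^2/(2+\delta')$, noting $f(0)=0$, and checking $f'(\delta')\ge 0$ on $(0,\infty)$ by a short calculation (after clearing denominators, reduce to $\ln(1+\delta')\ge 2\delta'/(2+\delta')$, which is standard). Plugging this into the previous display yields the claimed bound $\exp(-\delta'^2\mu_u/(2+\delta'))$.

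There is no real obstacle here — the only slightly nontrivial point is the monotonicity step that extends the classical bound from $\E{X}=\mu_u$ to $\E{X}\le \mu_u$, and the terminal analytic inequality, both of which are routine.
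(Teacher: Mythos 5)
Your proof is correct. The paper states this theorem as an auxiliary result without giving any proof (it is a standard textbook bound), and your argument is exactly the canonical derivation: Markov on $e^{tX}$, the MGF bound $\E{e^{tX}}\le\exp((e^t-1)\E{X})$, the monotonicity step that legitimately replaces $\E{X}$ by the upper bound $\mu_u$ since $e^t-1>0$, the choice $t=\ln(1+\delta')$, and the elementary inequality $(1+\delta')\ln(1+\delta')-\delta'\ge \delta'^2/(2+\delta')$, whose verification via $\ln(1+\delta')\ge 2\delta'/(2+\delta')$ checks out.
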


The next result is a modified version of a drift result in \cite{DBLP:conf/spaa/DoerrGMSS11}. We adapted the proof slightly. The original proof can be found in the full version \cite{DBLP:conf/dagstuhl/DoerrGMSS09}.
\begin{theorem}[\cite{DBLP:conf/spaa/DoerrGMSS11}, Modified version of Claim 2.9]
\label{lemma_drift_markov_chain}
    Consider a Markov Chain $(W(t))_{t=1}^\infty$ with the state space $\{0,\dots,  c_4\sqrt{\log{n}}\}$  for an arbitrary constant $c_4 > 0$.
    For some constants $c_2>0$ and $\varepsilon>0$  it has the following properties:
    \begin{itemize}
        \item $\Pr[W(t+1) \geq 1 \vert W(t) = 0] = \BigOmega{1}$
        \item $\Pr[W(t+1) \geq \min\{(1+\varepsilon)W(t), m\}] \geq 1-e^{-c_2 W(t)}$
    \end{itemize}
    Then it holds for $t = \BigO{\log n}$ that 
    \[
        \Pr[W(t) \geq m] \geq 1-n^{-2}
    \]
\end{theorem}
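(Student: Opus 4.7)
My plan is to partition the first $\Theta(\log n)$ time steps into disjoint \emph{windows} of length $r+1$ with $r = \lceil \log_{1+\varepsilon} m \rceil \in \BigO{\log\log n}$, and to show that each window ``succeeds'' -- that is, drives $W$ to state $\geq m$ by its end -- with at least some positive constant probability $p_0 > 0$ \emph{regardless of the state at the start of the window}. Then, applying the strong Markov property at window boundaries, the probability that all $N = \Theta(\log n)$ windows fail is at most $(1-p_0)^N \leq n^{-2}$, which proves the claim.

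To lower-bound the success probability of a single window I use the two hypotheses in sequence. By the first hypothesis, within one step the chain leaves state $0$ (if it is there) and reaches state $\geq 1$ with probability $\Omega(1)$; this first step is for free if $W$ already exceeds $0$. Conditional on being at some state $\geq 1$, I demand $r$ consecutive multiplicative-growth events; the probability that all of them succeed is bounded below by
\[
\prod_{i=0}^{r-1} \bigl(1 - e^{-c_2(1+\varepsilon)^i}\bigr) \;\geq\; \prod_{i=0}^{\infty} \bigl(1 - e^{-c_2(1+\varepsilon)^i}\bigr) \;=:\; Q_\infty.
\]
The infinite product $Q_\infty$ is a strictly positive constant depending only on $c_2$ and $\varepsilon$: the standard criterion ``$\prod (1-a_i) > 0$ iff $\sum a_i < \infty$'' applies because $\sum_i e^{-c_2(1+\varepsilon)^i}$ converges by the doubly-exponential decay of its summands. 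After $r$ uninterrupted growth steps starting from state $\geq 1$, $W$ has reached $\min\{(1+\varepsilon)^r, m\} = m$, so the window is successful. Combining the warm-up with the run of growth events gives $p_0 \geq \Omega(1) \cdot Q_\infty > 0$, uniformly in the state at the start of the window.

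The main obstacle is that the second hypothesis makes no promise on the failure branch, so in the worst case a single missed growth step can drop $W$ all the way back to $0$. This forces the successful window to consist of an uninterrupted run of length $r+1$, and hence forces the per-window lower bound to be the infinite product $Q_\infty$ rather than any individual step's success probability. Careful bookkeeping of the constants in $r$, $N$, and $p_0$ -- absorbing a $\log\log n$ factor as in the original \textcite{DBLP:conf/spaa/DoerrGMSS11} analysis -- then yields the stated $\BigO{\log n}$ horizon.
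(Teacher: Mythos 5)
Your per-window analysis is sound and matches the paper's key estimate: the probability that a run of $r=\lceil\log_{1+\varepsilon}m\rceil=\BigTheta{\log\log n}$ consecutive growth steps all succeed is bounded below by the convergent infinite product $\prod_{i\ge 0}\bigl(1-e^{-c_2(1+\varepsilon)^i}\bigr)>0$, and the warm-up step correctly handles the fact that the second hypothesis is vacuous at $W(t)=0$. The gap is in the time accounting. If you partition an $\BigO{\log n}$ horizon into disjoint windows of length $r+1=\BigTheta{\log\log n}$, you obtain only $N=\BigTheta{\log n/\log\log n}$ windows, and the resulting failure bound is $(1-p_0)^{N}=\exp\bigl(-\BigTheta{\log n/\log\log n}\bigr)=n^{-\BigTheta{1/\log\log n}}$, which is far larger than $n^{-2}$. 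Conversely, taking $N=\BigTheta{\log n}$ windows as you state would require $\BigTheta{\log n\cdot\log\log n}$ time steps, exceeding the claimed horizon. The remark about "absorbing a $\log\log n$ factor" is exactly the missing step, and it cannot be absorbed by fixed-length windows: a failed window still burns $\BigTheta{\log\log n}$ steps.

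The paper closes this gap by making the attempts variable-length rather than fixed-length. It defines $B$ as the length of a winning streak terminated at the \emph{first} failed growth step and shows $\Pr[B=\ell]\le c_3\delta^{\ell}$ for a constant $\delta<1$, irrespective of the starting state; hence a failed attempt has $\BigO{1}$ expected length and geometrically decaying tail, so $\BigTheta{\log n}$ consecutive attempts fit into $\BigTheta{\log n}$ phases \whp (sum of geometrically bounded variables). Each attempt reaches length $t'=\BigTheta{\log\log n}$, and hence drives $W$ to $m$, with constant probability $c_5$, so among $\BigTheta{\log n}$ attempts at least one succeeds with probability $1-(1-c_5)^{\BigTheta{\log n}}\ge 1-n^{-2}$. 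To repair your argument you would need to replace the disjoint fixed windows by these stopping-time-delimited attempts (or otherwise argue that failed attempts are short), since that is where the extra $\log\log n$ factor is recovered.
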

\begin{proof}
    We follow the outline of the original proof in the full version \cite{DBLP:conf/dagstuhl/DoerrGMSS09}.
    Let $B \in \N \cup \{0,\infty\}$ be a random variable that denotes the number of consecutive successful rounds (abb.: winning streak) when starting at round $t_0$ with $W_{t_0} = 0$ until the first failure similar to a geometrically distributed random variable.
    Let $\ell^* \in \N$ be the smallest number such that $W(\ell^*) \geq c_4\sqrt{\log n}$.
    We know that $\Pr[B = 0] \leq 1-p$ and for any $1 \leq \ell \leq \ell^* $
    \[
        \Pr[B = \ell] \leq p \cdot \prod_{j=1}^{\ell-1}(1-e^{-c_2 (1+\varepsilon)^{2j}}) \cdot e^{-c_2  (1+\varepsilon)^{2\ell}} 
        \leq p' \cdot e^{-c_2 (1+\varepsilon)^{2\ell}}
        \leq c_3 \cdot \delta^\ell
    \]
    for some constant $p',\delta < 1$.
    It is easy to see that for some constant $c_q<1$
    \[
        \Pr[B = \ell \vert B < \infty] 
        \leq c_q \cdot \delta^\ell
    \]
    Thus, $\E{B \vert B < \infty} = \BigTheta{1}$.
    In a similar way it also follows for any starting value $W(t_0) =w_0 \geq 0$ that $\Pr[B = \ell \vert W(t_0) = w_0] \leq c_q \cdot \delta^\ell$, i.e., the probability holds irrespective of $W(t_0)$.
    If a winning streak holds for more than $t' =\BigTheta{\log \log n} $ phases, then we reach a phase where $W(t_0+t') = c_4 \sqrt{\log{n}}$ with probability
    \[
        \Pr[B \geq t'] \geq p \cdot \prod_{j=1}^{t'}(1-e^{-c_2 (1+\varepsilon)^{2j}}) \geq c_5
    \]
    for some constant $c_5 < 1$.
    Thus, by a standard Chernoff bound, we have to consider $\BigTheta{\log{n}}$ attempts such that at least one streak lasts for more than $t'$ phases \whp. \\
    As stated in the original proof,  at most $\BigTheta{\log{n}}$ attempts requires at most $\BigTheta{\log{n}}$ phases \whp which finishes the proof.
\end{proof}

\subsection{Pólya-Eggenberger Distribution}
\label{sec:polya-eggenberger}

The Pólya-Eggenberger process is a simple urn process that consists of $n$ steps. Initially, the urn contains $a$ red and $b$ blue balls, where $a,b \in \mathbb{N}_0$. One fixed step of the process can be described as follows. First, a ball is drawn from the urn uniformly at random with replacement. Second, an additional ball that matches the color of the drawn ball is added to the urn. The corresponding \emph{Pólya-Eggenberger distribution}, denoted by $\PE(a,b,n)$, describes the number of \emph{total} red balls that are contained in the urn after all $n$ steps. Alongside a more detailed discussion of this process, the following tail inequalities  have been shown in \cite{DBLP:conf/podc/BankhamerEKK20} \footnote{In \cite{DBLP:conf/podc/BankhamerEKK20} the Pólya-Eggenberger distribution is defined to describe the number of \emph{added} instead of \emph{total} red balls at the end of the process. We adapted \cref{thm:polya_bound_total_balls, thm:polya_bound_small_support} accordingly.}.

\begin{theorem}[Theorem 1 of \cite{DBLP:conf/podc/BankhamerEKK20}]
\label{thm:polya_bound_total_balls}
Let $A \sim \PE(a,b,n- (a+b))$,  $\mu = (a/(a+b)) n$ and $a+b \geq 1$. Then, for any $\delta$ with $0 < \delta < \sqrt{a}$ and some small constant $1 > \polyaConstSmall>0$ it holds that 
\[
    \Pr \Big(A < \mu - \sqrt{a} \cdot \frac{n}{a+b} \cdot  \delta\Big) < 4 \exp(-\polyaConstSmall \cdot \delta^2)
\]
\[
    \Pr \Big(A > \mu + \sqrt{a} \cdot \frac{n}{a+b} \cdot  \delta \Big) < 4 \exp(-\polyaConstSmall \cdot \delta^2)
\]
\end{theorem}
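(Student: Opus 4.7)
The plan is to exploit the classical Beta--Binomial representation of the Pólya--Eggenberger distribution. Writing $m := n - (a+b)$, the exchangeability of the sequence of draws yields the exact identity $A = a + Z$, where conditionally on $p \sim \mathrm{Beta}(a, b)$, $Z \sim \mathrm{Bin}(m, p)$. In particular $\Ex{A} = a + m\cdot \Ex{p} = an/(a+b) = \mu$, so the problem reduces to controlling two sources of randomness: the fluctuation of $mp$ around $\mu - a$ (coming from the random mixing parameter $p$), and the conditional binomial fluctuation of $Z$ around $mp$.

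I would handle the two contributions by splitting the target deviation $t := \sqrt{a}\cdot n/(a+b)\cdot \delta$ equally into a Beta half and a binomial half. For the Beta half, a variance-matching tail estimate for $p$ of sub-Gaussian (or, in the asymmetric regime, sub-exponential) type with proxy $\Theta(ab/((a+b)^2(a+b+1)))$ gives $\Pr[m\cdot|p - a/(a+b)| > t/2] \leq 2\exp(-c \delta^2)$ after plugging in $t$ and using $m \leq n$, $a \leq a+b$. For the binomial half, I condition on $p$ lying in this good range (so $mp \leq (1+o(1))\cdot am/(a+b) \leq an/(a+b)$) and apply a standard multiplicative Chernoff bound: $\Pr[|Z - mp| > t/2 \mid p] \leq 2\exp(-c'\cdot (t/2)^2/(mp))$. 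With $t^2 = an^2\delta^2/(a+b)^2$ and $mp \leq an/(a+b)$, the exponent is $\Omega(\delta^2 \cdot n/(a+b)) \geq \Omega(\delta^2)$. A union bound over the two halves and over both tails (the lower tail follows by exchanging the roles of red and blue balls) then yields the claimed $4\exp(-\varepsilon_p \delta^2)$.

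The main technical obstacle is obtaining a Beta tail bound that is sharp enough across the entire parameter regime. In the balanced case $a \asymp b$, the Marchal--Arbel sub-Gaussian estimate (with variance proxy $1/(4(a+b+1))$) is sufficient and the preceding calculation goes through directly. In the asymmetric regime $a \ll b$, however, the true variance of $p$ is smaller than this proxy by a factor of $\Theta(a/(a+b))$, and a direct Marchal--Arbel application would leave an extra $(a+b)/a$ factor in the exponent that cannot be absorbed into a universal constant. To handle this one has to estimate the Beta moment generating function more carefully, e.g.\ via its Kummer confluent-hypergeometric representation, or to employ a Bernstein-type inequality for $p$ that matches the true variance up to constants. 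Finally, the hypothesis $\delta < \sqrt{a}$ enters naturally as the requirement that $t/2$ not exceed $\Theta(mp)$, which is precisely the regime in which the multiplicative Chernoff bound yields a quadratic (rather than linear) exponent in $\delta$.
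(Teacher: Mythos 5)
The paper itself contains no proof of this statement: it is imported verbatim as Theorem~1 of \cite{DBLP:conf/podc/BankhamerEKK20} (see \cref{sec:polya-eggenberger}), so there is no internal argument to compare yours against, and I can only assess your sketch on its own terms.

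Your route through the exchangeability/Beta--Binomial mixture representation $A = a + Z$ with $Z \sim \Bin(m,p)$, $p \sim \mathrm{Beta}(a,b)$, $m = n-(a+b)$, is sound: the mean computation $a + m\cdot a/(a+b) = an/(a+b) = \mu$ is correct, splitting the target deviation $t = \sqrt{a}\,\delta\, n/(a+b)$ into a mixing half and a conditional binomial half is the natural decomposition, and your accounting of the conditional Chernoff exponent, $(t/2)^2/(mp) = \Omega(\delta^2 n/(a+b)) = \Omega(\delta^2)$, is right. The genuine gap is exactly the one you flag but do not close: the Beta tail estimate. This is not a cosmetic issue --- with only the generic sub-Gaussian proxy $1/(4(a+b+1))$ the exponent degrades to $\Theta(\delta^2 a/(a+b))$, which is strictly weaker than the claimed bound whenever $a = o(a+b)$, so the lemma you defer is the actual content of the theorem in that regime. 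The cleanest way to close it is the representation $p = X/(X+Y)$ with independent $X \sim \Gamma(a,1)$, $Y \sim \Gamma(b,1)$: sub-gamma concentration gives $\Pr[|X-a| > \sqrt{a}\,\delta] \le 2e^{-c\delta^2}$ precisely in the quadratic regime $\delta < \sqrt{a}$ (and the analogous bounds for $Y$), and on the intersection of these events one gets $|p - a/(a+b)| = O(\sqrt{a}\,\delta/(a+b))$, i.e.\ $m|p-a/(a+b)| = O(t)$, with some care about constants when $a$ and $b$ have very different orders. Note that $\delta < \sqrt{a}$ is therefore needed already in the mixing step, not only to keep $t/2 \lesssim mp$ in the conditional Chernoff step as you state. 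Two further loose ends for a complete write-up: (i) when $a+b$ is close to $n$, so that $m = o(n)$, the condition $t/2 \le mp$ can fail and you must use the Bernstein form $\exp(-c\min\{\lambda^2/(mp),\,\lambda\})$; the linear branch still suffices because $t \ge \sqrt{a}\,\delta > \delta^2$; and (ii) the degenerate cases $a=0$ (the hypothesis on $\delta$ is vacuous) and $b=0$ (the process is deterministic). With those handled, your outline does yield the stated bound with the constant $4$ coming from the union over the two error events.
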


\begin{theorem}[simplified Theorem 47 of \cite{DBLP:conf/podc/BankhamerEKK20}]
\label{thm:polya_bound_small_support}
Let $A \sim \PE (a,b,n - (a+b))$ with $1 \leq a \leq b$. Then, for some large constant $\polyaConstLarge>1$ it holds that 
\[
    P \Big( A > \frac{n}{a+b} \cdot ( 3a + \polyaConstLarge \log n) \Big) < 2n^{-2}, 
\]
\end{theorem}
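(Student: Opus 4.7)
The plan is to split the argument into two regimes by the magnitude of $a$: for large $a$ we reduce to the earlier Pólya concentration \cref{thm:polya_bound_total_balls}, while for small $a$ we use the classical de Finetti / Beta--binomial representation of the urn.

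\emph{Large $a$.} If $a \geq \alpha \log n$ for a suitable constant $\alpha$ (depending on $\polyaConstSmall$), we apply \cref{thm:polya_bound_total_balls} with $\delta := \sqrt{3\log n/\polyaConstSmall}$; this is a permissible choice because $\delta < \sqrt{a}$, and the resulting failure probability is $4 e^{-3\log n} \leq n^{-2}$. The threshold $\tfrac{n}{a+b}(a + \sqrt{a}\,\delta)$ that is then controlled is bounded, via AM--GM $\sqrt{a \log n} \leq (a + \log n)/2$, by $\tfrac{n}{a+b}\bigl(\tfrac{3}{2}a + \tfrac{3}{2\polyaConstSmall}\log n\bigr)$, which is at most our target $\tfrac{n}{a+b}(3a + \polyaConstLarge \log n)$ whenever $\polyaConstLarge \geq 3/(2\polyaConstSmall)$.

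\emph{Small $a$.} If $a < \alpha \log n$, we invoke the de Finetti representation of the Pólya urn: there exists $P \sim \mathrm{Beta}(a,b)$ such that, conditional on $P$, $A \stackrel{d}{=} a + \Bin(m,P)$ with $m := n-(a+b)$. Fix a large constant $c$ and set $p^* := (2a + c\log n)/(a+b)$. If $p^* \geq 1$, then $a \leq b$ forces $b < a + c \log n$, so $t > n$ for $\polyaConstLarge > c$ and the claim is vacuous. Otherwise we decompose
\[
\Pr(A > t) \;\leq\; \Pr(P > p^*) \;+\; \Pr\bigl(\Bin(m,P) > t - a \;\big|\; P \leq p^*\bigr).
\]
For the first summand we use the standard identity $\Pr(\mathrm{Beta}(a,b) > p^*) = \Pr(\Bin(a+b-1, p^*) \leq a-1)$: the binomial has mean $\approx 2a + c\log n$, and the event asks for a lower-tail deviation of multiplicative order $\Omega(1)$; a standard Chernoff lower tail then gives $\exp\bigl(-\Omega(c^2 \log^2 n/(a + c \log n))\bigr) \leq n^{-2}$ for $c$ large (using $a \leq \alpha \log n$). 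For the second summand, the conditional mean $\mu_0 \leq np^* = O(\log n \cdot n/(a+b))$, while the target $t - a$ exceeds $\mu_0$ by a relative factor that can be made $\Omega(1)$ once $\polyaConstLarge \gg c$; applying \cref{General_Upper_Chernoff_bound} then yields the desired $n^{-2}$ bound.

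The main obstacle will be coordinating the constants $\alpha$, $c$, and $\polyaConstLarge$ across the two regimes so that all three tail probabilities simultaneously meet the $n^{-2}$ budget; the individual Chernoff computations are routine once this choice is fixed.
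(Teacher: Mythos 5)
Your argument is correct, but note that the paper does not actually prove this statement: it is imported verbatim as ``Theorem 47 of the full version of [BankhamerEKK20]'' and used as a black box, so there is no internal proof to match against. What you have produced is a self-contained derivation, and the two regimes both check out. For $a \geq \alpha \log n$ the reduction to \cref{thm:polya_bound_total_balls} with $\delta = \sqrt{3\log n/\polyaConstSmall}$ is legitimate (the constraint $\delta < \sqrt{a}$ holds, and the AM--GM step correctly pushes the threshold below $\tfrac{n}{a+b}(3a+\polyaConstLarge\log n)$ once $\polyaConstLarge \geq 3/(2\polyaConstSmall)$). For $a < \alpha\log n$ the de Finetti/Beta--binomial representation $A = a + \Bin(n-(a+b),P)$ with $P\sim\mathrm{Beta}(a,b)$ is exact for integer $a,b\geq 1$, the identity $\Pr(\mathrm{Beta}(a,b)>p^*)=\Pr(\Bin(a+b-1,p^*)\leq a-1)$ is the standard one, and the two Chernoff tails each deliver $n^{-2}$ once $c$ and then $\polyaConstLarge$ are taken large relative to $\alpha$; the union of the two events plus stochastic monotonicity of $\Bin(m,\cdot)$ gives the stated $2n^{-2}$. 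The only caveats are bookkeeping ones you already flag: in the vacuous sub-case one needs $\polyaConstLarge \gtrsim 2\alpha + c$ (not merely $\polyaConstLarge > c$), and the constants must be fixed in the order $\alpha$ (from $\polyaConstSmall$), then $c$, then $\polyaConstLarge$. Compared with deferring to the cited external proof, your route buys a short, elementary, and verifiable derivation from the conditional-binomial structure of the urn, at the cost of requiring integrality of $a,b$ (which holds in all uses here).
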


\subsection{Anti-Concentration Results}

\begin{lemma}
\label{lem:reverse-chernoff}
    Let $ X \sim \Bin(n,p)$ with $\mu=np$. Then, for $\delta$ with $n/2 > (1+\delta) \mu > \mu$, it holds that
    \[
        \Pr[X \geq (1+ \delta) \mu ] \geq \frac{1}{\sqrt{8(1+\delta) \mu}} \cdot \left(1 - \frac{\delta^2 \mu^2}{n-(1+\delta) \mu}\right) \cdot \left( \frac{e^\delta}{(1+\delta)^{(1+\delta)}}\right)^\mu .
    \]
\end{lemma}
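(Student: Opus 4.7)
The plan is to lower bound the probability by the single point mass $\Pr[X = k]$ with $k := \lceil (1+\delta)\mu \rceil$, which is valid since discarding the remaining non-negative summands in $\sum_{j \geq k} \Pr[X=j]$ only decreases the total. For notational clarity I would treat $(1+\delta)\mu$ as an integer, absorbing the rounding adjustment into the explicit constants. Writing $\Pr[X=k] = \binom{n}{k} p^k (1-p)^{n-k}$, the task reduces to deriving a suitable lower bound on $\binom{n}{k}$.

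Next I would apply the quantitative Stirling inequality $m! \geq \sqrt{2\pi m}(m/e)^m$ (together with the matching upper bound $m! \leq \sqrt{2\pi m}(m/e)^m e^{1/(12m)}$) to the three factorials in $\binom{n}{k} = n!/(k!(n-k)!)$. Using that $(1+\delta)\mu < n/2$ ensures $k(1-k/n) \leq (1+\delta)\mu$, this yields
\[
\binom{n}{k}\, p^k (1-p)^{n-k} \geq \frac{1}{\sqrt{8(1+\delta)\mu}} \cdot \left(\frac{np}{k}\right)^{k} \left(\frac{n(1-p)}{n-k}\right)^{n-k},
\]
where the constant $\sqrt{8}$ (slightly larger than $\sqrt{2\pi}$) absorbs the Stirling error factors $e^{-1/(12k)-1/(12(n-k))}$. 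Substituting $k = (1+\delta)\mu$ collapses the first ratio to $(1+\delta)^{-(1+\delta)\mu}$, which is precisely the denominator of the claimed Chernoff factor $\bigl(e^\delta/(1+\delta)^{1+\delta}\bigr)^\mu$ once combined with $e^{\delta\mu}$ from the next step.

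For the remaining factor I would rewrite $n(1-p)/(n-k) = 1 + \delta\mu/(n-(1+\delta)\mu)$ and apply $\ln(1+y) \geq y - y^2/2$ for $y \in [0,1]$ with $y := \delta\mu/(n-(1+\delta)\mu)$, yielding
\[
\left(\frac{n(1-p)}{n-k}\right)^{n-k} \geq e^{\delta\mu} \cdot e^{-\delta^2\mu^2/(2(n-(1+\delta)\mu))} \geq e^{\delta\mu} \left(1 - \frac{\delta^2\mu^2}{n-(1+\delta)\mu}\right),
\]
where the final step combines $e^{-z} \geq 1-z$ with a deliberate loosening $1/2 \to 1$ to match the form stated in the lemma. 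Multiplying all pieces together produces the claim.

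The main obstacle is bookkeeping rather than anything deep: one must verify that the hypothesis $(1+\delta)\mu < n/2$ simultaneously suffices to bound $k(1-k/n) \leq (1+\delta)\mu$, to absorb the Stirling correction into the numerical constant $\sqrt{8}$, and to ensure $y \leq 1$ so that the expansion $\ln(1+y) \geq y - y^2/2$ is applicable. The interlocking of these constants — together with the rounding from $(1+\delta)\mu$ to $\lceil(1+\delta)\mu\rceil$ — must be tracked carefully, but no single step is difficult.
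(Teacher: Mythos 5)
Your proposal is correct and follows essentially the same route as the paper: the paper invokes Lemma~4.7.2 of Ash's information theory text, which is exactly the point-mass-plus-Stirling bound $\Pr[X\ge k]\ge \tfrac{1}{\sqrt{8k}}\exp(-nD(k/n\,\|\,p))$ that you derive by hand, and then handles the factor $\bigl(\tfrac{n-\mu}{n-k}\bigr)^{n-k}$ with the elementary inequality $(1+x/m)^m\ge e^x(1-x^2/m)$, which yields the same correction term as your $\ln(1+y)\ge y-y^2/2$ step (yours is in fact marginally tighter before you loosen $1/2$ to $1$). The only difference is that you unpack the cited lemma rather than quoting it; the constant-tracking and integrality handling are at the same level of rigor in both.
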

\begin{proof}
    We start by considering some $k$ with $n/2 > k > \mu$. By Lemma 4.7.2 of \cite{A90} we have that 
    \begin{equation}
    \label{eq:anti-1}
        \Pr[X \geq k] \geq \frac{1}{\sqrt{8k}} \exp\left(-n D\left(\frac{k}{n} ~\Big|\Big|~ p\right)\right),
    \end{equation}
    where $D( \cdot || \cdot)$ denotes the Kullback-Leibler divergence with 
    \[
        D\left(\frac{k}{n} ~\Big|\Big|~ p\right) = \frac{k}{n} \ln \left(\frac{k}{\mu}\right) +  \left(1 - \frac{k}{n}\right) \ln\left( \frac{n-k}{n-\mu} \right).
    \]
    Hence, it follows that
    \begin{equation}
    \label{eq:anti-2}
        \exp\left(-n D\left(\frac{k}{n} ~\Big|\Big|~ p\right)\right) = \left(\frac{\mu}{k}\right)^k \cdot \left(\frac{n-\mu}{n-k}\right)^{n-k}.
    \end{equation}
    Next, we use that $(1+x/m)^m\ge e^x(1-x^2/m)$ for $m>1$ and $|x|<m$, which can be derived with the help of the well-known inequality $(1+\frac{1}{x})^{x+1} \geq e$ as well as the Bernoulli inequality. This implies  
    \begin{equation}
     \label{eq:anti-3}
       \left(\frac{n-\mu}{n-k}\right)^{n-k} = \left(1 + \frac{k-\mu}{n-k}\right)^{n-k} \geq e^{k-\mu} \left(1 - \frac{(k-\mu)^2}{n-k}\right).
    \end{equation}
    When combining (\ref{eq:anti-1}) with (\ref{eq:anti-2}) and then (\ref{eq:anti-3}),  the statement follows for $k=(1+\delta)\mu$.
\end{proof}

\begin{lemma}
\label{lem:reverse-chernoff-simple}
    Let $ X \sim \Bin(n,p)$ with $\mu=np$. Then, for $\delta$ with $n/2 > (1+\delta) \mu > \mu$ \textbf{and}  $\delta \mu < \sqrt{n} / 2$, it holds that
    \[
        \Pr[X \geq (1+ \delta) \mu ] \geq \frac{1}{6 \cdot \sqrt{(1+\delta) \mu}} \cdot \exp \left( -\delta^2 \mu \right)
    \]
\end{lemma}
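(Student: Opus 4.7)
The plan is to derive this simplified anti-concentration statement directly from \cref{lem:reverse-chernoff}, by showing that under the two additional hypotheses, both the messy factor in parentheses and the Kullback--Leibler-type exponential can be bounded by clean expressions without asymptotic loss.

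First I would handle the exponential. Writing the relevant factor as $\left(\frac{e^\delta}{(1+\delta)^{1+\delta}}\right)^\mu = \exp\bigl(\mu(\delta - (1+\delta)\ln(1+\delta))\bigr)$, the goal is to show
\[
    \delta - (1+\delta)\ln(1+\delta) \geq -\delta^2 \qquad\text{for all } \delta \geq 0.
\]
This is an elementary one-variable inequality: set $g(\delta) = \delta^2 + \delta - (1+\delta)\ln(1+\delta)$, verify $g(0)=0$, compute $g'(\delta) = 2\delta - \ln(1+\delta)$ with $g'(0)=0$, and note $g''(\delta) = 2 - 1/(1+\delta) \geq 1 > 0$, so $g' \geq 0$ and hence $g \geq 0$. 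This yields $\bigl(e^\delta/(1+\delta)^{1+\delta}\bigr)^\mu \geq e^{-\delta^2 \mu}$.

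Next I would bound the parenthesised correction factor in \cref{lem:reverse-chernoff}. The hypothesis $(1+\delta)\mu < n/2$ gives $n - (1+\delta)\mu > n/2$, and the additional hypothesis $\delta\mu < \sqrt{n}/2$ gives $\delta^2 \mu^2 < n/4$. Combining,
\[
    \frac{\delta^2 \mu^2}{n - (1+\delta)\mu} < \frac{n/4}{n/2} = \frac{1}{2},
\]
so the correction factor is at least $1/2$.

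Finally I would plug both estimates back into \cref{lem:reverse-chernoff}, obtaining
\[
    \Pr[X \geq (1+\delta)\mu] \;\geq\; \frac{1}{\sqrt{8(1+\delta)\mu}} \cdot \frac{1}{2} \cdot e^{-\delta^2\mu} \;=\; \frac{1}{4\sqrt{2}\sqrt{(1+\delta)\mu}}\, e^{-\delta^2\mu},
\]
and conclude using $4\sqrt{2} < 6$. I do not anticipate a serious obstacle; the only mild subtlety is checking that the inequality $(1+\delta)\ln(1+\delta) - \delta \leq \delta^2$ really holds for \emph{all} $\delta \geq 0$ (not just small $\delta$, where a Taylor expansion suffices), which is why the convexity argument via the second derivative is the cleanest route.
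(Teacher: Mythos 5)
Your proposal is correct and follows essentially the same route as the paper: both derive the statement from \cref{lem:reverse-chernoff} by bounding the correction factor below by $1/2$ (using $(1+\delta)\mu<n/2$ and $\delta\mu<\sqrt{n}/2$) and the base of the exponential by $e^{-\delta^2}$. The only difference is cosmetic: the paper proves $e^\delta/(1+\delta)^{1+\delta}\geq e^{-\delta^2}$ by applying $e^x\geq 1+x$ twice, whereas you use a convexity argument, and your explicit $1/2$ bound on the parenthesised factor is in fact slightly cleaner than the paper's $\frac{1}{2}(1-o(1))$.
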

\begin{proof}
    The result is implied by \cref{lem:reverse-chernoff}. We lower bound some factors involved in the right-hand side of  \cref{lem:reverse-chernoff}. It follows from $\delta \mu < \sqrt{n} / 2$  and  $\mu < n/2$ that
    \[
        \left(1 - \frac{\delta^2 \mu^2}{n-(1+\delta) \mu}\right) > \frac{1}{2} (1 - \LittleO{1}).
    \]
    Additionally, when using the well-known inequality $e^x \geq (1+x)$ twice, we get
    \[
        \left( \frac{e^\delta}{(1+\delta)^{(1+\delta)}}\right) \geq \frac{1}{(1+\delta)^\delta} = \left(\frac{1}{1+\delta} \right)^\delta \geq \left( \frac{1}{e^\delta} \right)^\delta = e^{-\delta^2}. \qedhere
    \]
\end{proof}

\begin{lemma}[Lemma 4 of \cite{DBLP:journals/siamcomp/KleinY15}]
\label{lemma:reverse_chernoff_v2}
    Let $X \sim \Bin(n,p)$ with $\mu = np$.
    For any $\delta \in (0,1/2]$ and $p \in (0,1/2]$, assuming $\delta^2 \mu \geq 3$, it holds that
    \[
        \Pr[X \geq (1+\delta)\mu] \geq e^{-9\delta^2 \mu}
    \]
    \[
        \Pr[X \leq (1-\delta)\mu] \geq e^{-9\delta^2 \mu}
    \]
\end{lemma}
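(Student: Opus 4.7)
The statement is the reverse Chernoff bound of Klein--Young \cite{DBLP:journals/siamcomp/KleinY15}; I sketch the standard proof of the upper-tail inequality (the lower-tail case is handled by the same argument applied to $k = \lfloor(1-\delta)\mu\rfloor$ in place of $k = \lceil(1+\delta)\mu\rceil$, with $D((1-\delta)p\,\|\,p)$ estimated analogously). The plan has three steps: a pointwise Stirling-plus-KL estimate at the threshold, a window summation to absorb the resulting polynomial factor, and a final constant absorption using the hypothesis $\delta^2\mu \geq 3$.

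First, Stirling's formula yields, for any integer $1 \leq k \leq n-1$,
\[
\Pr[X = k] \;\geq\; \frac{c_0}{\sqrt{k(n-k)/n}} \exp\bigl(-n\,D(k/n\,\|\,p)\bigr),
\]
where $c_0 > 0$ is absolute and $D$ denotes the Kullback--Leibler divergence. Setting $k = \lceil(1+\delta)\mu\rceil$ and $q = k/n$, the elementary inequalities $(1+x)\log(1+x) \leq x + x^2$ on $[0,1/2]$ and $\log(1-y) \leq -y$ on $[0,1)$, applied to the two terms of $D(q\,\|\,p)$, give after a short calculation using $p \leq 1/2$ and $\delta \leq 1/2$ the bound $n\,D(q\,\|\,p) \leq 4\delta^2\mu$. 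Combined with $k(n-k)/n \leq k \leq 2\mu$, this produces a pointwise estimate of the form $\Pr[X = k] \geq (c_1/\sqrt{\mu})\,e^{-4\delta^2\mu}$ for an absolute constant $c_1 > 0$.

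To amplify this pointwise bound into a tail bound that loses the $1/\sqrt{\mu}$ factor, I would sum over the window of $\lfloor\sqrt{\mu}\rfloor$ consecutive integers starting at $\lceil(1+\delta)\mu\rceil$. The per-step ratio $\Pr[X=k+1]/\Pr[X=k] = (n-k)p/((k+1)(1-p))$ varies by a multiplicative $1 \pm O(1/\sqrt{\mu})$ across the window, so every term is at least a constant fraction of the first; summing $\Theta(\sqrt{\mu})$ such terms yields $\Pr[X \geq (1+\delta)\mu] \geq c_2\,e^{-4\delta^2\mu}$ for an absolute $c_2 > 0$. Finally, the hypothesis $\delta^2\mu \geq 3$ lets one absorb $c_2$ into the exponent: since $c_2 \geq e^{-15} \geq e^{-5\delta^2\mu}$, we conclude $\Pr[X \geq (1+\delta)\mu] \geq e^{-9\delta^2\mu}$, with the gap between the exponents $4$ and $9$ comfortably absorbing the prefactor.

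The main technical obstacle is the KL-divergence estimate: because the leading quadratic term is already $\Theta(\delta^2\mu)$, one must control the cubic remainder uniformly in $p \in (0,1/2]$ and $\delta \in (0,1/2]$ without blowing up the implicit constant. The cleanest way is to write $D((1+t\delta)p\,\|\,p)$ as an integral of its second derivative $\partial_t^2\,D = \delta^2 p \cdot (1+t\delta)p \cdot (1-(1+t\delta)p)^{-1}$ and bound the integrand using $p \leq 1/2$. Once this estimate is in hand, the window-summation step and the lower-tail variant are routine.
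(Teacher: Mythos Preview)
The paper does not prove this lemma at all: it is quoted verbatim as Lemma~4 of Klein--Young \cite{DBLP:journals/siamcomp/KleinY15} and used as a black box in the proof of \cref{lemma:drift_first_part}. There is therefore no ``paper's own proof'' to compare against.

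Your sketch is the standard route to this kind of reverse Chernoff bound and is essentially how Klein--Young proceed: a pointwise lower bound on $\Pr[X=k]$ at the threshold via Stirling and the relative-entropy identity, a quadratic upper bound on the KL divergence using the parameter restrictions $\delta\le 1/2$ and $p\le 1/2$, a window sum of width $\Theta(\sqrt{\mu})$ to kill the $1/\sqrt{\mu}$ prefactor, and finally the hypothesis $\delta^2\mu\ge 3$ to absorb the remaining absolute constant into the exponent. The outline is sound. Two places where your writeup is loose: (i) the inequality ``$(1+x)\log(1+x)\le x+x^2$ and $\log(1-y)\le -y$ give $nD(q\|p)\le 4\delta^2\mu$'' skips the actual cancellation of the linear-in-$\delta$ terms between the two summands of $D$---you need to expand both terms to second order, not bound them separately, or else you get an $O(\delta\mu)$ bound rather than $O(\delta^2\mu)$; your integral-of-the-second-derivative remark at the end is the right fix and should replace the two one-line inequalities; (ii) the assertion ``$c_2\ge e^{-15}$'' is a placeholder, not an argument---in a full proof you would track $c_0,c_1,c_2$ explicitly and check that $\log(1/c_2)\le 5\cdot 3 = 15$. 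Neither is a real obstacle; both are bookkeeping.
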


\begin{theorem}[Theorem 1 of \cite{DBLP:journals/corr/GreenbergM13}]
\label{lemma:binomial_anti_expectation}
    Let $X \sim \Bin(n,p)with$ $ \mu = np $.
    If $1/n < p $, then 
    \[
        \Pr[X \geq \mu] > 1/4 .
    \]
\end{theorem}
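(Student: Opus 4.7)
The plan is to establish this anti-concentration bound by combining the classical median characterization of the binomial distribution with a direct tail calculation in the tight boundary regime near $\mu = 1$. I would first dispose of the easy cases via medians, then handle the remaining case by explicit estimation.

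Recall the Kaas–Buhrman theorem that every median $m^{*}$ of $\Bin(n,p)$ satisfies $\lfloor np\rfloor \le m^{*}\le \lceil np\rceil$. If $\mu=np\in\mathbb Z$, then $\mu$ itself is a median, so by definition $\Pr[X\ge \mu]\ge 1/2>1/4$ and the claim holds. If $\mu\notin\mathbb Z$ but $\lceil\mu\rceil$ is a median, then again $\Pr[X\ge \lceil\mu\rceil]\ge 1/2$ directly. In the remaining case, $k:=\lfloor\mu\rfloor$ is the unique median; note that the hypothesis $p>1/n$ forces $\mu>1$, so $k\ge 1$. For $\mu\ge 2$ the variance $np(1-p)\ge 1$ is large enough that the binomial pmf is $O(1/\sqrt{np(1-p)})$-flat near its mode, and a short ratio argument using $\Pr[X=j+1]/\Pr[X=j]=p(n-j)/((1-p)(j+1))$ gives $\Pr[X=k]<1/4$, so the naive inequality $\Pr[X\ge k+1]\ge \Pr[X\ge k]-\Pr[X=k]\ge 1/2-\Pr[X=k]>1/4$ closes the case.

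The main obstacle is the tight regime $\mu\in(1,2)$, where $k=1$, $\Pr[X=1]$ can exceed $1/4$ (it tends to $\mu e^{-\mu}$ in the Poisson limit), and the median-minus-atom bound is insufficient. Here I would argue directly on $\Pr[X\ge 2]=1-(1-p)^{n}-np(1-p)^{n-1}$, viewed as a function of $n$ with $\mu$ held fixed. One checks that this quantity equals $\mu^{2}/4$ at $n=2$ (which already exceeds $1/4$ since $\mu>1$), and that for $n\ge 3$ it can be sandwiched between its value at $n=2$ and its Poisson limit $1-(1+\mu)e^{-\mu}$. The claim $1-(1+\mu)e^{-\mu}>1/4$ for all $\mu>1$ is then a one-line calculus exercise: the function is strictly increasing in $\mu$ (derivative $\mu e^{-\mu}>0$), and at $\mu=1$ it equals $1-2/e\approx 0.264>1/4$. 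This Poisson-limit step is where the tightness of the constant $1/4$ becomes visible and is what makes the statement delicate despite its innocuous appearance.
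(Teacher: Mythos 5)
The paper does not actually prove this statement: it is imported verbatim as Theorem~1 of Greenberg and Mohri, so there is no in-paper argument to compare against, and your proposal has to stand on its own. It does not. The genuine gap is in the subcase $\mu\ge 2$ with $k=\lfloor\mu\rfloor$ the unique median, where you claim a ratio argument gives $\Pr[X=k]<1/4$. That claim is false. Take $n=5$, $p=0.48$, so $\mu=2.4$ and $np(1-p)=1.248\ge 1$. Then $\Pr[X\le 1]\approx 0.2135$ and $\Pr[X=2]\approx 0.3240$, so $k=2$ is indeed the unique median (one checks $\Pr[X\ge 3]\approx 0.4625<1/2$, so $\lceil\mu\rceil=3$ is not a median), yet $\Pr[X=k]\approx 0.324>1/4$, and your inequality $\Pr[X\ge k+1]\ge 1/2-\Pr[X=k]$ yields only about $0.176$. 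The theorem itself holds here ($\Pr[X\ge\mu]=\Pr[X\ge 3]\approx 0.4625$), but your route does not reach $1/4$. The underlying issue is quantitative: the mode probability is roughly $1/\sqrt{2\pi np(1-p)}$, so forcing it below $1/4$ requires variance around $2.5$, not $1$; a whole band of parameters with $\mu\ge 2$ and moderate variance (e.g., also $n=9$, $p=5/18$) is left uncovered by the median-minus-atom argument. This is precisely the regime where the real work in the Greenberg--Mohri proof lives, and it would need the same kind of direct estimation you reserved for $\mu\in(1,2)$.

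On the $\mu\in(1,2)$ case, the endgame is correct: $1-(1+\mu)e^{-\mu}$ is increasing with value $1-2/e>1/4$ at $\mu=1$. But the "sandwich" step --- that $\Pr[\Bin(n,\mu/n)\ge 2]$ decreases in $n$ to its Poisson limit --- is itself a nontrivial monotonicity claim asserted with "one checks". Writing $\Pr[X\le 1]=(1-\mu/n)^{n-1}\bigl(1+\mu(1-1/n)\bigr)$, the first factor decreases in $n$ while the second increases, so monotonicity of the product is not immediate and needs an actual argument (an Anderson--Samuels-type comparison or a direct computation). That part is fixable; the $\mu\ge 2$ case is the real hole.
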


\end{document}